\DeclareMathOperator*{\argmin}{arg\,min}
\crefname{equation}{}{}
\newcommand\remove[1]{}
\renewcommand\paragraph[1]{\smallskip \noindent \textbf{#1}}
\newtheorem{lemma}{Lemma}[section]
\newtheorem*{lemma*}{Lemma}
\newtheorem{theorem}[lemma]{Theorem}
\newtheorem{corollary}[lemma]{Corollary}
\newtheorem*{corollary*}{Corollary}
\newtheorem{claim}[lemma]{Claim}
\newtheorem*{theorem*}{Theorem}
\newtheorem*{inducthyp*}{Inductive Hypothesis}
\newtheorem*{rem*}{Remark}
\newtheorem{invar}{Invariant}
\newtheorem{fact}{Fact}
\theoremstyle{definition}
\newtheorem*{definition*}{Definition}
\newtheorem{definition}[lemma]{Definition}
\newcommand\R{\mathbb{R}}
\newcommand\Z{\mathbb{Z}}
\newcommand{\eps}{\varepsilon}
\renewcommand{\O}{\widetilde{O}}
\renewcommand{\l}{\langle}
\renewcommand{\r}{\rangle}
\newcommand{\assign}{\leftarrow}
\newcommand{\otilde}{\O}
\renewcommand{\forall}{\mathrm{\text{ for all }}}
\newcommand{\vol}{\mathrm{vol}}
\newcommand{\econg}{\mathbf{econg}}
\newcommand{\vcong}{\mathbf{vcong}}
\newcommand{\length}{\mathrm{length}}
\newcommand{\bc}{\bm{c}}
\newcommand{\bd}{\boldsymbol{d}}
\renewcommand{\bf}{\bm{f}}
\newcommand{\bg}{\boldsymbol{g}}
\newcommand{\bp}{\boldsymbol{p}}
\newcommand{\bw}{\boldsymbol{w}}
\newcommand{\bu}{\boldsymbol{u}}
\newcommand{\bv}{\boldsymbol{v}}
\newcommand{\bx}{\boldsymbol{x}}
\newcommand{\by}{\boldsymbol{y}}
\newcommand{\bell}{\boldsymbol{\ell}}
\newcommand{\bDelta}{\boldsymbol{\Delta}}
\newcommand{\blambda}{\boldsymbol{\lambda}}
\newcommand{\diag}{\mathrm{diag}}
\renewcommand{\root}{\mathsf{root}}
\newcommand{\bDeg}{\mathbf{deg}}
\newcommand{\Enc}{\textsc{Enc}}
\newcommand{\DynamicBranchingChain}{\textsc{DynamicTreeChain}}
\newcommand{\Rebuild}{\textsc{Rebuild}}
\newcommand{\Shift}{\textsc{Shift}}
\newcommand{\Initialize}{\textsc{Initialize}}
\newcommand{\Update}{\textsc{Update}}
\newcommand{\FindCycle}{\textsc{FindCycle}}
\renewcommand{\hat}{\widehat}
\renewcommand{\tilde}{\widetilde}
\DeclareFontFamily{U}{mathb}{\hyphenchar\font45}
\DeclareFontShape{U}{mathb}{m}{n}{<5> <6> <7> <8> <9> <10> gen * mathb
<10.95> mathb10 <12> <14.4> <17.28> <20.74> <24.88> mathb12}{}
\DeclareSymbolFont{mathb}{U}{mathb}{m}{n}
\DeclareMathSymbol{\rcirclearrow}{\mathbin}{mathb}{'367}
\newcommand{\wt}{\widetilde}
\newcommand{\wh}{\widehat}
\renewcommand{\bar}{\overline}
\xdef\csname m\x\endcsname{\noexpand\mathbf{\x}}
\xdef\csname c\x\endcsname{\noexpand\mathcal{\x}}
\newif\ifrandom
\newcommand{\defeq}{\stackrel{\mathrm{\scriptscriptstyle def}}{=}}
\newcommand{\poly}{{\mathrm{poly}}}
\newcommand{\str}{{\mathsf{str}}}
\newcommand{\wstr}{{\wt{\str}}}
\newcommand{\hstr}{{\wh{\str}}}
\newcommand{\len}{{\mathrm{len}}}
\newcommand{\wlen}{{\wt{\len}}}
\newcommand{\rev}{\mathsf{rev}}
\newcommand{\lvl}{\mathsf{level}}
\newcommand{\bran}{\mathsf{shift}}
\newcommand{\bb}{\boldsymbol{b}}
\newcommand{\ba}{\boldsymbol{a}}
\newcommand{\bPi}{\boldsymbol{\Pi}}
\newcommand{\last}{\mathsf{last}}
\renewcommand{\l}{\langle}
\renewcommand{\r}{\rangle}
\newcommand{\norm}[1]{\left\lVert#1\right\rVert}
\newcommand{\Abs}[1]{\left|#1\right|}
\newcommand{\Set}[2]{\left\{#1 ~\middle\vert~ #2\right\}}
\newcommand{\setof}[1]{\left\{#1\right\}}
\newcommand{\todolater}[1]{}
\newcommand{\Wrange}{\Psi}
\newcommand{\gstr}{{\bar{\str}}}
\newcommand{\grepT}{{\bar{\repT}}}
\newcommand{\passes}{{\mathsf{passes}}}
\newcommand{\repT}{{\mathsf{repT}}}
\newcommand{\gameAct}[1]{\texttt{#1-step}}
\newcommand{\gameActParam}[2]{\texttt{#1-step}(#2)}
\crefname{gameStage}{Game Stage}{Game Stages}
\renewcommand{\SS}{\mathcal{S}}
\newcommand{\abs}[1]{\left|#1\right|}
\begin{document}
\pagenumbering{gobble}

\title{A Deterministic Almost-Linear Time Algorithm\\ for Minimum-Cost Flow}

\author{
Jan van den Brand\\ Georgia Tech\\ vdbrand@gatech.edu
\and
Li Chen\thanks{Li Chen was supported by NSF Grant CCF-2106444.}\\ Georgia Tech\\ lichen@gatech.edu
\and
Rasmus Kyng\thanks{The research leading to these results has received funding from the grant ``Algorithms and complexity for high-accuracy flows and convex optimization'' (no. 200021 204787) of the Swiss National Science Foundation.}\\ ETH Zurich \\ kyng@inf.ethz.ch 
\and
Yang P. Liu\thanks{Yang P. Liu was supported by NSF CAREER Award CCF-1844855, NSF Grant CCF-1955039, and a Google Research Fellowship.} \\ Stanford University \\ yangpliu@stanford.edu
\and
Richard Peng\thanks{Richard Peng was partially supported by NSF CAREER Award CCF-1846218,
and the Natural Sciences and Engineering Research Council of Canada (NSERC) Discovery Grant RGPIN-2022-03207} \\
Carnegie Mellon University
\footnote{Part of this work was done while at the University of Waterloo.}\\
yangp@cs.cmu.edu
\and
Maximilian Probst Gutenberg\footnotemark[2]\\ ETH Zurich\\ maxprobst@ethz.ch
\and
Sushant Sachdeva\thanks{Sushant Sachdeva was supported by an NSERC Discovery Grant RGPIN-2018-06398, an Ontario Early Researcher Award (ERA) ER21-16-283, and a Sloan Research Fellowship.
} \\ University of Toronto \\ sachdeva@cs.toronto.edu
\and 
Aaron Sidford\thanks{Aaron Sidford was supported in part by a Microsoft Research Faculty Fellowship, NSF CAREER Award CCF-1844855, NSF Grant CCF-1955039, a PayPal research award, and a Sloan Research Fellowship.} \\
Stanford University\\
sidford@stanford.edu
}
\maketitle

\begin{abstract}
We give a deterministic $m^{1+o(1)}$ time algorithm that computes exact maximum flows and minimum-cost flows on directed graphs with $m$ edges and polynomially bounded integral demands, costs, and capacities. As a consequence, we obtain the first running time improvement for deterministic algorithms that compute maximum-flow in graphs with polynomial bounded capacities since the work of Goldberg-Rao [J.ACM '98].

Our algorithm builds on the framework of Chen-Kyng-Liu-Peng-Gutenberg-Sachdeva [FOCS '22] that computes an optimal flow by computing a sequence of $m^{1+o(1)}$-approximate undirected minimum-ratio cycles. 
We develop a deterministic dynamic graph data-structure to compute such a sequence of minimum-ratio cycles in an amortized $m^{o(1)}$ time per edge update.
Our key technical contributions are deterministic analogues of
the \emph{vertex sparsification} and \emph{edge sparsification} components of the data-structure from
Chen et al. For the vertex sparsification component, we give a method to avoid the randomness in Chen et al. which involved sampling random trees to recurse on.
For the edge sparsification component, we design a deterministic algorithm that maintains an embedding of a dynamic graph into a sparse spanner.
We also show how our dynamic spanner can be applied to give a deterministic data structure that
maintains a fully dynamic low-stretch spanning tree on graphs with
polynomially bounded edge lengths, with subpolynomial average stretch and subpolynomial amortized time per edge update.
\end{abstract}

\newpage
\setcounter{tocdepth}{2}
\tableofcontents
\normalsize
\pagebreak
\pagenumbering{arabic}

\section{Introduction}

Given a directed, capacitated graph $G = (V, E, \bu)$ with $n = |V|$ nodes, $m = |E|$ edges, and integer capacities $\bu \in \Z_{\geq 0}^E$, the \emph{maxflow problem} asks to send as much flow as possible on $G$ from a given source vertex $s \in V$ to a sink vertex $t \in V \setminus \{s\}$ without exceeding the capacity constraints. This problem is foundational in combinatorial optimization and algorithm design. It has been the subject of extensive study for decades, starting from the works \cite{D51,HK73,K73,ET75} and is a key subroutine for solving a variety of algorithmic challenges such as edge-connectivity and approximate sparsest cut (e.g., \cite{GomoryH61,KhandekarRV06}).

In the standard setting where the capacities are polynomially bounded, a line of work on combinatorial algorithms culminated in a seminal result of Goldberg and Rao in 1998 \cite{GR98} which showed that the problem can be solved in $\otilde(m \cdot \min\{m^{1/2}, n^{2/3}\})$ time. The algorithm which achieved this result was deterministic and combinatorial; the algorithm consists of a careful repeated computation of blocking-flows implemented in nearly linear time using dynamic trees. Interestingly, despite advances in randomized algorithms for maxflow 
(\cite{LS15,BLNPSSW20,GLP21:arxiv,BGJLLPS21:arxiv}) 
 and deterministic algorithms in special cases (e.g., unit capacity graphs \cite{M13} and planar graphs \cite{borradaile2009n, borradaile2017multiple}), the runtime in \cite{GR98} has remained the state-of-the-art among deterministic algorithms in the general case of polynomially bounded capacities.

This gap between state-of-the-art runtimes for deterministic and randomized algorithms for maxflow is particularly striking in light of recent advances: \cite{chen2022maximum} provided an almost linear, $m^{1 + o(1)}$, time randomized maxflow algorithm and
\cite{BLLSSSW21} provided an $\otilde(m + n^{1.5})$ time randomized algorithm which runs in nearly linear time for dense graphs. Unfortunately, as we discuss in \Cref{sec:intro:recent_randomized_algorihtms}, there are key barriers towards efficiently derandomizing both \cite{chen2022maximum} and \cite{BLLSSSW21} as well as prior improvements~\cite{GLP21:arxiv,BGJLLPS21:arxiv,AMV21:arxiv}.

These results raise key questions about the power of randomization in designing flow algorithms. While there is complexity theoretic evidence that randomization does not affect the polynomial time solvability of decision problems \cite{IW97} it is less clear what fine-grained effect randomization has on the best achievable runtimes or whether or not a problem can be solved in almost linear time \cite{CT21}. The problem of obtaining faster deterministic algorithms for maxflow is of particular interest given extensive research over the past decade on obtain faster deterministic algorithms for expander decompositions and flow problems \cite{CGLNPS21, KMP22}, and applications to connectivity problems~\cite{KawarabayashiT19,LiP20,Li21}.

In this paper we provide a deterministic algorithm that solves minimum-cost flow and maxflow in $m^{1 + o(1)}$ time. We obtain this result by providing an efficient deterministic implementation of the recent flow framework of \cite{chen2022maximum} which reduced the minimum cost flow problem to approximately solving a sequence of structured minimum ratio cycle problems. We also obtain the same running time for deterministically finding flows on graphs that minimize convex edge costs. Further, the techniques we develop have potential broader utility; for example, we show that our techniques can be used to design a deterministic algorithm that dynamically maintains low-stretch trees under insertions and deletions with polynomially bounded lengths (see \Cref{sec:intro:results})

\paragraph{Paper Organization.}
In the remainder of this introduction we elaborate on randomized maxflow algorithms and the barriers to their derandomization (\Cref{sec:intro:recent_randomized_algorihtms}), present our results (\Cref{sec:intro:results}), give a coarse overview of our approach (\Cref{sec:intro:approach}), and cover additional related work (\Cref{sec:intro:related}). We then cover preliminaries in \Cref{sec:prelim} and give a more technical overview in \Cref{sec:tech_overview}. We present the flow framework in \Cref{sec:framework}, build the main dynamic recursive data structure in \Cref{sec:jtree}, give a preliminary analysis of its quality in \Cref{sec:cycleshift}, strengthen the data structure by periodically rebuilding data structure levels in \Cref{sec:rebuilding},
and give the deterministic spanner in \Cref{sec:spanner}. Finally, we briefly describe our deterministic, dynamic low-stretch tree data structures in \Cref{sec:lsst}.

\subsection{Randomized Maxflow Algorithms}
\label{sec:intro:recent_randomized_algorihtms}

Although the runtime of deterministic algorithms solving maxflow on graphs with polynomially bounded capacities has not been improved since \cite{GR98}, there have been significant advances towards designing randomized maxflow algorithms. Here we provide a brief survey of these advances and discuss the difficulty in obtaining deterministic counterparts of comparable efficiency.

\paragraph{Electric Flow Based Interior Point Methods.} A number of randomized algorithms over the past decade have improved upon the complexity of maxflow by leveraging and building upon interior point methods (IPMs). IPMs are a broad class of continuous optimization methods that typically reduce continuous optimization problems, e.g., linear programming, to solving a sequence of linear systems. In the special case of maxflow the linear systems typically correspond to electric flow or Laplacian system solving and can be solved in nearly linear time \cite{ST04}.

Combining this approach with improved IPMs, \cite{LS19:arxiv} obtained an $\otilde(m \sqrt{n})$ time maxflow algorithm. Further robustifying this optimization method and using a range of dynamic data structures for maintaining decompositions of a graph into expanders, sparsifiers, and more, \cite{BLLSSSW21} obtained an improved $\otilde(m + n^{1.5})$ time maxflow algorithm.
Incorporating additional dynamic data structures for maintaining types of vertex sparsifiers (and more) then led to runtimes of $\tilde{O}(m^{3/2-1/328})$ \cite{GLP21:arxiv} and $\tilde{O}(m^{3/2-1/58})$ \cite{BGJLLPS21:arxiv}. 

Unfortunately, despite improved understanding of IPMs (in particular deterministic robust linear programming methods \cite{B20}) and deterministic Laplacian system solvers \cite{CGLNPS21} it is unclear how to obtain deterministic analogs of these maxflow results. 
Each result either uses $1/\poly(\log n)$-accurate estimates of effective resistances \cite{LS19:arxiv,BLLSSSW21} 
or edge or vertex sparsifiers of similar accuracy \cite{GLP21:arxiv,BGJLLPS21:arxiv}. 
Obtaining deterministic algorithms for either is an exciting open problem in algorithmic graph theory (and is left unsolved by this paper).

\paragraph{Minimum Ratio Cycle Based Interior Point Methods.} 
In a recent breakthrough result \cite{chen2022maximum} leveraged a different type of IPM. 
This method obtained an almost-linear time algorithm for maxflow and instead used an $\ell_1$-counterpart to the more standard $\ell_2$-based IPMs that reduce maxflow to electric flow.
Using this IPM, \cite{chen2022maximum} essentially reduced solving maxflow to solving a dynamic sequence of minimum ratio cycle problems (e.g., \cref{def:ipm:datastructure}). 

On the one hand, \cite{chen2022maximum} seems to create hope in overcoming the obstacles of faster deterministic maxflow algorithms. 
Using \cite{chen2022maximum} it is indeed known how to deterministically solve each individual minimum ratio cycle problem to sufficient accuracy in almost linear time.
On the other hand, unfortunately \cite{chen2022maximum} required a dynamic data structure for solving these problems in amortized $m^{o(1)}$-per instance
and to obtain their runtime, \cite{chen2022maximum} made key use of randomization. In \Cref{sec:intro:approach} we elaborate on the obstacles in avoiding this use of randomization and our main results, which are new algorithmic tools which remove this need. 

\subsection{Our Results}
\label{sec:intro:results}

We give a deterministic algorithm for computing min-cost flows on graphs.
\begin{theorem}[Min-cost flow]
\label{thm:mincost}
There is a deterministic algorithm that given a $m$-edge graph with integral vertex demands and edge capacities bounded by $U$ in absolute value, and integral edge costs bounded by $C$ in absolute value, computes an 
(exact) minimum-cost flow in time $m^{1+o(1)} \log U \log C$.
\end{theorem}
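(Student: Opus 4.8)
\emph{Step 1: the IPM reduction (\Cref{sec:framework}).} The plan is to follow the $\ell_1$ interior-point-method (IPM) reduction of \cite{chen2022maximum}, supplying deterministic replacements for its two randomized data-structure ingredients. First I would run the $\ell_1$-IPM of \cite{chen2022maximum}: after rescaling costs and capacities so all quantities lie in $[1,\poly(m)]$ and fixing an initial feasible flow, the method maintains a flow $\bf$ and barrier weights and, in each iteration, (i) reads a gradient $\bg$ and a positive length vector $\bell$ off the current iterate, (ii) asks for an $m^{o(1)}$-approximately optimal circulation $\bDelta$ for the undirected minimum-ratio-cycle objective $\bg^\top\bDelta / \|L\bDelta\|_1$, where $L$ is the diagonal matrix of lengths, and (iii) takes the step $\bf \leftarrow \bf + \eta\bDelta$. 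Two features make derandomization possible: the IPM potential decreases by a fixed amount per iteration for \emph{any} cycle within an $m^{o(1)}$ factor of optimal, so correctness does not depend on how the cycle is produced; and between consecutive iterations only $m^{o(1)}$ coordinates of $(\bg,\bell)$ change. After $m^{1+o(1)}$ iterations the fractional flow is close enough to optimal that a combinatorial rounding routine (from \cite{chen2022maximum}) recovers an exact integral minimum-cost flow; the $\log U\log C$ factor enters through the iteration count and this rounding. So everything reduces to a \emph{deterministic} data structure that, under an adaptively chosen sequence of $m^{o(1)}$-sparse updates to $(\bg,\bell)$, returns $m^{o(1)}$-approximate min-ratio cycles in amortized $m^{o(1)}$ time per update (\cref{def:ipm:datastructure}).

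\emph{Step 2: recursive structure and deterministic vertex sparsification (\Cref{sec:jtree,sec:cycleshift,sec:rebuilding}).} I would implement this data structure recursively over several levels; each level performs one vertex-sparsification step and one edge-sparsification step and then recurses on a graph smaller by an $m^{o(1)}$ factor in both vertices and edges. For vertex sparsification, rather than sampling a random low-stretch forest as in \cite{chen2022maximum}, I would fix one \emph{deterministic} low-stretch spanning forest $F$, route all off-forest edges along $F$ to contract $G$ onto a set of branch/portal vertices, and maintain the resulting core graph under the sparse updates; the quality claim is that the optimal ratio on the core is within $m^{o(1)}$ of the optimal ratio on $G$, proved directly from the stretch bound of $F$ rather than in expectation. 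Since the update sequence is adversarial, $F$'s guarantees decay over time, so I would layer on the periodic-\Rebuild{} mechanism: level $i$ is rebuilt from scratch every $\approx 2^i\cdot m^{o(1)}$ updates, which both restores the approximation quality and, via a potential/charging argument (the \DynamicBranchingChain{} and \Shift{} bookkeeping), keeps the amortized cost of level $i$ at $m^{o(1)}$.

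\emph{Step 3: deterministic dynamic spanner and edge sparsification (\Cref{sec:spanner}).} The edge-sparsification step needs, for each core graph, a spanner $H$ with only $n\cdot m^{o(1)}$ edges that approximately preserves the min-ratio-cycle objective and is maintained under edge insertions and deletions in amortized $m^{o(1)}$ time, \emph{together with} an explicit low-length, low-congestion embedding of the non-spanner edges into $H$; the embedding is what lets a cycle found in $H$ be pulled back to a comparable cycle in the core graph. I would build this from deterministic expander decomposition and deterministic expander routing (the toolkit of \cite{CGLNPS21} and related deterministic constructions): decompose the current graph into expanders, inside each expander take a spanner and route the remaining edges deterministically through it, and handle updates by recomputing the decomposition on the affected parts with an appropriate batching so the work amortizes. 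Composing the per-level embeddings up the recursion turns the cycle returned at the bottom into an $m^{o(1)}$-approximate min-ratio cycle in $G$. Overall: each IPM step issues $m^{o(1)}$ updates, each costing amortized $m^{o(1)}$ across all recursion levels, over $m^{1+o(1)}\log U\log C$ steps, for a total of $m^{1+o(1)}\log U\log C$ time, with the rounding dominated by this.

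\emph{Main obstacle.} I expect the hardest part to be the dynamic-spanner component of Step 3: maintaining, deterministically and in $m^{o(1)}$ amortized time, a spanner \emph{and its embedding} of a fully dynamic graph, and proving the embedding's length and congestion stay $m^{o(1)}$ under adversarial updates --- all while ensuring the $m^{o(1)}$ losses do not compound across recursion levels into an $m^{\Omega(1)}$ factor (the interaction with the \Rebuild{} schedule of Step 2 is delicate for exactly this reason). Derandomizing vertex sparsification is conceptually cleaner --- a single low-stretch forest suffices --- but it too survives the adaptive adversary only thanks to this rebuild machinery.
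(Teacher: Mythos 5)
There is a genuine gap in your Step 2, and it sits exactly where the paper's main new idea lives. You claim that a single deterministic low-stretch forest $F$ suffices, with "the quality claim\ldots proved directly from the stretch bound of $F$." But the approximation guarantee required by \cref{def:ipm:datastructure} is not relative to the true optimum of the min-ratio problem; it is relative to a \emph{hidden}, adaptively changing witness circulation $\bc^{(t)}$ with widths $\bw^{(t)}$ (\cref{def:hiddenStableFlowChasing}). A single forest only gives low \emph{average} stretch with respect to a fixed weight vector chosen at construction time; the hidden witness can concentrate its width $\bw^{(t)}$ on exactly the edges that $F$ stretches badly, and then the contracted (core) graph no longer preserves the witness's ratio to within $m^{o(1)}$ --- the quantity one must control is $\sum_e \wstr_e \bw^{(t)}_e$ versus $\|\bw^{(t)}\|_1$, which no single static forest can bound for an arbitrary $\bw^{(t)}$. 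This is precisely why \cite{chen2022maximum} builds, via multiplicative weights (\cref{lemma:strMWU}), a collection of $k$ forests whose average stretch is $\O(1)$ on \emph{every} edge (so at least one forest is good for any fixed witness) and then randomly samples $\O(1)$ of them. The paper derandomizes this not by using one forest but by keeping the whole collection at each level, recursing on one forest at a time, and \emph{shifting} to the next forest whenever the current branch fails to produce a good cycle; the amortized cost of these shifts, interleaved with periodic rebuilds and with the witness norm $\|\bw^{(t)}\|_1$ possibly dropping, is controlled by the shift-and-rebuild game of \cref{sec:rebuilding} (\cref{lem:goodRebuildStrategy,lem:AlgoIsGameInstance}). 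Your rebuild-only schedule has nothing to shift to, so the argument breaks at the first level where the current forest has large stretch against the witness. (Relatedly, your Step 1 premise that "correctness does not depend on how the cycle is produced" because any $m^{o(1)}$-approximate cycle works is too strong: the data structure only returns cycles competitive with the hidden witness, not with the true optimum, and the whole correctness analysis leans on this weaker, witness-relative guarantee.)

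Two smaller points. In Step 3 you say "inside each expander take a spanner and route the remaining edges deterministically through it," but constructing that spanner is itself the randomized step in \cite{chen2022maximum} (uniform subsampling inside each expander); the paper's fix is specific: build an explicit deterministic constant-degree expander $W$ on the expander's vertex set (\cref{thm:constructDetExpander}), embed $W$ into the piece and the piece into $W$ using the deterministic decremental shortest-path structure, and take the spanner to be the image of the embedding of $W$, with the composed embedding giving the low-congestion, short-length guarantee (\cref{lma:staticEmbed}, \cref{thm:spanner}). Finally, the rounding step of \cite{chen2022maximum} uses the Isolation Lemma and is randomized; for a deterministic algorithm you need the link-cut-tree rounding of \cite{KangP15} (\cref{lemma:detflowrounding}), as the paper does.
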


Our algorithm extends to finding flows that minimize convex edge costs to high-accuracy, for example, for matrix scaling, entropy-regularized optimal transport, $p$-norm flows, and $p$-norm isotonic regression. See \cite[Section 10]{chen2022maximum} for a (deterministic) reduction of these problems to a sequence of minimum ratio cycle problems satisfying the relevant stability guarantees.

Additionally, the components of our data structure can be used to deterministically maintain a low-stretch tree under dynamic updates (see \cref{sec:prelim} and \cref{thm:an} for a formal definition of edge stretch, and a concrete low-stretch tree statement).
Previously, deterministic algorithms for maintaining a low-stretch tree with subpolynomial update time were only known for unweighted graphs and those undergoing only edge deletions, achieved by combining the previous result \cite{chechik2020dynamic} with derandomization techniques in \cite{BGS21, C21}. Even among randomized algorithms, the only way the authors know how to achieve an algorithm that maintains low-stretch trees on graphs with polynomially bounded edge lengths in subpolynomial update time is by adapting the components of \cite{chen2022maximum} to the setting of low-stretch trees.

\begin{theorem}[Dynamic low stretch tree]
\label{thm:lsst}
There is a deterministic data structure that given a dynamic $n$-node graph undergoing insertions and deletions of edges with integral lengths bounded by $\exp((\log n)^{O(1)})$, maintains a low-stretch tree with average stretch $n^{o(1)}$ in worst-case $n^{o(1)}$ time per update. The data structure maintains the tree in memory with $n^{o(1)}$ amortized recourse per update; the data structure can be be modified to output the changes explicitly with amortized, rather than worst-case, $n^{o(1)}$ time per update.
\end{theorem}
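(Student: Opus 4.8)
The plan is to obtain the dynamic low-stretch tree as a by-product of the recursive dynamic data structure built in \Cref{sec:jtree}--\Cref{sec:rebuilding} for the minimum-ratio-cycle problem, specialized to a ``stretch only'' regime: there is no need for the $\ell_1$-IPM wrapper or for maintaining congestion, only the distance side of the hierarchy is used. Concretely, I would maintain a chain of graphs $G = G_0, G_1, \dots, G_d$ where $G_{i+1}$ is produced from $G_i$ by (i) running the deterministic dynamic spanner of \Cref{sec:spanner} to get a sparse $H_i \subseteq G_i$ whose distances dominate and are $n^{o(1)}$-approximated by those of $G_i$; (ii) maintaining on $H_i$ a deterministic low-stretch forest $F_i$ together with its associated low-diameter clustering of $V(G_i)$ (this is the object whose static/slowly-changing version is \Cref{thm:an}, and it is exactly the place where \cite{chen2022maximum} sampled random trees, here derandomized using the deterministic spanner and a deterministic low-diameter/expander decomposition); and (iii) contracting each cluster of $F_i$ to a single vertex, so $|V(G_{i+1})|$ shrinks by a fixed factor. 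The recursion bottoms out at depth $d = O(\sqrt{\log n/\log\log n})$ when $|V(G_d)|$ is small enough to afford a static deterministic low-stretch tree directly. The maintained tree $T$ is the union of the pullbacks of $F_0, \dots, F_d$ to $V(G)$.

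Correctness has two parts. First, $T$ is a spanning tree of $G$: each $F_i$ is a forest on $V(G_i)$, whose vertices are precisely the clusters of $F_{i-1}$, so by induction the pullback of $F_0 \cup \dots \cup F_i$ is a forest of $G$ whose components are the clusters at level $i$, and at level $d$ there is a single cluster. Second, the average stretch is $n^{o(1)}$: fix an edge $e=(u,v)$ of $G$ and let $i^\star$ be the first level at which $u$ and $v$ lie in a common cluster $K$ of $F_{i^\star}$; then $\dist_T(u,v)$ is at most the $T$-diameter of $K$, which telescopes into a product over the $\le i^\star$ levels of the per-level diameter blow-up. Since $F_i$ is a low-stretch forest of the spanner $H_i$ and $H_i$ is an $n^{o(1)}$-spanner of $G_i$, each factor is $2^{O(\sqrt{\log n\log\log n})} = n^{o(1)}$ (the quasipolynomial length bound $\exp((\log n)^{O(1)})$ contributes only a $(\log n)^{O(1)}$ factor, absorbed into $n^{o(1)}$). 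Summing over all edges and using that each $F_i$ has small \emph{total} stretch in $G_i$ --- equivalently, that the weighted number of $G_i$-edges whose endpoints fall in distinct clusters of $F_i$ is small --- gives $\sum_{e} \str_T(e) \le m\cdot n^{o(1)}$.

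For efficiency, a single edge update to $G_0$ causes the deterministic dynamic spanner to change $H_0$ by $n^{o(1)}$ edges in worst-case $n^{o(1)}$ time, hence $F_0$ and the contracted graph $G_1$ change by $n^{o(1)}$ edges, and the effect cascades: with $d = n^{o(1)}$ levels and $n^{o(1)}$ recourse and time per level, the total is worst-case $n^{o(1)}$ time while $T$ is kept in memory, with $n^{o(1)}$ amortized recourse. Amortization is unavoidable for the explicit-output variant because the periodic rebuilds of \Cref{sec:rebuilding} (needed to keep the $F_i$ valid against an adaptive update sequence) change $\Omega(|K|)$ tree edges of a cluster $K$ in one shot; a standard background-rebuilding scheme spreads this out to restore worst-case $n^{o(1)}$ time for the in-memory version, but only amortized $n^{o(1)}$ time when the changes must be reported immediately.

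The main obstacle is step (ii): producing \emph{deterministically} a clustering of $V(G_i)$ that simultaneously has per-cluster $T$-diameter $n^{o(1)}$ and crosses only an $n^{o(1)}$ fraction of the weighted edge mass, and doing so dynamically --- precisely the derandomization of the random-tree sampling in \cite{chen2022maximum}. I would obtain it by running the deterministic low-diameter/expander decomposition on the sparsifier $H_i$ rather than on $G_i$ (so its cost is affordable) and then arguing, as in \Cref{thm:an}, that the spanner's $n^{o(1)}$ distortion only multiplies the per-level diameter bound by $n^{o(1)}$. A secondary obstacle is the parameter balance: the depth $d$ and the per-level factor must be chosen so that their product (of $d$ factors, each of $n^{\Omega(1/d)}$ type) stays subpolynomial --- the familiar AKPW-style optimization $d \approx \sqrt{\log n/\log\log n}$, per-level factor $2^{O(\sqrt{\log n\log\log n})}$ --- and this must be pushed through the dynamic setting without sacrificing determinism or the worst-case time bound.
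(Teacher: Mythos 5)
Your high-level architecture (a $d$-level hierarchy alternating vertex reduction by forest contraction with edge reduction by the deterministic spanner, the output tree being the union of the per-level forests) is the same as the paper's in \Cref{sec:lsst}, but there is a genuine gap in the stretch analysis, rooted in your decision to use ``only the distance side'' of the spanner and to declare congestion unnecessary. The paper's per-level order is the reverse of yours: it first maintains the dynamic low-stretch forest $F_i$ on $G_i$ via \Cref{lemma:globalstretch} (obtaining stretch overestimates $\wstr_e$), contracts to $G_i/F_i$, buckets the contracted edges into $\O(1)$ classes by $\wstr_e$ and by length, and only then applies \Cref{thm:spanner} to each (now unweighted, deletions-and-vertex-splits-only) bucket. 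The total-stretch bound is proved by rerouting each contracted edge along its embedding path and reorganizing the sum per spanner edge $f$, using the edge-congestion bound $\econg(\Pi_{H_{i,j,j'}\to G_{i+1,j,j'}},f)=O(\gamma_c k/2^j)$ together with the bucketing; this is what lets $\sum_{e\in E(G_i)}\str^{T_i,\bell^{G_i}}(e)$ be charged to $O(k\gamma_c)\sum_{f\in E(G_{i+1})}\str^{T_{i+1},\bell^{G_{i+1}}}(f)$ and telescoped. In your scheme the non-spanner edges of $G_i$ are dropped from the recursion at level $i$, and their contribution to the \emph{total} stretch must be bounded; per-edge distance preservation does not do this, since $\dist_T(u_e,v_e)\le\sum_{f\in P_e}\str_T(f)\,\bell_f$ over the embedding path $P_e$, and without a congestion bound many edges of $G_i$ can be routed through a handful of high-stretch spanner edges, blowing up the sum even though each individual distance is $n^{o(1)}$-approximated and $\sum_{f\in E(H_i)}\str_T(f)$ is small. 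This is exactly why the paper requires a spanner \emph{with} a low-congestion embedding and states in the overview that the dynamic low-stretch tree application ``makes use of this additional low-congestion property''; also note that \Cref{thm:spanner} applies to unweighted graphs under deletions and vertex splits, so running it directly on the weighted, fully dynamic $G_i$ (rather than on the bucketed contraction fed by the decremental forest) is not justified as stated.

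Two further points. You invoke the shift-and-rebuild machinery of \Cref{sec:rebuilding} to keep the $F_i$ ``valid against an adaptive update sequence,'' but the paper emphasizes that the LSST data structure needs nothing of the sort: the only rebuilds are the periodic capacity-triggered ones (rebuild levels $\ge i$ once level $i$ has absorbed $m/k^{i+1}$ updates), because there is no hidden witness here -- the stretch guarantee is over all edges, so no adaptive-adversary issue arises. Second, the AKPW-style parameterization $d\approx\sqrt{\log n/\log\log n}$ is incompatible with this machinery: the per-level loss is not $2^{O(\sqrt{\log n\log\log n})}$ but the spanner/expander factors $\gamma_c,\gamma_\ell,\gamma_r=\exp(O(\log^{8/9}m\log\log m))$, which compound as $O(\gamma_c)^{d}$, so one must take $d$ as small as $\log^{1/18}m$ (giving stretch and update time $\exp(O(\log^{17/18}m\log\log m))$); with $d\approx\log^{1/2}n$ the product is super-polynomial. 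Finally, the worst-case time bound does not follow from the cascade as you describe it -- the paper proves an amortized bound and de-amortizes by standard batching, which is the route you should take as well.
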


\subsection{Our Approach}
\label{sec:intro:approach}

In this paper we obtain an almost linear time algorithm for maxflow by essentially showing how to eliminate the use of randomness in each of the places it was used \cite{chen2022maximum}. Here we elaborate on these uses of randomness and the techniques we introduce; we provide a more detailed overview of our approach in \Cref{sec:tech_overview}.

\paragraph{Randomization in \cite{chen2022maximum}.}
At a high level, \cite{chen2022maximum} treats minimum ratio cycle as an instance of the more general minimum cost transshipment problem on undirected graphs. To solve this, \cite{chen2022maximum} applies a time-tested technique of recursively building partial trees (to reduce the number of vertices) and sparsifying (to reduce the number of edges). This approach was pioneered by \cite{ST04}, and has since been used in multiple algorithms \cite{KMP11,KLOS14,S13,KPSW19,CPW21} and dynamic data structures \cite{CGHPS20}.

More precisely, for a parameter $k$ the partial trees are a collection of $\otilde(k)$ forests with $O(m/k)$ components where the stretch of every edge in a component is $\otilde(1)$ on average; here, the stretch of an edge refers to the ratio of the length of routing the edge in the forest to the length of the edge itself. The algorithm uses the partial trees to recursively processes the graphs resulting from contracting each forest. While the forests can be computed and even dynamically maintained deterministically, recursively processing all the partial trees is prohibitively expensive because the total number of components is still $\O(k \cdot m/k) = \O(m)$, i.e., there is no total size reduction. Thus, \cite{chen2022maximum} (motivated in part by \cite{M10,GKKLP18}) showed that it sufficed to subsample only $\O(1)$ trees to recurse on. This is the first and, perhaps, most critical use of randomness in the \cite{chen2022maximum} algorithm. In particular, it initially seems difficult to design a data structure that maintains all the trees without having a prohibitive runtime.

The dynamic sparsifier constructed in \cite{chen2022maximum} was a spanner with \emph{explicit embedding}, i.e., the algorithm maintained a subgraph $H \subseteq G$, and for each edge $e \in G$, a path in $H$ with few edges that connected its endpoints. This graph $H$ was maintained with low recourse under edge insertions, deletions, and \emph{vertex splits}, where a vertex becomes two vertices, and edges are split between them. The spanner was constructed by maintaining an expander decomposition and uniform sampling edges in each expander. This is the second use of randomness in \cite{chen2022maximum}, though it is conceptually easier to circumvent due to recent progress on deterministic expander decomposition and routings \cite{CGLNPS21,CS21}. 

\paragraph{Removing randomness from sampling forests.} To understand how we remove randomness from sampling the forests, it is critical to discuss how \cite{chen2022maximum} handled the issue of \emph{adaptive adversaries} in the dynamic updates to the data structure (i.e., that the input to the dynamic minimum ratio cycle data structures could depend on the data structure's output).
In particular, the future updates to the data structure may depend on the trees that were randomly sampled. To handle this, \cite{chen2022maximum} observed that the IPM provided additional stability on the dynamic minimum ratio cycle problem, in the sense that there was a (sufficiently good) solution $\bDelta^*$ to the minimum ratio cycle problem 
$\argmin_{\mB^\top \bDelta = 0} \bg^\top \bDelta / \|\mL\bDelta\|_1$ which changed slowly.

In a similar way, our deterministic min-ratio cycle data structure does \emph{not} work for general dynamic minimum ratio cycle, and instead heavily leverages the stability of a solution $\bDelta^*$. As in \cite{chen2022maximum}, our algorithm computes the $\O(k)$ partial trees. We know that out of these forests, there exists at least one of them (in fact, at least half of them) that we can successfully recurse on. \cite{chen2022maximum} chooses $\O(1)$ random forests to recurse on, leveraging that at least one of these forests is good with high probability.  As discussed, we cannot afford to recurse on all $\O(k)$ forests as this requires dynamically maintaining $\Omega(m)$ trees at every step. Consequently, to obtain a deterministic algorithm we instead show that it suffices to recurse one forest at a time. We recurse on the first forest until we conclude that it did not output a valid solution, then we switch to the next forest, and repeat (wrapping around if necessary). This way, we only maintain one recursive chain and the corresponding spanning tree at each point in time. 
We argue that the runtime is still acceptable, and more interestingly, that we do not need to switch between branches very frequently. We formalize this, we analyze what we call the \emph{shift-and-rebuild game} in \Cref{sec:rebuilding}, and extend the adaptive adversary analysis of \cite{chen2022maximum} to our new algorithm.

\paragraph{Deterministically constructing spanners with embeddings.} At a high level, \cite{chen2022maximum} gives a deterministic procedure of reducing dynamic spanners to static spanners with embeddings. To construct the static spanner, \cite{chen2022maximum} decomposed $G$ into expanders, sparsified each expander by random sampling, and then embedded $G$ into the sparsifier using a decremental shortest path data structure \cite{CS21}. The expander decomposition can be computed deterministically using \cite{CGLNPS21}. Thus, the remaining randomized component was the construction of the spanner by subsampling.
Instead, we construct the spanner by constructing a deterministic expander $W$ on each piece of the expander decomposition, embedding $G$ into $W$, and then embedding $W$ back into $G$ (both using the deterministic decremental shortest path data structure \cite{CS21}).
The set of edges in $G$ used to embed $W$ forms the spanner.
For our overall maxflow algorithm, we require additional properties of the dynamic spanner algorithm beyond the embedding; see \Cref{thm:spanner}.

\subsection{Additional Related Work}
\label{sec:intro:related}

\paragraph{Derandomization for flow-related problems.} Deterministic algorithms for sparsest cut, balanced cut, and expander decomposition \cite{CGLNPS21} can be directly applied to give a variety of deterministic algorithms for flow problems, including solving Laplacian linear systems (electric flows), $p$-norm flows on unit graphs \cite{KPSW19}, and more recently, directed Laplacian linear systems \cite{KMP22}. While we utilize deterministic expander decompositions and routings from \cite{CGLNPS21} to give a deterministic spanner with embeddings, these methods seem unrelated to the problem of avoiding subsampling the partial trees.

\paragraph{Maxflow / Min-cost flow.}
Over the last several decades there has been extensive work on the
maxflow and minimum cost flow problems
\cite{GR98,GT87,DS08,D70,D73,GG88,BK04,GT88b,G08,GHKKTW15,OG21, H08,
  G95,CKMST11, M13,M16, S13, S17, KLOS14, T85,GT88b,OPT93,O93,O96,
  GT87,GT89b,DPS18, CohenMSV17,KLS20, LS20,AMV20, AMV21:arxiv, BLLSSSW21,GLP21:arxiv,BGJLLPS21:arxiv}. Some
of these algorithms, primarily in the instance of unit-capacity
maxflow \cite{M13,M16,LS20,KLS20}, can be made deterministic using
deterministic flow primitives from \cite{CGLNPS21}.

\paragraph{Connectivity problems.} There is a long line of work on applications of maxflow to connectivity problems, including sparsest cuts, Gomory-Hu trees, and global mincuts \cite{Gusfield90,AroraK07,OrecchiaSVV08,Sherman09,KawarabayashiT19,AbboudK0PST22,ApersW22,NanongkaiSY19,LiNPSY21,AmeranisCOT23,AbboudKT21,Li21}.
Some of these algorithms for global mincut can be made deterministic \cite{KawarabayashiT19,Li21}, though the techniques often rely on expander decomposition, which, again, does not resolve our issue of sampling partial trees.
Since this work was announced, \cite{NSY23} gave a deterministic reduction from $k$-vertex-connectivity to computing $m^{o(1)}k^2$ maxflows to achieve a deterministic algorithm for $k$-vertex-connectivity running in $m^{1+o(1)}k^2$ time.
\section{Preliminaries}
\label{sec:prelim}

\paragraph{General notation.} We denote vectors by boldface lowercase letters and matrices by boldface uppercase letters. Often, we use uppercase letters to denote diagonal matrices corresponding to vectors with the matching lowercase letter, e.g., $\mL = \diag(\bell)$. For vectors $\bx, \by$ we define the vector $\bx \circ \by$ as the entrywise product, i.e., $(\bx \circ \by)_i = \bx_i\by_i$. We also define the entrywise absolute value of a vector $|\bx|$ as $|\bx|_i = |\bx_i|$.
For positive real numbers $a, b$ we write $a \approx_{\alpha} b$ for some $\alpha > 1$ if $\alpha^{-1}b \le a \le \alpha b$. 
 For integer $h$ we let $[[h]] \defeq \setof{0,1,\ldots,h}$, and $[h] \defeq \setof{1,\ldots,h}$.
For positive vectors $\bx, \by \in \R^{n}_{>0}$, we say $\bx \approx_{\alpha} \by$ if $\bx_i \approx_{\alpha} \by_i$ for all $i \in [n]$.

\paragraph{Graphs.} We consider multi-graphs $G$ with edge set $E(G)$ and vertex set $V(G)$. When the graph is clear from context, we use $E$ for $E(G)$, $V$ for $V(G)$, $m = |E|$, and $n = |V|$.  We assume that each edge $e \in E$ has an implicit direction
and overload the notation slightly by writing $e = (u, v)$ where $u$ and $v$ are the tail and head of $e$ respectively (note that technically multi-graphs do not allow for edges to be specified by their endpoints).
We let $\rev(e)$ be the edge $e$ reversed: if $e = (u, v)$ points from $u$ to $v$, then $\rev(e)$ points from $v$ to $u$.

A \emph{flow vector} is a vector $\bf \in \R^E$. If $\bf_e \ge 0$, this means that $\bf_e$ units flow in the implicit direction of the edge $e$ chosen, and if $\bf_e \le 0$, then $|\bf_e|$ units flow in the opposite direction. A \emph{demand vector} is a vector $\bd \in \R^V$ with $\sum_{v \in V} \bd_v = 0$.
For an edge $e = (u, v) \in G$ we let $\bb_e \in \R^V$ denote the demand vector of routing one unit from $u$ to $v$, i.e., $\bb_e$ has a $1$ at $u$, $-1$ at $v$, and $0$ elsewhere.
Define the edge-vertex incidence matrix $\mB \in \R^{E \times V}$ as the matrix whose rows are $\bb_e$. We say that a flow $\bf$ routes a demand $\bd$ if $\mB^\top \bf = \bd$.

We denote by $\deg_G(v)$ the combinatorial degree of $v$ in $G$, i.e., the number of incident edges. We let $\Delta_{\max}(G)$ and $\Delta_{\min}(G)$ denote the maximum and minimum degree of graph $G$. We define the volume of a set $S \subseteq V$ as $\vol_G(S) \defeq \sum_{v \in S} \deg_G(v)$.

Given a set of edges $F \subseteq E(G)$, we define $G/F$ to be the graph where the edges in $F$ are contracted. In this paper, typically this operations is performed for forests $F$.

\paragraph{Dynamic Graphs.} In this paper, we say that $G$ is a dynamic graph if it undergoes a sequence of updates. In this paper, the graphs we study will undergo three main types of updates.
\begin{itemize}
    \item \textbf{Edge insertion}: an edge $e = (u, v)$ is added to the graph. The edge is encoded by its endpoints, and when necessary, edge lengths and gradients will also be provided.
    \item \textbf{Edge deletion}: an edge $e = (u, v)$ is deleted from the graph. The edge is encoded by its label in the graph.
    \item \textbf{Vertex split}: a vertex $v$ becomes two vertices $v_1$ and $v_2$, and the edges adjacent to $v$ are split between $v_1$ and $v_2$. Precisely, every edge $e_i = (v, u_i)$ is assigned to either $v_1$ or $v_2$, becoming edge $(v_1, u_i)$ or $(v_2, u_i)$ respectively. This operation is encoded by listing out the edges moved to the one of $v_1, v_2$ with a smaller degree. Thus the encoding size is approximately $\min\{\deg(v_1), \deg(v_2)\}$.
\end{itemize}
In this paper, instead of having our dynamic graphs undergo a single update at a time, we think of them as undergoing \emph{batches} $U^{(1)}, U^{(2)}, \ldots$ of updates, where each batch $U^{(i)}$ denotes a set of updates to apply.

We let $|U^{(t)}|$ denote the total number of updates in the batch, i.e., the total number of edge insertions, deletions, and vertex splits. $\Enc(u)$ of an update $u \in U^{(t)}$ denotes its encoding size. As mentioned above, each insertion and deletion can be encoded in size $\O(1)$, while each vertex split can be encoded in size $\O(\min\{\deg(v_1), \deg(v_2)\})$. Finally, the encoding size of a batch $U^{(t)}$ is the sum of the encoding sizes of each of its updates.

Note that $\Enc(U^{(t)}) = \Omega(|U^{(t)}|)$, but may be even larger. However, we can bound the total encoding size using the following lemma.
\begin{lemma}
\label{lemma:encodingSize}
For a dynamic graph $G$ that undergoes batches of updates $U^{(1)}, U^{(2)}, \ldots$ if $G$  initially has $m$ edges then we can bound the total encoding size as
$ \sum_t \Enc(U^{(t)}) = \O\left(m + \sum_t |U^{(t)}|\right).$
\end{lemma}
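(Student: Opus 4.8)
The plan is to bound the encoding size of vertex splits by a charging argument, since insertions and deletions are already $\O(1)$ each and hence contribute only $\O(\sum_t |U^{(t)}|)$ in total. The only nontrivial term is $\sum$ over all vertex splits of $\O(\min\{\deg(v_1),\deg(v_2)\})$, where $v_1,v_2$ are the two parts created. The key observation is the standard ``smaller-half'' trick: when a vertex $v$ is split into $v_1$ and $v_2$, the encoding size is proportional (up to $\O(1)$ factors) to the number of edges incident to the smaller side, and we can think of each such edge as being ``touched'' by the split. I would charge the cost of the split to the edges on the smaller side, and argue that each edge of the (ever-evolving) graph can be charged only $O(\log(\text{final number of edges}))$ times this way — hence $\O(1)$ times after absorbing logs into the $\O(\cdot)$ notation.

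To make the charging precise, I would set up a potential/counter argument. For each edge currently present in $G$, track the size of the connected-component-like structure it ``belongs to'' under the split forest — more simply, note that an edge $e$ incident to $v$ can lie on the smaller side of a split of $v$ only if the degree of its endpoint (on that side) at least halves; more carefully, if edge $e$ is counted in the encoding of a split of vertex $w$, then afterwards $e$ is incident to a vertex whose degree is at most half of $\deg(w)$ before the split. Since the degree of the vertex incident to $e$ starts at most $m + \sum_t |U^{(t)}|$ (it can only grow via insertions and splits affecting $e$, but each insertion touching $e$ is a distinct update and each split that moves $e$ to the small side halves it) and is always at least $1$, edge $e$ can be moved to a small side at most $\log_2(m + \sum_t |U^{(t)}|)$ times. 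Summing over all edges ever present (there are at most $m + \sum_t |U^{(t)}|$ of them, since every edge is either initial or inserted) gives total split-encoding cost $O\big((m + \sum_t |U^{(t)}|)\log(m + \sum_t |U^{(t)}|)\big) = \O\big(m + \sum_t |U^{(t)}|\big)$, where the $\O(\cdot)$ hides the logarithmic factor.

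Combining the two contributions yields $\sum_t \Enc(U^{(t)}) = \O\big(m + \sum_t |U^{(t)}|\big)$, as claimed.

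I expect the main subtlety to be the bookkeeping in the charging argument: one must be careful that ``degree halves each time $e$ is charged'' is actually correct given that edges can also be \emph{inserted} incident to a vertex in between splits (which increases its degree) and that vertex splits create \emph{new} vertex labels. The clean way to handle this is to bound, separately, (i) the number of charges to $e$ that come with a genuine degree-halving of $e$'s endpoint — at most $\log_2$ of the max degree ever attained — and (ii) to note the max degree ever attained by any vertex is at most $m + \sum_t |U^{(t)}|$ since degrees only increase through edge insertions (each counted in some $|U^{(t)}|$) and are split, never otherwise inflated. Everything else is routine.
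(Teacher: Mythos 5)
Your overall plan (handle insertions/deletions trivially, then amortize the vertex-split cost via a smaller-half argument) is the right flavor, but the central counting step is false as stated. You claim that a fixed edge $e$ can land on the smaller side of a split at most $\log_2\left(m+\sum_t |U^{(t)}|\right)$ times because its endpoint's degree halves at each such event. That would be correct only if the endpoint's degree never increased between charges, which is exactly what fails: edge insertions incident to that endpoint (insertions of \emph{other} edges, so they are not ``distinct updates touching $e$'' in any way that limits charges to $e$) can re-inflate the degree arbitrarily often. Concretely, take a vertex $v$ of degree $2$ with incident edges $e,f$; alternate between splitting $v$ so that $e$ is on a side of degree $1$ (encoding cost $1$, charged to $e$) and inserting one fresh edge at $e$'s endpoint. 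After $K$ rounds the edge $e$ has been charged $K$ times while the maximum degree ever attained is $2$ and $m+\sum_t|U^{(t)}| = O(K)$, so the claimed $\log_2$ bound per edge is violated by an arbitrarily large margin. The correct per-edge bound is of the form $\log_2(\deg)+\#\{\text{insertions at $e$'s endpoint}\}$, and summing \emph{that} over edges double-counts each insertion once per edge incident to the split vertex, which can be $\Delta_{\max}$ times; so your summation does not yield the lemma. (The lemma's conclusion is still true in the counterexample, because the split cost there is paid for by the insertions one-for-one; the problem is purely that your accounting does not capture this.)

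The fix is to amortize so that each insertion pays only $O(\log m)$ globally, rather than bounding charges edge by edge. This is what the paper does: it introduces the potential $\Phi=\sum_v \deg(v)\log\deg(v)$, observes that an edge insertion increases $\Phi$ by $O(\log m)$, and that splitting $v$ into $u_1,u_2$ decreases $\Phi$ by $\Omega\left(\min(\deg(u_1),\deg(u_2))\right)$, i.e.\ by the encoding size of the split; since $\Phi$ starts at $O(m\log m)$ and is nonnegative, the total split-encoding cost is $\O\left(m+\sum_t|U^{(t)}|\right)$. If you want to keep a charging (rather than potential) formulation, you must route each unit of split cost either to the initial $O(m\log m)$ budget or to an insertion, with every insertion absorbing only $O(\log m)$ units in total over the whole execution; your current scheme, which charges splits to incident edges and caps charges per edge, cannot be repaired without essentially switching to this insertion-pays-$O(\log m)$ amortization.
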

\begin{proof}
  Each edge insertion/deletion only contributes $\O(1)$ to the
  encoding size. Thus, the size of encodings of edge/insertions
  deletions is at most $\O\left (\sum_t |U^{(t)}|\right).$ In
  order to account for vertex splits, consider the potential $\Phi = \sum_v \deg(v) \log \deg(v)$. It is straightforward to verify that an edge insertion can only increase the potential by $O(\log m)$. When a vertex $v$ is split into $u_1, u_2$, the potential decreases by at least $\Omega(\min(\deg(u_1), \deg(u_2)))$.
\end{proof}

\paragraph{Paths, Flows, and Trees.} Given a path $P$ in $G$ with vertices $u,v$ both on $P$, then we let $P[u,v]$, which is another path, denote the path segment on $P$ from $u$ to $v$. We note that if $v$ precedes $u$ on $P$, then the segment $P[u,v]$ is in the reverse direction of $P$.
For a $a$ to $b$ path $P$ and a $b$ to $c$ path $Q$ we let $P \oplus Q$ denote the $a$ to 
$c$ path that is the concatenation of $P$ and $Q$.

For a forest $F$, we use $F[u,v]$ to denote the unique simple path from $u$ to $v$ along edges in the forest $F$; we ensure that $u, v$ are in the same connected component of $F$ whenever this notation is used. Additionally, we let $\bp(F[u,v]) \in \R^{E(G)}$ denote the flow vector which routes one unit from $u$ to $v$ along the path in $F$. Thus, $|\bp(F[u,v])|$ is the indicator vector for the path from $u$ to $v$ on $F$. Note that $\bp(F[u,v])+\bp(F[v,w])=\bp(F[u,w])$ for any vertices $u, v, w \in V$.

The \emph{stretch} of $e = (u, v)$ with respect to a tree $T$ and lengths $\bell \in \R^E_{>0}$ is defined as
\[ \str^{T,\bell}_e \defeq 1 + \frac{\langle \bell, |\bp(T[u, v])|\rangle}{\bell_e} = 1 + \frac{\sum_{e' \in T[u,v]} \bell_{e'}}{\bell_e}. \]
This differs slightly from the more common definition of stretch because due to the additive $1$; we choose this definition to ensure that $\str^{T,\bell}_e \ge 1$ for all $e$. We define the stretch of an edge $e = (u, v)$ with respect to a forest $F$ analogously if $u, v$ are in the same connected component of $F$. Later in \Cref{def:stretchf}, we introduce a notion of stretch when $u, v$ are \emph{not} in the same component of a rooted forest. In this case, the stretch is instead defined as the total distance of $u, v$ to their respective roots divided by the length of $e$. As stated in the following theorem, it is known how to efficiently construct trees with polylogarithmic average stretch with respect to underlying weights; we call these low-stretch spanning trees (LSSTs).
\begin{theorem}[Static LSST \cite{AN19:journal}]
\label{thm:an}
Given a graph $G = (V, E)$ with lengths $\bell \in \R^E_{>0}$ and
weights $\bv \in \R^E_{>0}$ there is an algorithm that runs in time
$\O(m)$ and computes a tree $T$ such that
$\sum_{e \in E} \bv_e \str^{T, \bell}_e \le \gamma_{LSST}\|\bv\|_1$
for some $\gamma_{LSST} \defeq O(\log n \log \log n)$.
\end{theorem}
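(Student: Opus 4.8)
The plan is to reduce the claim to its classical form and reconstruct the low‑stretch spanning tree construction of Elkin--Emek--Spielman--Teng as refined by Abraham--Neiman, with the weight vector $\bv$ entering only through the thresholds chosen inside a recursive decomposition. Let $d=d_{\bell}$ be the shortest‑path metric induced by $\bell$. Since every tree edge contributes its own $\bell$‑length to tree distances and $d_T(u,v)\ge d(u,v)$, it suffices to build a spanning tree $T$ with $\sum_{e=(u,v)}\bv_e\, d_T(u,v)/\bell_e = O(\gamma_{LSST})\|\bv\|_1$, the additive $1$ in $\str^{T,\bell}_e$ accounting for the $\|\bv\|_1$ term. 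After a standard preprocessing (contract edges much shorter than $\mathrm{diam}/n^2$ and discard edges much longer than $\mathrm{diam}$, rebuilding them into $T$ at negligible cost, or handle this inside the recursion) we may assume the aspect ratio $\max_e\bell_e/\min_e\bell_e$ is $\mathrm{poly}(n)$, so its logarithm is $O(\log n)$.

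The core is a recursive \emph{decompose‑and‑stitch} procedure. Given a cluster $C\subseteq V$ with a designated center $x_0$, a radius parameter $\Delta$ with $C\subseteq B(x_0,\Delta)$, and a ``target'' vertex used only to synchronize the recursion, we partition $C$ into a central piece $X_0\ni x_0$ and petals $X_1,\dots,X_s$ such that (i) each $X_j$ lies in a ball of radius at most $\rho\Delta$ about a center $x_j\in X_j$, where the radius parameters handed to the recursive calls decay geometrically so that the recursion reaches singletons after depth $O(\log n)$, and (ii) each petal $X_j$ is joined to $X_0$ by a single ``bridge'' edge $e_j\in E(C)$ of length $\bell_{e_j}\le\Delta$. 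We recurse on every $X_j$ to get subtrees $T_j$ and set $T|_C=\bigcup_j T_j\cup\{e_j\}_{j\ge1}$. Because bridge lengths and cluster radii both decay geometrically, the tree distance from any vertex of $C$ to $x_0$ is $O(\Delta)$; this invariant is what makes the recursion well founded and bounds the final tree's radius.

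For the stretch, fix $e=(u,v)$ and let $C_0\supsetneq C_1\supsetneq\cdots$ be the nested clusters containing both endpoints, with radius parameters $\Delta_0>\Delta_1>\cdots$ decaying geometrically. If $e$ is never separated, then once $\Delta_i=O(\bell_e)$ both endpoints lie in a radius‑$O(\bell_e)$ cluster and the $u$–$v$ tree path has length $O(\bell_e)$, contributing $O(1)$ to $\str_e$; if $e$ is first separated inside a cluster of radius $\Delta_i$, the tree path runs up to the cluster center and back, of length $O(\Delta_i)$, contributing $O(\Delta_i/\bell_e)$. Hence $\str_e=O(1)+\sum_i \mathbbm{1}[e\text{ separated at level }i]\cdot O(\Delta_i/\bell_e)$, and the whole problem reduces to bounding, at each scale $\Delta$, the quantity $\sum_{e\text{ cut at scale }\Delta}\bv_e\,\Delta/\bell_e$ by roughly $O(\gamma_{LSST}/\log n)$ times the $\bv$‑mass of the cluster, so that summing over $O(\log n)$ scales yields $O(\gamma_{LSST})\|\bv\|_1$. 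This last bound is the crux and the main obstacle. A naive ``sparse ball'' cut (grow a ball from $x_0$ and stop at a radius in a window $[\Delta/4,\Delta/2]$ of light boundary) loses a $\Theta(\log n)$ factor per scale, since the ball mass can double $\log n$ times across the window, giving only $O(\log^2 n\cdot\log\log n)$; obtaining $O(\log n\log\log n)$ requires the Abraham--Neiman \emph{petal decomposition}, which carves off a controlled number of petals, each a ball in a suitable \emph{cone metric} anchored on a shortest path, with radii chosen in geometrically spaced windows so that the per‑scale boundary charge is only $O(\log\log n)$ while simultaneously every produced cluster has controlled radius and only $O(\log n)$ petals are created per scale. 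Establishing this simultaneous control — geometric radius decay, bounded petal count, and light cumulative boundary — is the delicate combinatorial heart of the argument and is exactly where the $\log\log n$ factor originates.

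Two routine points finish it. For determinism, rather than picking petal radii at random and invoking linearity of expectation, at each step we choose the radius in the prescribed window that minimizes the $\bv$‑weighted, $\bell^{-1}$‑scaled boundary; an averaging argument over the window shows this is no worse than the expectation, so the algorithm is deterministic and $\bv$ enters only here. For running time, each scale's decomposition is implemented by Dijkstra‑style ball growing and cone‑metric shortest‑path computations; every vertex and edge participates in $O(\log n)$ levels, and with a bucket/priority‑queue implementation the total work is $\O(m)$, matching the claimed bound.
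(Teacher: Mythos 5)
There is no proof of this statement in the paper to compare against: \cref{thm:an} is imported as a black box from the cited work of Abraham--Neiman, and the surrounding text even remarks that the paper mostly uses the cruder bound $\gamma_{LSST} = O(\log^2 n)$ anyway. So the question is whether your reconstruction stands on its own as a proof of the stated $O(\log n \log\log n)$ bound. It does not, and you essentially say so yourself: the reduction steps you carry out (absorbing the additive $1$ into $\|\bv\|_1$, reducing to polynomial aspect ratio, folding the weights $\bv$ into a deterministic choice of cut radius by averaging over the window, and the Dijkstra-based $\O(m)$ implementation) are all fine at the level of a sketch, but the single claim that actually produces the theorem --- that at each scale the petal decomposition simultaneously guarantees geometric radius decay, a bounded number of petals, and a $\bv$-weighted, $\bell^{-1}$-scaled boundary charge of only $O(\log\log n)$ times the cluster mass --- is stated as ``the delicate combinatorial heart'' and never argued. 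That is not a routine verification one can wave at: it is exactly the content of the petal-decomposition analysis (the cone metrics anchored on shortest paths, the carefully staggered radius windows, and the charging scheme that keeps both the radius invariant and the cut weight under control), and without it your argument only certifies what you call the naive sparse-ball bound, namely $O(\log^2 n\cdot \log\log n)$ or, done carefully, $O(\log^2 n)$.

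Two smaller points. First, the stretch accounting for a non-separated edge needs the radius invariant (tree distance from any vertex of a cluster to its center is $O(\Delta)$) applied at the scale $\Delta_i = O(\bell_e)$, and for a separated edge it needs that both endpoints reach the common center of the separating cluster within $O(\Delta_i)$ in the \emph{final} tree; both hinge on the same unproven invariant about bridge lengths and recursive radii, so the per-edge decomposition of $\str_e$ is itself conditional on the missing lemma. Second, the aspect-ratio preprocessing by contracting very short edges changes the stretch of edges incident to contracted components and needs a short argument that re-expanding the contractions into the tree costs only $O(1)$ per edge; this is standard but should be stated, not just asserted. In summary: the route you chose is the right one (it is the construction behind the citation), but as written the proposal has a genuine gap at the one step that separates the claimed $O(\log n\log\log n)$ from the easy $O(\log^2 n)$ bound that this paper in fact uses.
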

In this paper, in contrast to eg., \cite[Lemma 6.5]{chen2022maximum}, we often use the cruder upper bound of of $\gamma_{LSST} = O(\log^2 n)$. We do this to simplify the presentation as it does not effect the final asymptotic bounds claimed.

\paragraph{Graph Embeddings.}
Given weighted graphs $G$ and $H$ with $V(G) \subseteq V(H)$, we say that $\Pi_{G \xrightarrow{} H}$ is a \emph{graph-embedding} from $G$ into $H$ if it maps each edge $e^G = (u,v) \in E(G)$ to a $u$-$v$ path $\Pi_{G \xrightarrow{} H}(e^G)$ in $H$.
Let $\bw_G$ be the weight function of $G$ and $\bw_H$ be the weight function of $H$. We define the \emph{congestion} of an edge $e^H$ by
\[
\econg(\Pi_{G \xrightarrow{} H}, e^H) \defeq \frac{\sum_{ e^G \in E(G) \text{ with } e^H \in \Pi_{G \xrightarrow{} H}(e^G)} \bw_G(e^G)}{\bw_H(e^H)}
\] 
and the congestion of the embedding by $\econg(\Pi_{G \xrightarrow{} H}) \defeq \max_{e^H \in E(H)} \econg(\Pi_{G \xrightarrow{} H}, e^H)$. Analogously, the congestion of a vertex $v^H \in V(H)$ is defined by 
\[
\vcong(\Pi_{G \xrightarrow{} H}, v^H) \defeq \sum_{e^G \in E(G) \text{ with } v^H \in \Pi_{G \xrightarrow{} H}(e^G)} \bw_G(e^G)
\] and the 
vertex-congestion of the graph-embedding by \[ \vcong(\Pi_{G \xrightarrow{} H}) \defeq \max_{v^H \in V(H)} \vcong(\Pi_{G \xrightarrow{} H}, v^H). \] We define the length of the embedding by $\length(\Pi_{G \xrightarrow{} H}) \defeq \max_{e^G \in E(G)} |\Pi_{G \xrightarrow{} H}(e^G)|$.

Given graphs $A, B, C$ and graph-embeddings $\Pi_{B \to C}$ from $B$ into $C$ and $\Pi_{A \to B}$ from $A$ to $B$. We denote by $\Pi_{B \to C} \circ \Pi_{A \to B}$ the graph embedding of $A$ into $C$ obtained by mapping each edge $e^A = (u,v) \in E(A)$ with path $\Pi_{A \to B}(e^A) = e^B_1 \oplus e^B_2 \oplus \ldots \oplus e^B_k$ in $B$ to the path $\Pi_{B \to C}(e^B_1) \oplus \Pi_{B \to C}(e^B_2) \oplus \ldots \oplus \Pi_{B \to C}(e^B_k)$. The following useful fact is straightforward from the definitions. 

\begin{fact}\label{fact:transitiveEmbeddingCong}
Given graphs $A, B, C$ and graph-embeddings $\Pi_{B \to C}$ from $B$ into $C$ and $\Pi_{A \to B}$ from $A$ to $B$. Then, $\vcong(\Pi_{B \to C} \circ \Pi_{A \to B}) \leq \vcong(\Pi_{B \to C}) \cdot \econg(\Pi_{A \to B})$.
\end{fact}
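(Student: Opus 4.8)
The plan is to establish the inequality one vertex of $C$ at a time: fix an arbitrary vertex $v^C \in V(C)$, prove that $\vcong(\Pi_{B \to C} \circ \Pi_{A \to B}, v^C) \le \econg(\Pi_{A \to B}) \cdot \vcong(\Pi_{B \to C})$, and then take the maximum over $v^C \in V(C)$. So the entire content is a routine unfolding of the three definitions plus one exchange of summation order.

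Concretely, I would first note that, by the definition of the composed embedding, an edge $e^A \in E(A)$ contributes its weight $\bw_A(e^A)$ to $\vcong(\Pi_{B\to C}\circ\Pi_{A\to B}, v^C)$ precisely when the concatenated path $(\Pi_{B\to C}\circ\Pi_{A\to B})(e^A) = \Pi_{B\to C}(e^B_1)\oplus\cdots\oplus\Pi_{B\to C}(e^B_k)$ (where $\Pi_{A\to B}(e^A)=e^B_1\oplus\cdots\oplus e^B_k$) passes through $v^C$, i.e.\ iff there exists an edge $e^B$ on $\Pi_{A\to B}(e^A)$ with $v^C \in \Pi_{B\to C}(e^B)$. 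Upper bounding the indicator ``there exists such an $e^B$'' by the number of such $e^B$ on the path and then swapping the order of summation gives
\[
\vcong(\Pi_{B\to C}\circ\Pi_{A\to B}, v^C) \;\le\; \sum_{e^B \in E(B)\,:\, v^C \in \Pi_{B\to C}(e^B)} \;\; \sum_{e^A \in E(A)\,:\, e^B \in \Pi_{A\to B}(e^A)} \bw_A(e^A).
\]
By the definition of edge-congestion, the inner sum equals $\econg(\Pi_{A\to B}, e^B)\cdot\bw_B(e^B) \le \econg(\Pi_{A\to B})\cdot\bw_B(e^B)$, so the right-hand side is at most $\econg(\Pi_{A\to B})\sum_{e^B : v^C \in \Pi_{B\to C}(e^B)} \bw_B(e^B)$, which by the definition of vertex-congestion is exactly $\econg(\Pi_{A\to B})\cdot\vcong(\Pi_{B\to C}, v^C) \le \econg(\Pi_{A\to B})\cdot\vcong(\Pi_{B\to C})$.

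There is no real obstacle in this argument; the only points requiring a little care are bookkeeping ones. First, a path in $B$ or in $C$ may traverse the same edge or visit $v^C$ more than once, but since we only ever bound an indicator from above by a count this can only help the inequality. Second, one must keep the multigraph conventions straight: all sums are over edges (with multiplicity), not over endpoints, and the weight functions $\bw_A$, $\bw_B$, $\bw_C$ must be applied on their respective graphs — this is exactly what makes the telescoping of $\econg$ and $\vcong$ go through cleanly.
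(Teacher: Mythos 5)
Your proof is correct, and it is exactly the routine unfolding of the definitions that the paper has in mind when it declares \cref{fact:transitiveEmbeddingCong} ``straightforward from the definitions'' and omits the argument: fix $v^C$, replace the indicator by the count of edges $e^B$ on $\Pi_{A\to B}(e^A)$ whose image passes through $v^C$, swap sums, and telescope $\econg(\Pi_{A\to B},e^B)\cdot \bw_B(e^B)$ into $\vcong(\Pi_{B\to C},v^C)$. Your bookkeeping remarks about repeated edges/vertices and about which weight function is used where are exactly the right points of care, so nothing is missing.
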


\paragraph{Computational Model.} For problem instances encoded with $z$ bits, all algorithms developed in this paper work in fixed-point arithmetic where words have $O(\log^{O(1)} z)$ bits, i.e., we prove that all numbers stored are in $[\exp(-\log^{O(1)}z), \exp(\log^{O(1)}z)]$. In particular, \cref{thm:ipmMain} says that the min-ratio cycle problems solved by our algorithm satisfy \cref{def:hiddenStableFlowChasing}, where item \ref{item:quasipoly} says that all weights and lengths are bounded by $\exp(\log^{O(1)}m)$.
\section{Technical Overview}
\label{sec:tech_overview}
Our approach for obtaining a \emph{deterministic} almost-linear time min-cost flow algorithm follows the framework of the recent randomized algorithm in \cite{chen2022maximum}. We start by reviewing the algorithm in \cite{chen2022maximum} and then lay out the challenges in obtaining deterministic analogs of its randomized components. By scaling arguments (see \cite[Lemma C.1]{chen2022maximum}), we
assume that $U, C \le m^{O(1)}$.

\subsection{The Randomized 
Algorithm in \texorpdfstring{\cite{chen2022maximum}}{chen2022}}
\label{sub:rand_summary}

\paragraph{The Outer-Loop: An $\ell_1$-Interior Point Method.} The starting point for the randomized algorithm in \cite{chen2022maximum} is a new $\ell_1$-interior point method (IPM), which is actually completely \emph{deterministic}. This method uses a \emph{potential reduction} IPM inspired by \cite{K84}, where in each iteration, the potential function
$
    \Phi(\bf) \defeq 20m \log(\bc^\top \bf - F^*) + \sum_{e \in E} \left((\bu^+_e - \bf_e)^{-\alpha} + (\bf_e - \bu^-_e)^{-\alpha} \right)
$
is reduced. Here, $\alpha = 1/\Theta(\log m)$, but the reader can think of the barrier $x^{-\alpha}$ as the more standard $-\log x$ for simplicity. 

\cite{chen2022maximum} showed that one can assume that an initial feasible solution $\bf^{(0)}$ is given that routes the demand and has $\Phi(\bf^{(0)}) \leq O(m \log m)$ and that the IPM can be terminated once the potential function value is at most $- 200m\log m$, as at this point, one can round the flow to an exact solution using an isolation lemma; see \cite[Lemma 4.11]{chen2022maximum}. While this step is randomized, it can easily be derandomized using an alternate flow rounding procedure, as is explained later at the start of \cref{subsec:detmincost}. We next discuss how to achieve a potential reduction of $\Phi(\bf)$ in each iteration by $m^{-o(1)}$. This yields that the IPM terminates within $m^{1+o(1)}$ steps. 

To obtain a potential reduction of $m^{-o(1)}$ at each step, given a current feasible flow $\bf$, the update problem involves finding an update direction $\bDelta$ to update the flow to $\bf + \bDelta$ such that  (a) $\bDelta$ is a circulation, i.e., adding it to $\bf$ does not change the net routed demands and (b) $\bDelta$ approximately minimizes the inner product with a linear function (the gradient of $\Phi$), relative to an $\ell_1$-norm that arises from the second derivatives of $\Phi$.
Letting $\bg \in \R^{E}$ denote this gradient and letting $\bell \in \R^{E}_+$ be the edge length (both with respect to the current flow $\bf$), we can write the update problem as
\begin{align}
\min_{\substack{\bDelta \in \R^E : \mB^\top \bDelta = \textbf{0} }} \frac{\bg^\top \bDelta}{\norm{\diag(\bell) \bDelta}}_1. \label{eq:flowMinRatio} \end{align}
We refer to this update problem henceforth as the \emph{min-ratio cycle problem}, since, by a cycle-decomposition argument, the optimal value is always realized by a simple cycle. As shown in \cite{chen2022maximum}, the update problem has several extremely useful properties:
\begin{enumerate}
    \item At every time step $t$, the direction from the current solution $\bf^{(t)}$ towards the optimal flow $\bf^*$, henceforth called the \emph{witness} $\bDelta^{(t)} \defeq \bf^* - \bf^{(t)}$, achieves  $\frac{\bg^\top \bDelta^{(t)}}{\norm{\diag(\bell) \bDelta^{(t)}}}_1 \le -\frac{1}{\Theta(\log m)}.$
    \item Performing the update with a cycle $\bDelta$ with $\frac{\bg^\top \bDelta}{\norm{\diag(\bell) \bDelta}}_1 = -\kappa$
    reduces the potential by $\Omega(\kappa^2)$.
    Thus, even finding an $m^{o(1)}$-approximate min-ratio cycle reduces the potential by $m^{-o(1)}$.
    After $m^{1+o(1)}$ iterations of updates, the potential becomes small enough, and we can round the current flow to an exact solution.
    \item 
    The convergence rate is unaffected if we use approximations $\hat{\bg}$ and $\hat{\bell}$ of the gradient $\bg$ and the lengths $\bell$ such that both $\hat{\bg}$ and $\hat{\bell}$ are updated only a total of $m^{1+o(1)}$ times (here, by update we mean that a single entry of $\hat{\bg}$ and $\hat{\bell}$ is changed) throughout the entire algorithm.
\end{enumerate}
In this way, the $\ell_1$-IPM gives a \emph{deterministic} reduction of (exact) min-cost flow to solving a sequence of stable min-ratio cycle problems.

\paragraph{A Data Structure for the Min-Ratio Cycle Problem.}
Since when solving min-cost flow by approximately solving a sequence of min-ratio cycles, the underlying graph remains the same, and gradient and lengths change sporadically throughout the algorithm, it is useful to think about the repeated solving of the min-ratio cycle problem as a \emph{data structure problem}. This problem is is formalized in \cref{def:ipm:datastructure}. \cite{chen2022maximum} designs a \emph{randomized} data structure for the min-ratio cycle problem which supports the following operations:
\begin{itemize}
    \item $\textsc{Initialize}(G, \hat{\bg}^{(0)}, \hat{\bell}^{(0)})$: initialize the data structure for graph $G$ and the initial approximate gradients, $\hat{\bg}^{(0)}$, and lengths, $\hat{\bell}^{(0)}$, on the edges of $G$.
    \item $\textsc{Update}(\hat{\bg}^{(t)}, \hat{\bell}^{(t)}):$ the $t$-th update replaces current gradient and lengths by $\hat{\bg}^{(t)}$ and $\hat{\bell}^{(t)}$.  
    \item $\textsc{Query}():$ returns a cycle whose ratio with respect to the current gradient $\hat{\bg}^{(t)}$ and lengths $\hat{\bell}^{(t)}$ is within a $m^{o(1)}$ factor of $\bDelta^{(t)} = \bf^* - \bf^{(t)}$.
\end{itemize}
In the $\textsc{Update}$ operation, $\hat{\bg}^{(t)}, \hat{\bell}^{(t)}$ are described by their changes from $\hat{\bg}^{(t-1)}, \hat{\bell}^{(t-1)}$. By the above discussion, there are $m^{o(1)}$ coordinate changes on average per instance.

Note that the output cycle returned by $\textsc{Query}()$ may have nonzero flow on $\Omega(n)$ edges for each of the $\Omega(m)$ iterations (this is often referred to as the \emph{flow decomposition barrier}). Thus we cannot efficiently, explicitly output the solutions. To overcome this issue, the data structure in \cite{chen2022maximum} maintains a $s = m^{o(1)}$ spanning trees $T_1, T_2, \ldots, T_s$ of the graph $G$. Each such tree is itself a dynamic object, i.e., these trees undergo changes over time in the form of edge insertions and deletions. However, the total number of such edge insertions and deletions is at most $m^{1+o(1)}$ when amortizing over the sequence of updates generated by the IPM. Using these dynamic trees $T_1, T_2, \ldots, T_s$, whenever the operation $\textsc{Query}()$ is invoked, the data structure in \cite{chen2022maximum} finds (with high probability) an approximate min-ratio cycle that consists of $m^{o(1)}$ subpaths of a tree $T_i$ and $m^{o(1)}$ additional edges. Using the start and end points of each tree path, the query operation can encode each solution efficiently, as desired.

As shown in \cite{chen2022maximum}, the data structure can overall be implemented to run in amortized $m^{o(1)}$ time per query and update, yielding an almost-linear algorithm for the min-cost flow problem.

\paragraph{Maintaining Trees in the Data Structure.} It remains to review how the data structure in \cite{chen2022maximum} \emph{efficiently} maintains a set of dynamic trees $T = \{T_1, T_2, \ldots, T_s\}$ such that one of the trees yields a cycle with sufficient ratio with high probability, and how to query this cycle.

To construct the set of trees $T$, \cite{chen2022maximum} draws on the theory of low-stretch spanning trees (LSSTs). Let $G = G^{(0)}$ be the original graph whose edge lengths are given by the vector $\hat{\bell}^{(0)}$. \cite{chen2022maximum} applied a standard multiplicative weights argument \cite{M10,S13,KLOS14} to construct a set of $k$ (partial) trees $T_1, \dots, T_k$ such that every edge $e$ had average stretch $\O(1)$ over these $k$ trees. Thus, if $\hat{\bell}$ is the vector of stretches of a random tree among the $T_i$, then the \emph{witness} $\bDelta^{(0)} = \bf^* - \bf^{(0)}$ satisfies in expectation $\| \diag(\Tilde{\bell}) \bDelta^{(0)} \|_1 \le m^{o(1)} \| \diag(\hat{\bell}^{(0)}) \bDelta^{(0)} \|_1$
By Markov's inequality, this same guarantee (up to constants) must hold with probability at least $1/2$.
When this occurs, we say the \emph{stretch of the witness with respect to the tree} is low.
By sampling $O(\log m)$ trees among $\{T_1, \dots, T_k\}$ \cite{chen2022maximum} ensures that this occurs in at least one tree with high probability.
A basic flow decomposition result then implies that one of the \emph{fundamental cycles} formed by an off-tree edge and the tree-path (in $T$) between its endpoints yields an $m^{o(1)}$-approximate solution. 

Now, consider what happens after the current flow solution is changed from $\bf^{(0)}$ to $\bf^{(1)}$ by adding the first update.
This changes the witness from $\bDelta^{(0)}$ to $\bDelta^{(1)} = \bf^* - \bf^{(1)}$ and changes the (approximate) gradient from $\hat{\bg}^{(0)}$ to $\hat{\bg}^{(1)}$, and lengths from $\hat{\bell}^{(0)}$ to $\hat{\bell}^{(1)}$. To solve the next update problem, the sampled trees have to be updated so that the stretch of the new witness $\bDelta^{(1)}$ is again low with respect to at least one of the trees (now with respect to $\hat{\bell}^{(1)}$).

To update the trees, for each sampled tree $T$ some edges are removed and then replaced by new edges. To obtain an efficient implementation, \cite{chen2022maximum} applies the well-established technique of maintaining a hierarchy of partial trees/forests. At each level, a partial tree is computed, and the next level then finds again a partial tree in the graph where edges in the partial tree at the higher levels are contracted. Let us illustrate how such a partial tree is found. At the highest level of the hierarchy, a partial tree/forest $F$ is computed with $m/k$ connected components for some target value $k = m^{o(1)}$. $F$ is computed so that it only undergoes edge deletions, and at most $\O(1)$ per update. Additionally, either a fundamental cycle of $F$ or a cycle in $G/F$ (the graph where $F$ is contracted) has ratio within $\O(1)$ factor of the desired min-ratio cycle in $G$. This is illustrated in 
\Cref{fig:contraction}.
This reduces the problem of finding a min-ratio cycle mainly to finding such a cycle in the graph $G / F$, which has at most $m/k$ vertices. 

We refer to the step of maintaining $F$ and contracting to the graph $G/F$ as the \emph{vertex sparsification} phase. However, $G / F$ might still contain almost all edges of $G$. 
To reduce the edge count, \cite{chen2022maximum} computes a spanner $G'$ of $G/F$ that yields a reduction in the number of edges to roughly $m/k$. We refer to this as \emph{edge sparsification}, and give a more detailed overview of the construction in \cite{chen2022maximum} below. The spanner also allows us to either obtain the solution to the min-ratio cycle problem directly from the spanner construction, or approximately preserve the solution quality in $G'$. The algorithm then recurses, again building a partial tree (forest) $F'$ on $G'$, finding a spanner, and so on. The tree $T$ is taken as the union of the contracted forests $F, F'$, and the forests found in deeper recursion levels. 

Whenever lengths and gradients change, updates are handled by making a few adjustments to the forest and then propagating changes to the deeper levels. By controlling carefully the propagation, the total number of updates across levels remains small.

\begin{figure}[ht]
\centering
\includegraphics[width=0.7\textwidth]{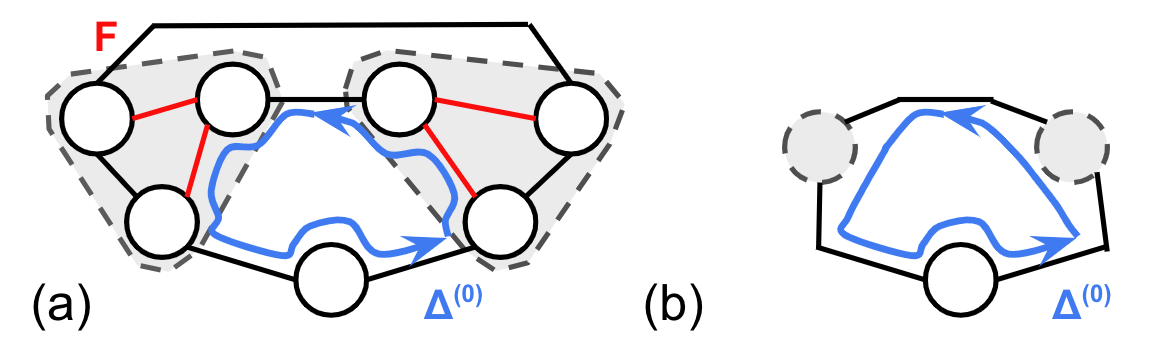}
\caption{In (a), we see a graph $G$ and a forest $F$ (the subgraph shown with red edges). The two non-trivial connected components of $F$ are encircled by a dotted border bash. In blue, we show the witness circulation $\bDelta^{(0)}$. In (b), we see the graph $G/F$ obtained by contracting the components of $G$ and the circulation $\bDelta^{(0)}$ again in blue mapped to $G/F$.
The algorithm ensures that since each contracted edge is approximated well by a path in $F$, either a solution to (\ref{eq:flowMinRatio}) is formed by one of the fundamental cycles, or that the mapped circulation in $G/F$ is a good solution to (\ref{eq:flowMinRatio}).}
\label{fig:contraction}
\end{figure}

\paragraph{Edge Sparsification.} As the forest $F$ undergoes edge deletions, the graph $G/F$ undergoes edge deletions and vertex splits. To design an algorithm to maintain a spanner of $G/F$, \cite{chen2022maximum} gave a deterministic reduction from maintaining a spanner in an unweighted graph under edge deletions and vertex splits to statically constructing a spanner with low-congestion \emph{edge embeddings}. This means that for a spanner $H \subseteq G$, every edge $e \in E(G) \setminus E(H)$ is mapped into a short path $\Pi_{G \to H}(e)$ in $H$ between its endpoints with at most $m^{o(1)}$ edges, such at every vertex in $H$ has at most $m^{o(1)} \deg_G(v)$ paths through it. Note that at least $\deg_G(v)$ paths go through $v$ in any embedding, so having low vertex congestion means that we match this bound up to the $m^{o(1)}$ factor.
It is worth noting that while we only require a spanner of $G/F$ for the algorithm, the reduction only works with a low-congestion embedding. Additionally, our dynamic low-stretch tree data structure makes use of this additional low-congestion property.

Thus, we focus on statically constructing spanners of unweighted graphs with low-congestion embeddings. \cite{chen2022maximum} designed the following algorithm to achieve this. We may assume that $G/F$ has is unweighted by the standard trick of bucketing edges in $\O(1)$ groups whose lengths are within a factor of $2$. First, the graph $G/F$ is decomposed into expanders $H_1, H_2, \ldots, H_{\ell}$, where each vertex appears in $O(\log m)$ expanders, and each expander is almost-uniform-degree in that every degree is within an $O(\log m)$ factor of the average. Thus, for each graph $H_i$, \emph{random} sampling each edge uniformly with probability about $\O(1)$ divided by the degree of $H_i$ yields a graph $H'_i$ that again is an almost-uniform-degree expander, except now with polylogarithmic degrees. We let the spanner $G'$ of $G$ be the union of all such sampled graphs $H_i'$ and clearly $G'$ is sparse, i.e., has at most $\tilde{O}(|V(G/F)|)$ edges. This is the only randomized component of the edge sparsification step.

While proving that $G'$ is a spanner of $G$ is rather straightforward, we also must construct a low-congestion embedding of $G$ into $G'$. In \cite{chen2022maximum}, this is achieved by embedding each graph $H_i$ into the corresponding down-sampled graph $H_i'$ for every $i$. In \cite{chen2022maximum}, this is achieved by a deterministic procedure that internally uses the decremental shortest paths data structure on expanders by Chuzhoy and Saranurak \cite{CS21}. Finally, \cite{chen2022maximum} takes the embedding from $G$ into $G'$ to be the union of the embeddings from $H_i$ to $H_i'$ for all $i$.

We conclude our discussion on edge sparsification by describing how to find a min-ratio cycle from the spanner construction. Given a spanner $G'$ of $G/F$ with embedding, a flow decomposition arguments shows that either some \emph{spanner cycle} $e \oplus \Pi_{(G/F) \to G'}(e)$ has ratio within $m^{o(1)}$ of $\bDelta^{(t)}$, or the
circulation in $G'$ achieved by routing $\bDelta^{(t)}$ along the paths $\Pi_{(G/F) \to G'}$ into $G'$ has ratio within $m^{o(1)}$ of $\bDelta^{(t)}$. By maintaining the paths $\Pi_{(G/F) \to G'}(e)$ explicitly, and recursing on $G'$, our data structure can efficiently query for a min-ratio cycle. This argument is covered in more detail in \Cref{sec:routing}.

\paragraph{A Note on the Interaction Between Data Structure and Witness.}
The above description of the data structure is an oversimplification and hides many key details. Perhaps most importantly, the proof of correctness for the data structure in \cite{chen2022maximum} crucially hinges on the existence of the witness $\bDelta^{(t)} = \bf^{*}-\bf^{(t)}$ in order to show that the near-optimal cycle $\bDelta^{(t)}$ does not ever incur too much stretch even under possibly adaptive updates. Put another way, the data structure does \emph{not} work against general adaptive adversaries, whose updates can depend on the randomness of the data structure, but can be used to solve min-cost flow due to the stability of the witness solution. Similarly, in this paper we do not design a deterministic data structure for general min-ratio cycle instances. Instead, we also require that the update sequence admits a stable witness; leveraging the stable witness in both cases require modifications to both the LSST and spanner data structures, and these are deferred to the main body of the paper.

\subsection{A Deterministic Min-cost Flow Algorithm}
\label{subsec:detmincost}

Building on the exposition of the algorithm in \cite{chen2022maximum} given in \Cref{sub:rand_summary}, we are now ready to discuss the key changes necessary to obtain our deterministic algorithm. Here, we highlight the parts of \cite{chen2022maximum} that required randomization and outline strategies to remove the randomness.

\paragraph{Derandomizing the IPM Framework.}
The main challenge in derandomizing the framework of \cite{chen2022maximum} is in derandomizing the vertex and edge sparsification routines and solving the requisite dynamic min-ratio cycle problem. Indeed, derandomizing the remainder of the IPM framework is straightforward, because both the IPM presented in the last section and the procedure that maintains the approximate gradient $\hat{\bg}$ and the lengths $\hat{\bell}$ are completely deterministic. The only use of randomness in the above approach, beyond the min-ratio cycle data structure, occurred as \cite{chen2022maximum} rounded the solution when the potential is sufficiently small, i.e.  $\Phi(\bf) \le - \Omega(m\log m)$, via the Isolation Lemma. However, the use of the Isolation Lemma can be replaced by a deterministic flow rounding procedure using Link-Cut trees \cite{ST83} as was shown in \cite{KangP15} (see \Cref{lemma:detflowrounding}).

In the min-ratio cycle data structure of \cite{chen2022maximum}, there are two randomized components:
\begin{enumerate}
\item $\O(1)$ forests are sampled at each level of the hierarchy, and
\item The spanner of $G/F$ is constructed by decomposing $G/F$ into expanders, and random sampling within each expander.
\end{enumerate}
Below we discuss how to remove the randomness from the first \emph{vertex sparsification} step, and then discuss the second \emph{edge sparsification} step.

\paragraph{Derandomizing Vertex Sparsification.}
Recall that the vertex sparsification construction described above computes a set of $k$ forests $F_1, \dots, F_k$. Of these, $\O(1)$ are sampled, and for each sampled forest $F$, the algorithm recurses on $G/F$. A natural approach to derandomize this is to instead recurse on \emph{all} $k$ forests in the collection to deterministically ensure that some forest has low stretch of the witness $\bDelta^{(t)}$. Unfortunately this is too expensive, as it leads to $\Omega(m)$ trees being maintained overall. Additionally, every update to the input graph may change every tree, and this approach would therefore lead to linear time per update which is far more than we can afford.

However, we show that, somewhat surprisingly, the following strategy works: instead of directly recursing on all forests, and therefore, on all graphs $G / F_1, G / F_2, \ldots, G / F_s$, we can recurse only on the first such tree $G / F_1$ and check if we find a solution to the min-ratio cycle problem. If we do, we do not need to check $G / F_2, \ldots, G / F_s$ at that moment. Otherwise, we move on to $G / F_2$ (we refer to this as a \emph{shift}), but now know that $G / F_1$ did at some point not give a solution to the min-ratio cycle and $F_1$ is therefore not a good forest so that we never have to revisit it. Carefully shifting through these forests $F_1, F_2, \ldots, F_s$, it then suffices to only forward the updates to $G$ (in the form of updates to $\hat{\bell}$) to the forest that is currently used. When we move to the next forest after failing to identify a solution to the min-ratio cycle problem, we then apply all updates that previously happened to $G$ to the contracted graph. We apply this shifting procedure recursively.

Now we need to understand why this improved the amortized update time to $m^{o(1)}$, and perhaps more interestingly, why the algorithm finds an approximate min-ratio cycle without cycling through many graphs $G/F_i$ over the course of the algorithm. At a high level, the runtime is acceptable because the number of dynamic updates to the forests $F_i$ is at most that of the randomized case, as we only maintain a single branch of the recursion at a time. Shifting between forests does not cause dynamic updates, and thus can be charged to the original construction cost.

To understand why the algorithm does not have to shift through several graphs $G/F_i$ every iteration, recall that the witness $\bDelta^{(t)} = \bf^* - \bf^{(t)}$ is stable in that only $\O(1)$ edges values change by a constant factor multiplicatively on average per iteration, and that these edges are passed to the data structure. This allows us to show that if we find a forest $F_i$ whose stretch against the witness $\bDelta^{(t)}$ was small, then it stays small until we must rebuild after about $m/k$ updates, or $\|\diag(\bell^{(t)}) \bDelta^{(t)}\|_1$ decreases by a constant, which can only happen $\O(1)$ times. This way, over the course of $m/k$ updates, our algorithm only needs to shift $\O(k)$ total times. The major challenge towards formalizing this analysis is that the data structure has multiple levels, which severely complicates the condition that a forest $F_i$ maintains small stretch for several iterations, because we do not know which level caused the failure. We analyze this algorithm through what we call the \emph{shift-and-rebuild game} (\Cref{sec:rebuilding}), a generalization of the (simpler) rebuilding game in \cite{chen2022maximum}.

\paragraph{Derandomizing Edge Sparsification.} From the description given above, the main challenge for derandomization of the edge sparsification procedure from \cite{chen2022maximum} is to find a spanner $H'_i$ of an almost-uniform-degree expander $H_i$ such that $H'_i$ consists of few edges and such that we can find a small vertex congestion short-path embedding of the graph $H_i$ into $H_i'$. 

We use the following natural derandomization approach: given $H_i$ with maximum-degree $d_i^{\max}$, we first deterministically construct a constant-degree expander $W$ over the vertex set of $H_i$. Using the tools from \cite{chen2022maximum}, we then compute an embedding $\Pi_{W \to H_i}$ from $W$ into the graph $H_i$ with $m^{o(1)}$ vertex congestion using only short paths. Reusing these tools, we also compute an embedding $\Pi_{H_i \to W}$ from $H_i$ into $W$ with $m^{o(1)} \cdot d_i^{\max}$ \emph{edge congestion} using only short paths.

Now consider the embedding given by $\Pi_{W \to H_i} \circ \Pi_{H_i \to W}$ which maps edges from $H_i$ to paths in $W$ and then back to paths in $H_i$. We claim that the graph $H_i'$ consisting of the edges in the image of $\Pi_{W \to H_i}$ is a spanner, and $\Pi_{W \to H_i} \circ \Pi_{H_i \to W}$ embeds $H_i$ into $H_i'$ with low vertex congestion and short paths.
To see this, we first show that $H_i'$ is sparse, i.e., it has at most $|V(H_i)|m^{o(1)}$ edges. This follows because $W$ has only $O(|V(H_i)|)$ edges by construction, and each edge is mapped to a path of length at most $m^{o(1)}$. Thus the image of $\Pi_{W \to H_i}$ consists of at most $|V(H_i)|m^{o(1)}$ edges. Further, using \Cref{fact:transitiveEmbeddingCong}, we immediately obtain that $\Pi_{W \to H_i} \circ \Pi_{H_i \to W}$ has vertex congestion $m^{o(1)} \cdot d_i^{\max}$ and it is not hard to see that each embedding path in $\Pi_{W \to H_i} \circ \Pi_{H_i \to W}$ is short.

There are some additional side constraints that the
spanners need to satisfy to work in the framework of our overall
algorithm, relating to leveraging the stability of the witness
$\bDelta^{(t)}$. Ensuring that these constraints are met requires additional careful analysis, which we give in \Cref{sec:spanner}.

\paragraph{Dynamic Low-Stretch Trees.} Our algorithm that dynamically maintains low-stretch trees uses a very similar hierarchical data structure as to our dynamic min-ratio cycle algorithm. At the top level, we statically compute a low-stretch tree, and maintain a partial forest $F$ with $O(m/k)$ connected components under edge updates. We then maintain a spanner of $G/F$ with explicit edge embeddings by applying the deterministic edge sparsification algorithm described above. Finally, we recurse on the spanner of $G/F$.

\section{Flow Framework}
\label{sec:framework}
In this section, we discuss our main algorithm for solving flow problems to high accuracy.

We first note that in order to solve a min-cost flow problem exactly, it suffices to find a good enough fractional solution. We use the following result which, as an immediate corollary, shows that to solve min-cost flow it suffices to find a feasible fractional
flow $\bf$ with a cost that is within an additive $\nicefrac{1}{2}$ of the optimal
cost.

\begin{lemma}[{\cite[Section 4]{KangP15}}]
\label{lemma:detflowrounding}
There is a deterministic algorithm which when given a feasible fractional flow $\bf$ in a $m$-edge $n$-vertex mincost flow instance with integer capacities outputs a feasible integer flow $\bf'$ with cost no larger than $\bf$, in $O(m \log m)$ time.
\end{lemma}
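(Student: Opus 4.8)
The plan is to prove this by cycle-cancelling on the non-integral edges of $\bf$, accelerated with link-cut trees. Write $\bd \defeq \mB^\top \bf$ for the (integral) demand routed by $\bf$, and call an edge $e$ \emph{fractional} if $\bf_e \notin \Z$. The structural heart of the argument is that whenever $\bf$ has a fractional edge, the subgraph $G_{\mathrm{frac}}$ spanned by the fractional edges contains a cycle. To see this, note $\lfloor \bf \rfloor$ has integral entries, so $\mB^\top \lfloor \bf \rfloor$ and hence $\mB^\top(\bf - \lfloor\bf\rfloor)$ are integral; but at any vertex $v$ of degree exactly one in $G_{\mathrm{frac}}$, the net flow of $\bf - \lfloor\bf\rfloor$ at $v$ is $\pm(\bf_e - \lfloor\bf_e\rfloor)\in(-1,0)\cup(0,1)$, contradicting integrality. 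Thus every vertex of $G_{\mathrm{frac}}$ has degree $0$ or at least $2$, so any component of $G_{\mathrm{frac}}$ containing an edge has minimum degree $\ge 2$ and therefore contains a cycle.

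Given such a cycle $C$, I would push flow around $C$ in whichever of its two orientations has total cost $\le 0$ (one always does, since the two orientations have opposite cost), by the largest amount $\delta > 0$ for which every edge $e \in C$ stays in the interval $[\lfloor\bf_e\rfloor, \lfloor\bf_e\rfloor + 1]$. Since $\bf_e \in (\lfloor\bf_e\rfloor, \lfloor\bf_e\rfloor+1)$ and the capacities are integral, this interval lies inside the capacity interval of $e$, so feasibility is preserved; $\mB^\top \bf$ is unchanged because $C$ is a cycle; $\bc^\top \bf$ does not increase; and at least one edge of $C$ attains an integer value. Crucially, once an edge is integral it will never lie on a future cancelling cycle (which is drawn from the current fractional subgraph), so the set of fractional edges only shrinks, it shrinks by at least one each step, and after at most $m$ steps $\bf$ is integral with $\mB^\top\bf = \bd$ and $\bc^\top \bf \le \bc^\top \bf^{\mathrm{initial}}$.

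For the running time, a direct implementation — finding a cycle by a traversal and pushing — costs $O(m)$ per step and $O(m^2)$ total. To obtain $O(m\log m)$ I would maintain a spanning forest of the current fractional subgraph in link-cut trees \cite{ST83}, inserting fractional edges one by one; when inserting $e=(u,v)$ would close a cycle through the current tree path between $u$ and $v$, use path-aggregate queries to read off the cost-favorable orientation and the bottleneck $\delta$, perform a path update adding $\pm\delta$ along the cycle, and then cut every edge of the cycle that has become integral. Each edge is inserted at most once and cut at most once (an integral edge is removed for good; an edge placed in the forest stays until cut), so there are $O(m)$ link-cut operations at $O(\log m)$ amortized cost apiece. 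This is precisely the classical dynamic-trees speedup of cycle cancelling.

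The step I expect to be the main obstacle is this last one: pinning down the link-cut bookkeeping rigorously — in particular, correctly handling a single push that integralizes several cycle edges at once, and re-inserting into the forest the cycle edges that remain fractional but were not previously present — while keeping the total number of insertions and cuts $O(m)$. The structural lemma and the feasibility/cost invariants of the cancelling step are routine once set up.
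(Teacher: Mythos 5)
Your proposal is correct, and it is essentially the argument behind the result the paper itself does not prove but imports from \cite{KangP15}: cancel cycles supported on the fractional edges (which always exist since the fractional subgraph has minimum degree $2$ by integrality of $\mB^\top \bf$), push in the non-positive-cost orientation until an edge becomes integral, and implement the whole process with link-cut trees so that each edge is linked and cut at most once, giving $O(m\log m)$ total time. The structural lemma, the feasibility/cost invariants, and the amortized operation count you give all check out, so there is no gap beyond the routine dynamic-tree bookkeeping you already flag.
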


To find such an approximate min-cost flow, we use the IPM algorithm introduced in Chen et al.~\cite{chen2022maximum} that can find an almost-optimal fractional solution to the min-cost flow problem by solving a sequence of min-ratio cycle problems. In order to state the guarantees of the algorithm, we first define the min-ratio cycle problem and the dynamic variant of it that we consider.

\begin{definition}[Min-Ratio Cycle]
Given a graph $G(V, E),$ gradients $\bg \in \R^E$, and lengths $\bell \in \R_{>0}^E,$ the \emph{min-ratio cycle problem} seeks a circulation $\bDelta$ satisfying $\mB^\top\bDelta=0$ 
that (approximately) minimizes $\frac{\l \bg, \bf \r}{\|\mL\bf\|_1}$ where $\mL = \diag(\bell)$.
\end{definition}

Observe that the minimum objective value of the min-ratio cycle problem is non-positive since for any circulation $\bDelta$, the flow $-\bDelta$ is also a circulation.

Extending the problem definition to the dynamic setting, a dynamic min-ratio cycle problem with $T$ instances is described by a dynamic graph $G^{(t)}$, gradients $\bg^{(t)} \in \R^E$, and lengths $\bell^{(t)} \in \R_{>0}^E$, where the dynamic graph is undergoing a batch of updates $U^{(1)},\ldots, U^{(T)}.$

The IPM algorithm from \cite{chen2022maximum} requires solving a dynamic min-ratio cycle problem.
The data structure from Chen et al. for solving the dynamic min-ratio cycle problem requires a stability condition, roughly requiring that there is a dynamic witness for the problem instances whose length changes slowly across iterations. This condition is captured in the following definitions: 

\begin{definition}[Valid pair]
\label{def:validpair}
For a graph $G = (V, E)$ with lengths $\bell \in \R^E_{>0}$, we say that $\bc, \bw \in \R^E$ are a \emph{valid pair} if $\bc$ is a circulation and $|\bell_e\bc_e| \le \bw_e$ for all $e \in E$.
\end{definition}
\begin{restatable}[Hidden Stable $\alpha$-Flow Updates]{definition}{defHiddenStableFlowChasing}
  \label{def:hiddenStableFlowChasing}
  We say that a dynamic min-ratio cycle instance described by a dynamic graph $G^{(t)},$ gradients $\bg^{(t)},$ and lengths $\bell^{(t)}$
  satisfies the \emph{hidden stable $\alpha$-flow chasing} property if there are hidden dynamic circulations $\bc^{(t)}$ and hidden dynamic upper bounds $\bw^{(t)}$ such that the following holds at all stages $t$:
  \begin{enumerate}
  \item \label{item:circulation}
  $\bc^{(t)}$ is a circulation, i.e., $\mB_{G^{(t)}}^\top \bc^{(t)} = 0$.
  \item \label{item:width}
  $\bc^{(t)}$ and $\bw^{(t)}$ are a valid pair with respect to $G^{(t)}$.
  \item $\bc^{(t)}$ has sufficiently negative objective value relative to $\bw^{(t)},$  i.e.,  $\frac{\langle \bg^{(t)}, \bc^{(t)}\rangle}{\|\bw^{(t)}\|_1} \le -\alpha.$
  
  \item \label{item:widthstable}
  For any edge $e$ in the current graph $G^{(t)}$, and any stage $t' \leq t$, if the edge $e$ was not explicitly inserted after stage $t'$, then $\bw^{(t)}_e \le 2 \bw^{(t')}_e$.
  However, between stage $t'$ and $t$, endpoints of edge $e$ might change due to vertex splits.
  \item \label{item:quasipoly}
  Each entry of $\bw^{(t)}$ and $\bell^{(t)}$ is quasipolynomially lower and upper-bounded:
    \[ \log \bw^{(t)}_e \in [-\log^{O(1)} m, \log^{O(1)} m] \text{   and   } \log \bell^{(t)}_e \in [-\log^{O(1)} m, \log^{O(1)} m] \forall e \in E(G^{(t)}). \]
  \end{enumerate}
\end{restatable}
Intuitively, \cref{def:hiddenStableFlowChasing} says that even while $\bg^{(t)}$ and $\bell^{(t)}$ change, there is a witness circulation $\bc^{(t)}$ that is fairly stable.
In particular, there is an upper bound $\bw^{(t)}$ on the coordinate-wise lengths of $\bc^{(t)}$ that stays the same up to a factor of $2$, except on edges that are explicitly updated.
Interestingly, even though both $\bc^{(t)}$ and $\bw^{(t)}$ are hidden from the data structure, their existence is sufficient to facilitate efficient implementations.
For brevity, use the term \emph{Hidden Stability} to refer to \cref{def:hiddenStableFlowChasing} in the rest of the paper.

\begin{definition}\label{def:ipm:datastructure}
The problem of $\kappa$-approximate Dynamic Min-Ratio Cycle with Hidden Stability asks for a data structure that, at every stage $t$, finds a
  circulation $\bDelta^{(t)},$ i.e.,
  $\mB_{G^{(t)}}^\top \bDelta^{(t)} = 0$ such that
  $ \frac{\langle \bg^{(t)}, \bDelta\rangle}{\|\mL^{(t)}\bDelta\|_1}
  \le -\kappa\alpha$. Additionally, we require that the data structure maintains a flow $\bf \in \R^E$
  that is initialized at $\mathbf{0}$, and supports the following
  operations:
  \begin{enumerate}
  \item \label{item:positiveflow} 
    $\textsc{Update}(U^{(t)}, \bg^{(t)}, \bell^{(t)},
    \eta)$.
    Apply edge insertions/deletions specified in updates
    $U^{(t)}$ and update gradients $\bg^{(t)}$ and lengths
    $\bell^{(t)}$ for these edges. Find a circulation $\bDelta^{(t)}$
    that approximately solves the min-ratio problem as noted above.
    Update $\bf \assign \bf - \beta\bDelta^{(t)},$ where
    $\beta = \frac{\eta}{(\bg^{(t)})^{\top}\bDelta^{(t)}}.$
  \item $\textsc{Query}(e)$. Returns the value $\bf_e.$
  \item $\textsc{Detect}()$. For a fixed parameter $\eps$, where
    $\bDelta^{(t)}$ is the update vector at stage $t$, returns
    \begin{align}
      \label{eq:detect} S^{(t)} \defeq \left\{ e \in E : \bell_e \sum_{t' \in [\last^{(t)}_e+1,t]} |\bDelta_e^{(t')}| \ge \eps \right\}
    \end{align}
    where $\last^{(t)}_e$ is the last stage before $t$ that $e$ was
    returned by $\textsc{Detect}()$.
  \end{enumerate}
\end{definition}
Observe that the approximation ratio holds only with respect to the quality of the hidden stable witness circulation $\bc^{(t)},$ and not with respect to the best possible circulation. As a sanity check, if the data structure could find and return $\bc^{(t)}$ at each iteration, it would achieve a 1-approximation.
Thus, the data structure guarantee can be interpreted as efficiently representing and returning a cycle whose quality is within a $m^{o(1)}$ factor of $\bc^{(t)}$. Eventually, we will add $\bDelta^{(t)}$ to our flow efficiently by using link-cut trees to efficiently implement $\textsc{Update}(\cdot)$.

The following theorem encapsulates the IPM algorithm presented
in~\cite{chen2022maximum} and its interface with the dynamic min-ratio
cycle data structure.
\begin{theorem}[\cite{chen2022maximum}]\label{thm:ipmMain}
  Assume we have access to a $\kappa$-approximate dynamic min-ratio cycle with hidden stability data structure, for some
  $\kappa \in (0,1]$ as in \Cref{def:ipm:datastructure}.
  Then, there is a deterministic IPM-based algorithm that given a min-cost flow problem with integral costs and capacities bounded by
  $\exp((\log n)^{O(1)})$ in absolute value, solves $\tau = \O(m\kappa^{-2})$ min-ratio cycle instances, and returns a flow with cost within additive $1/2$ of optimal.
  These
  $\widetilde{O}(m\kappa^{-2})$ many min-ratio cycle instances satisfy
  the hidden stable $\alpha$-flow property for
  $\alpha = 1/\Theta(\log m)$.

  Over these min-ratio cycle instances, the total sizes of the updates
  is $\sum_{t \in \tau} |U^{(t)}| = \O(m\kappa^{-2}),$ and the
  algorithm invokes \textsc{Update}, \textsc{Query}, and
  \textsc{Detect} $\O(m\kappa^{-2})$ times. Furthermore, it is guaranteed
  that over all these instances, the total number of edges included in
  any of the \textsc{Detect} outputs is $\O(m\kappa^{-2}).$

  The algorithm runs in time $\O(m\kappa^{-2})$ plus the
  time taken by the data structure.  
\end{theorem}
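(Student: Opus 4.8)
The plan is to instantiate the potential-reduction $\ell_1$-interior point method sketched in \Cref{sub:rand_summary}, bound its iteration count, and then certify that the stream of min-ratio cycle instances it generates obeys Hidden Stability. First I would invoke the scaling preprocessing (as in \cite[Lemma C.1]{chen2022maximum}) to assume $U, C \le m^{O(1)}$ and to obtain an initial feasible flow $\bf^{(0)}$ routing the demand with potential $\Phi(\bf^{(0)}) = O(m\log m)$, where $\Phi$ is the potential from \Cref{sub:rand_summary} with barrier exponent $\alpha = 1/\Theta(\log m)$; the IPM runs until $\Phi(\bf^{(t)}) \le -\Omega(m\log m)$, at which point a standard argument shows the current flow has cost within additive $1/2$ of optimal (the conversion to an integral flow is \Cref{lemma:detflowrounding}, outside this statement). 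At each stage the IPM maintains lazy approximations $\hat{\bg}^{(t)}, \hat{\bell}^{(t)}$ of the gradient $\g\Phi(\bf^{(t)})$ and of the barrier-induced lengths $\bell^{(t)}_e \asymp (\bu^+_e - \bf^{(t)}_e)^{-1-\alpha} + (\bf^{(t)}_e - \bu^-_e)^{-1-\alpha}$, feeds them to $\textsc{Update}$ together with the $\O(1)$-amortized batch $U^{(t)}$ of edges whose approximations must change, receives a circulation $\bDelta^{(t)}$ of nonpositive ratio at most $-\kappa\alpha$, and applies $\bf \assign \bf - \beta\bDelta^{(t)}$ with $\beta = \eta / ((\bg^{(t)})^\top\bDelta^{(t)})$ for a suitable step-size parameter $\eta$; the set of edges whose lazy approximations are refreshed at stage $t$ is exactly $\textsc{Detect}()$, i.e.\ those that have accumulated $\ge \eps$ worth of length-weighted flow change since their last refresh.

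Next I would establish the two analytic facts that drive the method. (i) \emph{A good witness exists}: the direction $\bDelta^{*,(t)} := \bf^* - \bf^{(t)}$ towards a fixed optimal flow satisfies $\langle \bg^{(t)}, \bDelta^{*,(t)}\rangle / \|\mL^{(t)}\bDelta^{*,(t)}\|_1 \le -\alpha$; to see this one bounds $\|\mL^{(t)}\bDelta^{*,(t)}\|_1 = O(m)$ — each coordinate $\bell^{(t)}_e|\bDelta^{*,(t)}_e|$ is $O(1)$ once the barrier normalization is accounted for, and there are $m$ of them — while the decrease term $20m\log(\bc^\top\bf - F^*)$ of $\Phi$ forces $\langle \bg^{(t)}, \bDelta^{*,(t)}\rangle = -\Omega(m)$. (ii) \emph{Progress per step}: moving along a circulation of ratio $-r$ with $r \in (0,1]$ and the prescribed step size decreases $\Phi$ by $\Omega(r^2)$ (the barrier terms are controlled because the step size keeps the length-weighted change bounded); with $r = \kappa\alpha$ this is $\Omega(\kappa^2/\log^2 m)$. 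Dividing the $O(m\log m)$ potential range by this per-step progress gives $\tau = \O(m\kappa^{-2})$ iterations, hence $\O(m\kappa^{-2})$ calls to each of $\textsc{Update}$, $\textsc{Query}$, $\textsc{Detect}$. For the remaining bookkeeping, robustness of the IPM to $(1 \pm m^{-o(1)})$-approximate $\hat{\bg}, \hat{\bell}$ together with the $\textsc{Detect}$ amortization — the total length-weighted flow movement $\sum_t \|\mL^{(t)}(\beta\bDelta^{(t)})\|_1$ is $\O(\tau) = \O(m\kappa^{-2})$, so with $\eps$ chosen appropriately each edge is flagged $\O(1)$ amortized — yields both $\sum_t |U^{(t)}| = \O(m\kappa^{-2})$ and a total of $\O(m\kappa^{-2})$ edges over all $\textsc{Detect}$ outputs. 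The per-iteration work of the IPM itself (recomputing $\hat{\bg}, \hat{\bell}$ only on flagged coordinates, maintaining slacks, and applying the flow update implicitly via link-cut trees rather than edge-by-edge) is $\O(1)$ amortized per affected edge, so the total running time is $\O(m\kappa^{-2})$ plus the time charged to the data structure.

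It then remains to certify Hidden Stability for the instances $(G^{(t)}, \bg^{(t)}, \bell^{(t)})$ by exhibiting the hidden pair. I would take $\bc^{(t)} := \bDelta^{*,(t)} = \bf^* - \bf^{(t)}$, which is a circulation as a difference of two flows routing the same demand (\Cref{item:circulation}), and take $\bw^{(t)}_e$ to be a slowly-varying majorant of $|\bell^{(t)}_e \bc^{(t)}_e|$, concretely twice the value of $\bell_e|\bc_e|$ at the most recent stage $t_e \le t$ at which $e$ was explicitly inserted or its approximation last refreshed by $\textsc{Detect}$. Then \Cref{item:width} holds by construction; \Cref{item:quasipoly} follows from the scaling bounds and $\alpha = 1/\Theta(\log m)$; the third condition of \Cref{def:hiddenStableFlowChasing}, $\langle \bg^{(t)}, \bc^{(t)}\rangle / \|\bw^{(t)}\|_1 \le -\alpha$, follows from fact (i) above once one notes $\|\bw^{(t)}\|_1 = O(\|\mL^{(t)}\bc^{(t)}\|_1) = O(m)$ and absorbs the constant into the $\Theta(\cdot)$ defining $\alpha$. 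The crux is \Cref{item:widthstable}: between two consecutive stages at which $e$ is not explicitly updated, $\textsc{Detect}$ guarantees $\bell^{(t)}_e\sum_{t'}|\bDelta^{(t')}_e| < \eps$ (else $e$ would have been flagged and hence refreshed), so $\bf^{(t)}_e$, and therefore the slacks $\bu^\pm_e - \bf^{(t)}_e$, the length $\bell^{(t)}_e$, and $\bc^{(t)}_e = \bf^*_e - \bf^{(t)}_e$, each drift by at most a $(1 + o(1))$ multiplicative factor provided $\eps$ is a small enough constant; choosing $\eps$ so that this keeps $\bell_e|\bc_e|$ within a factor strictly below $2$ makes the defined $\bw^{(t)}_e$ both a genuine upper bound on the current product and a quantity that changes by at most a factor of $2$, as required. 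Vertex splits only relabel the endpoints of $e$, so the clause ``$e$ not explicitly inserted after $t'$'' is exactly the regime in which the drift bound applies.

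I expect the main obstacle to be precisely \Cref{item:widthstable}: one must choose the witness and the width in tandem so that the width is stable up to a factor of $2$ even though the data-structure-supplied cycles $\bDelta^{(t)}$ are effectively adversarial and the underlying graph undergoes vertex splits, and one must do so while simultaneously keeping $\|\bw^{(t)}\|_1 = O(m)$ so that the negativity in the third condition of \Cref{def:hiddenStableFlowChasing} survives — the tension is that a loose majorant helps stability but hurts the ratio. Tying the drift of $\bell_e|\bc_e|$ to the $\textsc{Detect}$ threshold $\eps$ and then propagating that bound through the barrier geometry is the delicate step; a secondary subtlety is calibrating $\eps$ (and the step-size $\eta$) so that the $\textsc{Detect}$ amortization delivers $\sum_t|U^{(t)}|$ and the total $\textsc{Detect}$ output size as $\O(m\kappa^{-2})$ rather than a larger polylog power.
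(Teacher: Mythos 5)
You should first note that the paper contains no proof of this statement at all: \Cref{thm:ipmMain} is imported verbatim from \cite{chen2022maximum} (the surrounding text only points to that paper and to the scaling reduction in its Lemma C.1), so there is no in-paper argument to compare against. Your outline retraces the route of the cited source -- potential reduction with barrier exponent $\alpha = 1/\Theta(\log m)$, the witness $\bf^* - \bf^{(t)}$ having ratio $-\Omega(1/\log m)$, per-step potential decrease $\Omega((\kappa\alpha)^2)$ giving $\O(m\kappa^{-2})$ iterations, and lazy gradient/length approximations whose refresh times are exactly the \textsc{Detect} outputs -- and the iteration-count and amortization bookkeeping are of the right shape.

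However, your certification of Hidden Stability has a genuine gap at items \ref{item:width} and \ref{item:widthstable} of \Cref{def:hiddenStableFlowChasing}. You set $\bw^{(t)}_e$ to be twice the value of $\bell_e|\bc_e|$ at the last refresh time and argue stability by ``choosing $\eps$ small enough.'' But \textsc{Detect} only controls \emph{additive} drift: between refreshes $|\bf^{(t)}_e - \bf^{(t_e)}_e| \le \eps/\bell_e$, so $\bell_e|\bc_e|$ (with $\bc = \bf^* - \bf$) can move by an additive $\Theta(\eps)$, while its value at the refresh time can be arbitrarily small (e.g.\ when $\bf_e$ is near or crossing $\bf^*_e$). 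Hence no constant choice of $\eps$ converts this into a multiplicative factor below $2$; your $\bw^{(t)}_e$ then fails to upper bound $|\bell^{(t)}_e\bc^{(t)}_e|$ and the factor-$2$ stability claim collapses. The repair -- and what \cite{chen2022maximum} actually does -- is to build an additive slack into the width (a quantity of order $\eps$ per edge, measured with respect to the lazily updated flow/lengths that change only at explicit updates), which makes items \ref{item:width} and \ref{item:widthstable} hold by construction and only inflates $\|\bw^{(t)}\|_1$ by $O(\eps m)$, so the ratio condition still gives $-\alpha$ because $|\langle \bg^{(t)}, \bc^{(t)}\rangle| = \Omega(m)$. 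A second, smaller inaccuracy: the coordinate-wise claim that $\bell^{(t)}_e|\bDelta^{*,(t)}_e| = O(1)$ is false in general (one slack of an edge can be tiny while the opposite slack is huge); the correct argument bounds $\|\mL^{(t)}\bDelta^{*,(t)}\|_1 = \O(m)$ only in aggregate, using the barrier part of the standing bound $\Phi(\bf^{(t)}) \le O(m\log m)$ evaluated at both $\bf^{(t)}$ and $\bf^*$.
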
   
The above result can be generalized to arbitrary integer costs and
capacities at the cost of a $O(\log C \log mU)$ factor in the running
time by cost/capacity scaling \cite[Lemma C.1]{chen2022maximum}.

In the next section, we build our new deterministic data structure for approximate dynamic min-ratio cycles with hidden stability.
\section{Data Structure Chain}
\label{sec:jtree}

This section is devoted towards building the core of the
data structure for approximately solving dynamic min-ratio cycle with hidden stability. The following is the main theorem we prove.

\begin{restatable}[Dynamic Min-Ratio Cycle with Hidden Stability]{theorem}{MMCHSF}
  \label{thm:MMCHiddenStableFlow}
  There is a deterministic data structure that $\kappa$-approximately
  solves the problem of dynamic min-ratio cycle with hidden stability for
  $\kappa = \exp(-O(\log^{17/18}m \cdot \log\log m)).$
    Over $\tau$ batches of updates
  $U^{(1)}, \ldots, U^{(\tau)},$ the algorithm runs in time
  $m^{o(1)}(m+ \sum_{t \in [\tau]} |U^{(t)}|)$.

  The data structure maintains a spanning tree $T \subseteq G^{(t)}$
  and returns a cycle $\bDelta$ represented as $m^{o(1)}$ paths on $T$
  (specified by their endpoints) and $m^{o(1)}$ explicitly given
  off-tree edges, and supports \textsc{Update} and
  \textsc{Query} operations in $m^{o(1)}$ amortized time. The running
  time of \textsc{Detect} is $m^{o(1)}|S^{(t)}|,$ where $S^{(t)}$ is the
  set of edges returned by \textsc{Detect}.
\end{restatable}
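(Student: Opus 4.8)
The plan is to build the data structure as a multi-level \emph{recursive chain}, following the blueprint of \Cref{sec:tech_overview} but with every randomized ingredient replaced by a deterministic one. At the top level we store the input graph $G^{(0)} = G$; each level $i$ produces a vertex-sparsified-then-edge-sparsified graph $G^{(i+1)}$ on roughly a $1/k$ fraction of the vertices and edges of $G^{(i)}$, and level $i+1$ recurses on $G^{(i+1)}$. With branching factor $k = m^{1/d}$ and recursion depth $d$ (tuned below), the spanning tree $T$ we maintain is the union of the contracted partial forests produced at all levels, lifted back to $G$; a cycle produced by the deepest level is lifted one level at a time, each lift adding only $m^{o(1)}$ tree paths and off-tree edges, so the final representation stays $m^{o(1)}$. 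The technical content is therefore (i) a single level of deterministic vertex sparsification with \Shift , (ii) a single level of deterministic edge sparsification, and (iii) a routing/flow-decomposition argument certifying that a cheaply representable cycle is $m^{o(1)}$-competitive with the hidden witness $\bc^{(t)}$ from \Cref{def:hiddenStableFlowChasing}.

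For \textbf{vertex sparsification}, at the start of each \emph{epoch} of $\Theta(m/k)$ updates I would, using \Cref{thm:an} together with a standard multiplicative-weights argument, compute $\tilde O(k)$ partial forests $F_1,\dots,F_{\tilde O(k)}$, each with $O(m/k)$ components, so that every edge has stretch $\tilde O(1)$ averaged over the forests (this step is already deterministic). A Markov-type argument then shows that for at least half the forests, $\bc^{(t)}$ weighted by $\bw^{(t)}$ has stretch $m^{o(1)}\|\bw^{(t)}\|_1$; by flow decomposition, for such a forest $F$ either a fundamental cycle of $F$ or the image of $\bc^{(t)}$ in $G/F$ is $m^{o(1)}$-competitive. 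Rather than recursing on all forests (too expensive — $\Omega(m)$ trees) or on $O(1)$ random forests (randomized), I would recurse on one forest at a time: run \FindCycle\ with $F_1$, and only when it fails to certify a good cycle do we \Shift\ to $F_2$, replay the epoch's accumulated updates onto $G/F_2$, and recurse; wrapping around as needed, and triggering a \Rebuild\ of the epoch every $\Theta(m/k)$ updates. The crucial claim, proved via the \emph{shift-and-rebuild game} of \Cref{sec:rebuilding}, is that an epoch incurs only $\tilde O(k)$ total \Shift s: once a forest is found whose witness-stretch is small, the stability of $\bc^{(t)},\bw^{(t)}$ (in particular $\bw^{(t)}_e \le 2\bw^{(t')}_e$ on edges not re-inserted after stage $t'$) keeps that stretch small until the epoch ends or $\|\bw^{(t)}\|_1$ halves, and the latter happens only $O(\log m)$ times.

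For \textbf{edge sparsification}, I would invoke the deterministic spanner-with-embedding developed in \Cref{sec:spanner}: bucket $G/F$ into $O(\log m)$ unweighted graphs, apply a deterministic expander decomposition (\cite{CGLNPS21}) so each vertex lies in $O(\log m)$ almost-uniform-degree expanders $H_i$, build a deterministic constant-degree expander $W$ on $V(H_i)$, compute embeddings $\Pi_{W \to H_i}$ and $\Pi_{H_i \to W}$ with $m^{o(1)}$ vertex congestion and $m^{o(1)}d_i^{\max}$ edge congestion respectively (short paths, via the decremental shortest-path routine of \cite{CS21}), take the spanner $H_i'$ to be the image of $\Pi_{W \to H_i}$, and embed $H_i$ into $H_i'$ via $\Pi_{W \to H_i}\circ\Pi_{H_i \to W}$. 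Sparsity holds since $|E(W)| = O(|V(H_i)|)$ and each edge blows up into an $m^{o(1)}$-length path; \Cref{fact:transitiveEmbeddingCong} bounds the composed vertex congestion by $m^{o(1)}d_i^{\max}$. Under the edge deletions and vertex splits that $G/F$ undergoes, this spanner and its embedding are maintained with $m^{o(1)}$ amortized recourse via the deterministic dynamic-to-static reduction of \Cref{sec:spanner}, preserving the extra side conditions needed to keep $\bc^{(t)}$ usable. The returned cycle is assembled in \Cref{sec:routing}: at each level, either a fundamental cycle of $F$, or a spanner cycle $e \oplus \Pi(e)$, or (recursively) the cycle of the child level is $m^{o(1)}$-competitive with $\bc^{(t)}$, and the winner is encoded via maintained paths on $T$ plus $m^{o(1)}$ off-tree edges.

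Finally I would assemble the bounds. The runtime $m^{o(1)}(m+\sum_t|U^{(t)}|)$ follows by charging: controlled propagation ensures each level receives an amortized $m^{o(1)}$ updates per top-level update (using \Cref{lemma:encodingSize} to bound vertex-split encodings); \Shift s and \Rebuild s over an epoch cost no more than one rebuild, i.e.\ $m^{1+o(1)}$ per epoch hence $m^{o(1)}$ amortized per update; the spanner subroutine runs in $m^{o(1)}$ amortized time; and \textsc{Detect} is implemented on top of $T$ with dynamic-tree data structures so it runs in $m^{o(1)}|S^{(t)}|$. A per-level quality loss of $\mathrm{poly}(\log m)$ (from $\gamma_{LSST}$ and the spanner), compounded over $d$ levels, together with the subpolynomial overhead of the decremental shortest-path routine, forces $d = \Theta(\log^{17/18}m)$ and a $\log\log m$ factor, giving $\kappa = \exp(-O(\log^{17/18}m\cdot\log\log m))$; $k = m^{1/d}$ is then exactly what makes the recursion bottom out ($k^d \ge m$). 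I expect the \textbf{main obstacle} to be precisely the \Shift-count analysis across the multi-level hierarchy: a level's \FindCycle\ may fail because of a failure buried deep in the recursion, so one must argue simultaneously about which level "owns" a failure, why replaying updates after a \Shift\ is consistent across levels, and why stability of $\bc^{(t)}$ survives all the way down — this is the role of the shift-and-rebuild game in \Cref{sec:rebuilding}, and extending the adaptive-adversary argument of \cite{chen2022maximum} to it is the technically heaviest part of the proof.
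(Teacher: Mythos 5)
Your overall architecture matches the paper's proof: a $d$-level chain alternating the MWU partial forests of \cref{lemma:strMWU} with the deterministic spanner of \cref{thm:spanner}, recursing on one forest at a time with \Shift{}/\Rebuild{}, certifying quality against the hidden witness via the routing argument of \cref{sec:routing}, bounding shifts via the shift-and-rebuild game of \cref{sec:rebuilding}, and using link-cut trees for \textsc{Update}/\textsc{Detect}. However, there is a concrete gap in your final parameter assembly. You assert that the per-level quality loss is $\mathrm{poly}(\log m)$ and conclude $d = \Theta(\log^{17/18} m)$. In fact the deterministic edge sparsification you invoke has quasipolynomial loss: the static construction of \cref{lma:staticEmbed} has $\gamma_{\ell}, \gamma_c = \exp(O(\log^{2/3} m \log\log m))$ (driven by the expander-decomposition quality $\phi$ and $\gamma_{ExpPath}$), and after the dynamic reduction of \cref{thm:spanner} the per-level parameters entering the chain are $\gamma_{\ell}, \gamma_c, \gamma_r = \exp(O(\log^{8/9} m \log\log m))$. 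With $d = \Theta(\log^{17/18} m)$ the compounded loss $\tilde{O}(\gamma_{\ell})^d = \exp\bigl(\Theta(\log^{17/18} m \cdot \log^{8/9} m \cdot \log\log m)\bigr)$ vastly exceeds the claimed $1/\kappa$, and the runtime factor $\tilde{O}(\gamma_{\ell}\gamma_r)^{O(d)}$ in \cref{lemma:hintedTreeChain} becomes superpolynomial, so neither the stated $\kappa$ nor the $m^{o(1)}$ running time survives. The correct balance, as in \cref{lem:badCycleMeansBadWtOrStr} and the paper's proof, is $d = \Theta(\log^{1/18} m)$ and $k = m^{1/d} = \exp(\Theta(\log^{17/18} m))$: then $1/\kappa \approx (d+1)\cdot \tilde{O}(k)\cdot \tilde{O}(\gamma_{\ell})^d$, whose two dominant contributions, $m^{1/d}$ (the per-query loss of \cref{lemma:goodenough}) and $\gamma_{\ell}^d$, are each $\exp(O(\log^{17/18} m \log\log m))$.

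A smaller slip with the same flavor: you claim $\|\bw^{(t)}\|_1$ can halve only $O(\log m)$ times, but by item \ref{item:quasipoly} of \cref{def:hiddenStableFlowChasing} the weights are only quasipolynomially bounded, so the halving count is $2\Psi = \log^{O(1)} m$; consequently the per-level shift budget is not simply ``$\tilde{O}(k)$ shifts per epoch'' but compounds geometrically across levels, $s_i \le \sum_{i' \le i} r_{i'} (k\log^{O(1)} m)^{i+1-i'}$ as in \cref{lem:AlgoIsGameInstance}, which is exactly why the paper needs the full multi-level game analysis (representative timestamps, prefix-max and prefix-good-stretch sets) rather than a per-epoch Markov argument. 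You correctly identify this as the main obstacle and defer it to \cref{sec:rebuilding}, which is where the paper does the work, so apart from the parameter error your plan is the paper's proof in outline rather than a genuinely different route.
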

Combining this data structure with \cref{thm:ipmMain} and the flow rounding procedure in \cref{lemma:detflowrounding} shows \cref{thm:mincost}. Most of this section is devoted towards building the data structure in \cref{thm:MMCHiddenStableFlow}
and establishing how it finds approximately optimal
min-ratio cycles. 
\Cref{sec:lsf,sec:mwu,sec:core} focus on introducing the general layout of the data structure, and provides multiple definitions.
\Cref{sec:treeChainShifts} presents the dynamic data structure and states its properties.
Finally, we prove \cref{thm:MMCHiddenStableFlow} in \Cref{sec:datastructure-theorem} by
integrating link-cut trees to implement other required operations.

\paragraph{Comparison to \cite{chen2022maximum}.} 
This section is similar to \cite[Section 6]{chen2022maximum} in many ways and several parts are similar; here we describe the key differences (beyond tuning the presentation for this paper) and reasons for repeating some similar proofs. Sections \cref{sec:lsf} to \cref{sec:core} are largely the same, but with the slight difference that the dynamic graphs in these sections undergo edge insertions, deletions, and \emph{vertex splits}.
\cref{sec:treeChainShifts} deviates from \cite{chen2022maximum} by maintaining only one branch at each level, instead of $O(\log n)$ branches.
In \cite{chen2022maximum}, it was assumed that the tree-based data structures only underwent edge insertions and deletions, while the vertex splits were limited to the spanner.
Here, we allow all graphs to undergo vertex splits to ensure a tighter amortized runtime bound: over the course of $T$ updates, the total runtime and recourse of the data structure is $m^{o(1)}T$. This contrasts with \cite{chen2022maximum} where the total runtime and recourse was $m^{o(1)}(T + m/k)$. We cannot afford this because we visit all $\O(k)$ trees constructed in \cref{lemma:strMWU} over $O(m/k)$ iterations, and thus only stay with a single branch for about $m/k^2$ updates. 

\subsection{Dynamic Low-Stretch Forests (LSF)}
\label{sec:lsf}

As noted in the overview, the data structure is similar in
construction to the one from~\cite{chen2022maximum}. In order to
describe the data structure, we re-state several definitions verbatim from~\cite{chen2022maximum}.

\begin{table}[!ht]
\centering
\begin{tabularx}{\linewidth}{c*{2}{>{\RaggedRight\arraybackslash}m{0.7\linewidth}}}
    \toprule
    \textbf{Variable} & \textbf{Definition} \\
    \midrule
    $\bell^{(t)}, \bg^{(t)}$ & Lengths and gradients on a dynamic graph $G^{(t)}$ after stage $t.$ \\
    \hline
    $\bc^{(t)}, \bw^{(t)}$ & Hidden circulation and upper bounds with $|\bell^{(t)} \circ \bc^{(t)}| \le \bw^{(t)}.$ \\
    \hline
    $\mathcal{C}(G, F)$ & Core graph from a spanning forest $F$ \\
    \hline
    $\SS(G, F)$ & Sparsified core graph $\SS(G, F) \subseteq \mathcal{C}(G, F)$\\
    \hline
    $\cF^G = \{(T^G_j, F^G_j, \wstr^{j}_e)\}_{j=0}^{k-1}$
    & Collection of LSFs of $G$ (\cref{lemma:globalstretch}, \cref{lemma:strMWU}) \\
    \hline
    $d$ & Recursion levels \\
    \hline
    $k = m^{1/d}$ & Reduction factor \\
    \hline
    $\cG = \{G_0, \ldots, G_d\}$ & $d$-level tree chain (\cref{def:TreeChainShift}) \\
    \hline
    $\bran_i$ & Shift index for $G_i$ (\cref{def:TreeChainShift}) \\
    \hline
    $T^{\cG}$ & Spanning tree in $G_0$ corresponding to the tree chain $\cG$ (\cref{def:TreeChainShift}) \\
    \hline
    $\repT_i$ & Representative time stamp (\cref{def:repT}) \\
    \hline
    $\hstr_i$ & How much $\bw^{(\repT_i)}$ is stretched by $F^{G_i}_{\bran_i}$, the current LSF of $G_i$ (\cref{def:repT}) \\
    \bottomrule
  \end{tabularx}
\caption[Important definitions and notation to describe the data structure.]{Important definitions and notation to describe the data structure. In general, a $(t)$ superscript is the corresponding object at the $t$-th stage of a sequence of updates.}
\label{tab:glossaryDataStructureChain}
\end{table}

In the following subsections we describe the components of the data structure we maintain to show \cref{thm:MMCHiddenStableFlow}.
At a high level, our data structure maintains $d$ levels of graphs.
The graph size is reduced approximately by a factor of $k = m^{1/d}$ in each level.
The size reduction consists of two parts.
First, we reduce the number of vertices by maintaining a spanning forest $F$ of $\O(m/k)$ connected components and recurse on $G/F$, the graph obtained from $G$ by contracting each connected component of $F$ into a single vertex.
Next, we reduce the number of edges in $G/F$, which might have up to $m$ edges, to $m^{1+o(1)}/k$ via the dynamic sparsifier stated in \cref{thm:spanner}.
We start by defining a rooted spanning forest and its induced stretch.
\begin{definition}[Rooted Spanning Forest]
  \label{def:spanningforest}
  A \emph{rooted spanning forest} of a graph $G = (V, E)$ is a forest $F$ on $V$ such that each connected component of $F$ has a unique distinguished vertex known as the \emph{root}. We denote the root of the connected component of a vertex $v \in V$ as $\root^F_v$.
\end{definition}
\begin{definition}[Stretches of $F$]
  \label{def:stretchf}
  Given a rooted spanning forest $F$ of a graph $G = (V, E)$ with lengths $\bell \in \R_{>0}^E$, the stretch of an edge $e = (u, v) \in E$ is given by
  \begin{align*}
    \str^{F,\bell}_e \defeq
    \begin{cases}
      1 + \left\langle \bell, |\bp(F[u,v])|\right\rangle/\bell_e &~\text{ if } \root^F_u = \root^F_v \\
      1 + \left\langle \bell, |\bp(F[u, \root^F_u])| + |\bp(F[v, \root^F_v])| \right\rangle/\bell_e &~\text{ if } \root^F_u \neq \root^F_v,
    \end{cases}
  \end{align*}
  where $\bp(F[\cdot,\cdot]),$ as defined in \cref{sec:prelim}, maps a path to its signed indicator vector.
\end{definition}

When $F$ is a spanning tree \cref{def:stretchf} coincides with the definition of stretch for a LSST.
Otherwise, $\str^{F,\bell}_e$ measures how the concatenation of the two paths from endpoints to the roots stretches stretches $e.$

The goal of the remainder of this section is to give an algorithm to
maintain a \emph{Low Stretch Forest (LSF)} of a dynamic graph
$G.$
As a spanning forest decomposes a graph into vertex-disjoint connected subgraphs, a LSF consists of a spanning forest $F$ of low stretch.
The algorithm produces stretch upper bounds that hold throughout all updates to the graph, and the number of connected components of $F$ grows by $\O(1)$ per update in an amortized sense.
At a high level, for any edge insertion or deletion, the algorithm will force both endpoints to become roots of some component of $F$.
This way, any inserted edge will actually have stretch $1$ because both endpoints are roots.
Also, any deleted edge does not appear in $F$ and we can handle the deletion recursively on $G/F.$
\begin{lemma}[Dynamic Low Stretch Forest]
\label{lemma:globalstretch}
There is a deterministic algorithm with total runtime $\O(m)$ that on a graph $G = (V, E)$ with lengths $\bell \in \R^E_{>0}$, weights $\bv \in \R^E_{>0}$, and parameter $k > 0$, initializes a tree $T$ spanning $V$, and a rooted spanning forest $F \subseteq T$, an edge-disjoint partition $\cW$ of $F$ into $O(m/k)$ sub trees and stretch overestimates $\wstr_e$.
The algorithm maintains $F$, whose set of edges is decremental over time, against $\tau$ batches of updates to $G$, say $U^{(1)}, U^{(2)}, \dots, U^{(\tau)}$, such that $\wstr_e \defeq 1$ for any new edge $e$ added by edge insertions, and:
\begin{enumerate}
  \item $F$ has initially $O(m/k)$ connected components and $O(q \log^2 n)$ more after $q = \O(m)$ updates, i.e., $q \defeq \sum_{t=1}^{\tau} |U^{(i)}|.$ 
  \label{item:cccount}
  \item $\str^{F,\bell}_e \le \wstr_e \le O(k \log^6 n)$ for all $e \in E$ at all times, including inserted edges $e$. \label{item:stretchbound}
  \item $\sum_{e \in E^{(0)}} \bv_e \wstr_e \le O(\|\bv\|_1 \log^4 n)$, where $E^{(0)}$ is the initial edge set of $G$. \label{item:avgstretchbound}
  \item
  Initially, $\cW$ contains $O(m/k)$ subtrees.
  For any piece $W \in \cW, W \subseteq V$, $\Abs{\partial W} \le 1$ and $\vol_G(W \setminus R) \le O(k\log^2n)$ at all times, where $R \supseteq \partial \cW$ is the set of roots in $F$.
  Here, $\partial W$ denotes the set of \emph{boundary vertices} that are in multiple partition pieces.
    \label{item:degbound}
  \end{enumerate}
  We refer to the triple $(T, F, \wstr)$ as a Low-Stretch Forest (LSF) of G.
\end{lemma}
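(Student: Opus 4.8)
The plan is to build the Low-Stretch Forest by running a dynamic version of the standard LSST construction (\cref{thm:an}) on top of a low-diameter / hierarchical decomposition, then iteratively ``evicting'' endpoints of updated edges by promoting them to roots. I would proceed in four stages.

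\textbf{Stage 1: Static initialization.} First I would apply \cref{thm:an} with the given lengths $\bell$ and weights $\bv$ to compute a spanning tree $T_0$ with $\sum_e \bv_e \str^{T_0,\bell}_e = O(\|\bv\|_1 \log n \log\log n)$. I would then decompose $T_0$ into $O(m/k)$ edge-disjoint subtrees $\cW$, each of bounded total length-volume, using a standard tree-partitioning routine (repeatedly cutting a minimal subtree whose volume exceeds $k$, leaving each piece with volume $O(k\log n)$ or so and sharing at most one boundary vertex with its parent). The forest $F$ is obtained by designating one vertex per piece as a root and deleting the tree-edges that cross between pieces; this gives the initial $O(m/k)$ components, and item \ref{item:avgstretchbound} is inherited from \cref{thm:an} up to the slack factor, while item \ref{item:stretchbound}'s bound $O(k\log^6 n)$ comes from the fact that any fundamental cycle now uses at most two root-paths, each of which has length-volume $O(k \cdot \polylog n)$ relative to the smallest edge length in the piece (here I would use a standard rescaling/bucketing argument on lengths, since the stretch bound $\wstr_e \le O(k\log^6 n)$ must hold \emph{per edge}, not just on average).

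\textbf{Stage 2: Handling updates by rooting.} For each update in a batch $U^{(t)}$: an edge deletion of $e$ is handled by doing nothing to $F$ if $e \notin F$ (which is the invariant we maintain: we never put an updated edge into $F$), and a vertex split or edge insertion is handled by making both endpoints $u,v$ of the affected edge into roots of their components. Promoting a vertex $w$ to a root means cutting the (at most $O(\log n)$, by the balanced structure of $T_0$'s recursive decomposition) tree edges on the path from $w$ up to the current root of its piece — or more carefully, cutting $O(1)$ edges amortized per update if the decomposition is set up with constant depth per ``block.'' Each such cut increases the component count by one, giving the $O(q \log^2 n)$ bound in item \ref{item:cccount}; and since a newly-rooted endpoint has an empty root-path, any inserted edge indeed gets $\wstr_e = 1$. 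The key point is that the original stretch overestimates $\wstr_e$ for \emph{surviving} edges only ever decrease (paths only get shorter as we cut), so item \ref{item:stretchbound} and item \ref{item:avgstretchbound} are preserved, and $\vol_G(W \setminus R)$ for a piece $W$ can only decrease as more of its vertices become roots, preserving item \ref{item:degbound}.

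\textbf{Stage 3: Runtime accounting, and the main obstacle.} The total runtime must be $\O(m)$ \emph{independent of the number of updates} — which is subtle, since naively each update costs $O(\polylog n)$ and there are $\O(m)$ of them, giving $\O(m)$, so actually this is fine as long as $q = \O(m)$, which is assumed. The real obstacle, and where I would spend the most care, is maintaining the \emph{per-edge} stretch overestimate $\wstr_e \le O(k\log^6 n)$ simultaneously with the \emph{average} bound $O(\|\bv\|_1 \log^4 n)$, because these pull in opposite directions: the LSST of \cref{thm:an} gives a good average but an individual edge can have stretch as large as $\Omega(m)$. The fix is that we do \emph{not} take the LSST path as-is; instead the decomposition into $O(m/k)$ pieces of bounded length-volume truncates every root-path to length $O(k\cdot\polylog n)$ relative to the minimum edge-length scale present, and the bucketing of edge lengths into $O(\log n)$ geometric classes ensures $\bell_e$ is within a $\polylog$ factor of that scale — so cutting the tree into small pieces is precisely what simultaneously enforces the worst-case bound while the LSST property (restricted to edges \emph{inside} a single piece) controls the average. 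I would verify that the volume threshold $k$ in the partitioning is chosen so that both the number of pieces ($O(m/k)$) and the per-piece volume ($O(k\polylog n)$, giving the $O(k\log^6 n)$ stretch after accounting for the two root-paths and the $\gamma_{LSST}$ and length-bucketing factors) come out as claimed, and that item \ref{item:degbound}'s volume bound $\vol_G(W\setminus R) \le O(k\log^2 n)$ is consistent with this (it is, since off-tree volume of a piece is bounded by its tree-volume plus contributions charged during cutting).

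\textbf{Stage 4: Assembly.} Finally I would bundle $(T, F, \wstr)$ where $T$ is maintained as $T_0$ with cut edges removed (kept as a forest internally, or re-rooted as a tree via dummy edges) and the stretch overestimates $\wstr_e$ are stored and only monotonically decreased, verifying each of items \ref{item:cccount}--\ref{item:degbound} against the invariants maintained in Stages 1--3. The monotone-decrease property of $\wstr$ under cuts is what makes the whole scheme work deterministically without recomputation.
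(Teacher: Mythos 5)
Your Stage~1/Stage~3 mechanism for the worst-case bound in \cref{item:stretchbound} does not work. For an edge $e=(u,v)$ whose endpoints end up in different pieces, the forest stretch is $1+\bigl(\dist_F(u,\root^F_u)+\dist_F(v,\root^F_v)\bigr)/\bell_e$, and $\bell_e$ can be arbitrarily smaller than every tree edge near $u$ and $v$: the comparison is between \emph{different} edges' lengths at the same vertex, so neither a partition of $T_0$ into pieces of bounded length-volume nor geometric bucketing of edge lengths is adapted to $\bell_e$. Concretely, a non-tree edge of length $1$ whose endpoints are joined in $T_0$ only through edges of length $m^{10}$ has forest stretch about $m^{10}$ under any volume-bounded partition, unless its endpoints are explicitly made roots --- and your construction never does this. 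An explicit mechanism is needed, e.g.\ arranging (by adding a uniform component to the LSST weights) that only $O(m/k)$ edges have tree-stretch above roughly $k\log^{O(1)}n$ and putting their endpoints into the initial root set. Relatedly, your claim that \cref{item:avgstretchbound} is simply ``inherited'' from \cref{thm:an} is not automatic: for a cross-piece edge the root path of $u$ need not be contained in $T_0[u,v]$ (for instance when the LCA of $u,v$ lies inside $u$'s piece while $v$ lies outside), so the forest stretch is not bounded by the tree stretch, and controlling this is exactly what the branch-free root sets, the forest $F_T(R,\pi)$, and the heavy-light ancestors in the construction the paper invokes (\cite[Lemma 6.5]{chen2022maximum}) are for --- it is also where the extra $\log$ factors in items \ref{item:stretchbound} and \ref{item:avgstretchbound} come from. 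Your monotonicity observation (roots are only added, so root paths and hence $\str^{F,\bell}_e$ only shrink, and $\wstr_e$ stays valid) is correct and matches the paper's maintenance argument, but the initialization bounds are where the argument has a genuine hole.

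The second gap is vertex splits, which are the entire reason this lemma is reproved here rather than cited verbatim. You treat a split as if it were an edge update (``making both endpoints of the affected edge into roots''), but a split replaces $v$ by $v_1,v_2$ and redistributes the incident edges while $T$ and $F$ still reference the old vertex; you never specify how $F\subseteq T$ is defined on the new vertex set, nor why the partition $\cW$ and \cref{item:degbound} survive. The paper's proof creates the new vertex as an isolated vertex and adds it together with a heavy-light ancestor $u^{\uparrow T_H}$ to the (branch-free) root set, at a cost of $O(\log^2 n)$ new roots per split --- this is precisely the $O(q\log^2 n)$ term in \cref{item:cccount}. Your ``$O(1)$ amortized cuts per update'' accounting is not consistent with this, and it also mis-describes rooting: promoting $w$ to a root costs a single cut of its parent edge, not cutting the whole path to the old root. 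Aside from these two issues (the per-edge cap at initialization and the vertex-split semantics), the remaining ingredients --- the $\O(m)$ runtime, decrementality of $F$, and $\wstr_e=1$ for inserted edges via rooting their endpoints --- are in line with the paper's approach.
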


The lemma deviates from \cite[Lemma 6.5]{chen2022maximum} in the way they handle vertex splits.
Here, a vertex split adds $\O(1)$ roots to $F$ while the previous algorithm views it as a sequence of edge deletions and insertions.
The first property states that $F$ has $O(m/k)$ roots initially and each update, such as an edge update or a vertex split, adds $\O(1)$ roots to it on average.
For any edge $e$, the stretch overestimate $\wstr_e$ always stays the same and the average stretch, weighted by $\bv$, is always $\O(1).$
The final property is useful when applying the dynamic sparsifier of \cref{thm:spanner} to the contracted graph $G/F.$

We defer the proof of \cref{lemma:globalstretch} to \cref{app:globalstretch}.

\subsection{Worst-Case Average Stretch via Multiplicative Weights}
\label{sec:mwu}

By applying the multiplicative weights update procedure (MWU) on top of \cref{lemma:globalstretch}, we can build a distribution over partial spanning tree routings whose average stretch on every edge is $\O(1)$. This is very similar to MWUs done in works of \cite{R08,KLOS14} for building $\ell_\infty$ oblivious routings, and cut approximators~\cite{M10, S13}.
\begin{lemma}[MWU]
  \label{lemma:strMWU}
  There is a deterministic algorithm that when given a $m$-edge graph $G = (V, E)$ with lengths $\bell$ and a positive integer $k$, in in $\O(mk)$-time computes $k$ spanning trees, rooted spanning forests, and stretch overestimates $\{(T_j, F_j \subseteq T_j, \wstr^j_e)\}_{j=0}^{k-1}$ (\cref{lemma:globalstretch}) such that
  \begin{align}
    \label{eq:strMWU}
    \sum_{j = 0}^{k-1} \blambda_j \wstr^{j}_e \le O(\log^7 n)
    \text{ , }
    \forall e \in E,
  \end{align}
  where $\blambda \in \R_{>0}^{[k]}$ is the uniform distribution over the set $[t]$, i.e. $\blambda = \vec{1} / k$.
\end{lemma}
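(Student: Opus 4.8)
\textbf{Proof proposal for \cref{lemma:strMWU}.}

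The plan is to run a standard multiplicative-weights-update (MWU) loop, where each round invokes the dynamic LSF algorithm of \cref{lemma:globalstretch} as a static oracle (we only need a single static call per round, since we do not apply any updates). We maintain a weight vector over edges, initialized uniformly, say $\bv^{(0)}_e = 1$ for all $e \in E$. In round $j = 0, 1, \ldots, k-1$ we call the algorithm of \cref{lemma:globalstretch} on $G$ with lengths $\bell$, the current weights $\bv^{(j)}$, and the same parameter $k$; this produces a triple $(T_j, F_j, \wstr^j)$ in $\O(m)$ time. By \cref{item:avgstretchbound} of \cref{lemma:globalstretch} we have $\sum_{e \in E} \bv^{(j)}_e \wstr^j_e \le O(\log^4 n) \|\bv^{(j)}\|_1$, i.e., the $\bv^{(j)}$-weighted average of $\wstr^j$ is at most $\rho \defeq O(\log^4 n)$. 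We then perform the usual multiplicative update $\bv^{(j+1)}_e \assign \bv^{(j)}_e \cdot \exp\!\big(\eta \, \wstr^j_e / \rho\big)$ for a suitable step size $\eta = \Theta(1)$, using that $\wstr^j_e / \rho$ plays the role of a loss in $[0, \cdot]$; here we rely on \cref{item:stretchbound}, which guarantees $\wstr^j_e \le O(k \log^6 n)$, so the loss is bounded (by $O(k \log^2 n)$ after normalization), which is what the MWU regret bound needs (the width enters only logarithmically, or can be handled by the standard truncation/clipping trick). The total running time is $k$ calls to the $\O(m)$-time oracle plus $O(mk)$ bookkeeping for the weight updates, i.e., $\O(mk)$ as claimed.

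For correctness, the key is the standard MWU potential argument applied to the potential $\Phi^{(j)} = \|\bv^{(j)}\|_1 = \sum_e \bv^{(j)}_e$. On one hand, $\Phi^{(0)} = m$, and since each round's average loss (against the round's own weights $\bv^{(j)}$) is at most $\rho$, the multiplicative increase of $\Phi$ per round is at most $\exp(O(\eta))$, so after $k$ rounds $\Phi^{(k)} \le m \cdot \exp(O(\eta k))$. On the other hand, for each fixed edge $e$, $\bv^{(k)}_e = \exp\!\big((\eta/\rho) \sum_{j=0}^{k-1} \wstr^j_e\big) \le \Phi^{(k)}$. Taking logarithms and rearranging gives
\[
\frac{1}{k}\sum_{j=0}^{k-1} \wstr^j_e \le \rho\Big(1 + \frac{\ln m}{\eta k}\Big) = O(\log^4 n) \cdot O(1) + O\!\Big(\frac{\rho \log m}{k}\Big),
\]
and since $\rho = O(\log^4 n)$ and $k$ is a positive integer, the second term is $O(\log^4 n \log m / k) = O(\log^5 n)$, while the first is $O(\log^4 n)$; absorbing everything into the stated bound $O(\log^7 n)$ is comfortable (the exponent $7$ leaves plenty of slack, which presumably comes from being loose with the step size and the width-induced factors). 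Restating this with $\blambda = \vec{1}/k$ the uniform distribution, we get exactly \cref{eq:strMWU}: $\sum_{j=0}^{k-1} \blambda_j \wstr^j_e \le O(\log^7 n)$ for every $e \in E$.

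The main obstacle — and the only place requiring care — is the \emph{width} of the MWU instance: the losses $\wstr^j_e$ range up to $\Theta(k \log^6 n)$, so a naive application of MWU would incur a regret term scaling with this width and would need $k$ larger than the width to drive the average down to $O(\mathrm{polylog})$, which would break the $\O(mk)$ time bound. The fix is the observation (standard in the cut/flow MWU literature, e.g.\ \cite{M10,S13,KLOS14}) that for the \emph{multiplicative} form of the update and the potential argument above, the per-edge guarantee only needs the step size $\eta$ to be a small constant and the number of rounds to be $k \ge \Omega(\log m)$ — the width affects only the additive $\ln m / (\eta k)$ term, not multiplicatively — because we are bounding a single coordinate $\bv^{(k)}_e$ against the full $\ell_1$ potential rather than running a min-max game to optimality. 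Concretely, one checks the inequality $e^{x} \le 1 + (e^\eta - 1) x/\eta \le 1 + O(\eta) x$ used in the potential step holds for $x = \wstr^j_e/\rho \ge 0$ without any upper bound on $x$ once we expand $\|\bv^{(j)}\|_1$ — actually one wants $e^{\eta x} \le 1 + O(\eta) x$ which fails for large $x$, so the honest fix is to run the update with the \emph{clipped} loss $\min\{\wstr^j_e, \rho\}/\rho \in [0,1]$, note that clipping only decreases the attained sum $\sum_j \wstr^j_e$ we are trying to upper bound on the "good" edges while the weighted-average-at-most-$\rho$ property is preserved (clipping decreases losses), and then the standard bounded-loss MWU analysis applies verbatim with width $1$. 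I expect writing this clipping argument cleanly to be the bulk of the proof; everything else is bookkeeping.
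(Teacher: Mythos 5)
Your setup (run the static LSF oracle of \cref{lemma:globalstretch} $k$ times, reweight multiplicatively, bound a single coordinate of the weight vector against the $\ell_1$ potential) is the same skeleton as the paper's proof, and the time bound is fine. But the correctness argument has a genuine gap at exactly the point you flag: after observing that $e^{\eta x}\le 1+O(\eta)x$ fails for losses $x=\wstr^j_e/\rho$ with $\rho=O(\log^4 n)$ (since $\wstr^j_e$ can be as large as $\Theta(k\log^6 n)$), you repair it by running MWU on the \emph{clipped} losses $\min\{\wstr^j_e,\rho\}/\rho$. That repair does not prove \cref{eq:strMWU}: the potential argument then upper bounds $\frac1k\sum_j \min\{\wstr^j_e,\rho\}$, which is a \emph{lower} bound on the quantity $\frac1k\sum_j \wstr^j_e$ you must control, and nothing in the clipped analysis limits the contribution of rounds in which $\wstr^j_e$ exceeds $\rho$ (each such round can contribute up to $\Theta(k\log^6 n)$, and once the losses are clipped the reweighting no longer penalizes such edges beyond the cap, so the oracle guarantee of \cref{lemma:globalstretch} gives you no leverage on them). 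The remark that ``clipping only decreases the attained sum we are trying to upper bound'' is precisely the problem, not a justification.

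The fix is not to clip but to shrink the step size to the \emph{width}: the paper sets $\rho = 10kW\log^2 n=\O(k)$ with $W=O(\log^4 n)$, so that $\wstr^j_e/\rho\le 0.1$ always and $e^x\le 1+2x$ applies to the true, unclipped stretches; the oracle's average-stretch guarantee then gives $\|\bv_{j+1}\|_1\le(1+2W/\rho)\|\bv_j\|_1$, and the per-edge bound becomes $\frac1k\sum_j\wstr^j_e\le 2W+\frac{\rho\log m}{k}=O(W\log^3 n)=O(\log^7 n)$. The key structural fact you are missing is that the width $O(k\log^6 n)$ is only $\O(k)$ while the number of rounds is $k$, so the ``regret'' term $\rho\log m/k$ stays polylogarithmic even with this tiny step size --- this is exactly where the exponent $7$ in the statement comes from (it is $W=\log^4$ times an extra $\log^3$ from the width-over-rounds term), not slack left by a constant step size. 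Equivalently, if you insist on phrasing it via clipping, the clip threshold must be at least the width, at which point the clipping is vacuous and you have reproduced the paper's argument.
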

This lemma is nearly identical to \cite[Lemma 6.6]{chen2022maximum}, but we only build $k$ trees instead of $\O(k)$. We include the proof in \cref{app:mwu} for completeness.

The lemma guarantees that any given flow will be stretched by $\O(1)$ on average across the $k$ trees.
Thus, the flow is stretched by $\O(1)$ on at least one of the trees.
We will leverage this fact to design our data structure to prove \cref{thm:MMCHiddenStableFlow}.

\subsection{Sparsified Core Graphs and Path Embeddings}
\label{sec:core}
Given a rooted spanning forest $F$, we recursively process the graph $G/F$ where each connected component of $F$ is contracted to a single vertex represented by the root. We call this the \emph{core graph}, and define the lengths and gradients on it as follows. Below, we should think of $G$ as the result of updates to an earlier graph $G^{(0)}$, so $\wstr_e = 1$ for edges inserted to get from $G^{(0)}$ to $G$, as enforced in \cref{lemma:globalstretch}.

\begin{definition}[Core Graph, {\cite[Definition 6.7]{chen2022maximum}}]
  \label{def:coregraph}
  Consider a tree $T$ and a rooted spanning forest $E(F) \subseteq E(T)$ on a graph $G$ equipped with stretch overestimates $\wstr_e$ satisfying the guarantees of \cref{lemma:globalstretch}. We define the \emph{core graph} $\cC(G, F)$ as a graph with the same edge and vertex set as $G/F$. For $e = (u, v) \in E(G)$ with image $\hat{e} \in E(G/F)$ we define its length as $\bell^{\cC(G,F)}_{\hat{e}} \defeq \wstr_e\bell_e$ and gradient as $\bg^{\cC(G,F)}_{\hat{e}} \defeq \bg_e + \langle \bg, \bp(T[v, u]) \rangle$.
\end{definition}
In our application, we maintain $\cC(G, F)$ where $G$ is a dynamic graph and $F$ is a dynamic rooted spanning forest with a decremental edge set.
The tree $T$, the forest $F$, and the stretch overestimates $\{\wstr_e\}$ are initialized and maintained using \cref{lemma:globalstretch}.

Because the tree $T$ is static and the graph $G$ is dynamic, $T$ might not be a spanning tree of $G$ after some updates to $G.$
\cref{def:coregraph} handles the situation by allowing $T$ to be neither a spanning tree nor a subgraph of $G.$

Thus, for $e = (u, v) \in G$, $u$ and $v$ might not be connected in $T.$
In this case, we have $\bg^{\cC(G,F)}_{\hat{e}} = \bg_e.$
Moreover, the support of the gradient vector $\bg$ is $E(G) \cup E(T).$
This way, a deletion of some edge in $T$ from $G$ dose not affect the gradient $\bg^{\cC(G,F)}_{\hat{e}}.$

Note that the length and gradient of the image of any edge $e \in G$ in $\cC(G, F)$ does not change even if some edge in $F$ is removed, because they are defined with respect to the tree $T.$
This property will be useful in efficiently maintaining a sparsifier of the core graph using \cref{thm:spanner}, which reduces the number of edges from $m$ to $m^{1+o(1)}/k$.

\begin{definition}[Sparsified Core Graph, {\cite[Definition 6.9]{chen2022maximum}}]
  \label{def:sparsecore}
  Given a graph $G$, forest $F$, and parameter $k$, define a $(\gamma_{\ell}, \gamma_{c})$-\emph{sparsified core graph with embedding} as a subgraph $\SS(G, F) \subseteq \mathcal{C}(G, F)$ and embedding $\Pi_{\mathcal{C}(G, F) \to \SS(G, F)}$ satisfying
  \begin{enumerate}
  \item For any $\hat{e} \in E(\mathcal{C}(G, F))$, all edges $\hat{e}' \in \Pi_{\mathcal{C}(G, F) \to \SS(G, F)}(\hat{e})$ satisfy $\bell_{\hat{e}}^{\mathcal{C}(G,F)} \approx_2 \bell_{\hat{e}'}^{\mathcal{C}(G,F)}$. \label{item:samelen}
  \item $\length(\Pi_{\mathcal{C}(G, F) \to \SS(G, F)}) \le \gamma_l$ and $\econg(\Pi_{\mathcal{C}(G, F) \to \SS(G, F)}) \le k\gamma_c$.
  \item $\SS(G, F)$ has at most $m\gamma_{\ell}/k$ edges.
  \item The lengths and gradients of edges in $\SS(G, F)$ are the same as in $\mathcal{C}(G, F)$ (\cref{def:coregraph}).
  \end{enumerate}
\end{definition}

\subsection{Shifted Tree Chains}
\label{sec:treeChainShifts}

In this section, we introduce the notion of \emph{Shifted Tree Chains}.
Our data structure has $d$ levels with reduction factor $k \approx m^{1 / d}$.
The graph at $i$-th level has a size roughly $m / k^i$ and $k$ forests from \cref{lemma:strMWU}.
We recursively build the chain on one of the forests and keep the rest.
In the dynamic setting, we support operations to \emph{shift} a level, that is, recursively rebuild the chain on the next forest.
This is used to handle the case that the current tree chain cannot output any cycle of a small enough ratio.

\begin{definition}[Shifted Tree Chain]
\label{def:TreeChainShift}
For a graph $G$, recursion level $d$, and reduction factor $k =
m^{1/d}$, a \emph{$d$-level tree chain} is a collection of graphs $\cG
= \{G_0 = G, G_1,\ldots, G_d\}.$
For each $i < d$, we have the following:
\begin{enumerate}
\item A collection of \emph{low stretch forests} $\cF^{G_i} = \{(T_j \subseteq G_i, F_j \subseteq T_j, \wstr^j_e)\}_{j=0}^{k-1}$ that satisfies conditions in \cref{lemma:strMWU},
\item A shift index $\bran_i \in \{0, 1, \ldots, k - 1\}$ which is initially $0$; 
\item For each $ F \in \cF^{G_i}$, a $(\gamma_{\ell}, \gamma_c)$-sparsified core graph $\cS(G_i, F)$ and embedding $\Pi_{(\cC(G_i, F) \to \cS(G_i, F))}$;  
\item and, we recursively define $G_{i+1} = \cS(G_i, F)$, the sparsified core graph w.r.t. the current LSF $F = F^{G_i}_{\bran_i}.$
\end{enumerate}
Finally, for the last level graph $G_d$, we maintain a collection of
$\O(|E_{G_d}|)$ low stretch trees $T$ using our MWU procedure (\Cref{lemma:strMWU}) with $k=\O(|E_{G_d}|).$

The tree chain $\cG$ naturally corresponds to a spanning tree $T^{\cG}$ of $G$, which is the union of pre-images of the forests $F^{G_0}_{\bran_0}, F^{G_1}_{\bran_1}, \ldots, F^{G_d}_{\bran_d}.$
\end{definition}

Compared to the branching tree chain used in \cite{chen2022maximum}, our data structure maintains one graph at each level instead of $\O(1)^i.$

We can dynamically maintain a tree chain such that we re-initialize $\cF^{G_i}$ and $G_{i+1}$ from $G_i$ every approximately $m/k^i$ updates.
Between re-initializations, the forests in the collection of LSFs $\{F_i\}$ are decremental as guaranteed in \cref{lemma:globalstretch}.

In addition to edge updates, a $d$-level tree chain is subject to (1) rebuild at level $i$ and (2) shift at level $i.$
\begin{definition}[Rebuild at Level $i$]
\label{def:rebuild}
Given a \emph{$d$-level tree chain} $\cG = \{G_0 = G, G_1,\ldots, G_d\}$, $\textsc{Rebuild}(i)$ re-initializes graphs $G_i, G_{i+1}, \ldots, G_d.$
\end{definition}
\begin{definition}[Shift at Level $i$]
\label{def:shift}
Given a \emph{$d$-level tree chain} $\cG = \{G_0 = G, G_1,\ldots, G_d\}$, $\textsc{Shift}(i)$ increments $\bran_i \gets (\bran_i + 1) \mod{k}.$
And it re-initializes every graphs of $G_{i+1}, \ldots, G_d.$
\end{definition}

A shift at level $i$ does not change $\cF^{G_i},$ the collection of low stretch forests at level $i,$ but only increases the branching index by $1$ circularly.
Shifting a level $i$ with $\bran_i = k - 1$ resets $\bran_i = 0$ while the set of low stretch forests remains the same.
As we will show in \cref{sec:cycleshift}, one of the forests preserves a cycle of a small ratio.
The circular behavior of shifts ensures that we will reach such a forest using at most $k$ shifts.
An alternative would be to re-initialize the set of forests whenever $\bran_i$ hits $0.$
In this setting, we would require as many as $2k - 1$ shifts to reach the forest preserving a small-ratio cycle and the analysis remains roughly the same.

In the rest of the section, we show the following lemma, which is a data structure weaker than \cref{thm:MMCHiddenStableFlow} because the output circulation has a larger ratio than desired.
In \cref{sec:routing}, we further analyze the cycle maintained by the data structure.
We later boost this to an algorithm for \cref{thm:MMCHiddenStableFlow} by solving a \emph{shift-and-rebuild game} in \cref{sec:rebuilding}.

\begin{restatable}[Dynamic Tree Chain]{lemma}{dynTreeChain}
\label{lemma:hintedTreeChain}
\cref{algo:hintedTreeChain} takes as input a parameter $d$, a dynamic graph $G^{(t)}$ undergoes $\tau$ batches of updates $U^{(1)}, \dots, U^{(\tau)}$ with hidden stability (\cref{def:hiddenStableFlowChasing}).

The algorithm explicitly maintains a tree chain $\cG$ (\cref{def:TreeChainShift}) and $T^{\cG}$, the spanning tree corresponding to the tree chain.
At stage $t$, the algorithm outputs a circulation $\bDelta$ represented by $m^{o(1)}$ off-tree edges and tree paths w.r.t. $T^{\cG}$.
The output circulation $\bDelta$ satisfies $\mB^\top \bDelta = 0$ and 
\begin{align*}
    \frac{\abs{\l\bg^{(t)}, \bDelta\r}}{\norm{\diag(\bell^{(t)}) \bDelta}_1} \ge \frac{1}{\O(k)} \frac{\abs{\l\bg^{(t)}, \bc^{(t)}\r}}{\sum_{i=0}^d \|\bw^{(t), G_i}\|_1}
\end{align*}
where $\bw^{(t), G_i}$ is the width at stage $t$ passed down to $G_i$, the level $i$ graph in the tree chain (\cref{def:passcore,def:passsparsecore}).
The algorithm also outputs the changes to $T^{\cG}$ as an explicit list of edge insertions and deletions.

In addition, throughout the algorithm, suppose there are $s_i$
invocations of $\textsc{Shift}$ and $r_i$ invocations of
$\textsc{Rebuild}$ at each level $i$.
Note that we only allow $\textsc{Rebuild}$ to be called through
\cref{line:periodRebuild} in 
$\textsc{Update}$.
The algorithm is deterministic and runs in time 
\begin{align*}
m^{1/d}\O(\gamma_{\ell}\gamma_r)^{O(d)}\left(m + \sum_{i=0}^{d} (s_i + r_i) \cdot m^{1 - i/d}\right)
\end{align*}
The same quantity also bounds the total number of edge updates to the spanning tree $T^{\cG}.$
\end{restatable}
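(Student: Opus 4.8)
The plan is to build the data structure recursively on the number of levels $d$, where at each level we maintain a single active branch $F^{G_i}_{\bran_i}$ from the MWU collection of \Cref{lemma:strMWU}, build its core graph $\cC(G_i, F)$ (\Cref{def:coregraph}) and sparsified core graph $\cS(G_i, F)$ (\Cref{def:sparsecore}, instantiated via \Cref{thm:spanner}), set $G_{i+1} = \cS(G_i, F)$, and recurse. At the bottom level $G_d$ we run \Cref{lemma:strMWU} with $k = \O(|E(G_d)|)$ to get a collection of honest-to-goodness spanning trees, and brute-force search over all off-tree edges of each such tree to find the best fundamental cycle. The \Cref{def:rebuild} operation reinitializes levels $i, i+1, \dots, d$ by re-running \Cref{lemma:strMWU} and \Cref{thm:spanner} from scratch on the current $G_i$; \Cref{def:shift} only increments $\bran_i$ and then reinitializes levels $i+1, \dots, d$. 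For an \textsc{Update}, we push the batch $U^{(t)}$ through level $0$ (translating into edge updates to each $F^{G_0}_j$, which are decremental with $\O(1)$ new components per update by \Cref{lemma:globalstretch}), translate the resulting changes to the active $F^{G_0}_{\bran_0}$ into edge updates and vertex splits of $\cC(G_0, F)$ and hence (via the spanner data structure) of $G_1 = \cS(G_0, F)$, and recurse down the chain; a periodic \textsc{Rebuild}$(i)$ is triggered at \Cref{line:periodRebuild} every $\approx m/k^i$ updates to keep forest sizes in control.

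The correctness bound — that the output cycle $\bDelta$ has ratio within $\O(k)$ of $\l\bg^{(t)}, \bc^{(t)}\r / \sum_i \|\bw^{(t), G_i}\|_1$ — I would prove by induction on the level, deferring the heart of it to \Cref{sec:cycleshift}/\Cref{sec:routing} as the lemma statement does. The inductive step is a flow-decomposition argument: given the hidden witness $\bc^{(t)}$ at level $i$, decompose it along the forest $F = F^{G_i}_{\bran_i}$ into fundamental cycles plus a residual circulation that projects to a circulation in $\cC(G_i, F)$; the stretch overestimates $\wstr_e$ control the length blowup of the projected circulation (the $\O(1)$ average-stretch guarantee of \Cref{lemma:strMWU} is exactly what makes $\|\bw^{(t),G_{i+1})}\|_1$ not much larger than $\|\bw^{(t),G_i)}\|_1$ for \emph{some} branch, hence for the active one after enough shifts); then pass through the sparsifier embedding $\Pi_{\cC(G_i,F) \to \cS(G_i,F)}$, whose length bound $\gamma_\ell$ and congestion bound $k\gamma_c$ control one more factor; and invoke the inductive hypothesis on $G_{i+1}$. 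Either a fundamental cycle (of $F$ or of the spanner embedding) already realizes a good ratio, or the residual does and we recurse; summing the per-level losses gives the stated $\O(k)$ (the extra $\O(1)$ per level is absorbed into $m^{o(1)}$, and the $\sum_i \|\bw^{(t),G_i}\|_1$ in the denominator is precisely the accumulation of these per-level width blowups). The representation of $\bDelta$ as $m^{o(1)}$ tree paths plus $m^{o(1)}$ off-tree edges follows by unrolling the recursion: a path in $G_{i+1}$ lifts to $\le \gamma_\ell$ edges of $\cC(G_i,F)$, each lifting to a constant number of $F$-paths plus its own edge in $G_i$; since $\gamma_\ell^{O(d)} = m^{o(1)}$ and the $F$-paths compose into $T^{\cG}$-paths (using $\bp(F[u,v]) + \bp(F[v,w]) = \bp(F[u,w])$), the count stays $m^{o(1)}$.

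For the running time, the key accounting is: (i) each \textsc{Rebuild}$(i)$ or \textsc{Shift}$(i)$ costs $m^{1-i/d} \cdot \O(\gamma_\ell\gamma_r)^{O(d)}$ — this is the cost of re-running \Cref{lemma:strMWU} (time $\O(|E(G_i)| \cdot k)$, and $|E(G_i)| \le m^{1-i/d} \gamma_\ell^{O(i)}/\dots$, here $\gamma_r$ collects the spanner's overheads from \Cref{thm:spanner}), plus re-initializing all lower levels, which by a geometric-series argument over $i+1, \dots, d$ is dominated by the level-$i$ term times $\O(\gamma_\ell\gamma_r)^{O(d)}$; summing over the $s_i$ shifts and $r_i$ rebuilds gives the $\sum_i (s_i + r_i) m^{1-i/d}$ term. (ii) The base-level edge-update cost: a batch $U^{(t)}$ creates $\O(|U^{(t)}|)$ forest changes at level $0$ by \Cref{lemma:globalstretch}, and — crucially — by allowing vertex splits at every level (the deviation from \cite{chen2022maximum} flagged in the section preamble) and by \Cref{lemma:encodingSize}, the total encoding size of all update batches pushed through all $d$ levels is $\O(m + \sum_t |U^{(t)}|)$ rather than $\O(\sum_t|U^{(t)}| + m/k)$; each unit of encoding propagates through $d$ levels with an $\O(\gamma_\ell\gamma_r)$ blowup per level, giving $\O(\gamma_\ell\gamma_r)^{O(d)}(m + \sum_t|U^{(t)}|) \le m^{1/d}\O(\gamma_\ell\gamma_r)^{O(d)}(m + \sum_t|U^{(t)}|)$, which is the first summand $i=0$ of the claimed bound. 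The spanning-tree recourse bound follows from the same propagation argument since each edge update to $G_i$ changes $T^{\cG}$ by $\O(1)$ pre-image edges. The main obstacle is the careful bookkeeping of the cross-level propagation under the periodic rebuild schedule — getting the geometric decay right so the $\O(\gamma_\ell\gamma_r)^{O(d)}$ factors don't compound into something super-polynomial, and verifying that \textsc{Shift} operations genuinely incur no dynamic-update cost beyond the reinitialization of strictly lower levels (so they can be charged to rebuilds); this is exactly the delicate amortization that the later \emph{shift-and-rebuild game} in \Cref{sec:rebuilding} is set up to handle, and here we only need the per-operation cost accounting, deferring the bound on $\sum_i(s_i + r_i)$ itself.
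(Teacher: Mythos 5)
Your proposal is correct and follows essentially the same route as the paper: the same recursive tree-chain construction with a single active branch per level, the same charging of all per-level data-structure costs to (re)initializations with geometric decay across levels (giving the $\sum_i (s_i+r_i)m^{1-i/d}$ term), and the same per-stage quality argument via fundamental spanner cycles and level-$d$ tree cycles mapped back through $T^{\cG}$, which the paper simply outsources to \cref{lemma:goodenough} (the flow-decomposition you sketch). The only cosmetic difference is that your accounting keeps an explicit $\sum_t |U^{(t)}|$ propagation term that does not appear in the stated bound; this is harmless because the periodic rebuild rule in \cref{line:periodRebuild} bounds the total update count by roughly $(r_0+1)\, m\gamma_{\ell}/(k\log^2 n)$, so that term is absorbed into the $m + r_0\cdot m^{1-0/d}$ part of the claimed running time.
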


\cref{algo:hintedTreeChain} initializes a tree chain as in \cref{def:TreeChainShift}.
For graph $G_i$ at level $i$, it maintains a collection of forests, trees, and sparsified core graph using the dynamic data structure from \cref{lemma:hintedsparsecore}, which will be presented later in the section.

\begin{algorithm}
  \caption{Dynamically maintains a tree chain (\cref{def:TreeChainShift}).
 \label{algo:hintedTreeChain}}
  \SetKwProg{Globals}{global variables}{}{}
  \SetKwProg{Proc}{procedure}{}{}
  \Globals{}{
    $d$: number of levels in the maintained tree chain. \\
    $k \assign m^{1 / d}$: reduction factor used in
    \cref{lemma:globalstretch}. \\
    $\Psi \gets \log^{O(1)} m $ such that by \Cref{def:hiddenStableFlowChasing} we have
    $\log \| \bw^{(t)}\|_1 \in (-\Psi,\Psi)$.\\
    $\cG = \{G_0, G_1, \ldots, G_d\}$: the maintained tree chain.\\
    $\{F^{G_i}_0, F^{G_i}_1, \ldots, F^{G_i}_{k-1}\}$ for each level $i$: collection of LSFs of each $G_i.$ \\
    $\bran_0, \bran_1, \ldots, \bran_{d-1}$ : the branching index of each $G_i.$ \\
    $\cA^{(\mathrm{SparseCore})}$: the dynamic sparsified core graph algorithm
    (\cref{lemma:hintedsparsecore}).  \\
    $\passes_i$, a variable for each level $i$.
  }
  \Proc{$\Initialize(G^{(0)}, \bell,\bg)$}{
    $G_0 \gets G^{(0)}$ and  $\passes_i \gets 0$ for all $i \in \{ 0, \ldots, d \}$\;
    $\Rebuild(0)$ 
  }
  \Proc{$\Rebuild(i_0)$}{
    \For{$i = i_0, \dots, d - 1$}{
      $\bran_i \assign 0.$  \\
      $\Set{\cS(G_i, F^{G_i}_{j})}{j = 0, 1, \ldots, k-1} \assign
      \cA^{(\mathrm{SparseCore})}_{G_i}.\textsc{Initialize}(G_i, \bell_{G_i},\bg_{G_i})$ \\
      $G_{i+1} \assign \cS(G_i, F^{G_i}_{\bran_i})$ \\
    }
  }
  \Proc{$\Shift(i_0)$}{
    $\bran_i \assign (\bran_i + 1) \mod{k}.$ \\
    $G_{i+1} \assign \cS(G_i, F^{G_i}_{\bran_i})$ \\
    \For{$i = i_0+1, \dots, d - 1$}{ \label{line:shiftRebuild}
      $\bran_i \assign 0.$ \\
      $\Set{\cS(G_i, F^{G_i}_{j})}{j = 0, 1, \ldots, k-1} \assign
      \cA^{(\mathrm{SparseCore})}_{G_i}.\textsc{Initialize}(G_i, \bell_{G_i},\bg_{G_i})$ \\
      $G_{i+1} \assign \cS(G_i, F^{G_i}_{\bran_i})$\\
    }
  }    
  \Proc{$\Update(U^{(t)}, \bg^{(t)}, \bell^{(t)})$}{
    $U^{(t)}_{G^{(t)}} \assign U^{(t)}$  \\
    \For{$i = 0, \dots, d - 1$}{
        \lIf{The total number of updates to $G_i$ since its last rebuild is more than $m (\gamma_{\ell} / k)^{i+1} / \log^2 n$}{
            $\Rebuild(i)$ \label{line:periodRebuild}
      }
      $\Set{U^{(t)}_{\cS(G_i, F_j)}}{j = 0, 1, \ldots, k-1} \assign
      \cA^{(\mathrm{SparseCore})}_{G_i}.\textsc{Update}(G_i, U^{(t)}_{G_i})$
    }
  }
  \Proc{$\FindCycle()$}{
    Return the best \emph{fundamental spanner cycle/level-$d$ tree cycle} $\bDelta$ (mapped
    back to $G_0$) with
    largest ratio $  \frac{\abs{\l\bg^{(t)}, \bDelta\r}}{\norm{\bell^{(t)} \circ \bDelta}_1}.$
    See Proof of \Cref{lemma:hintedTreeChain} for details.
  }
\end{algorithm}
Towards proving \cref{lemma:hintedTreeChain}, we first discuss and define how to pass witness circulations and upper bounds $\bc$ and $\bw$ through the tree chain $\cG.$

We first describe how to pass $\bc, \bw$ from $G$ to a core graph $\cC(G, F)$ (\cref{def:coregraph}).
\begin{definition}[Passing $\bc, \bw$ to core graph]
\label{def:passcore}
Given a graph $G = (V, E)$ with a tree $T$, a rooted spanning forest $E(F) \subseteq E(T)$, and a stretch overestimates $\wstr_e$ as in \cref{lemma:globalstretch}, circulation $\bc \in \R^E$, and length upper bounds $\bw \in \R^E_{>0}$, we define vectors $\bc^{\cC(G, F)} \in \R^{E(\cC(G, F))}$ and $\bw^{\cC(G, F)} \in \R^{E(\cC(G, F))}_{>0}$ as follows.
For $\hat{e} \in E(\cC(G, F))$ with preimage $e \in E$, define $\bc^{\cC(G, F)}_{\hat{e}} \defeq \bc_e$ and $\bw^{\cC(G, F)}_{\hat{e}} \defeq \wstr_e\bw_e$.
\end{definition}
We verify that $\bc^{\cC(G, F)}$ is a circulation on $\cC(G, F)$ and that $\bw^{\cC(G, F)}$ are length upper bounds.
\begin{lemma}[Validity of \cref{def:passcore}, Lemma 7.4 of \cite{chen2022maximum}]
  \label{lemma:passcore}
  Let $\bc,\bw$ be a valid pair (\cref{def:validpair}) on a graph $G$ with lengths $\bell$.
  As defined in \cref{def:passcore},
  $\bc^{\cC(G, F)}, \bw^{\cC(G, F)}$ are a valid pair on $\cC(G, F)$ with lengths $\bell^{\cC(G,F)}$ (\cref{def:coregraph}),
  and
  \[ \left\|\bw^{\cC(G, F)}\right\|_1 \le \sum_{e \in E(G)} \wstr_e \bw_e. \]
\end{lemma}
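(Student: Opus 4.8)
The statement to prove is \cref{lemma:passcore}: given a valid pair $(\bc, \bw)$ on $G$ with lengths $\bell$, the pushed-forward vectors $\bc^{\cC(G,F)}, \bw^{\cC(G,F)}$ from \cref{def:passcore} are a valid pair on the core graph $\cC(G,F)$ with lengths $\bell^{\cC(G,F)}$, and $\|\bw^{\cC(G,F)}\|_1 \le \sum_{e \in E(G)} \wstr_e \bw_e$. I would break this into three checks: (i) $\bc^{\cC(G,F)}$ is a circulation in $\cC(G,F)$; (ii) the coordinate-wise length bound $|\bell^{\cC(G,F)}_{\hat e} \bc^{\cC(G,F)}_{\hat e}| \le \bw^{\cC(G,F)}_{\hat e}$ holds for every edge $\hat e$; and (iii) the $\ell_1$-norm inequality.

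For (i), recall that $\cC(G,F)$ has the same vertex and edge set as $G/F$, and that $\bc^{\cC(G,F)}_{\hat e} = \bc_e$ for the preimage $e$. The key observation is that contracting the connected components of $F$ is a linear map on flows: if $\bc$ satisfies $\mB_G^\top \bc = 0$, then the contracted flow satisfies $\mB_{G/F}^\top \bc^{\cC(G,F)} = 0$, because contracting a set of vertices into one simply sums the corresponding rows of the incidence matrix, and a sum of zero entries is zero. I would state this formally: for a vertex $\hat v$ of $G/F$ corresponding to a component $C$ of $F$, the net flow out of $\hat v$ equals $\sum_{v \in C}$ (net flow out of $v$ in $G$) $= 0$, since edges internal to $C$ cancel and edges leaving $C$ are exactly the edges incident to $\hat v$.

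For (ii), I would compute directly: $|\bell^{\cC(G,F)}_{\hat e} \bc^{\cC(G,F)}_{\hat e}| = |\wstr_e \bell_e \cdot \bc_e| = \wstr_e |\bell_e \bc_e| \le \wstr_e \bw_e = \bw^{\cC(G,F)}_{\hat e}$, where the inequality uses that $(\bc,\bw)$ is a valid pair on $G$ (so $|\bell_e \bc_e| \le \bw_e$) and that $\wstr_e > 0$. This is the routine step. For (iii), summing over all edges: $\|\bw^{\cC(G,F)}\|_1 = \sum_{\hat e \in E(\cC(G,F))} \wstr_e \bw_e$, and since the edge set of $\cC(G,F)$ is in bijection with $E(G)$ (edges of $F$ are contracted but, in this multigraph formulation, kept as self-loops or — more carefully — I should check whether contracted edges are dropped), this equals $\sum_{e \in E(G)} \wstr_e \bw_e$, or is at most this if some edges of $F$ are discarded. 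I would phrase the conclusion as an inequality to be safe, matching the lemma statement.

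\textbf{Main obstacle.} The only subtlety is bookkeeping around what happens to edges of $F$ under contraction (whether they survive as self-loops, whether their $\wstr_e$ equals $1$, and whether they contribute to the sum), and the fact that $T$ need not be a subgraph of the current $G$ so that the "preimage" map is well-defined — but \cref{lemma:globalstretch} and \cref{def:coregraph} already pin down these conventions, so this is truly just careful reading rather than a real difficulty. There is no hard step here; the lemma is essentially a linearity-of-contraction statement plus an entrywise inequality, and I expect the author's proof to be just a few lines citing \cite{chen2022maximum}.
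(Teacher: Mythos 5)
Your proposal is correct and is essentially the standard argument: the paper itself gives no proof of \cref{lemma:passcore} (it is quoted from \cite{chen2022maximum}, Lemma 7.4), and your three checks---contraction of the components of $F$ sums rows of $\mB_G$ so circulations map to circulations, the entrywise bound $|\bell^{\cC(G,F)}_{\hat e}\bc^{\cC(G,F)}_{\hat e}| = \wstr_e|\bell_e\bc_e| \le \wstr_e\bw_e$, and summing over edges---are exactly what that proof does. Your flagged bookkeeping resolves as you suspected: edges of $F$ are contracted away while other intra-component edges survive as self-loops (with zero net divergence), so $E(\cC(G,F))$ injects into $E(G)$ and the norm bound holds with inequality precisely because the forest edges are dropped from the sum.
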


We state an algorithm that takes a dynamic graph $G^{(t)}$ with hidden stability and produces a dynamic core graph.
Below, we let $\bc^{(t),\cC(G,F)},\bw^{(t),\cC(G,F)}$ denote the result of using \cref{def:passcore} for $\bc = \bc^{(t)}$ and $\bw = \bw^{(t)}$, and similar definitions for $\bg^{(t),\cC(G,F)},\bell^{(t),\cC(G,F)}$ used later in the section.
\begin{lemma}[Dynamic Core Graphs]
\label{lemma:hintedcore}
There is a deterministic algorithm that takes as input a parameter $k$, a dynamic graph $G^{(t)}$ undergoes $\tau$ batches of updates $U^{(1)}, \dots U^{(\tau)}$ that satisfies $\sum_{t=1}^{\tau} |U^{(t)}| \le m/(k\log^2 n)$ and has hidden stability (\cref{def:hiddenStableFlowChasing}).
  
For each $j = 0, 1, \ldots, k-1$, the algorithm maintains a static tree $T_j$, a 
decremental rooted forest $F^{(t)}_j$ with $O(m/k)$ components satisfying the conditions of \cref{lemma:globalstretch}, and a core graph $\cC(G^{(t)}, F^{(t)}_j)$.
The algorithm outputs update batches $U^{(t)}_j$ that produce $\cC(G^{(t)}, F^{(t)}_j)$ from $\cC(G^{(t-1)}, F^{(t-1)}_j)$ such that $\sum_{t' \le t} |U^{(t')}_j| = O\left(\sum_{t' \le t} |U^{(t')}| \cdot \log^2 n\right).$

The algorithm runs in $\O(mk)$-time.
\end{lemma}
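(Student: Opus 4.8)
The plan is to reduce the lemma to the static construction of \cref{lemma:strMWU} together with $k$ independent instances of the dynamic low-stretch forest of \cref{lemma:globalstretch}, one per index $j$, and then to translate the changes these instances make to the forests $F_j$, together with the updates to $G^{(t)}$, into updates on the core graphs $\cC(G^{(t)}, F_j^{(t)})$ defined in \cref{def:coregraph}.

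\emph{Initialization.} First I would call \cref{lemma:strMWU} on $G$, the lengths $\bell$, and the parameter $k$; in $\O(mk)$ time this produces the $k$ static trees $T_j$, rooted spanning forests $F_j^{(0)} \subseteq T_j$ with $O(m/k)$ components, and stretch overestimates $\wstr^j$ obeying \cref{lemma:globalstretch}. For each $j$ I would then materialize $\cC(G^{(0)}, F_j^{(0)})$: its vertex and edge sets are those of $G^{(0)}/F_j^{(0)}$, the image of $e$ has length $\wstr^j_e \bell_e$, and its gradient is $\bg_e + \l\bg, \bp(T_j[v,u])\r$. All these gradients are computed in $O(m)$ time per tree by rooting $T_j$, computing node potentials $\phi_j(x) \defeq \l\bg, \bp(T_j[x,\root^{T_j}_x])\r$ in a single traversal, and reading off $\bg_e + \phi_j(u) - \phi_j(v)$ for an edge $e = (u,v)$ with both endpoints in $T_j$ (and $\bg_e$ otherwise). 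Summed over $j$ this costs $\O(mk)$, the dominant term.

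\emph{Processing a batch.} To handle $U^{(t)}$ I would, for each $j$, forward it to the dynamic LSF structure of \cref{lemma:globalstretch}, which returns $F_j^{(t)}$ --- an edge set that only shrinks, together with a list of newly created roots --- as an explicit list of changes, and translate these into updates on $\cC(G, F_j)$ as follows. Removing a forest edge from $F_j$, or promoting a vertex to a root, splits exactly one connected component of $F_j$ and so induces a single \emph{vertex split} in $\cC(G, F_j)$, where the incident edges are reassigned by the side they now attach to and the split is encoded by its smaller side, as in the dynamic-graph model of \cref{sec:prelim}. An edge insertion $e = (u,v)$ into $G$ first forces $u$ and $v$ to be roots ($\O(1)$ induced vertex splits), then inserts the image $\hat e$ between them with length $\wstr^j_e \bell_e = \bell_e$ (as $\wstr^j_e = 1$ for inserted edges); an edge deletion from $G$ forces its endpoints to be roots and then deletes the image; a vertex split in $G$ forces the split vertex to be a root and then splits it. Because \cref{lemma:globalstretch} creates only $O(\log^2 n)$ new roots per update amortized --- so that $F_j$ stays at $O(m/k)$ components over the at most $m/(k\log^2 n)$ updates in scope and its edge set remains decremental --- the output batch satisfies $\sum_{t'\le t}|U_j^{(t')}| = O\!\left(\sum_{t'\le t}|U^{(t')}|\log^2 n\right)$, and it is produced in time near-linear in its size; over all $k$ instances and all batches this is $k \cdot \O(\sum_t|U^{(t)}|) = \O(m)$, so the overall running time is $\O(mk)$.

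\emph{The main obstacle} is maintaining gradients within this $\O(1)$-per-update output budget: a change of $\bg_{e'}$ on a \emph{tree} edge $e'\in T_j$ would, applied directly, perturb $\l\bg,\bp(T_j[v,u])\r$ for every image whose $T_j$-path crosses $e'$ --- possibly $\Omega(m)$ of them. I would sidestep this by modeling every gradient change as a deletion followed by a re-insertion of the edge carrying its new value; since $T_j$ is static, the re-inserted copy is off-tree, so \cref{def:coregraph} gives its image the gradient $\bg_e^{\text{new}} + \l\bg,\bp(T_j[v,u])\r$ while the images of all other edges keep their unchanged tree-path contributions, and only one image changes. The tree-path sums can be answered on demand in $O(\log n)$ using a link-cut tree over $T_j$, and length changes are immediate, touching a single image. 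That these (partially frozen) gradients remain accurate enough is precisely the robustness of the $\ell_1$-IPM to approximate gradients noted in \cref{sec:tech_overview}. The remaining verifications --- that the edge redistribution across induced vertex splits respects the encoding-size convention, which follows from the volume bound of \cref{lemma:globalstretch}, and that the maintained object is exactly $\cC(G^{(t)}, F_j^{(t)})$ --- are routine. The whole argument mirrors \cite{chen2022maximum}; the only genuine difference is that here the forests, and hence the core graphs, themselves undergo vertex splits, which is why \cref{lemma:globalstretch} was set up to support them.
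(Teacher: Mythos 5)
Your proposal is correct and follows essentially the same route as the paper's (much terser) proof: the paper likewise maintains all $k$ forests via \cref{lemma:globalstretch}, translates each newly created root into a vertex split of the core graph (so each update to $G$ induces $O(1)$ edge updates plus $O(\log^2 n)$ vertex splits, giving the $O(\log^2 n)$ recourse factor), and bounds the runtime by $\O(mk)$, invoking \cref{lemma:encodingSize} to control the total encoding size of the vertex splits --- the one accounting step you only gesture at. Your discussion of frozen tree-edge gradients is consistent with the paper's convention in \cref{def:coregraph} (gradients defined w.r.t.\ the static tree $T_j$ with gradient support $E(G)\cup E(T_j)$), so there is no gap there either.
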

\begin{proof}
This follows almost directly from Lemma 7.5 in \cite{chen2022maximum}.
Here we maintain core graphs of every low stretch forest $F_j$ using \cref{lemma:globalstretch}, which only add $O(\log^2 n)$ roots to $F_j$ per vertex split.
Adding one root to $F_j$ splits some component of $F_j$ into two and this splits some vertex in the core graph.
Thus, each update, which may be an edge update or vertex split, to $G$ corresponds to $O(1)$ edge updates and $O(\log^2 n)$ vertex splits to the core graph.
Thus, the total number of updates to the core graph $\cC(G, F_j)$ is $\sum_{t' \le t} |U^{(t')}_j| = O\left(\sum_{t' \le t} |U^{(t')}| \cdot \log^2 n\right).$

The runtime is $\O((m + Q) k)$ for $Q = \sum_t \Enc(U^{(t)}).$
We concludes the runtime analysis using \cref{lemma:encodingSize} to bound $Q = \O(m + \sum_{t=1}^{\tau} |U^{(t)}|) = \O(m).$
\end{proof}

We describe how to pass $\bc^{\cC(G, F)}, \bw^{\cC(G, F)}$ on a core graph to a sparsified core graph $\cS(G, F)$.
\begin{definition}[Passing $\bc,\bw$ to sparsified core graph]
  \label{def:passsparsecore}
  Consider a graph $G$ with spanning forest $F$, and circulation $\bc^{\cC(G, F)} \in \R^{E(\cC(G, F))}$ and upper bound $\bw^{\cC(G, F)} \in \R^{E(\cC(G, F))}_{>0}$, and embedding $\Pi_{\cC(G, F) \to \cS(G, F)}$ for a $(\gamma_c,\gamma_l)$-sparsified core graph $\SS(G, F) \subseteq \mathcal{C}(G, F)$. Define
  \begin{align} 
    \bc^{\SS(G, F)} &= \sum_{\hat{e} \in E(\mathcal{C}(G, F))} \bc^{\mathcal{C}(G, F)}_{\hat{e}}\bPi_{\mathcal{C}(G, F) \to \SS(G, F)}(\hat{e}) \label{eq:bcsparsecore} \\
    \bw^{\SS(G, F)} &= 2\sum_{\hat{e} \in E(\mathcal{C}(G, F))} \bw^{\mathcal{C}(G, F)}_{\hat{e}}\left|\bPi_{\mathcal{C}(G, F) \to \SS(G, F)}(\hat{e})\right| \label{eq:bwsparsecore}.
  \end{align}
\end{definition}
We check that $\bc^{\SS(G, F)}$ is a circulation on $\SS(G, F)$ and $\bw^{\SS(G, F)}$ are length upper bounds.
\begin{lemma}[Validity of \cref{def:passsparsecore}, {\cite[Lemma 7.7]{chen2022maximum}}]
  \label{lemma:passsparsecore}
  Let $\bc^{\mathcal{C}(G,F)}, \bw^{\mathcal{C}(G,F)}$ be a valid pair on graph $\mathcal{C}(G,F)$ with lengths $\bell^{\mathcal{C}(G,F)}$. As defined in \cref{def:passsparsecore}, $\bc^{\SS(G, F)}, \bw^{\SS(G,F)}$ is a valid pair on $\mathcal{S}(G,F)$ with lengths $\bell^{\mathcal{S}(G,F)}$ (\cref{def:sparsecore}). Also,
  \[ \|\bw^{\mathcal{C}(G,F)}\|_1 \le \|\bw^{\SS(G, F)}\|_1 \le O(\gamma_l)\|\bw^{\mathcal{C}(G,F)}\|_1. \]
\end{lemma}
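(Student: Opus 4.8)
The plan is to unwind the three assertions packaged in the statement and verify each directly from the relevant definitions. Concretely, I must show: (i) $\bc^{\SS(G,F)}$ is a circulation on $\SS(G,F)$; (ii) $|\bell^{\SS(G,F)}_{\hat e}\,\bc^{\SS(G,F)}_{\hat e}| \le \bw^{\SS(G,F)}_{\hat e}$ on every edge $\hat e$ of $\SS(G,F)$; and (iii) the two-sided bound $\|\bw^{\cC(G,F)}\|_1 \le \|\bw^{\SS(G,F)}\|_1 \le O(\gamma_\ell)\|\bw^{\cC(G,F)}\|_1$. I would handle them in this order, using only \cref{def:coregraph}, \cref{def:sparsecore}, \cref{def:passsparsecore}, and the hypothesis that $\bc^{\cC(G,F)},\bw^{\cC(G,F)}$ form a valid pair on $\cC(G,F)$. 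As a preliminary I would note that degenerate edges of $\cC(G,F)$ (self-loops created by contracting components of $F$) carry zero demand and can be discarded, so that every embedding path $\Pi_{\cC(G,F)\to\SS(G,F)}(\hat e)$ has at least one edge.

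For (i), since $V(\SS(G,F)) = V(\cC(G,F))$ and the embedding sends each $\hat e = (u,v)$ to a $u$--$v$ path in $\SS(G,F)$, the signed indicator $\bPi_{\cC(G,F)\to\SS(G,F)}(\hat e)$ is a unit flow routing the demand $\bb_{\hat e}$ of $\hat e$ in $\SS(G,F)$, i.e.\ $\mB_{\SS(G,F)}^\top \bPi_{\cC(G,F)\to\SS(G,F)}(\hat e) = \bb_{\hat e}$. Substituting \eqref{eq:bcsparsecore} and using linearity of $\mB^\top$ then gives $\mB_{\SS(G,F)}^\top\bc^{\SS(G,F)} = \sum_{\hat e \in E(\cC(G,F))}\bc^{\cC(G,F)}_{\hat e}\,\bb_{\hat e} = \mB_{\cC(G,F)}^\top\bc^{\cC(G,F)} = 0$, where the last equality is the assumption that $\bc^{\cC(G,F)}$ is a circulation.

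For (ii), I would fix a spanner edge $\hat e'$, apply the triangle inequality coordinatewise in \eqref{eq:bcsparsecore} to obtain $|\bc^{\SS(G,F)}_{\hat e'}| \le \sum_{\hat e:\,\hat e'\in \Pi(\hat e)} |\bc^{\cC(G,F)}_{\hat e}|$, and multiply through by $\bell^{\SS(G,F)}_{\hat e'} = \bell^{\cC(G,F)}_{\hat e'}$ (lengths on $\SS(G,F)$ agree with those on $\cC(G,F)$ by \cref{def:sparsecore}). Using item \ref{item:samelen} of \cref{def:sparsecore}, namely $\bell^{\cC(G,F)}_{\hat e'} \le 2\,\bell^{\cC(G,F)}_{\hat e}$ whenever $\hat e' \in \Pi(\hat e)$, each term is at most $2\,\bell^{\cC(G,F)}_{\hat e}|\bc^{\cC(G,F)}_{\hat e}| \le 2\,\bw^{\cC(G,F)}_{\hat e}$ by validity on $\cC(G,F)$; summing over the contributing $\hat e$ gives exactly $\bw^{\SS(G,F)}_{\hat e'}$ from \eqref{eq:bwsparsecore}. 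For (iii), I would swap the order of summation in \eqref{eq:bwsparsecore} to write $\|\bw^{\SS(G,F)}\|_1 = 2\sum_{\hat e \in E(\cC(G,F))}\bw^{\cC(G,F)}_{\hat e}\,|\Pi(\hat e)|$, where $|\Pi(\hat e)|$ is the number of edges of the embedding path, and then bound $1 \le |\Pi(\hat e)| \le \length(\Pi_{\cC(G,F)\to\SS(G,F)}) \le \gamma_\ell$ by the length bound in \cref{def:sparsecore}; this yields $\|\bw^{\cC(G,F)}\|_1 \le \tfrac12\|\bw^{\SS(G,F)}\|_1 \le \|\bw^{\SS(G,F)}\|_1$ on the one side and $\|\bw^{\SS(G,F)}\|_1 \le 2\gamma_\ell\|\bw^{\cC(G,F)}\|_1 = O(\gamma_\ell)\|\bw^{\cC(G,F)}\|_1$ on the other.

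I do not expect a genuine obstacle: this is the same bookkeeping as \cite[Lemma 7.7]{chen2022maximum} and amounts to unwinding \cref{def:passsparsecore}. The only points needing care are matching the factor-$2$ length slack of item \ref{item:samelen} of \cref{def:sparsecore} against the explicit factor $2$ built into \eqref{eq:bwsparsecore} in step (ii), and keeping track of orientation signs when passing from \eqref{eq:bcsparsecore} to the triangle-inequality bound; both become routine once self-loop edges are cleared away at the outset.
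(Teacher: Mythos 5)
Your proof is correct and is essentially the argument the paper relies on: the paper gives no proof of its own here, citing \cite[Lemma 7.7]{chen2022maximum}, and that proof is exactly your bookkeeping --- circulation by linearity of $\mB^\top$ applied to the embedding paths, the valid-pair bound from matching the factor-$2$ length agreement in \cref{def:sparsecore} against the explicit factor $2$ in \eqref{eq:bwsparsecore}, and the norm bounds from $1 \le |\Pi_{\cC(G,F)\to\SS(G,F)}(\hat{e})| \le \gamma_{\ell}$. One caveat on your preliminary step of discarding self-loops: it is harmless for the circulation and valid-pair claims, but for the left inequality the discarded self-loops still contribute to $\|\bw^{\cC(G,F)}\|_1$ while (under the paper's convention that a self-loop's embedding path is empty) contributing nothing to $\|\bw^{\SS(G,F)}\|_1$, so the lower bound must be read with self-loops excluded from $E(\cC(G,F))$ --- the same implicit convention as the cited lemma, and in any case only the upper bound is used later (\cref{lemma:hintedTreeChainWidthBound}).
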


We can maintain the sparsified core graph of a dynamic graph $G^{(t)}$ with hidden stability.
In particular, the total size of updates to $\cS(G^{(t)}, F^{(t)})$ is comparable to the one for $G^{(t)}.$
This is the building block for dynamically maintaining the tree chain (\cref{def:TreeChainShift}).
\begin{lemma}[Dynamic Sparsified Core Graphs]
  \label{lemma:hintedsparsecore}
  There is an algorithm $\cA^{(\mathrm{SparseCore})}$ takes as input a parameter $k$, a dynamic graph $G^{(t)}$ undergoes $\tau$ batches of updates $U^{(1)}, \dots, U^{(\tau)}$ with hidden stability (\cref{def:hiddenStableFlowChasing}) and $\sum_{t=1}^{\tau} |U^{(t)}| \le m/(k\log^2 n).$

Upon initialization via a call $\cA^{(\mathrm{SparseCore})}_{G}.\textsc{Initialize}(G,
\bell,\bg)$, the data structure $\cA^{(\mathrm{SparseCore})}$ maintains for each $j \in \{0, 1, \ldots, k-1\}$, a decremental forest $F^{(t)}_j$, a static tree $T_j$ satisfying the conditions of \cref{lemma:globalstretch}, and a $(\gamma_{\ell},\gamma_c)$-sparsified core graph $\cS(G^{(t)}, F^{(t)}_j)$ for parameters $\gamma_c = \gamma_l = \exp(O(\log^{8/9}m\log\log m))$ with embedding $\Pi_{\mathcal{C}(G^{(t)}, F^{(t)}_j)\to\SS(G^{(t)}, F^{(t)}_j)}$ and supports the operation
  \begin{align*}
      \cA^{(\mathrm{SparseCore})}.\textsc{Update}(G^{(t-1)}, U^{(t-1)})
  \end{align*}
  which outputs, for each $j$, an update batch $U_{\SS, j}^{(t)}$ that produces $\SS(G^{(t)}, F^{(t)}_j)$ from $\SS(G^{(t-1)}, F^{(t-1)}_j).$
  \begin{enumerate}
      \item \label{item:SCGLowRec}
      \underline{Sparsified Core Graphs have Low Recourse:}
      For each $j$, the update batches $\{U_{\SS, j}^{(t)}\}_t$ to $\cS(G, F_j)$ output by $\cA^{(\mathrm{SparseCore})}$ satisfies
      \begin{align*}
          \sum_{t' \le t} \left|U_{\SS, j}^{(t')}\right| &= \gamma_r \cdot \sum_{t' \le t} \left|U^{(t')}\right|, \text{ and} \\
          \sum_{t' \le t} \Enc(U_{\SS, j}^{(t')}) &= \gamma_r \cdot \left(\frac{m}{k} + \sum_{t' \le t} \left|U^{(t')}\right|\right)
      \end{align*}
      for some $\gamma_r = \exp(O(\log^{8/9}m\log\log m))$, and
      
      \item \label{item:SCGHSFC} \underline{Sparsified Core Graphs undergo Updates with Hidden Stability:}
      for each $j$, the update batches $U_{\SS, j}^{(t)}$ to the sparsified core graph along with the associated gradients $\bg^{(t),\SS(G^{(t)} F^{(t)}_j)}$, and lengths $\bell^{(t),\SS(G^{(t)}, F^{(t)}_j)}$ as defined in \Cref{def:sparsecore} satisfy the hidden stable-flow property (see \cref{def:hiddenStableFlowChasing}) with the hidden circulation $\bc^{(t),\SS(G^{(t)}, F^{(t)}_j)}$, and width $\bw^{(t),\SS(G^{(t)}, F^{(t)}_j)}$ as defined in \Cref{def:passsparsecore}.
  \end{enumerate}
The algorithm runs in total time $\O(m k \cdot \gamma_r)$.
\end{lemma}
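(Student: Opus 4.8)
The data structure $\cA^{(\mathrm{SparseCore})}$ is assembled from three ingredients: the worst-case-average-stretch forests of \cref{lemma:strMWU} (maintained dynamically by \cref{lemma:globalstretch}), the dynamic core graph construction of \cref{lemma:hintedcore}, and the deterministic dynamic spanner \cref{thm:spanner}. On \textsc{Initialize}, I run \cref{lemma:strMWU} on $G$ with the given parameter $k$ to obtain the $k$ trees $T_j$, rooted forests $F_j\subseteq T_j$, and frozen stretch overestimates $\wstr^j_e$ in $\O(mk)$ time, and maintain each $(T_j,F_j,\wstr^j)$ under the update sequence with \cref{lemma:globalstretch}. For each $j$ I maintain $\cC(G^{(t)},F^{(t)}_j)$ via \cref{lemma:hintedcore}: each original update (edge update or vertex split) becomes $O(1)$ edge updates and $O(\log^2 n)$ vertex splits on the core graph, since adding a root to $F_j$ splits a single component hence a single core-graph vertex; as $\sum_t|U^{(t)}|\le m/(k\log^2 n)$, the core graph receives $\O(m/k)$ updates in total and keeps $O(m/k)$ vertices. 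Finally I bucket the edges of $\cC(G^{(t)},F^{(t)}_j)$ into $O(\log^{O(1)}m)$ length classes within a factor $2$ (legitimate by item \ref{item:quasipoly}) and run one instance of \cref{thm:spanner} on each class; the sparsified core graph $\cS(G^{(t)},F^{(t)}_j)$ is the union of the spanner edge sets and $\Pi_{\cC(G,F_j)\to\cS(G,F_j)}$ is the union of the spanner embeddings.

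\textbf{Sparsified core graph properties, recourse, and runtime.} Item \ref{item:samelen} of \cref{def:sparsecore} is immediate from bucketing. \cref{thm:spanner} produces embedding paths of length $\le\gamma_\ell$ with edge congestion $\gamma_c$ on the (nearly-uniform-degree) host; since $\cC(G,F_j)$ has at most $m$ edges on $O(m/k)$ vertices, after the standard degree reduction the composed embedding of $\cC(G,F_j)$ into $\cS(G,F_j)$ has edge congestion $\le k\gamma_c$, and $\cS(G,F_j)$ has at most $(m/k)\cdot m^{o(1)}=m\gamma_\ell/k$ edges, with lengths and gradients inherited from $\cC(G,F_j)$ since the spanner keeps a subset of edges. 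For item \ref{item:SCGLowRec}, \cref{thm:spanner} has amortized recourse $\gamma_r=\exp(O(\log^{8/9}m\log\log m))$ per update and an additive $O(m/k)$ term in encoding size from its static initialization; composing with the $O(\log^2 n)$ blow-up of \cref{lemma:hintedcore} gives $\sum_{t'\le t}|U^{(t')}_{\SS,j}|=\gamma_r\sum_{t'\le t}|U^{(t')}|$ and $\sum_{t'\le t}\Enc(U^{(t')}_{\SS,j})=\gamma_r(m/k+\sum_{t'\le t}|U^{(t')}|)$, where I also use the potential argument of \cref{lemma:encodingSize} to bound the encoding contribution of vertex splits. The total time is $\O(mk)$ for \cref{lemma:strMWU} and \cref{lemma:hintedcore}, plus $\O(mk\cdot\gamma_r)$ for the $k$ spanner instances (each on an $\O(m/k)$-edge graph over $\O(m/k)$ updates, over $k$ forests), hence $\O(mk\gamma_r)$ overall.

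\textbf{Hidden stability.} For item \ref{item:SCGHSFC}, given the hidden pair $\bc^{(t)},\bw^{(t)}$ for $G^{(t)}$ I push it to $\bc^{(t),\cC(G,F_j)},\bw^{(t),\cC(G,F_j)}$ via \cref{def:passcore} and then to $\bc^{(t),\SS(G,F_j)},\bw^{(t),\SS(G,F_j)}$ via \cref{def:passsparsecore}, and re-establish the five conditions of \cref{def:hiddenStableFlowChasing}. Conditions \ref{item:circulation} and \ref{item:width} follow from \cref{lemma:passcore} and \cref{lemma:passsparsecore}. For the objective bound: the core-graph gradient differs from $\bg$ by an exact potential (gradient) term along $T_j$, so its inner product with any circulation equals $\langle\bg,\bc\rangle$ (equivalently, a circulation supported on a forest is zero), and this identity carries through the embedding since $\Pi$ reroutes $\bc^{\cC(G,F_j)}$ along paths with the same endpoints; combined with $\|\bw^{\cC(G,F_j)}\|_1\le\sum_e\wstr^j_e\bw_e$ (\cref{lemma:passcore}) and $\|\bw^{\SS(G,F_j)}\|_1\le O(\gamma_\ell)\|\bw^{\cC(G,F_j)}\|_1$ (\cref{lemma:passsparsecore}), this yields condition $3$ with some (forest-dependent, possibly much weaker than $\alpha$) constant $\alpha_j$. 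Condition \ref{item:quasipoly} follows from $\wstr^j_e\le O(k\log^6 n)$ and $\length(\Pi)\le\gamma_\ell=m^{o(1)}$.

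\textbf{The main obstacle.} The delicate condition is \ref{item:widthstable}, width stability. On the core graph it is immediate: \cref{lemma:globalstretch} keeps the overestimates $\wstr^j_e$ \emph{frozen} for all time, so $\bw^{\cC(G,F_j)}_{\hat e}=\wstr^j_e\bw_e$ inherits the factor-$2$ stability of $\bw_e$ verbatim. On the sparsified core graph, however, $\bw^{\SS(G,F_j)}$ is defined through the dynamic embedding $\Pi$ maintained by the spanner (\cref{eq:bwsparsecore}), and a generic dynamic spanner could re-route already-embedded edges onto longer paths and thereby inflate $\bw^{\SS(G,F_j)}$. This is precisely why \cref{thm:spanner} must be stated with guarantees beyond maintaining a sparse spanner with a low-congestion short-path embedding: it must guarantee the embedding is modified only in a controlled manner, so that $\bw^{\SS(G,F_j)}$ — which carries a factor-$2$ slack in \cref{eq:bwsparsecore} — remains a valid upper bound of the required stability against explicit re-insertions and vertex splits. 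Threading these extra spanner guarantees correctly through \cref{def:passsparsecore}, and reconciling the vertex-split encoding bookkeeping across \cref{lemma:hintedcore} and \cref{thm:spanner}, is the part that needs the most care; the remainder is a composition of \cref{lemma:strMWU}, \cref{lemma:hintedcore}, \cref{lemma:passcore}, \cref{lemma:passsparsecore}, and \cref{thm:spanner}.
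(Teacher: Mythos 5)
Your construction coincides with the paper's: the data structure is assembled from \cref{lemma:strMWU}/\cref{lemma:globalstretch}, \cref{lemma:hintedcore}, and \cref{thm:spanner}, items \ref{item:circulation} and \ref{item:width} of \cref{def:hiddenStableFlowChasing} come from \cref{lemma:passcore} and \cref{lemma:passsparsecore}, and the recourse/runtime accounting is the same composition of the recourse bounds of those three ingredients. So the skeleton is right.

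The problem is that the one substantive claim of item \ref{item:SCGHSFC} --- width stability, item \ref{item:widthstable} of \cref{def:hiddenStableFlowChasing} on the sparsified core graph --- is exactly the step you acknowledge as ``the main obstacle'' and then defer rather than prove, so as written the proposal has a genuine gap at the point the lemma exists to establish. The missing idea is concrete: the algorithm must take the re-embedded set $D^{(t)}_j$ reported by \cref{thm:spanner} (\cref{prop:lowRecourseSpanner}) and insert those spanner edges as \emph{explicit edge insertions} into the output batches $U_{\SS,j}^{(t)}$. Once this is done, width stability only needs to be verified for an edge $e$ of $\SS(G^{(t)},F^{(t)}_j)$ that was not explicitly inserted after stage $t'$, and for such an edge the re-embedding guarantee says no embedding path started using $e$ in the meantime, i.e. $(\Pi^{(t)}_j)^{-1}(e)\subseteq(\Pi^{(t')}_j)^{-1}(e)$. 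Each core-graph width in that preimage is a \emph{frozen} stretch overestimate $\wstr$ (from \cref{lemma:globalstretch}) times the width of its preimage edge in $G$, which by the input's hidden stability grows by at most a factor $2$; summing via \eqref{eq:bwsparsecore} then gives $\bw^{(t),\SS(G^{(t)},F^{(t)}_j)}_e \le 2\,\bw^{(t'),\SS(G^{(t')},F^{(t')}_j)}_e$. Note also that the mechanism you gesture at is slightly off: the factor $2$ in \eqref{eq:bwsparsecore} is not ``slack'' absorbing re-routing (it appears identically on both sides of the above chain; it is there so the valid-pair bound survives the factor-$2$ length bucketing of \cref{def:sparsecore}); the stability comes entirely from declaring re-embedded edges to be fresh insertions plus the preimage monotonicity. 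Your discussion of the objective condition (preservation of $\langle \bg,\bc\rangle$ through core-graph gradients and the embedding) is fine in spirit and matches how the paper handles cycle quality later via \cref{lemma:goodenough}.
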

\begin{proof}
Except Item~\cref{item:SCGHSFC}, the lemma follows by using \cref{lemma:hintedcore} to maintain all $k$ core graphs $\cC(G^{(t)}, F_j^{(t)})$ and \cref{thm:spanner} to maintain sparsified core graphs $\SS(G^{(t)}, F^{(t)}_j).$
In particular, for each $j$, we add $D^{(t)}_j$, the re-embeded set of edges in $\cS(G^{(t)}, F^{(t)}_j)$, as explicit edges updates to $U_{\cS, j}^{(t)}.$

We now show Item~\cref{item:SCGHSFC}.
$\bc^{(t),\SS(G^{(t)}, F^{(t)}_j)}$ and $\bw^{(t),\cS(G^{(t)}, F^{(t)}_j)}$ form a valid pair by \cref{lemma:passsparsecore}. Therefore, items \ref{item:circulation} and \ref{item:width} of \cref{def:hiddenStableFlowChasing} are satisfied.

Next, we prove item \ref{item:widthstable} of \cref{def:hiddenStableFlowChasing}.
At any stage $t \in [\tau]$ and any edge $e \in \cS(G^{(t)}, F_j^{(t)})$ for some $j$, suppose $e$ was not explicitly inserted after some earlier stage $t'$, i.e. $e \in \cS(G^{(t')}, F_j^{(t')})$ for some $t' < t.$
However, between stage $t'$ and $t$, endpoints of $e$ might change due to vertex splits.
But the insertion of $e$ is not included in any of $U^{(s)}_{\cS, j}, s \in (t', t].$
Thus, we have $(\Pi^{(t)}_j)^{-1}(e) \subseteq (\Pi^{(t')}_j)^{-1}(e)$ otherwise insertion of $e$ is included in some $U^{(s)}_{\cS, j}, s \in (t', t]$ due to the definition of re-embedded set (\cref{prop:lowRecourseSpanner} of \cref{thm:spanner}).

For any edge $e' \in (\Pi^{(t)}_j)^{-1}(e)$, it exists in the core graph at both stage $t$ and $t'$, i.e. $e' \in \cC(G^{(t)}, F_j^{(t)})$ and $\cC(G^{(t')}, F_j^{(t')}).$
Let ${e'}^G$ be its pre-image in $G.$
${e'}^G$ also exists in $G$ at both stage $t$ and $t'$.
Since $G$ is updates with hidden stability, by item \ref{item:widthstable} of \cref{def:hiddenStableFlowChasing} we have
\begin{align*}
    \bw^{(t), G^{(t)}}_{{e'}^G} \le 2 \cdot \bw^{(t'), G^{(t')}}_{{e'}^G}.
\end{align*}
\cref{def:passcore} and the immutable nature of $\wstr$ from \cref{lemma:globalstretch} yields
\begin{align}
\label{eq:coreWidthStable}
    \bw^{(t), \cC(G^{(t)}, F^{(t)}_j)}_{e'} = \wstr^{T_j, \bell}_{{e'}^G}  \bw^{(t), G^{(t)}}_{{e'}^G} \le 2 \cdot \wstr^{T_j, \bell}_{{e'}^G}  \bw^{(t'), G^{(t')}}_{{e'}^G} = 2 \cdot \bw^{(t'), \cC(G^{(t')}, F^{(t')}_j)}_{e'}.
\end{align}

Combining with the fact that $(\Pi^{(t)}_j)^{-1}(e) \subseteq (\Pi^{(t')}_j)^{-1}(e)$ and \cref{def:passsparsecore} yields the following and proves item \ref{item:widthstable} of \cref{def:hiddenStableFlowChasing}:
\begin{align*}
    \bw^{(t),\cS(G^{(t)}, F^{(t)}_j)}_{e} 
    &= 2 \cdot \sum_{e' \in \left(\Pi^{(t)}_j\right)^{-1}(e)} \bw^{(t), \cC(G^{(t)}, F^{(t)}_j)}_{e'} \\
    &\le 2 \cdot 2 \cdot \sum_{e' \in \left(\Pi^{(t)}_j\right)^{-1}(e)} \bw^{(t'), \cC(G^{(t')}, F^{(t')}_i)}_{e'} \\
    &\le 2 \cdot 2 \cdot \sum_{e' \in \left(\Pi^{(t')}_j\right)^{-1}(e)} \bw^{(t'), \cC(G^{(t')}, F^{(t')}_j)}_{e} = 2 \cdot \bw^{(t'),\cS(G^{(t')}, F^{(t')}_i)}_{e}.
\end{align*}

  Item~\ref{item:quasipoly} follows directly from the definition of $\bell^{(t),\cS(G^{(t)}, F^{(t)}_j)}$ and $\bw^{(t),\cS(G^{(t)}, F^{(t)}_j)}.$
\end{proof}

Given \cref{lemma:hintedsparsecore}, we are ready to prove \cref{lemma:hintedTreeChain} with the data structure described in \cref{algo:hintedTreeChain}.
Recall that the algorithm initializes a shifted tree chain (\cref{def:TreeChainShift}) and maintains a collection of low stretch forests and sparsified core graphs using \cref{lemma:hintedsparsecore} for graph $G_i$ at each level $i.$

However, the data structure of \cref{lemma:hintedsparsecore} can only take up to $m /(k\log^2 n)$ updates if the input graph has at most $m$ edges at all times.
This forces us to rebuild the data structure every once in a while.
In particular, we re-initialize everything at every level $i \ge i_0$ if any of the data structures of \cref{lemma:hintedsparsecore} on some level $i_0$ graph $G_{i_0}$ has accumulated too many updates (approximately $m/k^{i_0}$).
We will show that the cost of re-initializing amortizes well across dynamic updates.

\begin{proof}[Proof of \cref{lemma:hintedTreeChain}]
  At any stage $t$ and level $i > 0$, graph $G^{(t)}_i$ has at most $m \gamma_{\ell}^{i-1} / k^{i}$ vertices and $m (\gamma_{\ell} / k)^{i}$ edges due to \cref{lemma:hintedsparsecore}.

Ignoring the last level $d,$ we can count the entire cost of all data structure operations toward
the point where each sparse core graph is initialized. 
By \cref{lemma:hintedsparsecore}, the runtime cost of
$\cA^{(SparseCore)}$ for $G_i$ until its next initialization is
$\O(m(\gamma_{\ell}/k)^ik\gamma_r).$
  When we initialize all levels $i \geq h$, the running time cost is $\sum_{i =
    h}^d \O(m(\gamma_{\ell}/k)^ik\gamma_r) =
  \O(m(\gamma_{\ell}/k)^hk\gamma_r)$, as the costs decay
  geometrically.
  The very first initialization cost can thus be bounded by $\O(m
  k\gamma_r)$ by considering $h = 0$.
  All remaining initializations occur when (1) a level $i_0$ is shifted and all levels $i > i_0$
  are re-initialized or (2) when a level $i_0$ is rebuilt and all levels $i \geq i_0$
  are re-initialized.
We can bound these cost from (1) and (2) by $\sum_i s_i
\O(m(\gamma_{\ell}/k)^{i+1} k\gamma_r)$ and $\sum_i r_i
\O(m(\gamma_{\ell}/k)^{i} k\gamma_r)$ respectively.
For the last level, note that the initialization running time is
somewhat larger $\O(\gamma_{\ell}^{2d}),$ but this does not affect the
overall asymptotics of the above sum as long as $d$ is not too large. 
The overall runtime bound follows because $k = m^{1/d}$.

  Finally, we describe how to maintain and return a good enough
  circulation $\bDelta.$ We have two kinds of candidate circulations:
  \begin{enumerate}
      \item   fundamental spanner cycles: for every level $i,$ and every
  edge
  $e \in \mathcal{C}(G_i, F_{\bran_i}^{G_i}) \setminus \SS(G_i,
  F_{\bran_i}^{G_i}),$ we consider the cycle formed by $e$ and its
  spanner embedding path
  $\Pi_{ \mathcal{C}(G_i, F_{\bran_i}^{G_i})\to\SS(G_i,
    F_{\bran_i}^{G_i})}.$ This is a cycle in the core graph
  $ \mathcal{C}(G_i, F_{\bran_i}^{G_i})$ at level $i.$ 
  \item fundamental tree cycles at level $d,$ i.e., for every edge $e \in G_d$ and each a low-stretch tree $T$ for $G_d$ in the data structure, we consider the unique cycle obtained by adding $e$ to $T.$
  \end{enumerate}
  We use tree
  paths and spanner embeddings to map these cycles back to
  $G_0 = G^{(t)}.$
  We now claim that one of these cycles $\bDelta$ must satisfy
  \begin{align*}
    \frac{\abs{\l\bg^{(t)}, \bDelta\r}}{\norm{\bell^{(t)} \circ \bDelta}_1} \ge \frac{1}{\O(k)} \frac{\abs{\l\bg^{(t)}, \bc^{(t)}\r}}{\sum_{i=0}^d \norm{\bw^{(t), G_i}}_1}
  \end{align*}
  This follows by \cref{lemma:goodenough}, which is shown in \cite[Lemma 7.15-7.17]{chen2022maximum} -- we include a self-contained statement in the appendix for completeness.
  Since all embeddings are known explicitly, the algorithm can compute
  the quality of a fundamental spanner cycle when the embedding of an
  edge changes. Thus, the algorithm simply tracks the quality of all
  fundamental spanner cycles using a heap and returns the best among them.
\end{proof}

\section{Analyzing the Cycle Quality with Shifts and Rebuilds}
\label{sec:cycleshift}
The goal of this section is to complete the proof of \cref{thm:MMCHiddenStableFlow}. To do this, we start in \cref{sec:routing} by analyzing the quality of the circulation output by \cref{lemma:hintedTreeChain} at each stage of the dynamic updates (\cref{lemma:hintedTreeChainWidthBound}). This involves defining \emph{representative time steps} in \cref{def:repT}, which intuitively represent the points in time of the data structure that we measure our stretch with respect to. Because \cref{lemma:hintedTreeChainWidthBound} is a weaker version of our desired \cref{thm:MMCHiddenStableFlow}, in \cref{sec:datastructure-theorem} we describe a strategy for applying shifts and rebuilds (\cref{lem:AlgoIsGameInstance}) to our cycle quality bound to show \cref{thm:MMCHiddenStableFlow}.

\subsection{Cycle Qualities}
\label{sec:routing}
An important concept for analyzing cycle quality is our notion of a \emph{representative time}. 
A representative time stamp is used to record the quality of the witness approximation by our data structure at this past point in time.
This and the shift operation are the main differences in the data structure between this paper and \cite{chen2022maximum}.

\begin{definition}[Representative timesteps]
\label{def:repT}
Consider the setting of \cref{lemma:hintedTreeChain} at some stage t, and consider the state of the data structure, right after any sequence of completed procedure calls.
A set of timestamps associated with each level $\{\repT_i\}_{i=0}^{d}$ is a set of \emph{representative time stamps} if, at any level $i$, $\repT_i$ is between $t$ and level $i$'s previous re-initialization time and is after $\repT_{i-1}$.
We also define, at any level $i$,
for each branch $j$,
we define the stretch w.r.t. $\{\repT_i\}_i$ to be the value $\hstr_{i,j}$ such that
\begin{align*}
    \sum_{e \in G_i} \wstr^{F^{G_i}_{j}}_{e} \bw^{(\repT_i),
  G_i}_e = \hstr_{i,j} \norm{\bw^{(\repT_i), G_i}}_1
  .
\end{align*}
where  $F^{G_i}_{j} \subseteq G_i$ are the spanning forests associated with $G_i$.
We define the current stretch to be $\hstr_{i} \defeq \hstr_{i,\bran_i}$.
\end{definition}
Notice that for level $d$, we have $\repT_d = t$ and $\hstr_d = \O(1)$ because $G_d$ is built from scratch after every update and has size $m^{o(1)}$.

The crucial role that representative times play comes from a monotonicity property of our data structures;
if the witness was well-approximated in the past by a data structure level, then this remains the case until the level is re-initialized.
This behavior is captured in the following lemma, \cref{lemma:hintedTreeChainWidthBound}.
This lemma becomes powerful when we later combine it with \cref{cor:existsBranchGoodStr}, which guarantees that we eventually encounter forests with small stretch of the witness.
\begin{lemma}
\label{lemma:hintedTreeChainWidthBound}
Consider the setting of \cref{lemma:hintedTreeChain} at some stage t, and consider the state of the data structure, right after any sequence of completed procedure calls.
Given any set of representative time stamps $\{\repT_i\}$ and its corresponding set of current stretches $\{\hstr_i\}$, we have
\begin{align}\label{eq:hintedTreeChainWidthBound}
\sum_{i=0}^d \norm{\bw^{(t), G_i}}_1 \le O(\gamma_{\ell})^d \sum_{i=0}^{d} \left(\prod_{i'=i}^{d} \hstr_{i'} \right) \|\bw^{(\repT_i)}\|_1 
\end{align}
\end{lemma}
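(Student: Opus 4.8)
The plan is to peel the tree chain one level at a time — bound $\|\bw^{(t),G_{i+1}}\|_1$ by level-$i$ quantities, iterate down to $G_0$, then sum over the $d+1$ levels — using two elementary one-step inequalities. The first is a \emph{descent} bound: since $G_{i+1}=\cS(G_i,F^{G_i}_{\bran_i})$, composing \cref{lemma:passcore} (giving $\|\bw^{(t),\cC(G_i,F^{G_i}_{\bran_i})}\|_1\le\sum_{e\in E(G_i^{(t)})}\wstr^{F^{G_i}_{\bran_i}}_e\,\bw^{(t),G_i}_e$) with \cref{lemma:passsparsecore} (giving $\|\bw^{(t),\cS(G_i,F^{G_i}_{\bran_i})}\|_1\le O(\gamma_{\ell})\,\|\bw^{(t),\cC(G_i,F^{G_i}_{\bran_i})}\|_1$) yields
\[
\|\bw^{(t),G_{i+1}}\|_1 \;\le\; O(\gamma_{\ell})\sum_{e\in E(G_i^{(t)})}\wstr^{F^{G_i}_{\bran_i}}_e\,\bw^{(t),G_i}_e .
\]
The second is a \emph{monotonicity} bound transporting the right-hand side back to the representative time. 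By \cref{def:repT}, $\sum_{e\in E(G_i^{(\repT_i)})}\wstr^{F^{G_i}_{\bran_i}}_e\,\bw^{(\repT_i),G_i}_e=\hstr_i\,\|\bw^{(\repT_i),G_i}\|_1$. An edge $e\in E(G_i^{(t)})$ already present at stage $\repT_i$ was not explicitly (re-)inserted after $\repT_i$ — level $i$'s last re-initialization precedes $\repT_i$ by \cref{def:repT} — so item~\ref{item:widthstable} of \cref{def:hiddenStableFlowChasing}, which holds at every level by item~\ref{item:SCGHSFC} of \cref{lemma:hintedsparsecore} (and at $G_0$ by hypothesis), gives $\bw^{(t),G_i}_e\le 2\,\bw^{(\repT_i),G_i}_e$, while $\wstr^{F^{G_i}_{\bran_i}}_e$ is immutable by \cref{lemma:globalstretch}; and an edge inserted after $\repT_i$ has $\wstr^{F^{G_i}_{\bran_i}}_e=1$, again by \cref{lemma:globalstretch}. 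Hence
\[
\sum_{e\in E(G_i^{(t)})}\wstr^{F^{G_i}_{\bran_i}}_e\,\bw^{(t),G_i}_e \;\le\; 2\,\hstr_i\,\|\bw^{(\repT_i),G_i}\|_1 \,+\, N_i ,
\]
where $N_i$ is the total time-$t$ width of those edges of $G_i^{(t)}$ that were inserted after $\repT_i$.

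The crux is bounding the correction $N_i$, and the plan is to show it is at most $O(\gamma_{\ell})^{O(d)}$ times the top-level width $\|\bw^{(\repT_{i+1})}\|_1$ at a nearby representative time (using $\repT_d\defeq t$); the same must be done for the corrections picked up when the above chain is re-run at a stage $\le\repT_{i+1}$, which is legitimate because $\repT_0\le\repT_1\le\cdots$ by \cref{def:repT}. Every edge of $G_i^{(t)}$ inserted after $\repT_i$ traces down the chain to one of two origins. Either it descends from an edge genuinely inserted to $G_0$ after $\repT_i$; such an edge is freshly inserted at every level it reaches, so it carries $\wstr=1$ throughout and picks up only the factor-$2$ inflation of \cref{def:passsparsecore} per level, whence the total such contribution is at most $2^{O(i)}\|\bw^{(\repT_{i+1}),G_0}\|_1=2^{O(i)}\|\bw^{(\repT_{i+1})}\|_1$. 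Or it is created by a spanner re-embedding at some level $<i$, in which case \cref{def:passsparsecore} distributes its width over the at most $\gamma_{\ell}$ edges of the re-embedding path; bounding the aggregate width injected by all re-embeddings in the relevant window requires combining the recourse guarantee (item~\ref{item:SCGLowRec} of \cref{lemma:hintedsparsecore}) with the per-edge $2$-stability of widths (item~\ref{item:widthstable}), rather than a crude count. The base case $i=0$ is immediate: $N_0\le\|\bw^{(t),G_0}\|_1=\|\bw^{(\repT_d)}\|_1$.

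Assembling, the displays above give $\|\bw^{(t),G_{i+1}}\|_1\le O(\gamma_{\ell})\big(2\hstr_i\|\bw^{(\repT_i),G_i}\|_1+O(\gamma_{\ell})^{O(d)}\|\bw^{(\repT_{i+1})}\|_1\big)$ together with the analogous inequality at each earlier stage, so unwinding down to level $0$ — where $\|\bw^{(\repT_0),G_0}\|_1=\|\bw^{(\repT_0)}\|_1$ — writes $\|\bw^{(t),G_I}\|_1$ as a sum over an ``origin level'' $j\in\{0,\dots,I\}$ of $O(\gamma_{\ell})^{O(d)}\big(\prod_{j\le j'<I}\hstr_{j'}\big)\|\bw^{(\repT_j)}\|_1$ (the $j=0$ term carrying all of $\hstr_0,\dots,\hstr_{I-1}$, the $j\ge1$ terms being the corrections). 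Summing over $I=0,\dots,d$, padding out missing stretch factors with $\hstr_{i'}\ge1$ (valid since $\wstr\ge1$) to reach $\prod_{i'=j}^{d}\hstr_{i'}$, and using $\hstr_d=O(1)$ and $\repT_d=t$ (noted after \cref{def:repT}) to supply the level-$d$ factor, each inner sum over $I$ is a geometric series dominated by its $I=d$ term, and after absorbing all the $O(\gamma_{\ell})$-power overheads and the $O(d)$ number-of-terms factor into $O(\gamma_{\ell})^{d}$ we obtain exactly \cref{eq:hintedTreeChainWidthBound}. The genuine obstacle is the re-embedding part of the $N_i$ estimate: a union bound over the $\approx m/k^{i}$ possible level-$i$ insertions with quasipolynomial per-edge widths is far too weak, so one must use that re-embeddings, though they do inject width, are tightly rationed by the recourse guarantee; the rest is bookkeeping.
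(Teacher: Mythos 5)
Your first two steps match the paper exactly: the descent bound via \cref{lemma:passcore} and \cref{lemma:passsparsecore}, and the split of $\sum_e \wstr_e\,\bw^{(t),G_i}_e$ into edges present at $\repT_i$ (handled by item~\ref{item:widthstable} of \cref{def:hiddenStableFlowChasing} plus immutability of $\wstr$) and edges inserted afterwards (which have $\wstr_e=1$). The gap is in how you handle the inserted-edge term $N_i$. The paper closes the argument with a much simpler move that you missed: it bounds $N_i=\sum_{e\in U}\bw^{(t),G_i}_e\le\|\bw^{(t),G_i}\|_1$ and then applies the \emph{same} inductive statement at level $i$ twice --- once at stage $\repT_i$ (legitimate because $\{\repT_{i'}\}_{i'\le i}$ is a valid set of representative timestamps at stage $\repT_i$) and once at the current stage $t$ --- so the per-level recurrence is $\|\bw^{(t),G_{i+1}}\|_1\le O(\gamma_\ell)\bigl(2\hstr_i\|\bw^{(\repT_i),G_i}\|_1+\|\bw^{(t),G_i}\|_1\bigr)$, which telescopes directly to \cref{eq:hintedTreeChainWidthBound}. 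You noticed this bound only at level $0$ (your remark $N_0\le\|\bw^{(t),G_0}\|_1$) but did not realize that at higher levels the term $\|\bw^{(t),G_i}\|_1$ is itself covered by the induction hypothesis, so no separate accounting of where the inserted edges came from is needed.

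Instead you attempt to bound $N_i$ directly by tracing inserted edges to their origins, and you explicitly leave the spanner re-embedding case unresolved ("the genuine obstacle"). That piece is not just bookkeeping: the recourse guarantee (item~\ref{item:SCGLowRec} of \cref{lemma:hintedsparsecore}) counts the \emph{number} of updates, not their widths, and by \cref{def:passsparsecore} a single re-embedded spanner edge acquires width equal to twice the sum of the widths of \emph{all} core edges routed through it, which can be a constant fraction of $\|\bw^{(t),\cC(G_i,F)}\|_1$; combining recourse with the factor-$2$ stability therefore does not yield the $O(\gamma_\ell)^{O(d)}\|\bw^{(\repT_{i+1})}\|_1$ bound you want, and I do not see how your route closes without reintroducing the recursion on $\|\bw^{(t),G_i}\|_1$ --- i.e.\ the paper's argument. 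A secondary error: for edges genuinely inserted into $G_0$ after $\repT_i$, their time-$t$ widths are not controlled by $\|\bw^{(\repT_{i+1})}\|_1$ (insertions may occur after $\repT_{i+1}$); the correct comparison point is $\|\bw^{(t)}\|_1=\|\bw^{(\repT_d)}\|_1$.
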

\begin{proof}
We will prove the inequality by induction on $t$ as well as the level $i.$
In particular, we will prove the following during any stage $t$ and at any level $i$:
\begin{align}\label{eq:treeChainIH}
\norm{\bw^{(t), G_i}}_1 \le O(\gamma_{\ell})^i \norm{\bw^{(t)}}_1 + O(\gamma_{\ell})^i \sum_{a=0}^{i - 1} \left(\prod_{b=a}^{i-1} \hstr_b \right) \|\bw^{(\repT_a)}\|_1
\end{align}

We first analyze $\|\bw^{(t), G_i}\|_1$ at each level $i.$
At level $0$, $\bw^{(t), G_0}$ is exactly $\bw^{(t)}$, the upper bound at stage $t$ and we have $\|\bw^{(t), G_0}\|_1 = \|\bw^{(t)}\|_1.$
At level $i + 1$, suppose $G_{i+1} = \cS(G_i, F)$ where $F$ is the current spanning forest $F^{G_i}_{\bran_i}.$
\cref{lemma:passsparsecore} says
\begin{align*}
\norm{\bw^{(t), G_{i+1}}}_1 \le O(\gamma_{\ell}) \norm{\bw^{(t), \cC(G_i, F)}}_1
\end{align*}

We now focus on bounding $\|\bw^{(t), \cC(G_i, F)}\|.$
Let $U \subseteq G_i$ be the set of edges newly inserted after stage $\repT_i.$
Because $\repT_i$ is no earlier than the previous re-initialization time of $G_i$ and every newly inserted edge $e \in U$ has $\wstr^F_e \defeq 1$ (\cref{lemma:globalstretch}), we have
\begin{align*}
\norm{\bw^{(t), \cC(G_i, F)}}_1 
&\underbrace{\le}_{\text{\cref{lemma:passcore}}} \sum_e \wstr^F_{e} \bw^{(t), G_i}_e \\
&= \sum_{e \not\in U} \wstr^F_{e} \bw^{(t), G_i}_e + \sum_{e \in U} \bw^{(t), G_i}_e \\
&\underbrace{\le}_{\text{Item~\ref{item:widthstable}}} 2 \sum_{e \not\in U} \wstr^F_{e} \bw^{(\repT_i), G_i}_e + \norm{\bw^{(t), G_i}}_1 \\
&= 2 \hstr_i \norm{\bw^{(\repT_i), G_i}}_1 + \norm{\bw^{(t), G_i}}_1
\end{align*}
Observe that $\{\repT_{i'}\}_{i'}$ is valid set of representative time stamps for levels $i' \le i$ during stage $\repT_i.$
This allows us to apply the induction on both $\|\bw^{(\repT_i), G_i}\|_1$ and $\|\bw^{(t), G_i}\|_1$ and yields
\begin{align*}
\norm{\bw^{(t), \cC(G_i, F)}}_1 
&= 2 \hstr_i \norm{\bw^{(\repT_i), G_i}}_1 + \norm{\bw^{(t), G_i}}_1 \\
&\le 2 \hstr_i \left(O(\gamma_{\ell})^{i-1}\norm{\bw^{(\repT_i)}}_1 + O(\gamma_{\ell})^{i-1}\sum_{a=0}^{i - 1} \left(\prod_{b=a}^{i-1} \hstr_b \right) \|\bw^{(\repT_a)}\|_1\right) \\
&+ \left(O(\gamma_{\ell})^{i-1}\norm{\bw^{(t)}}_1 + O(\gamma_{\ell})^{i-1}\sum_{a=0}^{i - 1} \left(\prod_{b=a}^{i-1} \hstr_b \right) \|\bw^{(\repT_a)}\|_1\right) \\
&\le O(\gamma_{\ell})^{i-1}\norm{\bw^{(t)}}_1 + O(\gamma_{\ell})^{i-1}\hstr_i\norm{\bw^{(\repT_i)}}_1 + O(\gamma_{\ell})^{i-1} \sum_{a=0}^{i - 1} \left(\prod_{b=a}^{\boldsymbol{i}} \hstr_b \right) \|\bw^{(\repT_a)}\|_1 \\
&= O(\gamma_{\ell})^{i-1}\norm{\bw^{(t)}}_1 + O(\gamma_{\ell})^{i-1}\sum_{a=0}^{\boldsymbol{i}} \left(\prod_{b=a}^{\boldsymbol{i}} O(\gamma_{\ell}) \cdot \hstr_b \right) \|\bw^{(\repT_a)}\|_1
\end{align*}

Combining with \cref{lemma:passsparsecore}, we have
\begin{align*}
\norm{\bw^{(t), G_{i+1}}}_1 &\le O(\gamma_{\ell}) \norm{\bw^{(t), \cC(G_i, F)}}_1 \\
&\le O(\gamma_{\ell})^i \norm{\bw^{(t)}}_1 +  O(\gamma_{\ell})^i \sum_{a=0}^{\boldsymbol{i}} \left(\prod_{b=a}^{\boldsymbol{i}} \hstr_b \right) \|\bw^{(\repT_a)}\|_1
\end{align*}

The lemma follows by taking the sum of \eqref{eq:treeChainIH} from $i = 0$ to $d$ and the fact that $\repT_d$ is always $t$ during any stage $t$ (\cref{def:repT}).
\end{proof}

By appealing to  \cref{lemma:hintedTreeChain} and
\cref{lemma:hintedTreeChainWidthBound}, we conclude that
the output circulation has a ratio small enough to meet the condition of \cref{thm:MMCHiddenStableFlow} if the following holds at time $t$ for some set of representative time stamps $\{\repT_i\}$, and its current stretches $\{\hstr_i\}$:
\begin{align}
\label{eq:goodW}
\O(k) O(\gamma_{\ell})^d \sum_{i=0}^{d} \left(\prod_{i'=i}^{d-1} \hstr_{i'} \right) \|\bw^{(\repT_i)}\|_1 \le 100(d+1) \cdot \O(k) \O(\gamma_{\ell})^d \norm{\bw^{(t)}}_1
\end{align}

The contrapositive implication tells us that if the data structure
cannot output a good cycle, we know \eqref{eq:goodW} is violated.
Next, we show that there are only two possible reasons
that \eqref{eq:goodW}  could fail to hold.
In particular, it must either be the case that some current
stretch  $\{\hstr_i\}$ is too large or that the current witness
norm $\norm{\bw^{(t)}}_1$ has dropped much below the largest value
$\max_{i=0}^{d}\|\bw^{(\repT_i)}\|_1$ at the representative
times for different levels.

\begin{lemma}
  \label{lem:badCycleMeansBadWtOrStr}
  Set depth $d \defeq O(\log^{1/18} m),$
  then for some $\kappa = \exp\left(-O\left(\log^{17/18} m \log\log
      m\right)\right)$ the following holds. 
Consider the setting of \cref{lemma:hintedTreeChain} at some stage t, and consider the state of the data structure, right after any sequence of completed procedure calls.
  Given any set of representative time stamps
  $\{\repT_i\}_{i=0}^{d}$ and its corresponding set of current
  stretches $\{\hstr_i\},$
  if the cycle quality output by \textsc{FindCycle}() (see \cref{algo:hintedTreeChain}) is not good, i.e.,
  \[
    \frac{\abs{\l\bg^{(t)}, \bDelta\r}}{\norm{\bell^{(t)} \circ \bDelta}_1}
    <
    \kappa \alpha
  \]
  then at least one of the following two conditions hold:
  \begin{equation}
    \label{eq:dsBadWtSum}
   \sum_{i=0}^d \norm{\bw^{(\repT_i)}}_1 > 2(d+1) \norm{\bw^{(t)}}_1
 \end{equation}
 or
   \begin{equation}
    \label{eq:dsBadStr}
    \text{For some level $i$, we have } \hstr_{i,j}  > \tilde{\Omega}(1)
    \text{ on the current branch } j = \bran_i.
 \end{equation}
\end{lemma}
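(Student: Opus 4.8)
The plan is to prove the contrapositive: fixing the given set of representative time stamps $\{\repT_i\}_{i=0}^{d}$ and its corresponding current stretches $\{\hstr_i\}$, I assume that \emph{neither} \eqref{eq:dsBadWtSum} \emph{nor} \eqref{eq:dsBadStr} holds and show that \textsc{FindCycle}() then returns a circulation $\bDelta$ with $\frac{\abs{\l\bg^{(t)},\bDelta\r}}{\norm{\bell^{(t)}\circ\bDelta}_1}\ge\kappa\alpha$ for an explicit $\kappa$; finally I verify that this $\kappa$ equals $\exp(-O(\log^{17/18}m\log\log m))$ under the choice $d=O(\log^{1/18}m)$.

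The heart of the argument is a three-step chain of inequalities. First, by \cref{lemma:hintedTreeChain} the output $\bDelta$ satisfies $\frac{\abs{\l\bg^{(t)},\bDelta\r}}{\norm{\bell^{(t)}\circ\bDelta}_1}\ge\frac{1}{\O(k)}\cdot\frac{\abs{\l\bg^{(t)},\bc^{(t)}\r}}{\sum_{i=0}^{d}\norm{\bw^{(t),G_i}}_1}$. Second, by item~3 of \cref{def:hiddenStableFlowChasing} we have $\l\bg^{(t)},\bc^{(t)}\r\le-\alpha\norm{\bw^{(t)}}_1<0$, hence $\abs{\l\bg^{(t)},\bc^{(t)}\r}\ge\alpha\norm{\bw^{(t)}}_1$. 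Third, by \cref{lemma:hintedTreeChainWidthBound}, $\sum_{i=0}^{d}\norm{\bw^{(t),G_i}}_1\le O(\gamma_\ell)^d\sum_{i=0}^{d}\bigl(\prod_{i'=i}^{d}\hstr_{i'}\bigr)\norm{\bw^{(\repT_i)}}_1$. Now the negation of \eqref{eq:dsBadStr} says $\hstr_i=\hstr_{i,\bran_i}\le B$ for all levels $i$, where $B=\log^{O(1)}m$ is the polylogarithmic bound complementary to the threshold $\tilde\Omega(1)$ in \eqref{eq:dsBadStr} — and this also holds automatically at $i=d$ since $\hstr_d=\O(1)$ — so $\prod_{i'=i}^{d}\hstr_{i'}\le B^{d+1}$; meanwhile the negation of \eqref{eq:dsBadWtSum} gives $\sum_{i=0}^{d}\norm{\bw^{(\repT_i)}}_1\le 2(d+1)\norm{\bw^{(t)}}_1$. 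Substituting both into the third inequality and then into the first two yields
\begin{align*}
\frac{\abs{\l\bg^{(t)},\bDelta\r}}{\norm{\bell^{(t)}\circ\bDelta}_1}\ \ge\ \frac{1}{\O(k)}\cdot\frac{\alpha\norm{\bw^{(t)}}_1}{2(d+1)\,B^{d+1}\,O(\gamma_\ell)^d\,\norm{\bw^{(t)}}_1}\ =\ \frac{\alpha}{2(d+1)\,B^{d+1}\,O(\gamma_\ell)^d\,\O(k)}.
\end{align*}
Taking $\kappa$ to be this bound $\bigl(2(d+1)\,B^{d+1}\,O(\gamma_\ell)^d\,\O(k)\bigr)^{-1}$ (or any smaller value) makes the right-hand side at least $\kappa\alpha$, contradicting the assumed badness of the output; this is exactly the contrapositive of the lemma.

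It remains to count parameters. With $d=O(\log^{1/18}m)$ and $k=m^{1/d}$ we get $\O(k)=\exp(O(\log^{17/18}m))$; by \cref{lemma:hintedsparsecore}, $\gamma_\ell=\exp(O(\log^{8/9}m\log\log m))$, so $O(\gamma_\ell)^d=\exp\bigl(O(\log^{8/9}m\log\log m)\cdot O(\log^{1/18}m)\bigr)=\exp(O(\log^{17/18}m\log\log m))$ since $\tfrac{8}{9}+\tfrac{1}{18}=\tfrac{17}{18}$; the factor $B^{d+1}$ with $B=\log^{O(1)}m$ is $\exp(O(\log^{1/18}m\log\log m))$, which is dominated by the preceding term; and $2(d+1)=O(\log^{1/18}m)$. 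Multiplying, $\kappa^{-1}=\exp(O(\log^{17/18}m\log\log m))$, i.e.\ $\kappa=\exp(-O(\log^{17/18}m\log\log m))$, as claimed.

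There is no real conceptual obstacle; the statement is a quantitative bookkeeping step on top of \cref{lemma:hintedTreeChain} and \cref{lemma:hintedTreeChainWidthBound}. The points needing care are: getting the sign of the hidden-stability inequality right so that $\abs{\l\bg^{(t)},\bc^{(t)}\r}\ge\alpha\norm{\bw^{(t)}}_1$; absorbing the always-small last-level stretch $\hstr_d=\O(1)$ and the polylogarithmic threshold $B$ into the $O(\gamma_\ell)^d$ term without changing the final exponent; and choosing the depth $d=\Theta(\log^{1/18}m)$ so that the two dominant blow-up factors $\O(k)=m^{1/d}\cdot\log^{O(1)}m$ and $O(\gamma_\ell)^d$ are balanced, which is precisely what pins down the exponent $17/18$ in $\kappa$.
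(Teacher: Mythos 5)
Your proposal is correct and follows essentially the same route as the paper: contrapositive, the chain of inequalities from \cref{lemma:hintedTreeChain} and \cref{lemma:hintedTreeChainWidthBound}, the negations of \eqref{eq:dsBadWtSum} and \eqref{eq:dsBadStr} to absorb the stretch product and the representative-time weight sum, the hidden-stability bound $\abs{\l\bg^{(t)},\bc^{(t)}\r}\ge\alpha\norm{\bw^{(t)}}_1$, and the same parameter balancing $d=\Theta(\log^{1/18}m)$ giving $\kappa^{-1}=\exp(O(\log^{17/18}m\log\log m))$. The only cosmetic difference is that you track the polylogarithmic stretch bound $B^{d+1}$ explicitly rather than folding it into the $\O(\gamma_\ell)^d$ factor, which changes nothing in the final exponent.
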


\begin{proof}
  We prove the lemma by proving the contrapositive form of the implication.
  Thus, we assume both Condition~\eqref{eq:dsBadWtSum}
  and~\eqref{eq:dsBadStr} are false.

Then our cycle $\bDelta$ satisfies
\begin{align*}
\frac{\abs{\l\bg^{(t)}, \bDelta\r}}{\norm{\mL^{(t)} \circ \bDelta}_1}
&\underbrace{\ge}_{\text{\cref{lemma:hintedTreeChain}}} \frac{\abs{\l\bg^{(t)}, \bc^{(t)}\r}}{\O(k)\sum_{i=0}^d \norm{\bw^{(t), G_i}}_1} \\
&\underbrace{\ge}_{\text{\cref{lemma:hintedTreeChainWidthBound}}} \frac{\abs{\l\bg^{(t)}, \bc^{(t)}\r}}{\O(k) O(\gamma_{\ell})^d \sum_{i=0}^{d} \left(\prod_{i'=i}^{d} \hstr_{i'} \right) \|\bw^{(\repT_i)}\|_1} \\
  &\underbrace{\ge}_{\substack{\text{ Conditions } \\ \text{\eqref{eq:dsBadWtSum}
  and~\eqref{eq:dsBadStr} } \\ \text{are false.}}}
    \frac{1}{100(d+1) \cdot \O(k) \O(\gamma_{\ell})^d}
  \frac{\abs{\l\bg^{(t)}, \bc^{(t)}\r}}{\|\bw^{(t)}\|_1}
  \\
  &\underbrace{\ge}_{\substack{\text{stable $\alpha$-flow} \\ \text{by \cref{lemma:hintedTreeChain}}\\ \text{assumptions.}}}  \kappa \alpha
\end{align*}
where in the last inequality we choose $d = O(\log^{1/18} m)$ and
$\kappa$ such that
\begin{align*}
\frac{1}{\kappa} &= 100(d+1) \cdot \O(k) \O(\gamma_{\ell})^d \le d \cdot m^{1/d + o(1)} \cdot \exp\left(O\left(d \log^{8/9} m \log\log m\right)\right) \\
&= \exp\left(O\left(\log^{17/18} m \log\log m\right)\right)
\end{align*}
\end{proof}

\subsection{Rebuilding and Shifting}
\label{sec:datastructure-theorem}

\cref{lem:badCycleMeansBadWtOrStr} allowed us to conclude that if we
cannot find a good cycle, it must be either because our data structure
at some level $i$ has bad stretch $\hstr_{i} $ for the current
branch (Condition~\eqref{eq:dsBadStr} in the lemma), or because the current witness norm $\|\bw^{(t)}\|_1$ has dropped
much below the norm at the earlier representative times
(Condition~\eqref{eq:dsBadWtSum} in the lemma).
Next, we want to observe that on every level $i$, there exists a branch
$j$ that leads to good stretch.
This will help us ensure that if we use $\Shift$ to try out all the branches in an
appropriate order, we must eventually have good stretch on all levels.
\begin{corollary}\label{cor:existsBranchGoodStr}
Under the setting of \cref{lemma:hintedTreeChain}, during any stage $t$, consider any set of
representative time stamps $\{\repT_i\}_{i=0}^{d}$
(\cref{def:repT}) and any level $i$.
Then, for some branch $j$, $\hstr_{i,j} \leq \O(1)$.
\end{corollary}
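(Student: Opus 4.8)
The plan is to read the statement off directly from the multiplicative-weights guarantee \eqref{eq:strMWU} of \cref{lemma:strMWU}, by averaging the quantities $\hstr_{i,j}$ over the $k$ branches $j$. First I would recall that the collection $\cF^{G_i} = \{(T_j, F_j, \wstr^j_e)\}_{j=0}^{k-1}$ currently held at level $i$ was produced by the MWU procedure of \cref{lemma:strMWU} at the most recent re-initialization of $G_i$. Writing $\blambda = \vec{1}/k$, this gives $\sum_{j=0}^{k-1}\tfrac1k\wstr^j_e \le O(\log^7 n)$ for every edge $e$ that belonged to $G_i$ at that re-initialization. Since \cref{lemma:globalstretch} freezes each overestimate $\wstr^j_e$ the moment it is assigned, this inequality remains valid for those edges at the representative time $\repT_i$, which by \cref{def:repT} is no earlier than that re-initialization. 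For an edge $e$ inserted into $G_i$ after the re-initialization but no later than $\repT_i$, \cref{lemma:globalstretch} guarantees $\wstr^j_e = 1$ for every $j$, so $\sum_j \tfrac1k \wstr^j_e = 1 \le O(\log^7 n)$ there as well. Hence $\sum_{j=0}^{k-1}\tfrac1k\wstr^j_e \le O(\log^7 n)$ holds simultaneously for all $e \in E(G_i)$ at stage $\repT_i$.

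Next I would average the defining identity $\sum_{e\in E(G_i)}\wstr^j_e\,\bw^{(\repT_i),G_i}_e = \hstr_{i,j}\,\|\bw^{(\repT_i),G_i}\|_1$ of \cref{def:repT} over $j\in\{0,\dots,k-1\}$ and swap the order of the two finite sums; using that $\bw^{(\repT_i),G_i}$ is entrywise nonnegative (immediate from \cref{def:passcore,def:passsparsecore}, as $\wstr\ge1$ and widths are nonnegative), this yields
\[
\frac1k\sum_{j=0}^{k-1}\hstr_{i,j}\;=\;\frac{1}{\|\bw^{(\repT_i),G_i}\|_1}\sum_{e\in E(G_i)}\bw^{(\repT_i),G_i}_e\,\Big(\frac1k\sum_{j=0}^{k-1}\wstr^j_e\Big)\;\le\;O(\log^7 n)\;=\;\O(1),
\]
with the degenerate case $\|\bw^{(\repT_i),G_i}\|_1 = 0$ being vacuous. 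Therefore $\min_j \hstr_{i,j}\le O(\log^7 n)=\O(1)$, which is the claimed statement; in fact at least $k/2$ of the branches satisfy $\hstr_{i,j}\le 2\,O(\log^7 n)$ by Markov's inequality, which is the slightly stronger form used in the shifting analysis. The case $i=d$ is handled identically, applied to the $\O(|E_{G_d}|)$ low-stretch trees maintained at the last level (and $\hstr_d=\O(1)$ in any case, since $G_d$ is rebuilt from scratch after every update, as noted after \cref{def:repT}).

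This is essentially a one-line corollary of \cref{lemma:strMWU} once the bookkeeping is in place; the only point that requires any care — and the step I expect to be the subtle one — is the time mismatch: the MWU bound \eqref{eq:strMWU} is a statement about $G_i$ at its last re-initialization, whereas $\hstr_{i,j}$ weighs stretch by the width $\bw^{(\repT_i),G_i}$ at the possibly later representative time $\repT_i$, at which $G_i$ may contain freshly inserted edges. This gap is closed precisely by the immutability of $\wstr$ and the $\wstr^j_e=1$ convention for inserted edges, both guaranteed by \cref{lemma:globalstretch}.
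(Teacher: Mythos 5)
Your proposal is correct and follows essentially the same route as the paper: average the MWU guarantee of \cref{lemma:strMWU} over the $k$ branches against the weights $\bw^{(\repT_i),G_i}$ and conclude that some branch $j^*$ has $\hstr_{i,j^*} = O(\log^7 n) = \O(1)$. Your extra care with edges inserted after the last re-initialization (using $\wstr^j_e = 1$ and the immutability of $\wstr$) is a detail the paper's proof leaves implicit, but it does not change the argument.
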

\begin{proof}
Let $\{(F_j, \wstr^{F_j})\}$ be the collection of low stretch forests from \cref{lemma:strMWU}.
We have
\begin{align*}
    \frac{1}{k}\sum_{j=0}^{k-1} \sum_{e \in G_i} \wstr^{F_j}_e \bw^{(\repT_i), G_i}_e = O(\log^7 n) \sum_{e \in G_i} \bw^{(\repT_i), G_i}_e
\end{align*}
Thus, one of the $j^*$ satisfies that
\begin{align*}
    \sum_{e \in G_i} \wstr^{F_{j^*}}_e \bw^{(\repT_i), G_i}_e = O(\log^7 n) \sum_{e \in G_i} \bw^{(\repT_i), G_i}_e
\end{align*}
Thus $\hstr_{i,j^*} = O(\log^7 n).$
\end{proof}

Informally, we now note that if we have done shifts that ensure
all branches have good current stretch, and we \emph{still} cannot
find a good cycle, it must be because the norm $\|\bw^{(t)}\|_1$ has
dropped much below the norm at the earlier representative times
(Condition~\eqref{eq:dsBadWtSum} in the lemma).
Na\"{i}vely, we could fix this by re-initializing everything, which
would update the representative times to the current time, and then
trying every possible shift at each level.
But, this is slow, and in fact much too slow, once we realize that the norm
$\|\bw^{(t)}\|_1$ could drop again as soon as we find another good
cycle, and we might have to re-initialize everything again.

To deal with this, we note that there is an important constraint on
these norms: $\log(\|\bw^{(t)}\|_1) \in (-\log^{O(1)}m,\log^{O(1)}m)$, so the weight can only halve $2\log^{O(1)}m$
times.
If the data structure only had a single level, i.e., $d = 1$, this
would immediately lead to a working strategy: try all branches, and
re-initialize if they all failed.
However, the situation gets more complicated, because deep levels of
the data structure can only survive through few updates, and so we are
forced to re-initialize them frequently, which may happen at times when the
norm $\|\bw^{(t)}\|_1$ is large.

To deal with this, we develop an algorithmic strategy for choosing
when to shift and when to re-initialize levels when trying to find good cycles to route along. 
This yields our overall algorithm for firstly, updating the data structure in response to changes coming from the IPM,
secondly, finding a good cycle to route or shifting and re-initializing until one is found, and finally routing flow along a cycle.
This is encapsulated in the
$\textsc{Update}(\ldots)$ procedure (\cref{alg:rebuildScheduleForDS}).

At a high level, the strategy plays a game, called the \emph{shift-and-rebuild game} (\cref{def:game}), against the IPM.
At each stage, the IPM picks a witness $\bw^{(t)}$ unknown to the data structure.
The data structure outputs a circulation of a ratio small enough.
If it fails to find one, it can pick a level and shift it.
The stage does not finish until the data structure finds circulation of a small ratio.

\begin{algorithm}[!ht]
  \SetKwProg{Proc}{procedure}{}{}
  \Proc{\textsc{Update}$(U^{(t)}, \bg^{(t)}, \bell^{(t)}, \eta)$}{
    \textsc{Update}($U^{(t)}, \bg^{(t)}, \bell^{(t)}$) \tcp{call {\DynamicBranchingChain} update. This could update the $\bran$ variables.} 
    \label{lne:dsUpdate}
    \tcp{ GAME: Let $i^*$ be the smallest level
      index where the number of updates of  $G_i$ exceeds $m
      (\gamma_{\ell} / k)^{i+1} / \log^2 n$
      and hence during  \textsc{Update}(), we call $\Rebuild(i^*)$.
      If $i^*$ exists, the adversary
      declares a rebuild at level $i^*$.
    }
    $\bDelta  \gets \textsc{FindCycle}()$ \label{lne:queryDS}\;
    \If{ $  \frac{\abs{\l\bg^{(t)}, \bDelta\r}}{\norm{\bell^{(t)}
            \circ \bDelta}_1} \le \kappa \alpha$}{
      \tcp{ GAME: the round is not completing.}
      Let $i$ be the largest level index with $\passes_i <
      2\Wrange$, where $\Wrange = \log^{O(1)}m$.
            \tcp{ GAME: The player declares a \textbf{shift} at level $i$}
      \Shift(i) \tcp{$\bran$ variables are updated} 
      \If{$\bran_i = 0$}{
        $\passes_{i} \gets \passes_{i} +1 $. \tcp*{Passed through all branches.}
      }
      \lFor{level $ i' \in \{i+1, \ldots, d\}$}{
        $\passes_{i'} \gets 0 $.
      }
      \textbf{go to} \cref{lne:queryDS}.
    }
    Compute $\bDelta^\top \bg^{(t)}$ using link-cut trees\;
    $\bDelta \gets \frac{\eta}{\bDelta^\top \bg^{(t)}} \bDelta$\;
    $\bf \gets \bf - \bDelta$    using link-cut trees\;
    Also track $\abs{\bDelta}$ updates using link-cut trees to support \textsc{Detect}\;
    \tcp{ GAME: The round is completing. }
  }
  \caption{\textsc{Update}: Data structure shift-and-rebuild schedule.
    \\
     \texttt{// GAME comments are intended to
    help the reader verify the correspondence between 
    \cref{alg:rebuildScheduleForDS} and \cref{alg:rebuildStrategy}
    (for the proof of \cref{lem:AlgoIsGameInstance}).
      }}
  \label{alg:rebuildScheduleForDS}
\end{algorithm}

The efficiency of this strategy is established by
\cref{lem:AlgoIsGameInstance}, which bounds the number of shift calls
as function of the number of $\Rebuild()$ calls.
The number of $\Rebuild()$ calls can be easily bounded by the fact that
each such call only occur in response to the the data structure receiving many
updates, which implies that $\textsc{Update}()$ returned a
good cycle many times.
Roughly speaking, we show that the strategy shifts any level at most $\O(k)$-times until rebuilds that level.

\begin{restatable}{lemma}{AlgoIsGameInstance}
\label{lem:AlgoIsGameInstance}
 In the execution of $T$ calls to $\textsc{Update}()$ (\cref{alg:rebuildScheduleForDS}), let $r_i$
   equal the total number of calls to $\Rebuild(i)$ and let $s_i$ equal the number of calls to
   $\Shift(i)$. Then
   \[ s_i \leq \sum_{i' \leq  i} r_{i'} (k\log^{O(1)}(m))^{i+1-i'}. \]
\end{restatable}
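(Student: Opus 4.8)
The plan is to analyze the shift-and-rebuild schedule in \cref{alg:rebuildScheduleForDS} by induction on the level $i$, tracking a potential-style argument on the $\passes_i$ counters. The key observation is that a call to $\Shift(i)$ happens only when we cannot find a good cycle and $i$ is the \emph{largest} level with $\passes_i < 2\Wrange$; each such shift increments $\bran_i$, and whenever $\bran_i$ wraps back to $0$ we increment $\passes_i$. Therefore, between two consecutive "resets" of $\passes_i$ to zero, the number of $\Shift(i)$ calls is at most $k$ (a full cycle through the $k$ branches) times $2\Wrange$ (the number of times $\passes_i$ can be incremented before it reaches the threshold), i.e., $s_i \le 2k\Wrange$ between resets of $\passes_i$. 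So the main task is to bound how many times $\passes_i$ gets reset to zero over the whole execution.

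The counter $\passes_i$ is reset to zero in exactly two situations: (1) when some strictly lower level $i' < i$ is shifted (the loop \verb|for level i' in {i+1,...,d}: passes_{i'} <- 0| fires for all levels above the one being shifted — so a $\Shift(i')$ causes resets at all $i > i'$); and (2) when $\Rebuild(i')$ is called for some $i' \le i$ through \cref{line:periodRebuild} (in \cref{algo:hintedTreeChain}, $\Rebuild(i')$ re-initializes levels $i', \ldots, d$, and in the $\Initialize$ of \cref{algo:hintedTreeChain} all $\passes$ start at $0$; one should check the bookkeeping treats a rebuild at level $i'$ as resetting $\passes_i$ for all $i \ge i'$, just as the GAME comments indicate). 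Thus the number of resets of $\passes_i$ is bounded by $\sum_{i' < i} s_{i'} + \sum_{i' \le i} r_{i'}$. Combining with the previous paragraph,
\begin{align*}
s_i \;\le\; 2k\Wrange \left( \sum_{i' < i} s_{i'} + \sum_{i' \le i} r_{i'} + 1 \right).
\end{align*}
Now one unrolls this recursion. Writing $\Wrange = \log^{O(1)} m$, so $2k\Wrange = k\log^{O(1)}m$, induction on $i$ gives $s_i \le \sum_{i' \le i} r_{i'} (k\log^{O(1)}m)^{i+1-i'}$: the base case $i=0$ has no lower levels, so $s_0 \le 2k\Wrange(r_0 + 1) = r_0 \cdot k\log^{O(1)}m$ (absorbing the $+1$ into the constant since $r_0 \ge 1$ whenever any shift occurs — or one tracks a harmless additive term), and the inductive step substitutes the bound for each $s_{i'}$, $i' < i$, into the right-hand side; the geometric growth of the factors $(k\log^{O(1)}m)^{i+1-i'}$ dominates, so the sum telescopes into the claimed form with an updated $O(1)$ in the exponent of the logarithm.

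The main obstacle I expect is \emph{getting the reset bookkeeping exactly right} — specifically, making precise which events reset which $\passes_i$ counters, and confirming that the "adversary declares a rebuild at level $i^*$" / "player declares a shift at level $i$" correspondence with the abstract shift-and-rebuild game (\cref{def:game}, \cref{alg:rebuildStrategy}) is faithful, so that the clean recursion above is actually valid. In particular one must verify that a shift at a low level $i'$ does not \emph{also} secretly increase the shift budget at levels below $i'$ in a way that breaks the induction, and that the $\Wrange$ threshold is never exceeded (so the "largest level with $\passes_i < 2\Wrange$" is always well-defined, which is where the quasipolynomial bound $\log\|\bw^{(t)}\|_1 \in (-\Wrange,\Wrange)$ from \cref{def:hiddenStableFlowChasing} and \cref{lem:badCycleMeansBadWtOrStr} enters: after $2\Wrange$ full passes at a level, the weight-drop condition \eqref{eq:dsBadWtSum} cannot keep firing, so some level must still have budget). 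Once that correspondence is pinned down, the recursion unrolling is routine.
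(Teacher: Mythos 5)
Your outer counting skeleton is exactly the paper's: a level $i$ is shifted only while $\passes_i < 2\Wrange$, each full wrap of $\bran_i$ increments $\passes_i$, so at most $O(k\Wrange)$ shifts of level $i$ occur between resets of $\passes_i$; resets are caused only by shifts at levels $i'<i$ and rebuilds at levels $i'\le i$; this gives $s_i \le O(k\Wrange)\bigl(\sum_{i'<i}s_{i'}+\sum_{i'\le i}r_{i'}+1\bigr)$, which unrolls to the claimed bound (the $+1$/initialization bookkeeping is glossed in the paper as well and is not the issue). However, there is a genuine gap: the claim you flag as "the main obstacle" and then assert in one line --- that "after $2\Wrange$ full passes at a level, the weight-drop condition \eqref{eq:dsBadWtSum} cannot keep firing, so some level must still have budget" --- is precisely where essentially all of the work in the paper's \cref{sec:rebuilding} lies, and your proposal contains no argument for it. A priori, \textsc{FindCycle} can fail either because of \eqref{eq:dsBadWtSum} or because of \eqref{eq:dsBadStr}, and the algorithm cannot tell which; moreover \eqref{eq:dsBadWtSum} compares $\norm{\bw^{(t)}}_1$ against the representative weights of \emph{all} levels, so its firing says nothing about level $i$'s representative weight unless one knows that (a) at some step of the pass all deeper levels simultaneously have stretch $\O(1)$ (this is \cref{lem:goodStrBetween}, which itself needs that deeper budgets were exhausted in a structured order), and (b) level $i$'s representative weight is the prefix maximum at that step. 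The paper establishes exactly this via the update rule for representative times (\cref{def:repTRule}), the prefix-max and prefix-good-stretch sets (\cref{def:preMaxS,def:preGoodStrS}), and the inductive invariant \eqref{eq:IH} $\log_2 W^{(\grepT_i)} < \Wrange - \passes_i$, whose Case C analysis is the delicate part; only then does each increment of $\passes_i$ force the representative weight to halve, so that $\passes_0$ can never reach $2\Wrange$.

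One can argue that the literal inequality in the lemma follows from your control-flow counting alone, conditional on the execution of $T$ calls having completed: whenever a shift actually happened, a level with budget existed, and the recursion goes through. But read that way the statement is vacuous in the role it plays: termination of $\textsc{Update}()$ (that the line "largest level with $\passes_i < 2\Wrange$" never fails and a good cycle is eventually found) is proved nowhere else in the paper, and the paper's proof of this lemma, via the shift-and-rebuild game and \cref{lem:goodRebuildStrategy}, is designed to deliver exactly that budget invariant. So to match the paper you would need to supply the weight-halving potential argument outlined above, not just the counting recursion.
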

\cref{sec:rebuilding} is dedicated to proving \Cref{lem:AlgoIsGameInstance}. Here we conclude this section by showing how to prove \Cref{thm:MMCHiddenStableFlow} using \Cref{lem:AlgoIsGameInstance}.
\MMCHSF*
\begin{proof}[Proof of \Cref{thm:MMCHiddenStableFlow}]

  We utilize the data structure provided by
  \cref{lemma:hintedTreeChain}, and combine it with link-cut
  trees~\cite{ST83}. Since \cref{lemma:hintedTreeChain} maintains a tree chain $\cG$ and outputs an amortized $m^{o(1)}$ edge updates to the underlying spanning tree $T^{\cG}$, we can use Link-Cut trees to
  represent $T^{\cG}.$ We will use these to maintain both the flow (as
  a sum of scaled $\bDelta^{(t)}$) and the absolute value of the updates
  $\abs{\bDelta^{(t)}}$ (for implementing \textsc{Detect}). This
  allows us to support updating gradients/lengths, routing flow along
  a tree-path, computing the inner product of a tree path with $\bg^{(t)}$
  or its length in $O(\log m)$ amortized time. In order to support
  \textsc{Detect}, we rewrite the condition as
  \[
  \sum_{t' \in [\last^{(t)}_e+1, t]} |\bDelta_e^{(t')}|
  -\frac{\eps}{\bell_e} \ge 0\,.
  \]
  We initialize each edge at
  $-\eps / \bell_e$ and repeatedly add $ |\bDelta_e^{(t')}|$ whenever
  we route some flow. Thus, detecting edges is now equivalent to
  finding edges with non-negative values, which can be done in
  $O(\log m)$ amortized time per edge returned by
  \textsc{Detect}. Since all link-cut tree operations taken
  $O(\log m)$ amortized time, and \cref{lemma:hintedTreeChain}
  guarantees that the number of edge insertions and deletions is bounded
  by the running time, our total running time remains unchanged up to
  polylog factors.

  The key piece now remaining is to prove that we can efficiently find
  a good circulation $\bDelta^{(t)}.$
  By \cref{lemma:hintedTreeChain}, the total time spent is
  \begin{align}
m^{1/d}\O(\gamma_{\ell}\gamma_r)^{O(d)}\left(m + \sum_{i=0}^{d}
    (s_i + r_i) \cdot m^{1 - i/d}\right)
    \label{eq:restatingTimeBoundSR}
  \end{align}
  where $s_i$ and $r_i$, the number of calls to
  $\textsc{Shift}$ and $\textsc{Rebuild}$ at each level $i$.

  Next, recall that $Q = \sum_t |U^{(t)}|$ denotes the total update count.
  We can bound the rebuild counts $r_i$ by, first, observing that when a rebuild
  at level $i$ occurs, the level must have received $m (\gamma_{\ell}
  / k)^{i+1} / \log^2 n$ updates, by the pseudo-code in $\Update()$ in
  \cref{algo:hintedTreeChain}.
  Second, 
  the number of updates received at level $i$ from is
  bounded by $\gamma_r^iQ$ by the recourse bounds from
  \cref{lemma:passsparsecore}.
  Thus,
  \[
    r_i \leq
\frac{\gamma_r^iQ }
{m (\gamma_{\ell}
  / k)^{i+1} / \log^2 n}
=
\left(
\frac{k\gamma_r }
{\gamma_l}
\right)^i
\frac{kQ\log^2m}{\gamma_{\ell} m}
\leq 
k^i
\frac{k \gamma_r^d Q}{ m}
\]
  By \cref{lem:AlgoIsGameInstance}, we know that $s_i \leq \sum_{j
    \leq  i} r_{j} (k\log^{O(1)}(m))^{i+1-j}$.
Combining these observations gives us an overall bound on the shift
counts of
\[
  s_i \leq
  \sum_{j
    \leq  i}
 (k\log^{O(1)}(m))^{i+1-j}
  k^{j}
  \frac{k \gamma_r^d Q}{ m}
  \leq
  k^i
  \frac{d k^2\log^{O(d)}(m) \gamma_r^d Q}{ m}
  =
  k^i
   \frac{k^2\gamma_r^{O(d)} Q}{ m}
  \]
Recalling that $k = m^{1/d}$, and plugging our bounds into
\cref{eq:restatingTimeBoundSR},
we bound the overall running time by
\begin{align*}
  m^{3/d}\O(\gamma_{\ell}\gamma_r)^{O(d)}\left(m + Q\right)
  = \left(m + Q\right) m^{o(1)}
  .
\end{align*}
\end{proof}
\section{The Shift-and-Rebuild Game}
\label{sec:rebuilding}

In this section we translate the execution of
repeated calls to $\textsc{Update}()$ (\cref{alg:rebuildScheduleForDS})
into an instance of a game played
between a player and an adversary and analyze strategies for the game to prove the
efficiency of the algorithm.
We first establish a more general game, called the \emph{shift-and-rebuild game} (\cref{def:game}), that abstracts away
most of the data structure behavior.
We analyze this game, and provide an efficient strategy for playing this game. 
This is our central technical
result of the section, \cref{lem:goodRebuildStrategy}.
We also prove that $\textsc{Update}()$ indeed
corresponds to particular instances of
the game, and that the algorithm corresponds to playing our efficient strategy.
Therefore, we show the efficiency of the algorithm in our main result of the section, \cref{lem:AlgoIsGameInstance}.

We start by briefly recalling elements of
\cref{alg:rebuildScheduleForDS}, before defining our game.
However, the game and its analysis in the proof of
\cref{lem:goodRebuildStrategy} can be read in isolation, without first
checking the correspondence to our data structure problem.

\paragraph{Shift-and-Rebuild game motivation.}
Recall that in each call to $\textsc{Update}()$, we first
update our data structure, rebuilding any sparse core graph levels as
necessary, and then find a cycle to route flow along.
If the cycle has a small enough ratio, we use it  -- in our game, this will
correspond to a round of the game completing.
If the cycle is not good enough, $\textsc{Update}()$
decides on a data structure level $i$ to call $\Shift(i)$ on and repeats this until a good cycle is found.
Every completed call to $\textsc{Update}()$ will correspond to a round of our game.

We set up our game so that it captures the behavior of $\textsc{Update}()$.
This game is played between a player whose actions correspond to
decisions made by $\textsc{Update}()$ about which levels
to call $\Shift()$ on and an adversary where 
firstly, its actions capture the
the times when we are
forced to rebuild data structure levels in
\cref{lne:dsUpdate} of $\textsc{Update}()$ (\cref{alg:rebuildScheduleForDS}), and secondly, 
its actions reflect the behavior of the witness
norm $\norm{\bw^{(t)}}_1$ and the data structure stretch values $\hstr_{i,j}$.
In the proof of \cref{lem:AlgoIsGameInstance}, we set up the
precise correspondence between our game and $\textsc{Update}()$.

\paragraph{Shift-and-Rebuild game parameters and definitions.}
  The Shift-and-Rebuild game has several parameters: integer depth $d > 0$, a
  branching factor $k$, 
  and a weight range $\Wrange \geq 1$.
  We say the game has $d+1$ \emph{levels} and each level has $k$
  \emph{branches}.
  In the game, these levels and branches are simply
  abstract indices, but ultimately, they will correspond to the data
  structure levels and branches in \cref{thm:MMCHiddenStableFlow}.

    The game is played between a player and an adversary and proceeds
    in rounds $t=1,2,\ldots,T$.
    At certain points in the game, the player or
    adversary may update certain variables.
    Furthermore, at special points in the game,
    the player or the
    adversary is required to perform a ``step,'' after which the value of some
    game variables are updated.
    These ``steps'' play an important role in our later analysis of
    strategies for the game and hence to index the steps,
    we keep a step counter 
    $s \in \{1,2,\ldots,S\}$.
    
  As part of the game, we define a number of variables. All of these
  quantities are updated at different points of the game, stated in
  the formal description in \cref{def:game}.
     For every level index $i \in [[d]]$,
 \begin{itemize}
     \item We define a
    current branch  $\bran^{(s)}_i \in [[k-1]]$.
    The variable gets updated at various points in the game.
    Both the player and
    adversary know its value.
    \item For every round $t \in [T]$ we define a ``weight'' $W^{(t)}$,
    satisfying $\log W^{(t)} \in (-\Wrange, \Wrange).$ 
  The weight values are \emph{hidden} from the player.   
  \item   We define a ``representative round'' index $\grepT_i \in [T]$.
    At certain points in the game, a new value for the representative
    round is set by the rules of the game, and the outcome is \emph{hidden}
    from the player.
  \item For every branch $j \in [[k-1]]$ of the level, 
    we define a ``stretch'' $\gstr^{(s)}_{i,j} \geq 0$.
    The adversary chooses these values anew at various points
    and the outcome of this choice is \emph{hidden} from the player.
  \end{itemize}

Note that we use different symbols to clearly distinguish the data
structure and game variables.
The game variables $\grepT_i$ and $\gstr^{(s)}_{i,j}$ will correspond
to data structure variables $\repT_i$ and $\str^{(s)}_{i,j}$, and we
formally establish the correspondence as 
part of the proof of \cref{lem:AlgoIsGameInstance}.
Both the data structures and the game uses  counters $\bran^{(s)}_i \in [[k-1]]$
and   $\passes_i \in [[W]]$ for each level $i$.
Here, the correspondence is trivial, and so, overloading notation, we
avoid defining new names for the game variables.

It is crucial to our game strategy that the adversary's choice of
$\gstr^{(s)}_{i,j}$ values is constrained -- we encapsulate this in
the following notion of \emph{valid} stretch values.

\begin{definition}[\emph{Valid} stretch values]
  \label{def:gameValidStretch}
  For each level $i \in [[d]]$, we say the stretch
  values for this level are \emph{valid} if there exists a branch
  $j$ satisfying $\gstr^{(s)}_{i,j} \leq \O(1)$.
\end{definition}
The \emph{value} of the completed game is given by two tuples
$(s_0,\ldots,s_{d}) \in \mathbb{N}^d$ and   $(r_0,\ldots,r_{d}) \in
\mathbb{N}^d$, where $s_i$ is the number of times the player took
the step \gameActParam{shift}{$i$}, and $r_i$ is the number of times the player took
the step \gameActParam{rebuild}{$i$}

\paragraph{Game outline.}
Having established the basic parameters and variable definitions, we
briefly outline the game, before giving the formal definition in \cref{def:game}.
At the start of each round $t$, the adversary decides weight
  $W^{(t)} \geq 0$ 
  (in our data structure analysis, this will correspond to the new
  value of the witness norm $\norm{\bw^{(t)}}_1$ after update).
  After this weight choice, the adversary may choose to take the step \gameActParam{rebuild}{$i$} which updates some of the game variables (this will correspond to $\Rebuild$ calls in our data structure).
  Then, the adversary decides if the round is completing or continuing
 (a completing round corresponds to the data structure finding a good
 enough cycle, while a continuing round corresponds to the cycle not
 being good enough).
 The game sets rules for when the adversary may decide to continue the
 round.
  Next, the player chooses a step, choosing either to 
  take a \gameAct{do-nothing} or to take
the step \gameActParam{shift}{$i$} for some level $i \in
  [[d]]$ (this will correspond to $\Shift$ calls in our data structure).
  The shift step updates some of the game variables.
  After this, the round either completes, or repeats from the point
  where the adversary decides whether to complete the round or not.

\begin{definition}[Shift-and-Rebuild Game]
  \label{def:game}
  The \emph{Shift-and-Rebuild Game} is defined as follows.
  At the beginning of the game, at round $t = 1$ and step $s =
  0$,
  and for all levels $i \in [[d]]$, we set
  $\bran_i \gets 0$, and $\grepT_i \gets t$. Additionally, the adversary
chooses
\emph{valid} stretch values
for each branch $j \in [[k-1]]$ of the level $i$,
  $\gstr_{i,j} \geq 0$.
The game then proceeds across rounds $t = 1,2,\ldots,T$.
In each round $t \in [T]$, we proceed through the following game stages in
order, and, depending on conditions outlined in
\cref{enu:gameRoundEnd}, we complete the round at the end of this
game stage, or move back to \cref{enu:fixingTest} and continue through the
game stages again from there.
  \begin{enumerate}
  \item\label[gameStage]{enu:gameStart}
    The adversary first chooses a positive real weight $W^{(t)}$ satisfying $\log_2 W^{(t)} \in (-\Wrange, \Wrange).$
    This weight is \emph{hidden} from the player.
  \item\label[gameStage]{enu:gameRebuild}
    Next, the adversary must choose a step.
    The adversary may pick any level $i$ and choose the step
    \gameActParam{rebuild}{$i$}, or
    choose the step \gameAct{do-nothing}.
    
    If the adversary chooses \gameActParam{rebuild}{$i$},
    for all levels $i' \geq i$, we set $\bran_{i'} \gets 0$ and $\grepT_{i'} \gets t$; 
     and the adversary chooses 
    \emph{valid} stretch values $\gstr_{i',j} \geq 0$
    for each branch $j \in [[k-1]]$.
    These stretch values are \emph{hidden} from the player.

    The player is informed of the step chosen by the adversary, including
    the level value $i$ if the adversary chose
    \gameActParam{rebuild}{$i$}.

    Then, the step counter is incremented:
    $s \gets s + 1$.
  \item 
    \label[gameStage]{enu:fixingTest}
    Next, the adversary must again choose a step.
    If at least one of the following two conditions is true,
    \begin{align}
      \label{eq:gameLossSumCond}
      &\sum_{i=0}^d W^{(\grepT_i)} > 2(d+1) W^{(t)}
      \end{align}
      or
      \begin{align}
      \label{eq:gameBadStr}
      &  \text{some level $i$, we have } \gstr_{i,j}  > \tilde{\Omega}(1) \text{ on the current
        branch } j = \bran_i
    \end{align}
    then the adversary must choose either the step
    \gameAct{round-completing}
    or must choose the 
    step \gameAct{round-continuing}.
    If \emph{neither} condition holds,
    the adversary must choose the step \gameAct{round-completing}.
    
    The player is informed of the step chosen by the adversary,
    i.e., \gameAct{round-completing} or \gameAct{round-continuing},
    but the player is not informed whether
    \eqref{eq:gameLossSumCond} or \eqref{eq:gameBadStr} held.

    In all cases, proceed to \cref{enu:gamePlayerStep}
    and then~\cref{enu:gameRoundEnd}.
  \item
    \label[gameStage]{enu:gamePlayerStep}
    The player must choose a step.
    The player can either choose the step \gameAct{do-nothing} or select
    a level $i$ and choose step \gameActParam{shift}{$i$}. In the
    latter case, we update variables as follows:
    \begin{itemize}      
    \item For all levels  $i' > i$, we set $\bran_{i'} \gets 0$; and
      $\grepT_{i'} \gets t$ and
      the adversary chooses
    \emph{valid} stretch values $\gstr_{i',j} \geq 0$
    for each branch $j \in [[k-1]]$.
    \item We set $\bran_i \gets (\bran_i + 1
      \mod k)$.
      If the updated $\bran_i$ is zero, then we set \\
      $\grepT_i \gets \argmin_{\grepT_i \le x \le t} W^{(x)}$,
      and the adversary
      chooses \emph{valid} stretch values $\gstr_{i',j} \geq 0$
      for each branch $j \in [[k-1]]$.
    \end{itemize}
    Then, the step counter is incremented:
    $s \gets s + 1$.
  \item
    \label[gameStage]{enu:gameRoundEnd}
    If, in the latest execution of \cref{enu:fixingTest},
    the adversary chose \gameAct{round-completing}, then the
    round counter $t$ is incremented, $t \gets t + 1$, and we move to
    the next round.
    Otherwise (i.e., the adversary chose
    \gameAct{round-continuing})
    we continue the round, moving to \cref{enu:fixingTest} and
    continuing from there.
  \end{enumerate}
  When all rounds of the game are complete, the  game returns a value \emph{value}
  given by tuples $(s_0,\ldots,s_{d}) \in \mathbb{N}^d$ and $(r_0,\ldots,r_{d}) \in
  \mathbb{N}^d$, where $s_i$ is the number of times the player took
  the step \gameActParam{shift}{$i$} and $r_i$ is the number of times the adversary took
  the step \gameActParam{rebuild}{$i$}.
\end{definition}

Now that we have set up the Shift-and-Rebuild game, we state a
strategy for the player in \cref{alg:rebuildStrategy} below.
The strategy is implemented by a simple pseudocode that uses an
additional variable $\passes_i$ for each level.

\begin{algorithm}[ht]
  \ForEach{level $i \gets 0,\ldots, d$.}{
    Player maintains a "passes count", $\passes_i$, initialized to zero.\;
  }
  \ForEach{round $t \gets 1,2, \ldots, T$ of the game}{
    The adversary chooses weight $W^{(t)}$, \emph{hidden} from the
    player.  \hfill\tcp{\cref{enu:gameStart}} 
    The adversary may choose \gameActParam{rebuild}{$i$},
    and
    if so, the player is informed of the level $i$. The $\bran$, $\grepT$, and  $\gstr$ update arbitrarily, following the rules of the game. \hfill\tcp{ \cref{enu:gameRebuild}} 
    \If{ the adversary chose \gameActParam{rebuild}{$i$}}{
      \ForEach{level $ i' \gets i,i+1, \ldots, d$ of the game}{
        $\passes_{i'} \gets 0 $.
      }
    }
    The adversary decides if the round is completing or not,
    by choosing \gameAct{round-completing} or \gameAct{round-continuing}
    and the player
    is informed of the choice. \hfill\tcp{ \cref{enu:fixingTest}}      \label{lne:roundCompleteDecision}
    \If{ the adversary chose \gameAct{round-continuing}.}{
      Let $i$ be the largest index level with $\passes_i < 2\Wrange$.\;
      The player chooses step \gameActParam{shift}{$i$}.
      Some  $\bran$, $\grepT$, and $\gstr$ variables update.\;
      \, \hfill \tcp{\cref{enu:gamePlayerStep} }
      \lIf(\tcp*[f]{Passed through all branches.}){ $\bran_i = 0$ }{
        $\passes_{i} \gets \passes_{i} +1 $.
      }
      \ForEach{level $ i' \gets i+1, \ldots, d$}{
        $\passes_{i'} \gets 0 $.
      }
    }
    \If{the adversary chose \gameAct{round-continuing}}{
      \textbf{go to} \cref{lne:roundCompleteDecision}. \hfill\tcp{\cref{enu:gameRoundEnd}}
    }
  }
  \caption{Player strategy for the rebuilding game.}
  \label{alg:rebuildStrategy}
\end{algorithm}

Now that we have a concrete strategy for the game, we state our main
technical lemma of this section, \cref{lem:goodRebuildStrategy}, which proves that the strategy
implemented by \cref{alg:rebuildStrategy} is efficient in a certain
sense.
In particular, the lemma upper bounds the number of times the player
may need to choose the step \gameActParam{shift}{$i$} for each level $i$, as a function
of how many times the adversary chooses \gameActParam{rebuild}{$i'$} at levels
$i' \leq i$.

\begin{lemma}
  \label{lem:goodRebuildStrategy}
  In the Shift-and-Rebuild game, if the player follows
  Strategy~\ref{alg:rebuildStrategy}, then when the game completes,
  for all levels $i \in [[d]]$, we have
  $s_i \leq \sum_{i' \leq  i} r_{i'} (10\Wrange k)^{i+1-i'}$.
\end{lemma}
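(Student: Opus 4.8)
The plan is to prove the bound $s_i \leq \sum_{i' \leq i} r_{i'} (10\Wrange k)^{i+1-i'}$ by induction on the level $i$, carefully tracking how often level $i$ can be shifted between two consecutive re-initializations of that level. The key structural observation is that level $i$ is reset (i.e. $\bran_i \gets 0$ and $\passes_i \gets 0$) exactly when one of the following happens: the adversary calls \gameActParam{rebuild}{$i'$} for some $i' \leq i$, or the player calls \gameActParam{shift}{$i'$} for some $i' < i$. Between two such consecutive reset events, I will bound the number of \gameActParam{shift}{$i$} steps the player takes. First I would argue that within one ``epoch'' (a maximal interval with no reset of level $i$), each time $\bran_i$ wraps around to $0$ the counter $\passes_i$ increments, and the player only shifts level $i$ while $\passes_i < 2\Wrange$; hence within one epoch there are at most $2\Wrange$ full wrap-arounds, for at most $2\Wrange k$ shifts of level $i$. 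The subtle point is that the player's strategy picks the \emph{largest} index level with $\passes_{i'} < 2\Wrange$, so a shift at level $i$ only happens when all deeper levels $i+1,\dots,d$ already have $\passes = 2\Wrange$; this does not directly affect the count at level $i$ but matters for the interplay across levels and must be stated to make the induction clean.

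Next I would set up the recursion across levels. Let $N_i$ denote the total number of ``epochs'' for level $i$, i.e. the number of reset events of level $i$ across the whole game. Each epoch of level $i$ contributes at most $2\Wrange k$ shifts of level $i$, so $s_i \leq 2\Wrange k \cdot N_i$. Now I bound $N_i$: a reset of level $i$ is triggered either by a \gameActParam{rebuild}{$i'$} with $i' \leq i$ (there are $\sum_{i' \leq i} r_{i'}$ of these in total) or by a \gameActParam{shift}{$i'$} with $i' < i$ (there are $\sum_{i' < i} s_{i'}$ of these). Hence $N_i \leq \sum_{i' \leq i} r_{i'} + \sum_{i' < i} s_{i'} = r_i + \sum_{i' < i}(r_{i'} + s_{i'})$, so
\begin{align*}
s_i \leq 2\Wrange k\left( r_i + \sum_{i' < i}(r_{i'} + s_{i'}) \right).
\end{align*}
Feeding in the inductive hypothesis $s_{i'} \leq \sum_{i'' \leq i'} r_{i''}(10\Wrange k)^{i'+1-i''}$ for $i' < i$ and collecting terms, each $r_{i''}$ with $i'' \leq i$ picks up a geometric sum of powers of $(10\Wrange k)$ that I bound by the single dominant term, yielding $s_i \leq \sum_{i'' \leq i} r_{i''}(10\Wrange k)^{i+1-i''}$, where the constant $10$ (versus $2$) gives the slack needed to absorb the geometric series and the $+1$ shift in exponent. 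I would check the base case $i=0$ directly: level $0$ is reset only by \gameActParam{rebuild}{$0$}, so $N_0 \leq r_0$ and $s_0 \leq 2\Wrange k \cdot r_0 \leq r_0 (10\Wrange k)^1$.

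The main obstacle I anticipate is making the ``epoch'' bookkeeping fully rigorous in the face of the nested resets: when level $i' < i$ is shifted, it resets $\passes_i$ to $0$, which starts a fresh epoch for level $i$, and I must make sure I am not double-counting shifts of level $i$ against both its own epoch count and the triggering shift of level $i'$. The cleanest way is to define the epochs of level $i$ as a partition of the timeline determined purely by reset events of level $i$, charge each \gameActParam{shift}{$i$} to the unique epoch containing it, bound each epoch's contribution by $2\Wrange k$ using the $\passes_i < 2\Wrange$ guard together with the fact that $\bran_i$ cycles mod $k$, and only afterwards bound the number of epochs via the count of reset-triggering events. A secondary subtlety is that a \gameActParam{rebuild}{$i'$} or \gameActParam{shift}{$i'$} with $i' < i$ also resets $\grepT_i$ and the adversary re-chooses valid stretch values, but validity (Definition~\ref{def:gameValidStretch}) is irrelevant for this purely combinatorial counting lemma, so I would note it can be ignored here. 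I would also remark that the precise constant hidden in $\tilde\Omega(1)$ in \eqref{eq:gameBadStr} plays no role in the proof of this lemma; it only matters in the correspondence lemma \cref{lem:AlgoIsGameInstance}.
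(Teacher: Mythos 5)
Your outer counting shell---at most $2\Wrange k$ shifts of level $i$ per ``epoch'' (since $\bran_i$ cycles mod $k$ and the player only shifts while $\passes_i < 2\Wrange$), at most $\sum_{i'\le i} r_{i'} + \sum_{i'<i} s_{i'}$ epochs, then induction on $i$ with the geometric slack absorbed by the constant $10$---is exactly the first paragraph of the paper's proof. However, you have omitted what the paper treats as the entire substance of the lemma, and your explicit claim that the validity of the stretch values and the weight range are ``irrelevant for this purely combinatorial counting lemma'' is wrong in context. The guard $\passes_i < 2\Wrange$ caps the count only because the player refuses to shift a level at cap; nothing in your argument shows that whenever the adversary legitimately chooses \gameAct{round-continuing} (i.e., \eqref{eq:gameLossSumCond} or \eqref{eq:gameBadStr} holds), there still \emph{is} a level with $\passes_i < 2\Wrange$ to shift. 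If all levels---in particular level $0$, which is only reset by \gameActParam{rebuild}{$0$}---were at cap while the adversary could still continue, the strategy (and the corresponding procedure in \cref{alg:rebuildScheduleForDS}) has no move, the round never completes, and your lemma holds only vacuously; that vacuous reading is useless for \cref{lem:AlgoIsGameInstance} and for the termination/running-time argument in \cref{thm:MMCHiddenStableFlow}.

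The paper's proof is almost entirely devoted to ruling this scenario out: it maintains the invariant \eqref{eq:IH}, $\log_2 W^{(\grepT_i)} < \Wrange - \passes_i$ on the prefix-max and prefix-good-stretch sets (\cref{def:preMaxS,def:preGoodStrS}), using \cref{lem:goodStrBetween} together with the game rules: between consecutive wrap-arounds of a level whose lower levels currently have good stretch, the adversary could only have kept rounds alive because \eqref{eq:gameLossSumCond} held, and then the update rule \eqref{eq:repTAssign} forces $W^{(\grepT_i)}$ to at least halve. Since $\log_2 W^{(t)} \in (-\Wrange,\Wrange)$, this yields $\passes_0 < 2\Wrange$ at all times, i.e., the player never runs out of levels, every round completes within the shift budget, and only then does your epoch counting give the stated bound. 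So the validity of the stretches (\cref{def:gameValidStretch}), the hidden weights, and the specific $\grepT_i$ update rule are precisely what make the lemma non-vacuous; a proof that dismisses them has a genuine gap, namely the potential/halving argument that is the core of the paper's proof.
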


We prove this lemma in \cref{sec:gameStrategyGood}, but first, we show
how to use it to prove that \cref{alg:rebuildScheduleForDS} is efficient.

\subsection{Game Playing Strategy}
We now prove prove
\cref{lem:AlgoIsGameInstance} (restated below) by showing that
\cref{alg:rebuildScheduleForDS} corresponds to executing
Strategy~\ref{alg:rebuildStrategy} in particular instances of the
Shift-and-Rebuild game, which then lets us apply
\cref{lem:goodRebuildStrategy} to shows its efficiency.

\AlgoIsGameInstance*

\begin{proof}[Proof of \cref{lem:AlgoIsGameInstance}]
  We first establish a correspondence that allows us to derive an
  instance of the Rebuild-and-Shift game from the execution of
  \cref{alg:rebuildScheduleForDS}.
  We define the game to have the same level count $d$ and branch count
  $k$ as the data structures of $\textsc{Update}()$.
  We let $\Psi = \log^{O(1)}m$.
  
  To establish the correspondence, we
  define the adversary's actions as follows:
  \begin{enumerate}
  \item At the beginning of every round the game adversary chooses
    $W^{(t)} = \norm{\bw^{(t)}}_1$.
    It is immediate that the choice of  $W^{(t)} = \norm{\bw^{(t)}}_1$
    satisfies $\log_2 W^{(t)} \in (-\log^{O(1)}m, \log^{O(1)}m)$ by
    \cref{item:quasipoly} in \cref{def:hiddenStableFlowChasing},
    as the updates received by $\textsc{Update}()$ are Hidden Stable Flow Chasing.
  \item When data structure level $i$ initiates a rebuild in
    \cref{alg:rebuildScheduleForDS}, the adversary chooses the action
    \gameActParam{rebuild}{$i$}.
    Thus the number of calls to $\Rebuild(i)$ equals
    $r_i$ by \cref{def:game}.
  \item When a data structure rebuild  leads to new stretches
    $\hstr_{i,j}$  for level $i$ and branch $j$ (which occurs either
    during a  $\Rebuild(i')$ or a  $\Shift(i')$ data structure
    call),
    the adversary chooses
    $\gstr_{i,j} \gets \hstr_{i,j}$.
    Observe that this choice of game stretch values are \emph{valid} by \cref{cor:existsBranchGoodStr}.
  \item When the data structure cycle query $\textsc{FindCycle}()$  fails, the
    adversary decides that the round is \gameAct{continuing} and when
    the cycle query succeeds, the adversary decides that the round is
    \gameAct{completing}.
    \label{enu:roundCompletingIffCycleFound}
  \end{enumerate}
  Furthermore, the game player chooses \gameActParam{shift}{$i$} at
  level $i$, exactly when in $\textsc{Update}()$ we call $\Shift(i)$.
      Thus the number of calls to $\Shift(i)$ equals
    $s_i$ by \cref{def:game}.
  
  With this correspondence, we see that we can trivially identify the
  $\bran_i$ and $\passes_i$ variables in
  \cref{alg:rebuildScheduleForDS} with the same variables in the
  Strategy~\cref{alg:rebuildStrategy} for the player.

  Next, we have to confirm that the adversary in the game only
  chooses \gameAct{round-continuing} at points when this
  is allowed according to the game rules in \cref{enu:fixingTest}.
  Observe that when the adversary makes this choice,
  we must have that the data structure failed to
  find a cycle in \cref{alg:rebuildScheduleForDS}.

  Notice that if we choose $\repT_i$ equal to $\grepT_i$,
these choices fullfill the requirements of
\cref{lem:badCycleMeansBadWtOrStr}.

  By the lemma, using our corresponce, if we failed to find a cycle in
  the data structure, at least one of game
  Conditions~\eqref{eq:gameLossSumCond} or ~\eqref{eq:gameBadStr} is
  true, which means that the adversary is allowed choose
  \gameAct{round-continuing}
  by the game rules in \cref{def:game}.
\end{proof}

\subsection{Analysis of the Strategy}
\label{sec:gameStrategyGood}

In this section, we prove \Cref{lem:goodRebuildStrategy}.
Throughout this section, we parameterize variables by step count $s$ in
our proofs.
The step counter is always updated immediately after updating
variables, and by $\gstr^{(s)}_{i,j}$, $\grepT_i^{(s)}$ etc. we may
the value of the variable immediately after the step counter is
updated.

We now summarize how the game representative rounds
$\{\grepT_i^{(s)}\}_i$ change when we either rebuild or shift some
level $i$.
\begin{definition}[The rule for updating $\{\grepT_i\}_i$]
\label{def:repTRule}

If the player chooses \gameActParam{shift}{$i$} at step $s$ and
the branching index then becomes $0$, i.e., $\bran_i = 0$, we set $\grepT_i^{(s+1)}$ to
\begin{align}\label{eq:repTAssign}
\grepT_i^{(s+1)} \gets \argmin_{\grepT_i^{(s)} \le x \le t} W^{(x)},
\end{align}
We also set $\grepT_{i'}^{(s+1)} \gets t$ for $i' > i.$

If the adversary chooses \gameActParam{rebuild}{$i$} in step $s$, then $\grepT_{i'}^{(s+1)} \gets t$ for $i' \geq i.$
\end{definition}

Given the rules for updating $\{\grepT_i\}$, the variable $\passes_i$ counts exactly how many times we update $\grepT_i$ using \eqref{eq:repTAssign} since last rebuild at level $i.$

In the rest of the proof, we use $\gstr^{(s)}_i$ to denote $\gstr^{(s)}_{i, j}$ where $j = \bran_i$, the current stretch.

\begin{lemma}\label{lem:goodStrBetween}
Let $s_1, s_2$ be the steps of any two consecutive shifts at level $i$ on the same set of stretch values $\{\gstr_{i, j}\}_j$.
There exists some step $s \in (s_1, s_2]$ such that
\begin{align*}
    \gstr^{(s)}_{i+1} = \ldots = \gstr^{(s)}_{d} = \O(1)
\end{align*}
Furthermore, we can ensure that all $\passes^{(s)}_{i+1}, \ldots, \passes^{(s)}_{d}$ are exactly $2 \Psi - 1.$
\end{lemma}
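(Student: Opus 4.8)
The plan is to pin down, inside the step‑window $(s_1,s_2]$, a nested chain of step‑intervals, one for each level $\ell\in\{i+1,\dots,d\}$, on which level $\ell$ simultaneously sits at a good branch and has $\passes_\ell=2\Wrange-1$; taking $s$ to be any step of the innermost interval then gives the lemma. (I write $\Wrange$ for the game threshold, so $2\Wrange=2\Psi$.)

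First I would record the structural facts that make such a chain exist. Since the player follows Strategy~\ref{alg:rebuildStrategy} and a \gameActParam{shift}{$i$} occurs at $s_2$, at that point $i$ is the largest index with $\passes_i<2\Wrange$; as $\passes_{i'}$ is bumped only right after a \gameActParam{shift}{$i'$} and never exceeds $2\Wrange$, this forces $\passes_{i+1}=\dots=\passes_d=2\Wrange$ just before the move at $s_2$. Dually, the \gameActParam{shift}{$i$} at $s_1$ sets $\passes_{i'}\gets 0$ for all $i'>i$, so $\passes_{i+1}=\dots=\passes_d=0$ immediately after $s_1$. Finally, since $\{\gstr_{i,j}\}_j$ is unchanged throughout $(s_1,s_2]$, the rules of \cref{def:game} that dictate when stretch vectors are re‑drawn forbid, inside $(s_1,s_2]$, any \gameActParam{rebuild}{$i'$} with $i'\le i$ and any \gameActParam{shift}{$i'$} with $i'<i$; and $s_1,s_2$ being consecutive shifts at level $i$ means there is no \gameActParam{shift}{$i$} strictly inside $(s_1,s_2)$.

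Next I would run a downward induction on $\ell$ from $i+1$ to $d$, producing intervals $(s_1,s_2]=J_i'\supseteq J_{i+1}\supseteq J_{i+1}'\supseteq J_{i+2}\supseteq\dots\supseteq J_d\supseteq J_d'$ with the invariant that $\passes_\ell$ is $0$ at the first step of $J_{\ell-1}'$ and later reaches $2\Wrange$ within $J_{\ell-1}'$ (for $\ell=i+1$ this is the first two facts above). Given such $J_{\ell-1}'$, the counter $\passes_\ell$ changes only by unit increments and resets to $0$, so it attains the value $2\Wrange-1$ somewhere in $J_{\ell-1}'$; I let $J_\ell$ be the \emph{last} maximal sub‑interval of $J_{\ell-1}'$ on which $\passes_\ell=2\Wrange-1$. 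I then argue that on $J_\ell$ no \gameAct{rebuild} or \gameAct{shift} at a level $\le\ell$ can occur — a reset of $\passes_\ell$ would contradict the choice of $J_\ell$ as the last such interval, a reset of $\passes_{\ell'}$ for some $i<\ell'<\ell$ would violate $\passes_{\ell'}=2\Wrange-1$ on $J_{\ell'}'$, and resets at levels $\le i$ are excluded — so the level‑$\ell$ stretch vector on $J_\ell$ is the vector drawn at the start of $J_\ell$, which is \emph{valid} by \cref{cor:existsBranchGoodStr} and \cref{def:gameValidStretch}, and it stays constant on $J_\ell$; moreover $J_\ell$ spans exactly one full cycle of level‑$\ell$ branches, so $\bran_\ell$ runs through every value of $\{0,\dots,k-1\}$ on $J_\ell$. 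Picking $j_\ell^*$ with $\gstr_{\ell,j_\ell^*}=\O(1)$, I let $J_\ell'$ be the maximal sub‑interval of $J_\ell$ on which $\bran_\ell=j_\ell^*$; it is nonempty, and on it $\passes_\ell=2\Wrange-1$ and $\gstr_\ell=\O(1)$. If $\ell<d$, the \gameActParam{shift}{$\ell$} at the start of $J_\ell'$ zeroed $\passes_{\ell+1},\dots,\passes_d$, and $J_\ell'$ closes exactly at the next \gameActParam{shift}{$\ell$}, which Strategy~\ref{alg:rebuildStrategy} performs only once $\passes_{\ell+1}=\dots=\passes_d=2\Wrange$; hence $\passes_{\ell+1}$ is $0$ at the first step of $J_\ell'$ and later reaches $2\Wrange$ within $J_\ell'$, reinstating the invariant at level $\ell+1$. (Rebuilds at levels $>\ell$ inside $J_\ell'$ merely re‑zero the deeper counters without closing $J_\ell'$, and rounds that complete with the player idle only contribute idle steps; neither disturbs the argument.)

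To conclude I pick any $s\in J_d'$. By the nesting, $s\in J_\ell'\subseteq J_\ell$ for every $\ell\in\{i+1,\dots,d\}$, so at step $s$ we have $\passes_\ell=2\Wrange-1$ and $\gstr_\ell=\O(1)$ for all those $\ell$ simultaneously, and $s\in(s_1,s_2]$; since $\Wrange=\Psi$, this is the claim. The crux, which I expect to need the most care, is verifying at each level that $J_\ell'$ is terminated by a \gameActParam{shift}{$\ell$} rather than by an adversary rebuild: this needs both that rebuilds below level $\ell$ are ruled out (by the hypothesis and by the pinned counters $\passes_{i+1},\dots,\passes_{\ell-1}$) and that rebuilds at levels $\ge\ell$ inside $J_\ell'$ do not end it — and it is exactly this that supplies the ``the deeper counters must climb back up to $2\Wrange$'' fact driving the recursion.
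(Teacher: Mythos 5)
Your proof is correct and takes essentially the same route as the paper's: the paper's induction on the level (locate, within the last full pass through the $k$ branches at level $i+1$, a step where the current branch is good and $\passes_{i+1}=2\Psi-1$, then recurse between the two consecutive level-$(i+1)$ shifts surrounding it) is exactly your nested-interval construction, just unrolled level by level. Your write-up is somewhat more explicit than the paper about why adversary rebuilds at deeper levels cannot truncate the intervals (since the closing shift at a level requires all deeper $\passes$ counters to have climbed back to $2\Psi$), a point the paper's proof treats tersely.
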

\begin{proof}
We prove by induction $i$.
The base cases of $i = d - 1$ holds trivially because $G_d$ has constant size and $\gstr^{(s)}_1$ is always $O(1).$
We can just take $s = s_2.$
In the strategy (\cref{alg:rebuildStrategy}), we shift level $d - 1$ at step $s_2$ because $\passes^{(s_2)}_d = 2\Psi.$

For $i < d - 1$, we know that in order to shift level $i$ at step $s_2$, we need to shift level $i+1$ for $2\Psi k$ times.
By \cref{def:gameValidStretch}
we know right after one of the last $k+1$ shifts, $\gstr^{(s)}_{i+1} = \O(1)$ and $\passes^{(s)}_{i+1} = 2\Psi - 1.$
Let $s'$ be the next step during which level $i+1$ is shifted.
By induction, there's some step $s^* \in (s, s']$ such that $\gstr^{(s^*)}_{i+2} = \ldots = \gstr^{(s^*)}_{d} = \O(1)$.
Observe that $\gstr^{(s^*)}_{i+1} = \gstr^{(s)}_{i+1} = \O(1)$
and $\passes^{(s^*)}_{i+1} = \passes^{(s)}_{i+1}$ because we have not shifted level $i+1$ between step $s$ and $s^*.$
This concludes the proof because $\gstr^{(s^*)}_{i+1} = \ldots = \gstr^{(s^*)}_{d} = \O(1)$ and $\passes^{(s^*)}_{i+1}, \ldots, \passes^{(s^*)}_{d}$ are all $2\Psi - 1.$
\end{proof}

Intuitively, our proof of \Cref{lem:goodRebuildStrategy} is based on the observation that $W^{(\grepT_i)}$ becomes smaller whenever we update $\grepT_i$ (see \eqref{eq:repTAssign}).
Furthermore, $W^{(\grepT_i)}$ decreae by at least a half if the current step has good stretches up to level $i$, i.e., $\gstr^{(s)}_0 = \ldots = \gstr^{(s)}_{i - 1} = \O(1)$
and $W^{(\grepT_i)}$ is larger than any $W^{(\grepT_{i'})}$  at level $i' < i.$
Because $W^{(\grepT_i)}$ is within the range $(2^{-\Psi}, 2^{\Psi})$, there will be at most $2\Psi$ updates to $\grepT_i.$
This implies the $2\Psi$ upper bound on $\passes_0$.
Therefore, between two rebuilds at level $i$ (or lower-index shifts that cause a reset of $\passes_i$),  there are at most $2k\Psi$ \gameActParam{shift}{$i$} chosen at level $i.$

\begin{proof}[Proof of \Cref{lem:goodRebuildStrategy}]
Our goal is to prove that at any time, $\passes_0$ cannot exceed $2\Psi.$
Given this fact, we can prove the lemma as follows:
At any level $i$, the number of times the player chooses \gameActParam{shift}{$i$}
w.r.t. a fixed set of stretch values $\{\gstr_{i, j}\}_j$ is at most $2k\Psi$ because $\passes_i$ is at most $2\Psi.$
Throughout the entire game, the stretch set can only be changed at most $\sum_{i' \le i} r_{i'} + \sum_{i' < i} s_{i'}$ times.
We conclude the lemma via induction on $i.$

The rest of the proof aims to show that $\passes_0$ cannot exceed $2\Psi.$
The proof works via induction on step $s$.
We first define two sets for technical reasons:
\begin{definition}[Prefix Max Set]
\label{def:preMaxS}
At any step $s$, we define
\begin{align*}
\cM^{(s)} = \Set{i \in \{0, 1, \ldots, d\}}{W^{(\grepT^{(s)}_{i'})} <  W^{(\grepT^{(s)}_i)}, \forall i' < i}
\end{align*}
\end{definition}
\begin{definition}[Prefix Good Stretch Set]
\label{def:preGoodStrS}
At any step $s$, we define
\begin{align*}
\cP^{(s)} = \Set{i \in \{0, 1, \ldots, d\}}{\gstr^{(s)}_{i'} = \O(1), \forall i' < i}
\end{align*}
\end{definition}
We prove the following holds at any step $s$.

\paragraph{Induction Hypothesis:}
At the start of any step $s$, we have
\begin{align}
\label{eq:IH}
    \log_2 W^{(\grepT^{(s)}_i)} < \Psi - \passes^{(s)}_i, \forall i \in \cM^{(s)} \cap \cP^{(s)}
\end{align}

To verify the induction hypothesis, we go through what happens during step $s$ and how \eqref{eq:IH} could be affected.

\paragraph{Event 1: The adversary chooses \gameActParam{rebuild}{$i$}.}
From the game (\cref{def:game}) and the strategy (\cref{alg:rebuildStrategy}), some variables are updated as follows:
\begin{align*}
    \grepT^{(s+1)}_{i'} &\gets t, \forall i' \le i \\
    \passes^{(s+1)}_{i'} &\gets 0, \forall i' \le i
\end{align*}
Because $W^{(t)} < 2^{\Psi}$ always holds, \eqref{eq:IH} even holds for levels $i' \ge i.$
Every other level, $i' < i$ are not affected, as the variables at step
$s+1$ equal the variables at step $s$.
So,  \eqref{eq:IH} still holds after the completion of a step \gameActParam{rebuild}{$i$}.

\paragraph{Event 2: The adversary or player chooses \gameAct{do-nothing}.}
In this case, every variable stays the same and \eqref{eq:IH} holds at the start of the next step, $s+1.$

\paragraph{Event 3: The player chooses \gameActParam{shift}{$i$}.} 
We will argue that \eqref{eq:IH} still holds at the start of the next step, $s + 1$.
First, let us look at the set $\cM^{(s)} \cap \cP^{(s)}$ before the shift.
\begin{claim}
\label{claim:badLargeLevel}
$\cM^{(s)} \cap \cP^{(s)} \cap \{i+1, \ldots, d\} = \phi.$
\end{claim}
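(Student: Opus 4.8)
The plan is to prove Claim~\ref{claim:badLargeLevel} directly, using only the player's strategy (\cref{alg:rebuildStrategy}) together with the induction hypothesis~\eqref{eq:IH}, which is assumed to hold at the start of step $s$ (we are inside Event~3 of the induction for \cref{lem:goodRebuildStrategy}). First I would record the elementary invariant that $\passes_{i'} \le 2\Wrange$ for every level $i'$ at every step: the counter $\passes_{i'}$ starts at $0$, is only ever reset to $0$ (by a rebuild at a level $\le i'$, or by a shift at a level $< i'$), and is only ever incremented when level $i'$ is shifted — but a shift at level $i'$ is performed only when $\passes_{i'} < 2\Wrange$, so each increment takes it from at most $2\Wrange-1$ to at most $2\Wrange$. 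Since, by \cref{alg:rebuildStrategy}, the shift level $i$ chosen at step $s$ is the \emph{largest} index with $\passes_i < 2\Wrange$, every level $i' > i$ therefore satisfies $\passes^{(s)}_{i'} = 2\Wrange$.

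Next I would argue by contradiction. Suppose some level $i' \in \{i+1,\ldots,d\}$ belongs to $\cM^{(s)} \cap \cP^{(s)}$. Applying the induction hypothesis~\eqref{eq:IH} at the start of step $s$ to this $i'$ gives
\[
\log_2 W^{(\grepT^{(s)}_{i'})} \;<\; \Wrange - \passes^{(s)}_{i'} \;=\; \Wrange - 2\Wrange \;=\; -\Wrange .
\]
But $\grepT^{(s)}_{i'}$ is a genuine round index (it is initialized to a round and only ever reassigned to the current round or to a past round), so $W^{(\grepT^{(s)}_{i'})}$ is a weight chosen by the adversary in Game Stage~\ref{enu:gameStart}, and the game rules (\cref{def:game}) force $\log_2 W^{(\grepT^{(s)}_{i'})} \in (-\Wrange, \Wrange)$, i.e.\ $\log_2 W^{(\grepT^{(s)}_{i'})} > -\Wrange$. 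This contradicts the displayed inequality, so no such $i'$ exists, which is exactly $\cM^{(s)} \cap \cP^{(s)} \cap \{i+1,\ldots,d\} = \phi$.

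I do not expect a genuine obstacle here: the proof is a one-line application of~\eqref{eq:IH} plus the weight-range constraint. The only place needing care is the bookkeeping around the $\passes$ counters — verifying that $\passes_{i'} \le 2\Wrange$ really is preserved by both the rebuild/lower-shift resets and the shift increments, and that "largest index with $\passes < 2\Wrange$" indeed pins $\passes^{(s)}_{i'}$ to exactly $2\Wrange$ (rather than merely $\ge 2\Wrange$) for all higher levels $i'$, so that the exponent $\Wrange - \passes^{(s)}_{i'}$ is exactly $-\Wrange$ and collides with the lower endpoint of the admissible weight range. Once that is in hand, the claim follows immediately, and it will then feed into the subsequent step of the induction (analyzing how $\cM$, $\cP$, $\grepT$, and $\passes$ change under \gameActParam{shift}{$i$}) to re-establish~\eqref{eq:IH} at step $s+1$.
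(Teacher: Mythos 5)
Your proposal is correct and follows essentially the same route as the paper: since $i$ is chosen as the largest level with $\passes_i < 2\Wrange$, every $i' > i$ has $\passes^{(s)}_{i'} = 2\Wrange$, so the induction hypothesis \eqref{eq:IH} would force $\log_2 W^{(\grepT^{(s)}_{i'})} < -\Wrange$, contradicting the weight range. Your additional bookkeeping verifying $\passes_{i'} \le 2\Wrange$ is a harmless elaboration of a step the paper states without proof.
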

\begin{proof}
Assume for contradiction that there is some $i' > i$ that is in $\cM^{(s)} \cap \cP^{(s)}.$
From the strategy, we know that $\passes^{(s)}_{i'} = 2\Psi$.
Combining with the induction hypothesis~\eqref{eq:IH}, we know
\begin{align*}
    \log_2 W^{(\grepT^{(s)}_{i'})} < \Psi - \passes^{(s)}_{i'} \le \Psi - 2\Psi = -\Psi
\end{align*}
which leads to a contradiction because $W^{(t)} > 2^{-\Psi}$ always holds.
\end{proof}
The following claim argues that we need to check \eqref{eq:IH} only at the level $i.$
\begin{claim}
\label{claim:focusLvlI}
At the start of step $s + 1$, the induction hypothesis~\eqref{eq:IH} holds for all levels $i' \neq i.$
\end{claim}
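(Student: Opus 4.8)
The plan is to verify \eqref{eq:IH} at the start of step $s+1$ separately for the two ranges $i' < i$ and $i' > i$, exploiting the fact that a \gameActParam{shift}{$i$} step touches game variables only at levels $\ge i$ (and at level $i$ itself only conditionally on $\bran_i$ wrapping to $0$, which is irrelevant here since the claim excludes $i' = i$).

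First I would treat $i' < i$. A \gameActParam{shift}{$i$} (see game stage \cref{enu:gamePlayerStep} and Strategy~\ref{alg:rebuildStrategy}) leaves $\bran_{i'}$, $\grepT_{i'}$, $\passes_{i'}$, and the stretch values $\gstr_{i',\cdot}$ unchanged for every $i' < i$. Moreover, membership of an index $i' < i$ in $\cM^{(\cdot)}$ depends only on the quantities $W^{(\grepT_{i''})}$ with $i'' < i'$, and membership in $\cP^{(\cdot)}$ depends only on $\gstr_{i''}$ with $i'' < i'$; all such $i''$ are $< i$ and hence unaffected. Therefore $\cM^{(s+1)} \cap \{0,\dots,i-1\} = \cM^{(s)} \cap \{0,\dots,i-1\}$ and likewise for $\cP$, so for $i' < i$ the assertion "$i' \in \cM^{(s+1)} \cap \cP^{(s+1)} \Rightarrow \log_2 W^{(\grepT^{(s+1)}_{i'})} < \Psi - \passes^{(s+1)}_{i'}$" is literally the step-$s$ instance of \eqref{eq:IH}, which holds by the (step-counter) induction hypothesis.

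Next I would treat $i' > i$. By the \gameActParam{shift}{$i$} rules in \cref{enu:gamePlayerStep}, $\grepT^{(s+1)}_{i'} = t$, and by the $\passes$ reset in Strategy~\ref{alg:rebuildStrategy} we have $\passes^{(s+1)}_{i'} = 0$ for all $i' > i$. Hence the required bound collapses to $\log_2 W^{(t)} < \Psi$, which is exactly the weight-range constraint imposed when the weight is chosen in game stage \cref{enu:gameStart} ($\log_2 W^{(t)} \in (-\Wrange, \Wrange)$ with $\Wrange = \Psi$); this holds regardless of whether $i'$ lies in $\cM^{(s+1)} \cap \cP^{(s+1)}$. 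Combining the two cases establishes the claim.

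I do not expect a genuine obstacle here: the claim is pure bookkeeping, resting on the facts that a shift at level $i$ only resets variables at levels $\ge i$ and that the reset values ($\grepT_{i'} = t$, $\passes_{i'} = 0$) trivially satisfy \eqref{eq:IH} given the a priori bound $\log_2 W^{(t)} < \Psi$. The delicate case is level $i$ itself — where one must show that wrapping $\bran_i$ back to $0$ strictly decreases $\log_2 W^{(\grepT_i)}$ by at least $1$ whenever the prefix stretches $\gstr_0,\dots,\gstr_{i-1}$ are all $\O(1)$, via the $\argmin$ rule \eqref{eq:repTAssign} together with Claim~\ref{claim:badLargeLevel} to locate a recent round where $W$ was small — but that is handled in the remainder of the proof of \cref{lem:goodRebuildStrategy} and is not needed for Claim~\ref{claim:focusLvlI}.
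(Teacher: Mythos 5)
Your proposal is correct and matches the paper's own argument: for levels $i' < i$ a \gameActParam{shift}{$i$} leaves all relevant variables (and hence membership in $\cM$ and $\cP$) unchanged, so the step-$s$ hypothesis carries over verbatim, while for $i' > i$ the reset $\grepT_{i'} \gets t$, $\passes_{i'} \gets 0$ reduces \eqref{eq:IH} to $\log_2 W^{(t)} < \Psi$, which holds by the weight-range constraint. Your version merely spells out the dependence of $\cM^{(\cdot)}$, $\cP^{(\cdot)}$ on lower-indexed levels a bit more explicitly than the paper does; the approach is the same.
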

\begin{proof}
For any larger level $i' > i$, $\passes^{(s+1)}_{i'}$ is $0$ at the start of step $s+1.$
If there is some $i' > i$ in $\cM^{(s+1)} \cap \cP^{(s+1)}$, $\log_2 W^{(\grepT^{(s)}_{i'})} < \Psi$ and \eqref{eq:IH} holds because $W^{(t)} < 2^{\Psi}$.
For any smaller level $j < i$, $W^{(\grepT^{(s+1)}_{i'})}$, $\gstr^{(s+1)}_{i'}$, and $\passes^{(s+1)}_{i'}$ remain unchanged and \eqref{eq:IH} holds trivially.
\end{proof}

Next, we do a case analysis on whether $i \in \cM^{(s)}$ and/or $i \in \cP^{(s)}.$

\paragraph{Case A: $i \not\in \cM^{(s)}.$}
Because level $i$ is not rebuilt, from \cref{def:repTRule} we know $W^{(\grepT^{(s+1)}_i)}$ is at most $W^{(\grepT^{(s)}_i)}.$
And a shift at level $i$ does not affect $\grepT_j$ for any level $i' < i.$
So $i$ cannot be in $\cM^{(s+1)}$ and \eqref{eq:IH} holds for level $i$ at step $s+1.$

\paragraph{Case B: $i \in \cM^{(s)}$ but $i \not\in \cP^{(s)}.$}
After a shift at level $i$, the current stretch at level $i' < i$ remains unchanged, i.e., $\gstr^{(s + 1)}_{i'} = \gstr^{(s)}_{i'}$ for all $i' < i$.
Because $i$ is not in $\cP^{(s)}$, $\gstr^{(s + 1)}_{i'} = \gstr^{(s)}_{i'} > \O(1)$ at some level $i' < i.$
Thus, $i$ is still not in $\cP^{(s + 1)}$ and \eqref{eq:IH} holds for level $i$ at step $s+1.$

\paragraph{Case C: $i \in \cM^{(s)} \cap \cP^{(s)}.$}
If $\grepT^{(s+1)}_i = \grepT^{(s)}_{i}$, $i \in \cM^{(s+1)} \cap \cP^{(s+1)}$ and \eqref{eq:IH} still holds for level $i$ at step $s+1.$

Otherwise, $\grepT^{(s+1)}_i$ is determined using \eqref{eq:repTAssign} and this increments $\passes^{(s+1)}_i \gets \passes^{(s)}_i + 1.$
We verify \eqref{eq:IH} by showing that
\begin{align*}
    W^{(\grepT^{(s+1)}_i)} \le \frac{1}{2}W^{(\grepT^{(s)}_i)}
\end{align*}

By \cref{lem:goodStrBetween}, there must be a step $x$ after the previous shift at level $i$ such that $\gstr^{(x)}_{i+1} = \ldots \gstr^{(x)}_{d} = \O(1)$ and $\passes^{(x)}_{i'} \ge 2\Psi - 1, \forall i' > i$.

Because after the previous shift at level $i$, every level $i' \le i$ is not affected, i.e. $\gstr^{(x)}_{i'} = \gstr^{(s)}_{i'}, \grepT^{(x)}_{i'} = \grepT^{(s)}_{i'}, \forall i' \le i.$
Combining with the fact that $i \in \cP^{(s)}$, we know that at step $x$, all stretches are small, i.e.
\begin{align*}
    \gstr^{(x)}_{0} = \ldots \gstr^{(x)}_{d} = \O(1)
\end{align*}
That is, $\cP^{(x)} = \{0, 1, \ldots, d\}.$

However, at step $x$, the data structure cannot find a good cycle, and \eqref{eq:gameLossSumCond} fails to hold.
Let $t_x$ be the corresponding round during step $x.$
Because \eqref{eq:gameLossSumCond} fails, we know that
\begin{align}
\label{eq:failW}
\sum_{i'=0}^d W^{(\grepT^{(x)}_{i'})} > 2(d+1) W^{(t_x)}
\end{align}

Now, we want to prove that $W^{(\repT^{(x)}_i)}$ is the largest among all levels by contradiction.
Let $i^*$ be the level to shift at step $x.$
Assume for contradiction that $i^* \in \cM^{(x)}$ and $i^* \neq i.$
\Cref{claim:badLargeLevel} and the fact that $\cP^{(x)} = \{0, 1, \ldots, d\}$ says that $i^* = \max \cM^{(x)}$ and we have
\begin{align*}
    W^{(\grepT^{(x)}_{i^*})} = \max_{i' \in \{0, 1, \ldots, d\}} W^{(\grepT^{(x)}_{i'})} > 2 W^{(t_x)}
\end{align*}
where the last inequality comes from \eqref{eq:failW}. 

Because $\passes^{(x)}_{i^*} = 2\Psi - 1$ and $\passes^{(s)}_{i^*} = 2\Psi$, we must update $\repT_{i^*}$ using \eqref{eq:repTAssign} at some step $y$ between $x$ and $s.$
We know from the update rule that
\begin{align*}
    W^{(\grepT^{(y+1)}_{i^*})} \le W^{(t_x)} < \frac{1}{2}W^{(\grepT^{(x)}_{i^*})}
\end{align*}
and $\passes^{(y+1)}_{i^*} = 2\Psi.$

According to the strategy, all levels $i' < i^*$ are not affected between step $x$ and $y$, $i^*$ is still in $\cM^{(y+1)} \cap \cP^{(y+1)}.$
Using the inductive hypothesis \eqref{eq:IH} on step $y+1 \le s$ and level $i^*$, we have that
\begin{align*}
    \log_2 W^{(\grepT^{(y+1)}_{i^*})} < \Psi - \passes^{(y+1)}_{i^*} \le \Psi - 2\Psi = -\Psi,
\end{align*}
which leads to a contradiction.

Therefore, we know $i = \max \cM^{(x)}$ and $W^{(\repT^{(x)}_i)} > 2 W^{(t_x)}.$
At step $s$, the player choosing \gameActParam{shift}{$i$} updates $\grepT^{(s+1)}_i$ so that
\begin{align*}
    W^{(\grepT^{(s+1)}_{i})} \le W^{(t_x)} < \frac{1}{2} W^{(\grepT^{(x)}_i)} = \frac{1}{2}W^{(\grepT^{(s)}_i)}
\end{align*}
This concludes the case as well as the proof.
\end{proof}
\section{Decremental Spanner and Embedding}
\label{sec:spanner}

The main result of this section is a new deterministic static algorithm to find a sparsifier $\tilde{J}$ of a graph $J$. As mentioned in the overview (\Cref{sec:tech_overview}), we also need to find a low-congestion embedding of $J$ into $\tilde{J}$. In fact, we need the following additional stronger property of $\tilde{J}$: given access to an embedding $\Pi_{J \to H'}$ from $J$ to $H'$, the graph $\tilde{J}$ also must satisfy that the composition $\Pi_{J \to H'} \circ \Pi_{J \to\tilde{J}}$ has almost the same bounds on vertex congestion and length as $\Pi_{J \to H'}$. It is worth pointing out that the composed embedding $\Pi_{J \to H'} \circ \Pi_{J \to\tilde{J}}$ is well-defined because $E(\tilde{J}) \subseteq E(J)$, as $\tilde{J}$ will be a spanner of $J$. \cite{chen2022maximum} showed that this holds if $\tilde{J}$ was a random sample of $J$ by applying concentration bounds. This section gives a deterministic method for achieving this.

\begin{restatable}{theorem}{staticEmbedding}\label{lma:staticEmbed}
Given unweighted, undirected graphs $H'$ and $J$ with $V(J) \subseteq V(H')$ and an embedding $\Pi_{J \to H'}$ from $J$ into $H'$. Then, there
is a deterministic algorithm $\textsc{Sparsify}(H', J, \Pi_{J \to H'})$ that returns a sparsifier $\tilde{J} \subseteq J$ with an embedding $\Pi_{J \to\tilde{J}}$ from $J$ to $\tilde{J}$ such that
\begin{enumerate}
    \item $\vcong(\Pi_{J \to H'} \circ \Pi_{J \to\tilde{J}}) \leq \gamma_{c} \cdot \length(\Pi_{J \to H'}) \cdot \left(\vcong(\Pi_{J \to H'}) + \Delta_{\max}(J)\right)$, and 
    \item $\length(\Pi_{J \to H'} \circ \Pi_{J \to\tilde{J}}) \leq \gamma_{\ell} \cdot \length(\Pi_{J \to H'})$, and
    \item $|E(\tilde{J})| = \tilde{O}(|V(J)| \gamma_{\ell})$.
\end{enumerate}
The algorithm runs in time $\tilde{O}(|E(J)|\gamma_{\ell}^2 \length(\Pi_{J \to H'}))$.
\end{restatable}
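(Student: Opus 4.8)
The plan is to build $\tilde{J}$ by constructing a deterministic bounded-degree expander $W$ on $V(J)$, embedding $W$ into $J$ and $J$ into $W$ using short-path embeddings, and then letting $\tilde{J}$ be the set of edges of $J$ in the image of the $W \to J$ embedding, exactly as sketched in \Cref{sec:tech_overview}. First I would invoke the deterministic expander construction of \cite{CGLNPS21} (or a simple explicit constant-degree expander family) to get $W$ on vertex set $V(J)$ with $|E(W)| = O(|V(J)|)$ and conductance $\Omega(1)$. Then, using the decremental shortest-path / expander-routing primitive from \cite{CS21} as packaged in \cite{chen2022maximum} (the same tool used to build the core-graph embeddings), I would compute a short-path embedding $\Pi_{W \to J}$ of $W$ into $J$ with $\vcong(\Pi_{W \to J}) \le m^{o(1)}$ and $\length(\Pi_{W \to J}) \le m^{o(1)}$, and an embedding $\Pi_{J \to W}$ of $J$ into $W$ with $\econg(\Pi_{J \to W}) \le m^{o(1)} \cdot \Delta_{\max}(J)$ and $\length(\Pi_{J \to W}) \le m^{o(1)}$. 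Here the edge-congestion bound for $\Pi_{J \to W}$ picks up the maximum-degree factor because $W$ has $O(|V(J)|)$ edges while $J$ may have up to $|V(J)| \Delta_{\max}(J)$ edges, so by a counting/flow argument some edge of $W$ must carry $\Omega(\Delta_{\max}(J))$ paths; the expander routing achieves this up to $m^{o(1)}$.

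Next I would define $\Pi_{J \to \tilde{J}} \defeq \Pi_{W \to J} \circ \Pi_{J \to W}$, which is a valid embedding from $J$ into $\tilde{J}$ since every edge it uses lies in the image of $\Pi_{W \to J}$, i.e.\ in $E(\tilde{J}) \subseteq E(J)$. For the sparsity bound (item 3): $|E(\tilde{J})| \le |E(W)| \cdot \length(\Pi_{W \to J}) = O(|V(J)|) \cdot m^{o(1)}$; absorbing the $m^{o(1)}$ factor into $\gamma_\ell$ gives $|E(\tilde{J})| = \tilde O(|V(J)|\gamma_\ell)$. For the length bound (item 2): each edge of $J$ maps under $\Pi_{J\to W}$ to a path of at most $\length(\Pi_{J\to W})$ edges of $W$, each of which maps under $\Pi_{W\to J}$ to a path of at most $\length(\Pi_{W\to J})$ edges, so $\length(\Pi_{J\to\tilde J}) \le \length(\Pi_{J\to W})\length(\Pi_{W\to J}) \le m^{o(1)}$, which also controls $\length(\Pi_{J\to H'}\circ\Pi_{J\to\tilde J}) \le \length(\Pi_{J\to H'})\cdot\length(\Pi_{J\to\tilde J}) \le \gamma_\ell\length(\Pi_{J\to H'})$. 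For the vertex-congestion bound (item 1): by \Cref{fact:transitiveEmbeddingCong}, $\vcong(\Pi_{J\to\tilde J}) = \vcong(\Pi_{W\to J}\circ\Pi_{J\to W}) \le \vcong(\Pi_{W\to J})\cdot\econg(\Pi_{J\to W}) \le m^{o(1)}\Delta_{\max}(J)$; then applying \Cref{fact:transitiveEmbeddingCong} once more to $\Pi_{J\to H'}\circ\Pi_{J\to\tilde J}$ gives $\vcong(\Pi_{J\to H'}\circ\Pi_{J\to\tilde J}) \le \vcong(\Pi_{J\to H'})\cdot\econg(\Pi_{J\to\tilde J})$, and I would bound $\econg(\Pi_{J\to\tilde J})$ by $\length(\Pi_{J\to H'})\cdot(\vcong(\Pi_{J\to H'}) + \Delta_{\max}(J))$ up to an $m^{o(1)} = \gamma_c$ factor — the extra $\length(\Pi_{J\to H'})$ factor and the additive $\vcong(\Pi_{J\to H'})$ term arise because edge-congestion in $\tilde J$ must be paid against the images $\Pi_{J\to H'}$ already route, which is where the stated (somewhat lossy) bound comes from. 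The running time $\tilde O(|E(J)|\gamma_\ell^2\length(\Pi_{J\to H'}))$ follows from the cost of the two short-path embedding computations on $J$ and $W$, each of which processes $|E(J)| \cdot m^{o(1)}$ total path-length.

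\textbf{Main obstacle.} The hard part will be correctly establishing item 1 with the precise bound stated, rather than the cleaner $\vcong(\Pi_{J\to H'})\cdot m^{o(1)}\Delta_{\max}(J)$ one might naively expect. The subtlety is that $\Pi_{J\to H'}$ is given from outside and may itself be arbitrary, so when we compose it with $\Pi_{J\to\tilde J}$ we cannot simply multiply the two vertex-congestions; instead we must go through edge-congestion of $\Pi_{J\to\tilde J}$ in $\tilde J$ and relate it back to quantities of $\Pi_{J\to H'}$, which forces the $\length(\Pi_{J\to H'})$ multiplier and the additive $\Delta_{\max}(J)$ term. Getting the bookkeeping of which congestion (vertex vs.\ edge) is multiplied against which, and verifying that the expander-routing tool from \cite{chen2022maximum,CS21} indeed delivers the required $m^{o(1)}$ and $m^{o(1)}\Delta_{\max}(J)$ guarantees simultaneously in the claimed static running time, is the technically delicate portion; the sparsity and length bounds are essentially immediate once the embeddings are in hand.
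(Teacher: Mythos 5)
Your high-level plan (build a witness expander $W$, embed $W\to J$ and $J\to W$ via the deterministic expander-path tool, take $\tilde{J}$ to be the image of $\Pi_{W\to J}$, and compose) is the paper's approach in outline, but as written it has two genuine gaps. First, you build a single constant-degree expander $W$ on all of $V(J)$ and propose to embed it into $J$ with $m^{o(1)}$ congestion and length using the decremental shortest-path primitive. That primitive only operates on graphs that are themselves $\phi$-expanders, and more fundamentally no such embedding of $W$ into an arbitrary $J$ need exist: if $J$ has a sparse cut (or is disconnected), any expander on $V(J)$ must route $\Omega(|V(J)|)$-weight across that cut, so low-congestion short-path embeddings are impossible. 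The paper first runs a deterministic expander decomposition of $J$ into degree-bucketed expander pieces $J_i[X]$ (\Cref{thm:expanderStatement}), builds one witness expander per piece with degrees proportional to $\bDeg_{J_i[X]}/(\phi 2^i)$ and edge weight $\phi 2^i$ (\Cref{thm:constructDetExpander}), and only then embeds within each piece via \Cref{thm:apspDataStructure}. This degree-proportional, weighted choice of $W$ is also what makes $\econg(\Pi_{J\to W})$ bounded by $m^{o(1)}$ \emph{without} a $\Delta_{\max}(J)$ factor — your constant-degree $W$ forces $\econg(\Pi_{J\to W})=\Omega(\Delta_{\max}(J))$, and that factor then cannot be absorbed (see below).

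Second, your accounting for item 1 does not reach the stated bound. Composing bounds post hoc via \Cref{fact:transitiveEmbeddingCong} gives at best $\vcong(\Pi_{J\to H'}\circ\Pi_{J\to\tilde{J}})\le \vcong(\Pi_{J\to H'})\cdot \econg(\Pi_{J\to\tilde{J}})$, and with your quantities this is $\vcong(\Pi_{J\to H'})\cdot m^{o(1)}\Delta_{\max}(J)$ — a \emph{product} of $\vcong(\Pi_{J\to H'})$ and $\Delta_{\max}(J)$, which is not dominated by the theorem's $\length(\Pi_{J\to H'})\cdot(\vcong(\Pi_{J\to H'})+\Delta_{\max}(J))$ (take both parameters large and $\length$ small). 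Your proposed fix, bounding $\econg(\Pi_{J\to\tilde{J}})$ by $\gamma_c\length(\Pi_{J\to H'})(\vcong(\Pi_{J\to H'})+\Delta_{\max}(J))$, is unsupported — your construction of $\Pi_{W\to J}$ and $\Pi_{J\to W}$ never consults $\Pi_{J\to H'}$, so no such dependence can appear, and even if it did, multiplying it by $\vcong(\Pi_{J\to H'})$ overshoots the claimed bound. The paper's mechanism is different and essential: while building $\Pi_{W\to J}$ (Task 2 of \Cref{alg:sparsify}) it tracks the vertex congestion of the \emph{composition} $\Pi_{J\to H'}\circ\Pi_{W\to J}$ at vertices of $H'$, and whenever a vertex exceeds the threshold $\eta_1\tau_1$ it deletes from the expander data structures every edge of $J$ whose $\Pi_{J\to H'}$-path uses that vertex, restarting in $O(\log m)$ phases justified by the pruning/volume bound of \Cref{thm:apspDataStructure}. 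This directly yields $\vcong(\Pi_{J\to H'}\circ\Pi_{W\to J})=O(\tau_1\log m)$ with $\tau_1\propto\length(\Pi_{J\to H'})(\vcong(\Pi_{J\to H'})+\Delta_{\max}(J))$, and the final bound follows from one application of \Cref{fact:transitiveEmbeddingCong} with the degree-free $\econg(\Pi_{J\to W})=O(\tau_2\log m)$. Without both the adaptive congestion capping against $\Pi_{J\to H'}$ and the degree-proportional witness construction, the stated guarantee is out of reach.
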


Given this result, one can obtain a deterministic dynamic algorithm to maintain a spanner of a graph $G$ undergoing edge deletions and vertex splits using the reduction presented in \cite[Sec.~5.1, arXiv]{chen2022maximum}. 
We point out that in \cite{chen2022maximum}, an analogous result to \Cref{lma:staticEmbed} was given with slightly better vertex congestion, length and runtime guarantees. However, the result in \cite{chen2022maximum} was randomized, while the goal of this section is to provide a determinstic algorithm.
Applying the reduction to \Cref{lma:staticEmbed}, one obtains the following result.

\begin{theorem}
\label{thm:spanner}
Given an $m$-edge $n$-vertex unweighted, undirected, dynamic graph $G$ undergoing update batches $U^{(1)}, U^{(2)}, \ldots$ consisting only of edge deletions and $\tilde{O}(n)$ vertex splits. There is a deterministic algorithm with parameter $1 \le L \le o\left(\frac{\log^{1/6} m}{\log\log m}\right)$, that maintains a spanner $H$ and an embedding $\Pi_{G \to H}$ such that for some $\gamma_{\ell}, \gamma_c = \exp(O(\log^{2/3} m \cdot \log\log m))$, we have
\begin{enumerate}
    \item \label{prop:sparsitySpanner} \underline{Sparsity and Low Recourse:} initially $H^{(0)}$ has sparsity $\tilde{O}(n \gamma_{\ell})$. At any stage $t \geq 1$, the algorithm outputs a batch of updates $U_H^{(t-1)}$ that when applied to $H^{(t-1)}$ produce $H^{(t)}$ such that $H^{(t)} \subseteq G^{(t)}$, $H^{(t)}$ consists of at most $\tilde{O}(n \gamma_{\ell})$ edges and 
    \[\sum_{t' \leq t} \Enc(U_H^{(t')}) = \tilde{O}\left(n\gamma_{\ell} + \sum_{t' \leq t} |U^{(t')}| \cdot n^{1/L}\gamma_{\ell}\right) \text{ , and}\] 
    \[\sum_{t' \leq t} |U_H^{(t')}| = \tilde{O}\left(\sum_{t' \leq t} |U^{(t')}| \cdot n^{1/L}\gamma_{\ell}\right) \text{ , and}\]
    \item \underline{Low Congestion, Short Paths Embedding:} $\length(\Pi_{G \to H}) \le (\gamma_{\ell})^{O(L)}$ and $\vcong(\Pi_{G \to H}) \le (\gamma_{\ell})^{O(L^2)} (\gamma_c)^{O(L)} \Delta_{\max}(G)$, and
    \item
    \underline{Low Recourse Re-Embedding:} \label{prop:lowRecourseSpanner}
    the algorithm further reports after each update batch $U^{(t)}$ at stage $t$ is processed, a (small) set $D^{(t)} \subseteq E(H^{(t)})$ of edges, such that for all other edges $e \in E(H^{(t)}) \setminus D^{(t)}$, there exists \textbf{no} edge $e' \in E(G^{(t)})$ whose embedding path $\Pi^{(t)}_{G \to H}(e')$ contains $e$ at the current stage but did not before the stage. The algorithm ensures that at any stage $t$, we have $\sum_{t' \leq t} |D^{(t')}| = \tilde{O}\left( \sum_{t' \leq t} |U^{(t)}| \cdot n^{1/L}(\gamma_c\gamma_{\ell})^{O(L^2)}\right)$, i.e. that the sets $D^{(t)}$ are roughly upper bounded by the size of $U^{(t)}$ on average. \label{item:lowReEmbed} 
\end{enumerate}
The algorithm takes initialization time $\tilde{O}(m \gamma_{\ell})$ and processing the $t$-th update batch $U^{(t)}$ takes amortized update time $\tilde{O}(\textsc{Enc}(U^{(t)}) \cdot n^{1/L}(\gamma_c\gamma_{\ell})^{O(L^2)}\Delta_{\max}(G))$.
\end{theorem}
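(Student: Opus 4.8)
The plan is to derive \Cref{thm:spanner} by plugging \Cref{lma:staticEmbed} into the black-box reduction of \cite[Section~5.1]{chen2022maximum}, which compiles a static spanner-with-embedding routine of exactly the form of \Cref{lma:staticEmbed} into a dynamic data structure maintaining a spanner under edge deletions and vertex splits. That reduction is deterministic apart from its calls to the static routine, so replacing the randomized sampling-based construction used there by our deterministic \Cref{lma:staticEmbed} immediately yields a deterministic dynamic algorithm; the remaining work is to check that the (slightly weaker) guarantees of \Cref{lma:staticEmbed} still propagate through the reduction to give the stated parameters.

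Concretely, I would first recall the structure of the reduction. It maintains an $L$-level chain $G = G_0 \supseteq G_1 \supseteq \dots \supseteq G_L = H$, where each $G_{i+1}$, together with the embedding $\Pi_{G_i \to G_{i+1}}$, is produced by a static call of the form $\textsc{Sparsify}(G_{i+1}^{\mathrm{old}}, G_i, \Pi_{G_i \to G_{i+1}^{\mathrm{old}}})$; measuring congestion \emph{through} the next-level spanner in this way is precisely what the composition guarantees (items~1--2) of \Cref{lma:staticEmbed} are designed for, and is the reason \Cref{lma:staticEmbed} takes an input embedding $\Pi_{J \to H'}$. Edge deletions and vertex splits to $G$ are pushed down the chain; a deletion of an edge lying on some embedding path breaks that path, and the reduction repairs it one level at a time, rebuilding each level from scratch once it has absorbed a bounded fraction of the updates it has seen (a rebuild schedule in the style of the $\Psi$-counting in \cite{chen2022maximum}). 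The returned spanner is $H = G_L$ with $\Pi_{G \to H} = \Pi_{G_{L-1} \to G_L} \circ \cdots \circ \Pi_{G_0 \to G_1}$, and the reported set $D^{(t)}$ is exactly the set of $H$-edges whose path-membership changed during the batch, i.e.\ the edges touched by re-embedding at the bottom level.

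Then I would track parameters through the $L$ levels. By item~2 of \Cref{lma:staticEmbed} the embedding length multiplies by at most $\gamma_\ell$ per level, so $\length(\Pi_{G \to H}) \le (\gamma_\ell)^{O(L)}$; the vertex congestion acquires a factor $\gamma_c \cdot \length(\cdot)$ per level on top of the previous congestion plus a $\Delta_{\max}$ term, and since the length grows geometrically this telescopes to $\vcong(\Pi_{G \to H}) \le (\gamma_\ell)^{O(L^2)}(\gamma_c)^{O(L)}\Delta_{\max}(G)$. Sparsity is immediate from item~3: already the first static call gives $\tilde{O}(n\gamma_\ell)$ edges on $n$ vertices, and no deeper level is larger. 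For running time and recourse, a static call at a level of size $m'$ costs $\tilde{O}(m' \gamma_\ell^2 \cdot (\gamma_\ell)^{O(L)})$ by the runtime bound of \Cref{lma:staticEmbed} together with the length bound just derived; the reduction's rebuild/re-embed schedule incurs an $n^{1/L}$ overhead in the number of updates handled per level, and charging rebuild and re-embedding costs to the updates between rebuilds gives the claimed amortized update time $\tilde{O}(\textsc{Enc}(U^{(t)}) \cdot n^{1/L}(\gamma_c\gamma_\ell)^{O(L^2)}\Delta_{\max}(G))$, the $\tilde{O}(m\gamma_\ell)$ initialization cost, and the recourse and $\sum_{t' \le t} |D^{(t')}|$ bounds. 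Finally, the hypothesis $L = o(\log^{1/6} m / \log\log m)$ combined with $\gamma_\ell, \gamma_c = \exp(O(\log^{2/3} m \log\log m))$ keeps $(\gamma_\ell\gamma_c)^{O(L^2)} = m^{o(1)}$, as required.

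The main obstacle I anticipate is not the arithmetic but making sure the two ways in which \Cref{lma:staticEmbed} is weaker than the randomized construction of \cite{chen2022maximum}---the extra additive $\Delta_{\max}(J)$ in the congestion bound, and the larger $\gamma_\ell, \gamma_c$---do not compound across recursion levels. In particular one must argue that $\Delta_{\max}$ at deeper levels stays controlled, so that a single factor $\Delta_{\max}(G)$ suffices in the final bound rather than an $L$-fold product (this should hold because the degree of each sparsified graph is bounded in terms of the vertex congestion of the embedding into it, which is already being tracked, possibly after an initial degree-splitting of $G$), and that the rebuild accounting of \cite{chen2022maximum} is indifferent to which static subroutine is plugged in. Re-deriving the precise exponents of $L$ in the congestion and recourse bounds under the weaker static guarantee is where the care is needed; the rest is a faithful replay of the reduction with a deterministic black box.
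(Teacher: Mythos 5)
Your proposal takes essentially the same route as the paper: the paper proves \Cref{thm:spanner} precisely by invoking the deterministic reduction of \cite[Sec.~5.1, arXiv]{chen2022maximum} with the randomized static sampling step replaced by the deterministic \Cref{lma:staticEmbed}, and it likewise only remarks (rather than re-derives) that the slightly weaker congestion/length/runtime guarantees of \Cref{lma:staticEmbed} propagate to the stated $(\gamma_\ell)^{O(L)}$, $(\gamma_\ell)^{O(L^2)}(\gamma_c)^{O(L)}\Delta_{\max}(G)$ bounds. Your extra bookkeeping of how the length, congestion, recourse, and $\Delta_{\max}$ terms compose across the $L$ levels is consistent with the theorem's parameters and fills in detail the paper leaves to the cited reduction.
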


Taking $L = \log^{1/9} m$ in \cref{thm:spanner} gives a parameter $\gamma_s = \exp(O(\log^{8/9}m\log\log m))$ such that the lengths of the embeddings, amortized recourse of the spanner, and amortized size of $D$ are all $O(\gamma_s)$. The vertex congestion and amortized runtime are bounded by $O(\gamma_s \Delta_{\max}(G))$. Ultimately, $\Delta_{\max}(G)$ will be chosen to be around $\exp(\log^{17/18} \log \log m) \gg \gamma_s$. We emphasize that the guarantees \ref{prop:sparsitySpanner} and \ref{item:lowReEmbed} of \Cref{thm:spanner} are with respect to the number of updates in each batch $U^{(t)}$ and \emph{not} with respect to the (possibly much larger) encoding size of $U^{(t)}$. This allows us to bound the recourse of our algorithm by a quantity that is $m^{o(1)}$, independent of the maximum degree.

The rest of this section is concerned with proving \Cref{lma:staticEmbed}.

\paragraph{Additional Tools.} At a high level, the proof of \Cref{lma:staticEmbed} follows by performing an expander decomposition, and producing a sparsifier on each expander by applying a data structure for finding short paths in decremental expanders. This differs from the previous approach of \cite{chen2022maximum} that first produced a sparsifier by uniformly sampling each edge of the expander, and then applying the decremental shortest path data structure to find an embedding. Here, we also use the data structure and embedding to deterministically find a sparsifier. To formalize this, we start by surveying some tools on expander graphs. Recall the definiton of expanders.

\begin{definition}[Expander]\label{def:expander}
Let $G$ be an unweighted, undirected graph and $\phi \in (0,1]$, then we say that $G$ is a $\phi$-expander if for all $\emptyset \neq S \subsetneq V$, $|E_G(S, V \setminus S)| \geq \phi \cdot \min\{\vol_G(S), \vol_G(V \setminus S)\}$.
\end{definition}

Sparse expander graphs can be constructed efficiently by a deterministic algorithm.

\begin{theorem}[see \cite{CGLNPS21}]\label{thm:constructDetExpander}
Given an integer $n > 1$, and a weight vector $\bw \in \mathbb{R}_{\geq 1}$, there is a deterministic algorithm $\textsc{ConstructExpander}(n, \bw)$, that constructs an (unweighted) $\phi_{Const}$-expander $W$ for $\phi_{Const} = \Theta(1)$ on $n$ vertices with $\bw \leq \bDeg_W \leq 18 \cdot \bw$. The runtime is $O(n + \|\bw\|_1)$.
\end{theorem}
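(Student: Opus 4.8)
The plan is to reduce the weighted-degree expander construction to the classical problem of explicitly building a \emph{bounded-degree} $\Theta(1)$-expander, by a ``blow-up and contract'' operation. First I would invoke a standard explicit construction: there is a small even absolute constant $d_0$ (any $d_0 \le 17$ works, e.g.\ $d_0 = 4$) and an absolute constant $\phi_0 = \Theta(1)$, together with a deterministic $O(N)$-time algorithm that, given any integer $N \ge 2$, outputs a $d_0$-regular multigraph $H$ on vertex set $[N]$ that is a $\phi_0$-expander in the sense of \cref{def:expander}. Such families follow from classical constructions (Margulis / Gabber--Galil, the zig-zag product of Reingold--Vadhan--Wigderson, or a subroutine internal to \cite{CGLNPS21}); flexibility over \emph{all} vertex counts $N$, rather than only $N$ of a special algebraic form, is obtained by the usual padding arguments, and the finitely many values of $N$ below a constant are handled by hand. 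I would allow $H$, and hence $W$, to be multigraphs with self-loops, consistent with this paper's convention that all graphs are multigraphs; as the instance $n=2$, $\bw=(1,100)$ shows, self-loops are genuinely unavoidable if one insists on a fixed multiplicative degree window for arbitrary $\bw \in \R_{\ge 1}^n$ (on two vertices, a self-loop-free multigraph forces equal degrees).

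Given such an $H$, I would build $W$ as follows. Set $\hat w_v \defeq \lceil \bw_v / d_0 \rceil$ for each $v$, so that $1 \le \hat w_v$ and, using $\bw_v \ge 1$, $\;\bw_v \le d_0 \hat w_v \le \bw_v + d_0 \le (1+d_0)\bw_v$. Let $N \defeq \sum_v \hat w_v$; then $n \le N \le n + \|\bw\|_1/d_0$, which is $O(n + \|\bw\|_1)$ (and in fact $O(\|\bw\|_1)$ since $\|\bw\|_1 \ge n$). Construct the $d_0$-regular $\phi_0$-expander $H$ on $[N]$, partition $[N]$ arbitrarily into blocks $\{B_v\}_{v\in[n]}$ with $|B_v| = \hat w_v$, and define $W$ to be the multigraph obtained from $H$ by contracting each block $B_v$ to a single vertex $v$, retaining every edge — so edges internal to $B_v$ become self-loops at $v$ and edges between $B_u$ and $B_v$ become parallel $u$--$v$ edges. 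With the standard convention that a self-loop contributes $2$ to the degree, $\bDeg_W(v) = \sum_{x \in B_v} \deg_H(x) = d_0 \hat w_v \in [\bw_v, (1+d_0)\bw_v] \subseteq [\bw_v, 18\bw_v]$, which is exactly the required degree bound.

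For the expansion bound, fix any nonempty proper $S \subsetneq [n]$ and lift it to $\tilde S \defeq \bigcup_{v \in S} B_v \subseteq [N]$, which is nonempty and proper. Contraction does not change the multiset of edges crossing the partition, so $|E_W(S, [n]\setminus S)| = |E_H(\tilde S, [N]\setminus \tilde S)| \ge \phi_0 \min\{\vol_H(\tilde S),\, \vol_H([N]\setminus \tilde S)\}$ since $H$ is a $\phi_0$-expander. Because $H$ is $d_0$-regular, $\vol_H(\tilde S) = d_0|\tilde S| = \sum_{v\in S} d_0\hat w_v = \vol_W(S)$, and likewise for the complement, so $|E_W(S, [n]\setminus S)| \ge \phi_0 \min\{\vol_W(S),\, \vol_W([n]\setminus S)\}$; hence $W$ is a $\phi_0$-expander and we take $\phi_{Const} \defeq \phi_0 = \Theta(1)$. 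For the running time, the bounded-degree construction costs $O(N) = O(n+\|\bw\|_1)$, computing the $\hat w_v$ and the block partition costs $O(n)$, and forming $W$ is one pass over $V(H) \cup E(H)$, of total size $O(N + d_0 N) = O(N)$; altogether $O(n + \|\bw\|_1)$.

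The main obstacle is the first step: everything downstream is a short, mechanical reduction, so the real content lies in the existence of an explicit, deterministic, linear-time family of $d_0$-regular $\Theta(1)$-expanders valid for \emph{every} vertex count $N$. I would lean on known constructions rather than reprove this, but it is the one place requiring care — specifically, making the construction size-flexible for all $N$ and keeping $d_0$ small enough ($\le 17$) that the rounding-induced degree inflation stays within the factor $18$. A secondary point worth stating explicitly is the reliance on self-loops (equivalently, working with multigraphs) to accommodate weight vectors with a dominant coordinate; this is necessary, and harmless for the expander notion used here, since self-loops never cross a cut and only increase volumes.
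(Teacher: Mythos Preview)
The paper does not prove this theorem; it is stated as a black-box result with a citation to \cite{CGLNPS21}, so there is no ``paper's own proof'' to compare against. Your proposal supplies exactly the kind of self-contained argument one would expect behind such a citation: the standard blow-up-and-contract reduction from weighted-degree expanders to explicit bounded-degree expanders. The degree accounting, the preservation of cut sizes and volumes under contraction, and the runtime analysis are all correct, and your observation that self-loops are unavoidable for highly skewed $\bw$ is apt. The only point to flag is that the paper's definition of $\deg_G(v)$ as ``the number of incident edges'' is slightly ambiguous for self-loops; your argument uses the convention that a self-loop contributes $2$ to the degree, which is standard and necessary for the lower bound $\bw_v \le \bDeg_W(v)$ to hold after contraction --- it would be worth stating this convention explicitly if you write this out formally.
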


We can further decompose any graph into a collection of expanders. The proof of this statement follows almost immediately from \cite{SW19, CGLNPS21} and is therefore deferred to \Cref{app:expanderStatement}.

\begin{restatable}[see Corollary 6.2 in arXiv v2 in \cite{CGLNPS21}]{theorem}{decompose}\label{thm:expanderStatement}
Given an unweighted, undirected graph $G$, there is an algorithm $\textsc{Decompose}(G, r)$ that computes an edge-disjoint partition of $G$ into graphs $G_0, G_1, \dots, G_{\ell}$ for $\ell = O(\log n)$ such that for each $0 \leq i \leq \ell$, $|E(G_{i})| \leq 2^{i} n$ and for each $0 < i \leq \ell$ and nontrivial connected component $X$ of $G_i$, $G_i[X]$ is a $\phi$-expander for $\phi = \tilde{\Omega}(1/\exp((\log m)^{2/3}))$, and each $x \in X$ has $\deg_{G_i}(x) \geq \phi 2^{i}$. The algorithm runs in time $\tilde{O}(m \cdot \exp\left({O(\log(m)^{2/3} \log\log(m))}\right))$.
\end{restatable}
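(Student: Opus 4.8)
The plan is to derive \Cref{thm:expanderStatement} from the deterministic expander decomposition of \cite{CGLNPS21} (which rests on the cut-matching-game framework of \cite{SW19}), by peeling off expander clusters one degree scale at a time. The ingredient I would quote is a single-scale primitive: given an unweighted graph $H$, a conductance target $\phi = \tilde{\Omega}(1/\exp((\log m)^{2/3}))$, and a degree scale $\theta \ge 1$, one can deterministically partition $V(H)$ into clusters so that each non-trivial cluster $C$ induces a $\phi$-expander with $\phi\theta \le \deg_{H[C]}(v) \le O(\theta)$ for every $v \in C$, while the number of edges not contained in any cluster is at most $\tilde{O}(\phi)\bigl(|E(H)| + \theta \cdot |V(H)|\bigr)$; the running time is $\tilde{O}(|E(H)|) \cdot \exp(O((\log m)^{2/3}\log\log m))$. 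This is exactly the expander-decomposition routine of \cite{CGLNPS21} (the $\exp((\log m)^{2/3})$ loss being the known cost of derandomizing balanced cut), together with the ``aggressive trimming'' of \cite{SW19} that enforces the near-uniform degree bounds by pruning away the small fraction of volume whose vertices have too few edges inside their cluster.

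Given that primitive, I would run a top-down sweep over the scales $i = \lceil\log_2 m\rceil, \lceil\log_2 m\rceil - 1, \ldots, 1$, maintaining a residual graph $R$ with $R = G$ initially. At scale $i$, apply the primitive to $R$ with $\theta = 2^i$; declare $G_i$ to be the disjoint union of the induced subgraphs on the returned clusters; delete the finalized clusters (their vertices and internal edges) from $R$; and pass the remaining $\tilde{O}(\phi)(|E(R)| + 2^i n)$ edges down to scale $i-1$. After scale $1$, whatever is left of $R$ becomes $G_0$, which carries no expansion requirement. Correctness is then a matter of bookkeeping: each $G_i$ is a vertex-disjoint union of $\phi$-expanders with minimum degree $\ge \phi 2^i$ because the primitive outputs a single partition at that scale; $|E(G_i)| \le O(2^i n)$ because the clusters are vertex-disjoint and every vertex in them has degree $O(2^i)$ (the constant is brought to $1$ by tuning the threshold $\phi\theta$ versus $\theta$, or by a shift of index); the scale index runs over $0 \le i \le \lceil\log_2 m\rceil = O(\log n)$; the pieces $G_0,\ldots,G_\ell$ partition $E(G)$ since each edge is assigned to the highest scale at which both endpoints land in a common cluster (and to $G_0$ otherwise); and since $\tilde{O}(\phi)\cdot 2 < 1$, unrolling the recursion $|E(R_{i-1})| \le \tilde{O}(\phi)|E(R_i)| + \tilde{O}(\phi)2^i n$ yields $|E(R_i)| \le \tilde{O}(\phi)\,2^i n$, so the residue entering $G_0$ has at most $\tilde{O}(\phi) n \le n$ edges. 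The running time is $O(\log n)$ applications of the primitive, each on a graph with at most $m$ edges, hence $\tilde{O}(m)\exp(O((\log m)^{2/3}\log\log m))$ overall.

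The step I expect to be the crux — and the reason the paper can call the result ``almost immediate'' from \cite{SW19, CGLNPS21} rather than literally immediate — is obtaining the single-scale primitive in exactly the form above: one must combine the base expander decomposition with trimming so that the returned clusters are simultaneously $\phi$-expanders, near-regular at the prescribed scale $\theta$, and leave behind only an $\tilde{O}(\phi)$-fraction of the mass (plus the unavoidable $\tilde{O}(\phi)\theta n$ term from low-degree vertices), all deterministically; and then one must verify that the residual edge count telescopes as claimed so that $G_0$ stays within $n$ edges and the sweep uses only $O(\log n)$ calls. Carrying out the trimming analysis and pinning down the degree-bucketing constants is what I would relegate to \Cref{app:expanderStatement}.
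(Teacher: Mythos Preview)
Your top-down peeling approach is in the same spirit as the paper's, but the paper's execution is simpler and avoids the non-standard primitive you posit. The trick you are missing is \emph{self-loops}: at scale $i$, the paper adds $2^i$ self-loops to every vertex of the current residual graph and then applies the plain expander decomposition of \Cref{thm:getExpander} (crossing edges at most $m/4$, no trimming whatsoever). The self-loops ensure each vertex has volume at least $2^i$ in the decomposed graph, so the $\phi$-expansion applied to the singleton cut $\{v\}$ inside its cluster immediately yields $\deg_{G_i}(v) \geq \phi\cdot 2^i$. The edge bound $|E(G_i)|\le 2^i n$ then follows by a one-line induction: the crossing edges at level $i$ number at most $(|E(G_i)| + 2^i n)/4 \le 2^{i-1}n$, and these become $G_{i-1}$. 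Vertices are never removed; only edges are partitioned.

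By contrast, the single-scale primitive you quote bundles in an \emph{upper} degree bound $\deg \le O(\theta)$ that the trimming of \cite{SW19} does not provide (trimming enforces a lower bound, not an upper one), so your argument for $|E(G_i)|\le O(2^i n)$ via ``every vertex in a cluster has degree $O(2^i)$'' is not justified as written. There is also an inconsistency in your residual-graph description: you delete clustered vertices from $R$ while simultaneously passing down crossing edges, but those crossing edges have endpoints in the clusters you just deleted. Both issues are repairable --- drop the vertex removal and use your telescoping bound on $|E(R_i)|$ in place of the max-degree argument --- and once you notice that the $\theta\cdot|V(H)|$ leftover term in your primitive is precisely what $2^i$ self-loops contribute to the edge count, your construction collapses to the paper's.
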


Further, we use the following result from \cite{CS21} (see Theorem 3.9 in arXiv v1 in \cite{CS21}). Given a $\phi$-expander undergoing edge deletions the data structure below implicitly maintains a vertex subset that stays an expander using standard expander pruning techniques (see for example \cite{NSW17,SW19}). Further, on this vertex subset, it can output a path of length $m^{o(1)}$ between any pair of queried vertices. We note that we cannot use \cite{CS21} directly and instead replace its internal randomized expander computation by using Corollary 6.2 in arXiv v2 in \cite{CGLNPS21} to compute the expanders deterministically.

\begin{theorem}[see \cite{CS21, CGLNPS21}]\label{thm:apspDataStructure}
There is a deterministic data structure that when given an unweighted, undirected graph $G$ that is $\phi$-expander for $\phi = \tilde{\Omega}(1/\exp((\log m)^{2/3}))$.  
explicitly maintains a monotonically increasing vertex subset $\hat{V} \subseteq V(G)$ and handles the following operations:
\begin{itemize}
    \item $\textsc{Delete}(e)$: Deletes edge $e$ from $E(G)$ and then explicitly outputs a set of vertices that were added to $\hat{V}$ due to the edge deletion. 
    \item $\textsc{GetPath}(u,v)$: for any $u,v \in V(G) \setminus \hat{V}$ returns a simple path consisting of at most $\gamma_{{ExpPath}}$ edges
    between $u$ and $v$ in the graph $G[V(G) \setminus \hat{V}]$. Each path query can be implemented in time $\gamma_{{ExpPath}}$, where $\gamma_{{ExpPath}} = \exp\left({O(\log(m)^{2/3} \log\log(m))}\right)$ for some $\gamma_{ExpPath} \geq 1/\phi$. The operation does not change the set $\hat{V}$.
\end{itemize}
The data structure ensures that after $t$ deletions $\vol_{G^{(0)}}(\hat{V}) \leq \gamma_{{del}} t/\phi$ for some constant $\gamma_{{del}} = O(1)$. The total time to initialize the data structure and to process all deletions is $O(|E(G^{(0)})| \gamma_{{ExpPath}})$.
\end{theorem}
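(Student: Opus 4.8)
The plan is to derive \cref{thm:apspDataStructure} from the decremental expander data structure of Chuzhoy and Saranurak (Theorem 3.9 in arXiv v1 of \cite{CS21}) by replacing its one randomized ingredient — an internal computation that decomposes a graph into expanders — with the deterministic expander decomposition of \cite{CGLNPS21} (Corollary 6.2, arXiv v2). Recall the structure of the \cite{CS21} data structure on the decremental $\phi$-expander $G$: it maintains (i) an \emph{expander pruning} component that, via the deterministic pruning of \cite{NSW17, SW19}, keeps a monotonically growing set $\hat V$ with $G[V\setminus\hat V]$ a $\Theta(\phi)$-expander and $\vol_{G^{(0)}}(\hat V)$ growing by $O(1/\phi)$ per deletion; and (ii) a \emph{short-path oracle} on the surviving core, which precomputes (and refreshes after pruning events) a low-dilation, low-congestion embedding of a bounded-degree expander $W$ into $G$ and answers $\textsc{GetPath}(u,v)$ by routing $u$ to $v$ across $O(\log n)$ edges of $W$ and unfolding each into its embedding path in $G$. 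Component (i) is already deterministic; the only randomness lies in the expander-construction step used to build the embedding in (ii).

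First I would substitute that step with Corollary 6.2 of \cite{CGLNPS21}, which deterministically decomposes any $m$-edge graph into expanders of conductance $\tilde\Omega(1/\exp(\log^{2/3}m))$ in time $\tilde O(m)\cdot\exp(O(\log^{2/3}m\,\log\log m))$, and then re-derive the embedding of $W$ into $G$ from these deterministic expanders as in \cite{CS21}. With this substitution the hypothesis on $G$ becomes ``$\phi$-expander for $\phi=\tilde\Omega(1/\exp(\log^{2/3}m))$'', the embedded expander $W$ has dilation and congestion $\tilde O(1/\phi)$ in $G$, and each $\textsc{GetPath}$ query returns a simple path on at most $\gamma_{ExpPath}:=\exp(O(\log^{2/3}m\,\log\log m))\ge 1/\phi$ edges in time $\gamma_{ExpPath}$.

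Next I would re-run the amortized analysis of \cite{CS21} with the deterministic subroutine's running time substituted for the randomized one. The pruning bound is untouched, giving $\vol_{G^{(0)}}(\hat V)\le\gamma_{del}\,t/\phi$ after $t$ deletions for a constant $\gamma_{del}$; since the total work is dominated by the one-time and refresh constructions of the embedding of $W$ together with deterministic expander pruning — costs that telescope geometrically as the core shrinks — the combined initialize-plus-all-deletions time is $O(|E(G^{(0)})|\,\gamma_{ExpPath})$. Monotonicity of $\hat V$ and the read-only behavior of $\textsc{GetPath}$ are inherited verbatim.

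The delicate point, and the main obstacle, is bookkeeping rather than a new idea: one must locate \emph{every} invocation of the randomized expander computation in \cite{CS21} (it may recur inside the internal decomposition and occur once per embedding refresh), replace each by Corollary 6.2 of \cite{CGLNPS21}, and verify that these deterministic substitutions compose only a constant number of times, so that the cumulative loss in conductance, path length, and running time stays a single $\exp(O(\log^{2/3}m\,\log\log m))$ factor and the geometric-decay accounting underlying the $O(|E|\gamma_{ExpPath})$ total-time bound still goes through.
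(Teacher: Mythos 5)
Your proposal matches the paper's approach exactly: the paper does not give a detailed proof but only notes that one takes the decremental short-path data structure on expanders from \cite{CS21} (Theorem 3.9, arXiv v1) and replaces its internal randomized expander computation with the deterministic expander decomposition of \cite{CGLNPS21} (Corollary 6.2, arXiv v2), which is precisely the substitution-and-reanalysis you describe, with the expander-pruning component already deterministic.
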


\begin{algorithm}[!hp]
\small
\SetKwRepeat{Do}{do}{while}
$\tau_1\defeq 2\gamma_{{ExpPath}} \length(\Pi_{J \to H'}) \gamma_{del} \left(\vcong(\Pi_{J \to H'}) + \Delta_{max}(J) \right)/\phi$; $\tau_{2} \defeq {\gamma_{{ExpPath}}\gamma_{{del}}}/\phi$.\\
\tcc{Task 1: Decompose $J$ into expanders and construct witness graph $W$}
$J_0, J_1, \ldots, J_{\ell} \gets \textsc{Decompose}(J, r)$.\;
$W \gets (V, \emptyset)$.\;
\ForEach(\label{lne:constructWitnessGraph}){$i \in [0,\ell]$ and connected component $X$ in $J_i$}{
    For $i > 0$, invoke $\textsc{ConstructExpander}(|X|, \bDeg_{J_i[X]} / (\phi 2^i))$ and identify the vertices of the expander $W_{J_i[X]}$ with the vertices of $X$ (arbitrarily); for $i = 0$, set $W_{J_i[X]}$ to $J_i[X]$. \\
    Add the edges of $W_{J_i[X]}$ to $W$ with weight $\phi2^i$.\label{lne:defCapWitness}\\
}

\tcc{Task 2: Embed Witness Graph $W$ into $J$ with low vertex congestion and small support.}
$\eta_1 \gets 0$; \label{lne:startSecondPhase} \lForEach{$e \in E(W)$}{$\Pi_{W \rightarrow J}(e) \gets \begin{cases} e & e \in E(J) \\ \emptyset & \text{otherwise}\end{cases}$}

\Do(\label{lne:doWhile1}){$\exists e \in E(W)$ such that $\Pi_{W \rightarrow J}(e) = \emptyset$}{
    $\eta_1 \gets \eta_1 + 1$.\\
    
    \ForEach{$i \in (0,\ell]$ and connected component $X$ in $J_i$}{
        $J^{APSP}_{J_i[X]} \gets $ a copy of $J_i[X]$.\\
        Initialize a data structure from \Cref{thm:apspDataStructure} denoted by $\mathcal{DS}_{J_i[X]}$ on graph $J^{APSP}_{J_i[X]}$ with parameter $\phi \defeq \phi$ maintaining set $\hat{V}_{J_i[X]}$.
    }

    \While(\label{lne:whileConditionEmbed}){$\exists e = (u,v) \in E(W)$ such that $\Pi_{W \rightarrow J}(e) = \emptyset$ \textbf{and} for some $i \in [0,\ell]$ and connected component $X$ in $J_i$, we have $u,v \in X \setminus \hat{V}_{J_i[X]}$}{
        $\Pi_{W \rightarrow J}(e) \gets 
 \mathcal{DS}_{J_i[X]}.\textsc{GetPath}(u,v)$.\label{lne:computeMCFlowGetPath1}\\

        \While(\label{lne:whileEmbed1}){$\exists x \in V(H'),$  $\vcong(\Pi_{J \to H'} \circ \Pi_{W \to J}, x) \geq \eta_1 \cdot \tau_1$}{
            \ForEach{edge $e'$ in $J^{APSP}_{J_i[X]}$ with $x \in \Pi_{J \to H'}(e')$}{
                    Remove edge $e'$ from $J^{APSP}_{J_i[X]}$ via $\mathcal{DS}_{J_i[X]}.\textsc{Delete}(e')$. }
            }
        }    
}
\tcc{Task 3: Embed $J$ into witness graph $W$ with low edge congestion.}
$\eta_2 \gets 0$; \label{lne:startThirdPhase}\lForEach{$e \in E(J)$}{$\Pi_{J \rightarrow W}(e) \gets \begin{cases} e & e \in E(W) \\ \emptyset & \text{otherwise}\end{cases}$.}

\Do(\label{lne:secondDoWhileStart}){$\exists e \in E(J)$ such that $\Pi_{J \rightarrow W}(e) = \emptyset$}{
    $\eta_2 \gets \eta_2 +1$.\\    
    \ForEach{$i \in (0,\ell]$ and connected component $X$ in $J_i$}{
        $W^{APSP}_{J_i[X]} \gets $ an unweighted copy of $W_{W_{J_i[X]}}$.\\
        Initialize a data structure from \Cref{thm:apspDataStructure} denoted by $\mathcal{DS}_{W_{J_i[X]}}$ on graph $W^{APSP}_{J_i[X]}$ with parameter $\phi_{Const}$ (see \Cref{thm:constructDetExpander}) maintaining set $\hat{V}_{W_{J_i[X]}}$.
    }

    \While(\label{lne:while2ConditionEmbed}){$\exists e = (u,v) \in E(J)$ such that $\Pi_{J \rightarrow W}(e) = \emptyset$ \textbf{and} for some $i \in [0,\ell]$ and connected component $X$ in $J_i$, we have $u,v \in X \setminus \hat{V}_{W_{J_i[X]}}$}{
        $\Pi_{J \rightarrow W}(e) \gets \mathcal{DS}_{W_{J_i[X]}}.\textsc{GetPath}(u,v)$.\label{lne:computeMCFlowGetPath2}\;

     \While(\label{lne:removeCongestedEdges}){$\exists e' \in E(W_{J_i[X]})$ for some $i$ and $X$ with $\econg(\Pi_{J \rightarrow W}, e') \geq \eta_2 \cdot \tau_{2}$}{
        Remove edge $e'$ from $W^{APSP}_{J_i[X]}$ via $\mathcal{DS}_{W_{J_i[X]}}.\textsc{Delete}(e')$. }
        
    }    
}
\Return $(\tilde{J} \gets \Pi_{W \to J}(W), \Pi_{J \to \tilde{J}} \gets \Pi_{W \to J} \circ \Pi_{J \to W})$
\caption{$\textsc{Sparsify}(H', J, \Pi_{J \to H'})$}
\label{alg:sparsify}
\end{algorithm}

\paragraph{The Algorithm.} We can now use these tools to give \Cref{alg:sparsify} that implements the procedure $\textsc{Sparsify}(H', J, \Pi_{J \to H'})$. Instead of attempting to directly find a graph $\tilde{J}$, the procedure in \Cref{alg:sparsify} divides the task into three subtasks:
\begin{itemize}
    \item \textbf{(Task 1)} The algorithm performs an expander decomposition on $J$, and applies \Cref{thm:constructDetExpander} to deterministically build a graph on each expander, which we refer to as a \emph{witness} (as in previous works). Thus the union of these graphs is a witness graph $W$ on the vertex set of $J$. $W$ will be a sparse graph with $\O(n)$ edges.
    \item \textbf{(Task 2)} Then, starting in \Cref{lne:startSecondPhase}, the algorithm finds the graph $\tilde{J}$ by finding an embedding $\Pi_{W \to J}$ of $W$ into $J$. We then later take $\tilde{J}$ to be the image of $\Pi_{W \to J}$ which is again sparse since $W$ is sparse and the embedding maps to paths of short length.
    \item \textbf{(Task 3)} Finally, starting in \Cref{lne:startThirdPhase} the algorithm finds an embedding $\Pi_{J \to W}$ from $J$ into $W$. This allows us to take the embedding from $J$ to $\tilde{J}$ as $\Pi_{W \to J} \circ \Pi_{J \to W}$ which maps each edge in $J$ to a path in the image of $\Pi_{W \to J}$, i.e. $\tilde{J}$.
\end{itemize}
Crucially, we require the embeddings $\Pi_{W \to J}$ and $\Pi_{J \to W}$ to be of low congestion. To achieve this goal, we embed using only short paths, found efficiently via shortest-path data structures.

\paragraph{Analysis.} We start by proving a simple invariant that states that the congestion of $\Pi_{J \to H'} \circ \Pi_{W \to J}$ increases slowly throughout the first do-while loop.

\begin{invar}\label{inv:steadyIncreaseCong}
$\vcong(\Pi_{J \to H'} \circ \Pi_{W \to J}) \leq (\eta_1 + \frac{1}{2}) \cdot \tau_1$ holds before and after each iteration of the do-while loop starting in \Cref{lne:doWhile1}.
\end{invar}
\begin{proof}
Initially, i.e., before the first do-while loop iteration, each edge mapped by $\Pi_{W \to J}$ is either already in $J$ and mapped to itself or is set to an empty path. Thus, each edge in $J$ is used at most once and therefore initially $\econg(\Pi_{W \to J}) \leq 1$. Using \Cref{fact:transitiveEmbeddingCong}, we thus have that initially $\vcong(\Pi_{J \to H'} \circ \Pi_{W \to J}) \leq \vcong(\Pi_{J \to H'}) \leq \frac{1}{2}\tau_1$.

Let us take the inductive step for $\eta_1$-th iteration. Since before the $\eta_1$-th iteration, the embeddings and $\eta_1$ are in the state of either the base case or in the same state as after the $(\eta_1-1)$-th iteration, we have that the claim holds before the iteration. Further, we observe that $\vcong(\Pi_{J \to H'} \circ \Pi_{W \to J}, x)$ increases during the iteration only in \Cref{lne:computeMCFlowGetPath2} when a new path is added to the embedding $\Pi_{W \to J}$. Since thereafter, vertices with congestion at least $\eta_1 \cdot \tau_1$ are removed from all data structures, thus can not appear on any paths in this iteration of the do-while loop, we have that the congestion can be exceeded by at most as much as a single path can contribute to the congestion. 

Let us finally bound the maximal contribution of embedding a single edge $e \in E(W)$ to the maximum vertex congestion. Let us first observe that the graphs $J_i$ with $2^i > \Delta_{max}(J)/\phi$ are empty since by  \Cref{thm:expanderStatement} every vertex $x$ in a non-trivial component $X$, we have $deg_{J_i}(x) \geq \phi 2^i$ with $\Delta_{max}(J_i) \leq \Delta_{max}(J_i)$ since $J_i \subseteq J$. Thus, the weight of any edge $e \in E(W)$ as defined in \Cref{lne:defCapWitness} is at most $\Delta_{max}(J)/\phi$. Finally, we use that when we embed $e$ in \Cref{lne:computeMCFlowGetPath1}, we do so by using a simple path from the data structure from \Cref{thm:apspDataStructure}. Thus, on each vertex, we add at most $\Delta_{max}(J)/\phi \leq \tau_1/2$ congestion.
\end{proof}

It thus only remains to upper bound the number of iterations that the first do-while loop takes to bound the total vertex congestion of $\Pi_{J \to H'} \circ \Pi_{W \to J}$. 

\begin{claim}
At the end of \Cref{alg:sparsify}, we have $\eta_1 = O(\log m)$. 
\end{claim}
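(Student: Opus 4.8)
The plan is to bound the number of passes of the do-while loop of \Cref{lne:doWhile1} by tracking how many edges of the witness graph $W$ are still unembedded. Write $U_{\eta_1}$ for the number of $e\in E(W)$ with $\Pi_{W\to J}(e)=\emptyset$ at the end of the $\eta_1$-th pass. Since the initialization in \Cref{lne:startSecondPhase} already embeds every level-$0$ edge (those lie in $J$), and since the inner while loop of \Cref{lne:whileConditionEmbed} terminates only once no unembedded edge has both endpoints in a common \emph{unpruned} expander component, the edges counted by $U_{\eta_1}$ are precisely those of some $W_{J_i[X]}$, $i\ge 1$, incident to the pruned set $\hat{V}_{J_i[X]}$ built during that pass. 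It therefore suffices to show $U_{\eta_1}$ shrinks geometrically, which forces termination once $\eta_1 = O(\log |E(W)|) = O(\log m)$.

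First I would control the pruned volume in a single pass. Every call $\mathcal{DS}_{J_i[X]}.\textsc{Delete}(e')$ is caused by some vertex $x\in V(H')$ whose congestion in $\Pi_{J\to H'}\circ\Pi_{W\to J}$ reached $\eta_1\tau_1$; by \Cref{inv:steadyIncreaseCong} the congestion at $x$ was at most $(\eta_1-\tfrac12)\tau_1$ when the pass started, so each newly over-congested vertex absorbs at least $\tfrac12\tau_1$ of the congestion \emph{added} during that pass. The congestion added in a pass is at most $\gamma_{ExpPath}\cdot\length(\Pi_{J\to H'})$ times the total $W$-weight of the edges embedded in the pass (each \textsc{GetPath} returns a path of $\le\gamma_{ExpPath}$ edges of $J$, each mapped by $\Pi_{J\to H'}$ to $\le\length(\Pi_{J\to H'})$ edges of $H'$), and those newly embedded edges form a subset of those unembedded after the previous pass, whose total $W$-weight is $O(\gamma_{del}/\phi)$ times the number of deletions of the previous pass (using the degree bounds below and \Cref{thm:apspDataStructure}). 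Multiplying by the at-most-$\vcong(\Pi_{J\to H'})$ deletions triggered per over-congested vertex, and applying the pruning guarantee $\vol_{J_i[X]}(\hat{V}_{J_i[X]})\le\gamma_{del}t_{i,X}/\phi$ of \Cref{thm:apspDataStructure} (with $t_{i,X}$ the number of deletions in piece $(i,X)$), bounds the pruned volume of the current pass in terms of the deletion count of the previous one.

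To turn pruned volume into unembedded $W$-edges, I would use that by \Cref{thm:constructDetExpander} a vertex $v$ in piece $J_i[X]$ has $\deg_{W_{J_i[X]}}(v)\le 18\deg_{J_i[X]}(v)/(\phi 2^i)$, so the number of $W_{J_i[X]}$-edges incident to $\hat{V}_{J_i[X]}$ is at most $\tfrac{18}{\phi 2^i}\vol_{J_i[X]}(\hat{V}_{J_i[X]})$; summing over pieces (keeping the $2^{-i}$ and using $2^i\ge2$ for $i\ge1$) and substituting the value $\tau_1 = 2\gamma_{ExpPath}\length(\Pi_{J\to H'})\gamma_{del}(\vcong(\Pi_{J\to H'})+\Delta_{\max}(J))/\phi$ from the first line of \Cref{alg:sparsify} — whose $(\vcong(\Pi_{J\to H'})+\Delta_{\max}(J))$ factor is designed precisely to swallow both the $\vcong(\Pi_{J\to H'})$ from the per-vertex deletion count and the maximum witness-edge weight $\Delta_{\max}(J)$ (recall $W$-edges sit only on levels with $2^i\le\Delta_{\max}(J)/\phi$, hence have weight $\phi 2^i\le\Delta_{\max}(J)$), and whose $\gamma_{ExpPath}\length(\Pi_{J\to H'})\gamma_{del}/\phi$ factor cancels the routing overhead and the pruning constant — the chain is meant to collapse to the geometric recursion $U_{\eta_1}\le U_{\eta_1-1}/2$. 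Hence $U_{\eta_1}<1$, i.e., the loop has halted, by the time $\eta_1=O(\log|E(W)|)=O(\log m)$; and combined with \Cref{inv:steadyIncreaseCong} this simultaneously yields the $\vcong(\Pi_{J\to H'}\circ\Pi_{W\to J})\le O(\log m)\cdot\tau_1$ bound needed downstream.

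The step I expect to be the main obstacle is exactly this constant bookkeeping: the expander construction of \Cref{thm:constructDetExpander} loses a factor $18$ at every hop between $W$ and $J$, the pruning bound of \Cref{thm:apspDataStructure} contributes a $\gamma_{del}/\phi$, and $\tfrac1\phi$ together with ratios of the form $(\vcong(\Pi_{J\to H'})+\Delta_{\max}(J))/\Delta_{\max}(J)$ pervade the estimates, so the telescoping afforded by the definition of $\tau_1$ has to be tracked carefully — one genuinely needs to retain the $2^{-i}$ factors, the restriction of $W$-edges to levels with $2^i\le\Delta_{\max}(J)/\phi$, and the fact that the data structures $\mathcal{DS}_{J_i[X]}$ are re-initialized at the top of each pass (which is what lets the congestion of $\Pi_{J\to H'}\circ\Pi_{W\to J}$ keep accumulating across passes while being re-attacked from scratch). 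A secondary subtlety is that an unembedded edge $(u,v)$ may be routable through \emph{any} level $i$ whose expander component unpruned-ly contains both $u$ and $v$, not merely the level it originated from, so the identification of $U_{\eta_1}$ with edges incident to pruned sets must be carried out across all levels simultaneously, exactly as the stopping condition of the inner while loop of \Cref{lne:whileConditionEmbed} is phrased.
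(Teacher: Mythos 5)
There is a genuine gap in the way you close the recursion. Your chain bounds the quantities of pass $\eta$ by the \emph{deletion count of pass $\eta-1$}: deletions in pass $\eta$ $\le \frac{2\vcong(\Pi_{J\to H'})}{\tau_1}\cdot(\text{congestion added in pass }\eta)$, congestion added $\le \gamma_{ExpPath}\length(\Pi_{J\to H'})\cdot \vol_W(E^{embed}_\eta) \le \gamma_{ExpPath}\length(\Pi_{J\to H'})\cdot \vol_W(\text{unembedded after pass }\eta-1)$, and that last volume $\le \frac{18\gamma_{del}}{\phi}\cdot(\text{deletions in pass }\eta-1)$. Multiplying these and substituting $\tau_1 = 2\gamma_{ExpPath}\length(\Pi_{J\to H'})\gamma_{del}\left(\vcong(\Pi_{J\to H'})+\Delta_{\max}(J)\right)/\phi$ leaves a contraction factor of order $\frac{18\,\vcong(\Pi_{J\to H'})}{\vcong(\Pi_{J\to H'})+\Delta_{\max}(J)}$ (with further constants from the count-to-volume conversions), which is \emph{not} below $1$ in general --- e.g.\ whenever $\vcong(\Pi_{J\to H'})\gtrsim\Delta_{\max}(J)$, which is exactly the regime of the intended application --- so the claimed $U_{\eta_1}\le U_{\eta_1-1}/2$ does not follow and termination in $O(\log m)$ passes is not established. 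The $\Delta_{\max}(J)$ term in $\tau_1$ is there to give slack for a \emph{single} embedding path in \Cref{inv:steadyIncreaseCong} (an edge of $W$ can have weight up to $\Delta_{\max}(J)/\phi$), not to cancel the factor-$18$ losses of your cross-pass chain.

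The paper avoids this by never comparing across consecutive passes: within one pass it shows $\sum_{i,X}\vol_{J_i[X]}(\hat V_{J_i[X]}) \ge \vol_W(E^{notEmbed})/18$ (every still-unembedded edge has an endpoint in a pruned set, plus the degree bound of \Cref{thm:constructDetExpander}) and, via the same congestion/pruning accounting you use, $\vol_W(E^{embed}) \ge \sum_{i,X}\vol_{J_i[X]}(\hat V_{J_i[X]})$. Since $E^{embed}$ and $E^{notEmbed}$ are \emph{disjoint} subsets of the set of edges unembedded at the start of the pass, any constant-factor comparison $\vol_W(E^{notEmbed})\le 18\,\vol_W(E^{embed})$ forces the unembedded volume to drop by the fixed factor $18/19$ per pass --- the constant need not be below $1$. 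Your argument gives up exactly this disjointness leverage, and that is what breaks it; a secondary (fixable) issue is that you run the recursion on the unweighted count $U_{\eta}$ while the estimates naturally control the $W$-weighted volume, and converting between them mixes levels through ratios $2^{i'-i}$ that do not cancel, so even a valid contraction should be stated for the volume (which still suffices for $O(\log m)$ since the weights are polynomially bounded).
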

\begin{proof}
Clearly, $\eta_1$ is exactly the number of iterations of the do-while loop starting in \Cref{lne:doWhile1}. Let us analyze one such iteration. 

Let us denote by $E^{embed}$ the set of edges embed during the current iteration, and let us denote by $E^{notEmbed}$ the set of edges that remain not embedded after the iteration (both are subsets of $E(W)$). Let us first observe that by the while-loop condition in \Cref{lne:whileEmbed1}, each edge $e \in E^{notEmbed}$ that lives in graph $W_{J_i[X]}$ for some $i$ and $X$ has at least one of its endpoints in the set $\hat{V}_{J_i[X]}$. Note further, that we have by construction that for each $i$ and connected component $X$ in $J_i$, and $\phi(r) 2^i \bDeg_{W_{J_i[X]}} \leq 18\cdot \bDeg_{J_i[X]}$ by \Cref{thm:constructDetExpander}. Recall that each edge $e$ in $W_{J_i[X]}$ receives weight $\phi 2^i$ in $W$. We can therefore deduce that 
\begin{equation}\label{eq:lowerBoundEmbed}
    \sum_{i, X} \vol_{J_i[X]}(\hat{V}_{J_i[X]}) \geq \frac{\vol_W(E^{notEmbed})}{18}.
\end{equation}

On the other hand, whenever we embed an edge $e \in E(W)$ with weight $2^i$ for some $i$ in the current do-while loop, the amount of new vertex congestion added to $\sum_{x \in V(H')} \vcong(\Pi_{J \to H'} \circ \Pi_{W \to J}, x)$ is at most $2^i \cdot \gamma_{{ExpPath}} \length(\Pi_{J \to H'})$ since the data structure from \Cref{thm:apspDataStructure} returns paths of length at most $\gamma_{{ExpPath}}$ and $\Pi_{J \to H'}$ maps each edge to a path of length at most $\length(\Pi_{J \to H'})$. But, every time a vertex $x \in V(H')$ is deleted in the while-loop starting in \Cref{lne:whileEmbed1} (or more precisely, the edges in $J$ that embed into $x$ are removed from all shortest path data structures), we must have by \Cref{inv:steadyIncreaseCong} that $\vcong(\Pi_{J \to H'} \circ \Pi_{W \to J}, x)$ has increased by at least $\frac{1}{2}\tau_1$ since the last do-while loop iteration. The number of all edges removed from $J$ due to the deletion of a vertex $x$ is meanwhile bound by $\vcong(\Pi_{J \to H'}, x) \leq \vcong(\Pi_{J \to H'})$. Let $E^{del}$ be the edges deleted from $J$ in the while-loop in \Cref{lne:whileEmbed1} during the current do-while loop iteration. Then, we can conclude that $\vol_W(E^{embed}) \frac{2\gamma_{{ExpPath}} \length(\Pi_{J \to H'}) \vcong(\Pi_{J \to H'})}{\tau_1} \geq \vol_J(E^{del})$.

Finally, we can conclude by the guarantees of \Cref{thm:apspDataStructure} that 
\[
\vol_W(E^{embed}) \frac{2 \gamma_{{ExpPath}} \length(\Pi_{J \to H'}) \gamma_{del} \vcong(\Pi_{J \to H'})}{\phi \tau_1} \geq \sum_{i, X} \vol_{J_i[X]}(\hat{V}_{J_i[X]}).
\]
Replacing $\tau_1$ by its value, we get $\vol_W(E^{embed}) \geq \sum_{i, X} \vol_{J_i[X]}(\hat{V}_{J_i[X]})$ and combined with (\ref{eq:lowerBoundEmbed}), we derive $\vol_W(E^{embed}) \geq 18 \cdot \vol_W(E^{notEmbed})$. This implies that in a single round, we embed at least a constant fraction of the volume of edges not embedded yet. Since the total volume of edges in $W$ is a small polynomial in $m$, we have that after $O(\log m)$ rounds, the algorithm has embedded all edges, as desired.
\end{proof}

The following corollary is immediate from our analysis.

\begin{corollary}\label{cor:boundVertexCong}
At the end of the algorithm, we have $\vcong(\Pi_{J \to H'} \circ \Pi_{W \to J}) = O(\tau_1 \cdot \log(m))$.
\end{corollary}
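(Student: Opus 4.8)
The plan is to obtain \Cref{cor:boundVertexCong} as an essentially immediate consequence of \Cref{inv:steadyIncreaseCong} together with the preceding claim, which bounds the number of iterations of the outermost do-while loop (starting at \Cref{lne:doWhile1} of \Cref{alg:sparsify}) by $\eta_1 = O(\log m)$. First I would invoke \Cref{inv:steadyIncreaseCong} in the state reached when that do-while loop exits: the invariant states that $\vcong(\Pi_{J \to H'} \circ \Pi_{W \to J}) \le (\eta_1 + \frac{1}{2})\tau_1$ holds after every iteration, and since the loop guard cannot become true again once every edge of $W$ is embedded, this bound carries into the post-loop state. At that point $\eta_1$ has attained its final value, which by the preceding claim is $O(\log m)$, so $\vcong(\Pi_{J \to H'} \circ \Pi_{W \to J}) \le (O(\log m) + \frac{1}{2})\tau_1 = O(\tau_1 \log m)$.

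Second, I would check that this quantity is not disturbed by the remaining work of \Cref{alg:sparsify}: Task~3 (from \Cref{lne:startThirdPhase} onward) only writes into the embedding $\Pi_{J \to W}$ and never modifies $\Pi_{W \to J}$, while the input embedding $\Pi_{J \to H'}$ is fixed throughout. Consequently the bound on $\vcong(\Pi_{J \to H'} \circ \Pi_{W \to J})$ established at the termination of the do-while loop is exactly the bound that holds at the end of the algorithm, which is the content of \Cref{cor:boundVertexCong}.

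I do not expect a genuine obstacle here: all of the analytic content has already been carried out elsewhere --- the per-edge congestion accounting that yields the $+\frac{1}{2}\tau_1$ slack in \Cref{inv:steadyIncreaseCong} (using that each witness edge has weight at most $\Delta_{\max}(J)/\phi \le \tau_1/2$ and is embedded along a single short path from \Cref{thm:apspDataStructure}), and the volume-doubling argument of the preceding claim that forces termination within $O(\log m)$ rounds. The only minor point requiring care is that \Cref{inv:steadyIncreaseCong} is phrased for the state ``before and after each iteration,'' so the write-up should note explicitly that the post-iteration form applies to the iteration on which the loop condition finally fails --- a one-line remark rather than a new argument.
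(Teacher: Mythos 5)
Your proposal is correct and matches the paper's reasoning: the paper derives \Cref{cor:boundVertexCong} as an immediate consequence of \Cref{inv:steadyIncreaseCong} combined with the bound $\eta_1 = O(\log m)$, exactly as you do. Your additional remarks (that the post-iteration form of the invariant applies when the loop guard fails, and that Task~3 never touches $\Pi_{W \to J}$) are just explicit spellings-out of what the paper leaves implicit.
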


We use the same proof strategy to bound the congestion of $\Pi_{W \to J}$.

\begin{invar}\label{invar:secondInvariantSpanner}
$\econg(\Pi_{J \to W}) \leq (\eta_2 + \frac{1}{2}) \tau_2$ holds before and after each iteration of the do-while loop starting in \Cref{lne:secondDoWhileStart}.
\end{invar}
\begin{proof}
Again, before the first do-while loop iteration, the invariant is trivially true as the embedding maps edges either to themselves or to empty paths. Consider now the $\eta_2$-th iteration of the do-while loop. We have that the invariant holds before the do-while loop since it holds before the first iteration and clearly also after the $(\eta_2-1)$-th iteration for $\eta_2 > 1$ by the induction hypothesis.

In a do-while loop iteration, $\eta_2$ is increased by one. After this increase, we have $\econg(\Pi_{J \to W}) \leq \eta_2 \tau_2$. While $\eta_2$ remains fixed throughout the rest of the iteration, $\econg(\Pi_{J \to W})$ might increase. However, whenever we increase the congestion by a new embedding path, we check the edge congestion of every edge in $E(W)$ and remove it if it exceeds $\eta_2\tau_2$ (see the while-loop starting in \Cref{lne:removeCongestedEdges}).

Thus, the congestion can exceed $\eta_2\tau_2$ by at most the amount that a single embedding path can increase the congestion. But since each edge in $J$ has weight $1$ and the paths returned by the data structure from \Cref{thm:apspDataStructure} returns simple paths, the maximum increase can be at most $1 \leq \frac{1}{2}\tau_2$.
\end{proof}

\begin{claim}
At the end of \Cref{alg:sparsify}, we have $\eta_2 = O(\log m)$. 
\end{claim}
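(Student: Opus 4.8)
The plan is to copy, essentially line for line, the potential-decrease argument already used above to bound $\eta_1$, but with the edge-congestion of $\Pi_{J \to W}$ in place of the vertex-congestion of $\Pi_{J \to H'}\circ\Pi_{W \to J}$, and with the witness pieces $W_{J_i[X]}$ playing the role that the host graph $H'$ played there. I would fix one iteration of the do-while loop starting in \Cref{lne:secondDoWhileStart} and write $E^{embed}$ for the set of $J$-edges whose embedding path is assigned during this iteration and $E^{notEmbed}$ for the $J$-edges still mapped to the empty path at its end; then the $J$-edges that were unembedded at the start of the iteration are exactly $E^{embed}\sqcup E^{notEmbed}$, and the goal is to show $|E^{notEmbed}|\le\tfrac12|E^{embed}|$. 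First I would note that only edges living in pieces $J_i[X]$ with $i>0$ can ever belong to $E^{notEmbed}$, since every edge of $J_0$ already lies in $W$ ($W_{J_0[X]}=J_0[X]$ is added to $W$ in \Cref{lne:constructWitnessGraph}) and is mapped to itself at initialization. For $e=(u,v)\in E^{notEmbed}$ lying in $J_i[X]$, the exit condition of the inner while-loop in \Cref{lne:while2ConditionEmbed} forces at least one of $u,v$ into the pruned set $\hat V_{W_{J_i[X]}}$; hence $|E^{notEmbed}\cap E(J_i[X])|\le \vol_{J_i[X]}(\hat V_{W_{J_i[X]}})$, and since \Cref{thm:constructDetExpander} is invoked with weight vector $\bDeg_{J_i[X]}/(\phi 2^i)$ so that $\bDeg_{J_i[X]}\le \phi 2^i\,\bDeg_{W_{J_i[X]}}$, this gives $|E^{notEmbed}\cap E(J_i[X])|\le \phi 2^i\,\vol_{W_{J_i[X]}}(\hat V_{W_{J_i[X]}})$.

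Next I would bound $\vol_{W_{J_i[X]}}(\hat V_{W_{J_i[X]}})$ in two complementary ways. Since $W^{APSP}_{J_i[X]}$ is a copy of the $\phi_{Const}$-expander $W_{J_i[X]}$, \Cref{thm:apspDataStructure} gives $\vol_{W_{J_i[X]}}(\hat V_{W_{J_i[X]}})\le \gamma_{del}\,t_{i,X}/\phi_{Const}=O(t_{i,X})$, where $t_{i,X}$ is the number of edges deleted from $W^{APSP}_{J_i[X]}$ during this iteration. To bound $t_{i,X}$, I would observe that an edge $e'\in E(W_{J_i[X]})$ is deleted in \Cref{lne:removeCongestedEdges} only once $\econg(\Pi_{J\to W},e')$ reaches $\eta_2\tau_2$, while by \Cref{invar:secondInvariantSpanner} (applied at the end of the previous iteration, together with the increment of $\eta_2$) its congestion was at most $(\eta_2-\tfrac12)\tau_2$ at the start of this iteration; since $e'$ carries weight $\phi 2^i$ in $W$, at least $\tfrac12\tau_2\phi 2^i$ new embedding-path passages traversed $e'$ during this iteration. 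On the other side, each path returned by $\textsc{GetPath}$ in \Cref{lne:computeMCFlowGetPath2} has length at most $\gamma_{ExpPath}$ and stays within $W_{J_i[X]}$, so the total number of path-passages created inside $W_{J_i[X]}$ this iteration is at most $\gamma_{ExpPath}\,|E^{embed}\cap E(J_i[X])|$; comparing, $t_{i,X}\le 2\gamma_{ExpPath}\,|E^{embed}\cap E(J_i[X])|/(\tau_2\phi 2^i)$.

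Combining the three inequalities above and plugging in $\tau_2=\gamma_{ExpPath}\gamma_{del}/\phi$, the $2^i$ and $\gamma_{ExpPath}$ factors cancel and I obtain $|E^{notEmbed}\cap E(J_i[X])|\le O(\phi/\gamma_{del})\,|E^{embed}\cap E(J_i[X])|$; because $\phi<1$ and $\gamma_{del}=O(1)$, summing over all $i$ and $X$ (which together partition $E(J)$) yields $|E^{notEmbed}|\le \tfrac12|E^{embed}|$, so every iteration embeds at least a constant fraction of the $J$-edges left unembedded at its start. Since $|E(J)|\le m^{O(1)}$, the number of unembedded edges drops below $1$ after $O(\log m)$ iterations, which is exactly $\eta_2=O(\log m)$; from this one immediately gets the corollary $\econg(\Pi_{J\to W})=O(\tau_2\log m)$, the analogue of \Cref{cor:boundVertexCong}. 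The step I expect to be the main obstacle is the weight bookkeeping: making sure the weight $\phi 2^i$ assigned to each $W_{J_i[X]}$-edge in \Cref{lne:defCapWitness} is matched everywhere against the degree relation $\bDeg_{J_i[X]}\le\phi 2^i\bDeg_{W_{J_i[X]}}$, so that all $i$-dependent factors cancel; one also has to confirm that the degenerate cases ($i=0$, singleton components of $J_i$, vertices never pruned) contribute nothing to $E^{notEmbed}$, and that reinitializing the structures $\mathcal{DS}_{W_{J_i[X]}}$ at the top of each do-while iteration is precisely what legitimizes measuring the congestion increase "since the start of the iteration."
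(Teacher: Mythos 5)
Your proposal is correct and follows essentially the same argument as the paper's proof: bound $E^{notEmbed}$ by the pruned volume via the while-loop exit condition, convert between $J$-degrees and $W$-degrees using $\bDeg_{J_i[X]} \le \phi 2^i \bDeg_{W_{J_i[X]}}$, bound the pruned volume by deletions via \Cref{thm:apspDataStructure}, and charge each deletion to a congestion increase of at least $\tfrac{1}{2}\tau_2$ (via \Cref{invar:secondInvariantSpanner}) against the at most $\gamma_{ExpPath}\cdot|E^{embed}|$ path-passages created per iteration. Your per-piece bookkeeping with $t_{i,X}$ and the use of $\phi_{Const}$ in the pruning bound are only cosmetic refinements of the paper's aggregated, volume-based calculation (and the stray $O(\phi/\gamma_{del})$ constant is immaterial since $\gamma_{del}, \phi_{Const} = \Theta(1)$).
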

\begin{proof}
We have that $\eta_2$ being the number of iterations of the do-while loop starting in \Cref{lne:secondDoWhileStart}. Let us again analyze one such iteration.

Let us denote by $E^{embed}$ the set of edges embed during the current iteration, and let us denote by $E^{notEmbed}$ the set of edges that remain not embedded after the iteration (both are subsets of $E(J) \setminus E(J_0)$). We first observe that for each edge $e \in E^{notEmbed}$ that lives in $J_i[X]$ for some $i$ and $X$, we have that at least one of its endpoints is in $\hat{V}_{W_{J_i[X]}}$ (see \Cref{lne:while2ConditionEmbed}). Note that by construction of $W_{J_i[X]}$ using the procedure in \Cref{thm:constructDetExpander}, we have that $\bDeg_{J_i[X]} \leq \phi 2^i \bDeg_{W_{J_i[X]}}$. Recall that each edge in $W_{J_i[X]}$ receives weight $\phi 2^i$ in $W$. Therefore, we have $\vol_J(E^{notEmbed}) \leq \sum_{i, X} \vol_{W}(E(\hat{V}_{W_{J_i[X]}}))$.

On the other hand, whenever we embed an edge $e \in E(J)$, the amount of edge congestion added to $\sum_{e \in E(W)} \econg(\Pi_{J \to W}, e)$ is at most $\gamma_{ExpPath}$ since we add paths from the data structure from \Cref{thm:apspDataStructure}. But we remove an edge $e' \in W_{J_i[X]}$ only after its congestion $\econg(\Pi_{J \mapsto W}, e')$ has increased by at least $\frac{1}{2} \tau_2$ since the start of the do-while loop iteration by the while-loop condition in \Cref{lne:removeCongestedEdges} and \Cref{invar:secondInvariantSpanner}. Thus, we have by the guarantees of \Cref{thm:apspDataStructure} and the fact that edges in each $W_{J_i[X]}$ receive a uniform weight that 
$\sum_{i, X} \vol_{W}(E(\hat{V}_{W_{J_i[X]}})) \leq \frac{\gamma_{del} \gamma_{ExpPath} \vol_J(E^{embed})}{\phi \tau_2}$.

Combined with our previous insight and plugging in the value of $\tau_2$, this holds $\vol_J(E^{notEmbed}) \leq \vol_J(E^{embed})$. Thus, again, we have that in each round the volume of edges embedded is at least a constant fraction of the remaining edges to be embedded into $W$. The do-while loop thus terminates after $O(\log m)$ iterations.
\end{proof}

\begin{corollary}\label{cor:boundEdgeCong}
At the end of the algorithm, we have $\econg(\Pi_{J \to W}) = O(\tau_2 \cdot \log(m))$.
\end{corollary}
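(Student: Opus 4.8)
The plan is to read the bound off directly from two ingredients already in hand: the invariant $\econg(\Pi_{J \to W}) \le (\eta_2 + \frac{1}{2})\tau_2$ of \Cref{invar:secondInvariantSpanner}, and the claim immediately preceding this corollary that the final value of the loop counter satisfies $\eta_2 = O(\log m)$. First I would observe that $\Pi_{J \to W}$ is created, and thereafter modified, only inside the do-while loop beginning at \Cref{lne:secondDoWhileStart} of \Cref{alg:sparsify} (Task 3); in the return statement it is merely composed with $\Pi_{W \to J}$ to form $\Pi_{J \to \tilde J} = \Pi_{W \to J} \circ \Pi_{J \to W}$, not altered. Hence the value of $\Pi_{J \to W}$ ``at the end of the algorithm,'' as referenced in the statement, equals its value after the final iteration of that loop. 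Since \Cref{invar:secondInvariantSpanner} asserts the inequality holds \emph{after} each iteration, it holds at termination, with $\eta_2$ already at its final value; plugging in $\eta_2 = O(\log m)$ then gives $\econg(\Pi_{J \to W}) \le (O(\log m) + \frac{1}{2})\tau_2 = O(\tau_2 \log m)$, which is the claim.

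The one point I would make sure to state explicitly rather than take for granted is that the Task 3 loop actually terminates, so that ``the end of the algorithm'' is well defined; but this is exactly the content of the $\eta_2 = O(\log m)$ claim, whose argument I would recall in one sentence: $\eta_2$ counts the do-while rounds, and each round embeds at least a constant fraction of the still-unembedded volume into $W$, by balancing the volume of edges of $J$ blocked because an endpoint was pruned into some $\hat{V}_{W_{J_i[X]}}$ (lower-bounded in terms of $\vol_J(E^{notEmbed})$ via $\bDeg_{J_i[X]} \le \phi 2^i \bDeg_{W_{J_i[X]}}$) against the congestion budget $\frac{1}{2}\tau_2$ that each pruned edge of $W$ must have absorbed (upper-bounded through the pruning guarantee of \Cref{thm:apspDataStructure}); thus $O(\log m)$ rounds suffice.

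I do not anticipate a genuine obstacle: the corollary is a one-line consequence of \Cref{invar:secondInvariantSpanner} and the counting claim, and all the substantive work lies in those two ingredients (and, upstream, in \Cref{thm:apspDataStructure} and \Cref{thm:constructDetExpander}). The only mild subtlety worth a sentence is the bookkeeping above — confirming that the embedding controlled by the invariant is the very same $\Pi_{J \to W}$ appearing in the corollary's conclusion, i.e., that nothing executed after Task 3 changes it.
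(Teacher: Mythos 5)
Your proposal is correct and matches the paper's intended argument: the corollary is read off directly by combining \cref{invar:secondInvariantSpanner} with the preceding claim that $\eta_2 = O(\log m)$, exactly as the paper does (it states the corollary as immediate from that analysis). Your extra bookkeeping that $\Pi_{J \to W}$ is unchanged after the Task~3 loop is a harmless elaboration, not a deviation.
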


With the congestion of both embeddings bound tightly, we can prove \Cref{lma:staticEmbed}.

\begin{proof}[Proof of \Cref{lma:staticEmbed}.]
Let us first argue about correctness. To establish the bound on the vertex congestion, it remains to use \Cref{fact:transitiveEmbeddingCong}, \Cref{cor:boundVertexCong} and \Cref{cor:boundEdgeCong} and the values of $\tau_1, \tau_2$ to conclude that $\vcong(\Pi_{J \to H'} \circ \Pi_{J \to \tilde{J}}) \leq \gamma_{c} \cdot 
 \length(\Pi_{J \to H'}) \cdot \left(\vcong(\Pi_{J \to H'}) + \Delta_{max}(J) \right)$ for some 
 $$\gamma_{c} = O\left(\frac{\gamma^2_{{ExpPath}} \gamma^2_{del} \log^2(m)}{\phi^2}\right) = \exp\left({O(\log(m)^{2/3} \log\log(m))}\right).$$

 For to bound $\length(\Pi_{J \to H'} \circ \Pi_{J \to \tilde{J}})$, we recall that $\Pi_{J \to \tilde{J}} = \Pi_{W \to J} \circ \Pi_{J \to W}$ and both embeddings $\Pi_{W \to J}$ and $\Pi_{J \to W}$ find the embedding paths using the data structure from \Cref{thm:apspDataStructure} which produces paths of length at most $\gamma_{ExpPath}$. Thus $\length(\Pi_{J \to H'} \circ \Pi_{J \to \tilde{J}}) \leq \length(\Pi_{J \to H'}) \cdot  \length( \Pi_{W \to J}) \cdot \length( \Pi_{J \to W}) \leq \gamma_\ell \cdot \length(\Pi_{J \to H'})$ for $\gamma_{\ell} = \gamma_{ExpPath}^2$.

For the runtime analysis, we first see that we can bound the total runtime of all shortest paths data structures by $\tilde{O}(|E(J)| \cdot \gamma_{\ell})$. Further, we have to track the vertex congestion of all vertices in $H'$ when constructing the embedding $\Pi_{W \to J}$. For the sparsity analysis, we observe that $|E(W)| = \tilde{O}(|V(J)|/\phi)$ by the construction of $W$ in the foreach-loop starting in \Cref{lne:constructWitnessGraph}, and each edge is embedded via a path of length at most $\gamma_{\ell}$ and each such edge is then mapped to a path in $H'$ of length at most $\length(\Pi_{J \to H'})$, it is not hard to verify that we can track the vertex congestion in time $\tilde{O}(|V(J)|\gamma_{\ell} \length(\Pi_{J \to H'})/\phi) = \tilde{O}(|E(J)|\gamma_{\ell}^2 \length(\Pi_{J \to H'}))$. The rest of the algorithm can straightforwardly be implemented in time asymptotically bounded by either of the above-analyzed components.

Finally, for the sparsity analysis, we use again that $|E(W)| = \tilde{O}(|V(J)|/\phi)$ and the fact that $\tilde{J}$ is just the image of $\Pi_{W \to J}$ which ensures that $|E(\tilde{J})| = O(|V(J)| \length(\Pi_{W \to J}) / \phi)$ and we finally use that $ \length(\Pi_{W \to J}) \leq \gamma_{ExpPath}$ and $\gamma_{ExpPath} \geq 1/\phi$, holding $|E(\tilde{J})| = O(|V(J)| \gamma_{\ell})$.
\end{proof}
\section{Deterministic Low-Stretch Spanning Tree}
\label{sec:lsst}

In this section, we prove \Cref{thm:lsst} regarding deterministic dynamic low-stretch spanning tree algorithms. Here, we give a proof of the result in \Cref{thm:lsst} with amortized update time, however, since the algorithm uses a standard batching technique, it can be de-amortized via standard techniques (see for example \cite{gutenberg2020fully}) at the cost of only a constant factor increase in runtime.

Similarly to our min-ratio cycle data structure, the data structure maintains a hierarchy of partial trees and uses a deterministic spanner algorithm. It is worth noting that the algorithm \emph{does not} need anything resembling the rebuilding game or shift-and-rebuild game \Cref{sec:rebuilding}. 

Let us formalize some notation for the data structure. We will let $E$ denote the edge set of the graph $G$. $E$ and $G$ may change over time. For simplicity, we assume that the number of edges in the graph is always in the range $[m, 2m]$ for some parameter $m$. Otherwise, when the number of edges halves or doubles, we restart the data structure. This only increases the amortized runtime by a constant. At time $t$, let $\bell^{(t)} \in \R^E$ be the lengths. We want to maintain a tree $T \subseteq E(G)$ such that $\sum_{e \in E} \str^{T,\bell}(e) \le m^{1+o(1)}$.

\subsection{Data Structure Description}
\label{subsec:lsstds}
The data structure is a hierarchy of partial trees and spanners with low-congestion routings. Let $k = m^{1/d}$. Let $G_0 = G$ be the input graph. For $i = 0, 1, \dots, d-1$ we will inductively define $G_{i+1}$ given $G_i$. $G_i$ will have around $m/k^i$ edges.

\paragraph{Partial tree.} Let $F_i \subseteq G_i$ be a partial tree maintained by \Cref{lemma:globalstretch} for weights $\bv = 1$, and parameter $k - m^{1/d}$ as defined. Let $\wstr_e$ be the stretch overestimates. Define the contracted graph $H_i = G_i/F_i$, with edge lengths $\bell^{H_i}_e := \wstr_e \bell^{G_i}_e$. Rebuild level $i, i+1, \dots, d-1$ whenever the total number of updates passed to $G_i$ exceeds $m/k^{i+1}$.

\paragraph{Spanner with embeddings.} We first partition the graph $H_i$ into $\O(1)$ subgraphs. For $j' \le \O(1)$ and $j \le \log_2 k$, define $H_{i,j,j'}$ to be the subgraph of $H_i$ consisting of edges $e$ satisfying $\wstr_e \in [2^j, 2^{j+1}]$ and $\bell^{H_i}_e \in [2^{j'}, 2^{j'+1}]$. Let $G_{i+1,j,j'}$ be a decremental spanner with embedding of $H_{i,j,j'}$, maintained using \Cref{thm:spanner}. Critically, we know that $|E(H_{i,j,j'})| \le \O(|E(G_i)|/2^j)$ so we will ensure that all edge congestions in the embedding $\Pi_{H_{i,j,j'} \to G_{i+1,j,j'}}$ are bounded by $m^{o(1)}k/2^j$. Finally, we define $G_{i+1} = \bigcup_{j,j' \le \O(1)} G_{i+1,j,j'}$. The edge lengths in $G_{i+1}$ are defined to be the same as those in $H_i$.

\paragraph{The low stretch tree.} Define $T_i = F_i \cup F_{i+1} \cup \dots \cup F_d$. Note that $T_i$ is a spanning tree of $G_i$. The low-stretch tree maintained by our data structure will be $T_0$.

\subsection{Algorithm Analysis}
We now prove \Cref{thm:lsst} by analyzing the amortized runtime and stretch.
\begin{proof}[Proof of \Cref{thm:lsst}]
We use the algorithm described in \Cref{subsec:lsstds}. It is clearly deterministic. To bound the runtime, recall that the recourse of $F_i$ from \Cref{lemma:globalstretch} is $\O(1)$, and the recourse from the spanner in \Cref{thm:spanner} is at most some $\gamma_r \le \exp(O(\log^{8/9} \log \log m))$, because there are $\O(1)$ buckets $(j, j')$. Thus the amortized runtime of the data structure is bounded by $k \cdot \O(\gamma_r)^d = m^{1/d} \O(\gamma_r)^d \le \exp(O(\log^{17/18} \log \log m))$ for the choice $d = \log^{1/18} m$.

We will bound $\sum_{e \in E(G_i)} \str^{T_i,\bell^{G_i}}(e)$ by induction.
We calculate
\begin{align*}
    \sum_{e \in E(G_i)} \str_e^{T_i,\bell^{G_i}}(e) &\overset{(i)}{\le} 4\sum_{e \in E(G_i)} \wstr_e \str^{T_{i+1}, \bell^{H_i}}(e) \\
    &\overset{(ii)}{\le} 4\sum_{j,j'} \sum_{e \in E(H_{i,j,j'})} \wstr_e \sum_{f \in \Pi_{H_{i,j,j'} \to G_{i+1,j,j'}}(e)} \frac{\str^{T_{i+1}, \bell^{G_{i+1}}}(f) \bell^{H_i}(f)}{\bell^{H_i}(e)} \\
    &\overset{(iii)}{\le} 16\sum_{j,j'} \sum_{e \in E(H_{i,j,j'})} \sum_{f \in \Pi_{H_{i,j,j'} \to G_{i+1,j,j'}}(e)} \wstr_f \str^{T_{i+1}, \bell^{G_{i+1}}}(f) \\
    &= 16\sum_{j,j'} \sum_{f \in G_{i+1,j,j'}} \econg(\Pi_{H_{i,j,j'} \to G_{i+1,j,j'}}, f)\wstr_f \str^{T_{i+1}, \bell^{G_{i+1}}}(f) \\
    &\overset{(iv)}{\le} O(k\gamma_c) \sum_{f \in E(G_{i+1})} \str^{T_{i+1}, \bell^{G_{i+1}}}(f),
\end{align*}
for some $\gamma_c \le \exp(O(\log^{8/9} m \log \log m))$. Step $(i)$ follows from the forest portal routing, $(ii)$ follows from the embedding, and $(iii)$ follows from the fact that $\wstr_e \le 2\wstr_f$ and $\bell^{H_i}(f) \le 2\bell^{H_i}(e)$ for $e, f \in H_{i,j,j'}$. To see $(iv)$, recall that $\sum_{e \in E(G_i)} \wstr_e \le \O(|E(G_i)|)$, so $|E(H_{i,j,j'})| \le \O(|E(G_i)|/2^j)$. Hence \Cref{thm:spanner} ensures that the \[ \econg(\Pi_{H_{i,j,j'} \to G_{i,j,j'}}, f) \le O(\gamma_c) \cdot \frac{|E(H_{i,j,j'})|}{|V(H_{i,j,j'})|} \le O(\gamma_c k /2^j) \le O(\gamma_c k / \wstr_f). \]

Hence we conclude that $\sum_{e \in E(G_i)} \str^{T_i,\bell^{G_i}}(e) \le O(\gamma_c)^{d-i} \cdot m/k^i$ for all $i = 0, 1, \dots, d-1$.
\end{proof}

\section*{Acknowledgments}
We thank the anonymous reviewers for their helpful comments.

\begin{refcontext}[sorting=nyt]
\printbibliography[heading=bibintoc]
\end{refcontext}
\appendix

\section{Additional \texorpdfstring{$j$}{j}-tree Proofs}

In this section, we give a brief description of how to modify the proof of \cite[Lemma 6.5]{chen2022maximum} to show our \cref{lemma:globalstretch}. The only difference between \cite[Lemma 6.5]{chen2022maximum} and \cref{lemma:globalstretch} is that we require our data structure to handle vertex splits. Fortunately, the data structure of \cite[Lemma 6.5]{chen2022maximum} can be trivially modified to handle vertex splits at no extra runtime cost.

\subsection{Proof of \texorpdfstring{\Cref{lemma:globalstretch}}{lemma:globalstretch}}
\label{app:globalstretch}
In this section, we use the same notations as in \cite[Appendix B.3]{chen2022maximum}, arXiv version. We will use the notions of \emph{branch-free set} \cite[Definition B.3]{chen2022maximum}, the forest $F_T(R, \pi)$ given a tree $T$, a set of roots $R$, and a permutation on tree edges $\pi$ (\cite[Definition B.4]{chen2022maximum}).

We now describe how to modify the proof of \cite[Lemma 6.5]{chen2022maximum}, which appears at the end of \cite[Appendix B.3]{chen2022maximum}, to handle vertex splits. When a vertex $u$ is split into $u$ and $u^{\text{NEW}}$, we create a new isolated vertex for $u^{\text{NEW}}$ and add both $u^{\text{NEW}}$ and $u^{\uparrow T_H}$ to the set of roots $R$. This set is branch free by \cite[Lemma B.9]{chen2022maximum}.
We also update the forest $F \defeq F_T(R,\pi)$.
Because $R$ is incremental and $\pi$ is a total ordering, $F$ is decremental. \cite[Lemma B.9]{chen2022maximum} item 1 tells us that we only add $O(\log^2 n)$ vertices to $R$ per vertex split, so the operation can be implemented efficiently.

The remainder of the proof is identical to the proof of \cite[Lemma 6.5]{chen2022maximum} in \cite[Appendix B.3]{chen2022maximum}.

\subsection{Proof of \texorpdfstring{\Cref{lemma:strMWU}}{lemma:strMWU}}
\label{app:mwu}
The MWU in this section is very similar to previous standard arguments, and is based on \cite[Section B.4]{chen2022maximum}. The main change is to run the MWU for only $k$ steps so that we have $k$ trees, as opposed to $\O(k)$. This is not necessary in the argument, but simplifies some notation.
\begin{proof}
  Let $W = O(\log^4 n)$ be such that items \ref{item:stretchbound}, \ref{item:avgstretchbound} of \cref{lemma:globalstretch} imply
  \begin{align*}
    \sum_{e \in E} \bv_e \wstr_e &\le W \norm{\bv}_1, \text{ and} \\
    \max_{e \in E} \wstr_e &\le k W\log^2 n.
  \end{align*}
  Let $\rho = 10 k W\log^2 n = \O(k)$.
  The algorithm sequentially constructs edge weights $\bv_1, \ldots, \bv_k$ using a multiplicative weight update algorithm, and finds low-stretch trees $T_1, \dots, T_k$, forests $F_1, \dots, F_k$, and stretch overestimates $\wstr^1, \dots, \wstr^k$ with respect to these weights via \cref{lemma:globalstretch}.

  Initially, $\bv_1 \defeq \mathbf{1}$ is an the all 1's vector.
  After computing $T_i$, $\bv_{i+1}$ is defined as
  \begin{align*}
    \bv_{i+1, e} \defeq \bv_{i, e} \exp\left(\frac{\wstr^{i}_e}{\rho}\right) = \exp\left(\frac{1}{\rho} \sum_{j=1}^i \wstr^{j}_e\right) \forall e \in E.
  \end{align*}
  Finally we define the distribution $\blambda$ to be uniform over the set $\{1, \dots, k\}.$

  To show the desired bound \eqref{eq:strMWU}, we first relate it with $\norm{\bv_{k+1}}$ using the following:
  \begin{align*}
    \max_{e \in E} \frac{1}{k} \sum_{i=1}^k \wstr^{i}_e = \frac{\rho}{k} \max_{e \in E} \frac{1}{\rho} \sum_{i=1}^k \wstr^{i}_e\le \frac{\rho}{k}\log \left(\sum_e \exp\left(\frac{1}{\rho} \sum_{i=1}^k \wstr^{i}_e\right)\right) = \frac{\rho}{k}\log \norm{\bv_{k+1}}_1,
  \end{align*}
  where $\bv_{t+1}$ is defined similarly even though it is never used in the algorithm.

  Next, we upper bounds $\norm{\bv_{i}}_1$ inductively for every $i=1, \dots, k+1$.
  Initially, $\bv_{1} = \mathbf{1}$ and we have $\norm{\bv_{1}}_1 = m.$
  To bound $\norm{\bv_{i+1}}$, we plug in the definition and have the following:
  \begin{align*}
    \norm{\bv_{i+1}}_1
    &= \sum_e \bv_{i, e} \exp\left(\frac{\wstr^{i}_e}{\rho}\right) \le \sum_e \bv_{i, e} \left(1 + 2 \cdot \frac{\wstr^{i}_e}{\rho}\right) \\
    &= \norm{\bv_{i}}_1 + \frac{2}{\rho} \sum_{e} \bv_{i, e} \wstr^{i}_e \le \norm{\bv_{i}}_1 + \frac{2}{\rho} W \norm{\bv_{i}}_1 = \left(1 + \frac{2W}{\rho}\right) \norm{\bv_{i}}_1,
  \end{align*}
  where the first inequality comes from the bound $\wstr^{i}_e \le k W \log^2 n = 0.1 \rho$ and $e^x \le 1 + 2x$ for $0 \le x \le 0.1.$
  Applying the inequality iteratively yields
  \begin{align*}
  \max_{e \in E} \frac{1}{k} \sum_{i=1}^k \wstr^{i}_e \le \frac{\rho}{k}\log \norm{\bv_{k+1}}_1 \le \frac{\rho}{k} \log \left(1 + \frac{2W}{\rho}\right)^k \norm{\bv_{1}}_1 \le 2 W + \frac{\rho \log m}{k} = O(W \log^3 n)
  \end{align*}
  The desired bound \eqref{eq:strMWU} now follows by taking the logarithm of both sides.
\end{proof}

\subsection{Cycle Maintenance in a Tree Chain}
\label{app:fundChainCycle}

In this section, we recall the relevant pieces of \cite[Section 7.2]{chen2022maximum} in order to show how to find a cycle $\bDelta$ from a tree chain that has sufficient ratio, thus showing \cref{lemma:hintedTreeChain}. We start with some preliminary definitions.
Consider a tree chain $\cG = \{G_0, G_1, \dots, G_d\}$ , with a corresponding tree $T \defeq T^{\cG}$ as defined in \cref{def:TreeChainShift}.
For $\bg,\bell$ and a valid pair $\bc,\bw$ (\cref{def:validpair}), define $\bc^{G_0} \defeq \bc$ and $\bw^{G_0} \defeq \bw$, and $\bc^{G_i}$ and $\bw^{G_i}$ recursively for $1 \le i \le d$ via \cref{def:passcore,def:passsparsecore}. Let $\bell^{G_i},\bg^{G_i}$ be the lengths and gradients on the graphs $G_i$, and $\bell^{\mathcal{C}(G_i,F_i)},\bg^{\mathcal{C}(G_i,F_i)}$ be the lengths and gradients on the core graphs.

Note that every edge $e^G \in E(G) \setminus E(T)$ has a ``lowest'' level that the image of it (which we call $e$) exists in a tree chain, after which it is not in the next sparsified core graph. In this case, the edge plus its path embedding induce a cycle, which we call the \emph{sparsifier cycle} associated to $e$. In the below definition, we assume that the path embedding of a self-loop $e$ in $\mathcal{C}(G_i,F_i)$ is empty.
\begin{definition}
\label{def:sparsifiercycle}
Consider a tree-chain $G_0 = G,\dots,G_d$ (\cref{def:TreeChainShift}) with corresponding tree $T \defeq T^{G_0,\dots,G_d}$ where for every $0 \le i \le d$, we have a core graph $\mathcal{C}(G_i, F_i)$ and sparsified core graph $\SS(G_i, F_i) \subseteq \mathcal{C}(G_i, F_i)$, with embedding $\Pi_{\mathcal{C}(G_i, F_i)\to\SS(G_i, F_i)}$.

We say an edge $e^G \in E(G)$ is at level $\lvl_{e^G} = i$ if its image $e$ is in $E(\mathcal{C}(G_i, F_i)) \backslash E(\SS(G_i, F_i))$. Define the \emph{sparsifier cycle} $a(e)$ of such an edge $e = e_0 \in \mathcal{C}(G_i, F_i)$ to be the cycle $a(e) = e_0  \oplus \rev(\Pi_{\mathcal{C}(G_i, F_i)\to\SS(G_i, F_i)}(e_0)) = e_0  \oplus e_1 \oplus \cdots \oplus e_L$. We define the preimage of this sparsifier cycle in $G$ to be the \emph{fundamental chain cycle}
     \[ 
     a^G(e^G) = e_0^G \oplus T[v^G_0, u^G_1] \oplus e_1^G \oplus T[v^G_1, u^G_2] \oplus \dots \oplus e_L^G \oplus T[v^G_L, u^G_{L+1}],
     \]
    where $e^G_i = (u_i^G, v_i^G)$ is the preimage of edge $e_i$ in $G$ for each $i \in [L]$ and where we define $u^G_{L+1} = u^G_0$.
\end{definition}
We let $\ba(e)$ and $\ba^G(e^G)$ be the associated flow vectors for the sparsifier cycle $a(e)$ and fundamental chain cycle $a^G(e^G)$.
The following result shows that there is some fundamental chain cycle with good ratio.
\begin{lemma}[{\cite[Lemma 7.17]{chen2022maximum}}]
\label{lemma:goodenough}
Let $\bc, \bw$ be a valid pair. Let $T = T^{\cG}$ for a tree-chain $\cG=\{G_0,\dots,G_d\}$. Then
\[ \max_{e^G \in E(G) \setminus E(T)} \frac{|\l \bg, \ba^G(e^G) \r|}{\l \bell, |\ba^G(e^G)| \r} \ge \frac{1}{\O(k)} \frac{|\l \bg, \bc \r|}{\sum_{i=0}^d \|\bw^{G_i}\|_1}. \]
\end{lemma}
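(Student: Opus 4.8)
The plan is to prove \cref{lemma:goodenough} by the standard flow-decomposition (``max over average'') argument of \cite[Section~7.2]{chen2022maximum}. First I would set up a recursive decomposition of the hidden circulation $\bc = \bc^{G_0}$ down the tree chain. Recall that at level $i$ the core graph $\cC(G_i,F_i)$ carries the circulation $\bc^{\cC(G_i,F_i)}$ (\cref{def:passcore}) and that $G_{i+1} = \SS(G_i,F_i) \subseteq \cC(G_i,F_i)$ carries $\bc^{G_{i+1}} = \bc^{\SS(G_i,F_i)}$, the rerouting of $\bc^{\cC(G_i,F_i)}$ along the embedding $\Pi_{\cC(G_i,F_i)\to\SS(G_i,F_i)}$ (\cref{def:passsparsecore}). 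Since each edge already in $\SS(G_i,F_i)$ is embedded to itself, as flow vectors on $\cC(G_i,F_i)$ we have
\[
\bc^{\cC(G_i,F_i)} \;=\; \sum_{\hat e \text{ at level } i} \bc^{\cC(G_i,F_i)}_{\hat e}\cdot\ba(\hat e)\;+\;\bc^{G_{i+1}},
\]
where $\ba(\hat e) = \hat e \oplus \rev(\Pi_{\cC(G_i,F_i)\to\SS(G_i,F_i)}(\hat e))$ is the sparsifier cycle of \cref{def:sparsifiercycle} and the last term is read via the inclusion $E(\SS(G_i,F_i))\subseteq E(\cC(G_i,F_i))$. For the bottom level I would decompose $\bc^{G_d}$ into fundamental cycles of the level-$d$ tree $F^{G_d}_{\bran_d}$ (the residual on a forest being a zero circulation); the $G$-preimages of all of these cycles are exactly the fundamental chain cycles $\ba^G(e^G)$ for $e^G\in E(G)\setminus E(T)$.

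Next I would show that $\langle\bg,\bc\rangle$ equals the matching signed sum of the cycle inner products. The key identity is that the gradient inner product is preserved at each level: using $\bg^{\cC(G_i,F_i)}_{\hat e}=\bg_e+\langle\bg,\bp(T[v,u])\rangle$ from \cref{def:coregraph}, and the fact that $\bc^{G_i}$ is a circulation — so the tree-correction flow $\sum_e\bc^{G_i}_e\,\bp(T[v_e,u_e])$ is supported on the forest $T$, has zero demand everywhere, and is therefore the zero flow — we get $\langle\bg^{G_i},\bc^{G_i}\rangle = \langle\bg^{\cC(G_i,F_i)},\bc^{\cC(G_i,F_i)}\rangle$; and since gradients on $\SS(G_i,F_i)$ agree with those on $\cC(G_i,F_i)$ (\cref{def:sparsecore}) and $\bc^{G_{i+1}}$ is supported on $E(\SS(G_i,F_i))$, also $\langle\bg^{\cC(G_i,F_i)},\bc^{G_{i+1}}\rangle = \langle\bg^{G_{i+1}},\bc^{G_{i+1}}\rangle$. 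Combining these with the identification $\langle\bg^{\cC(G_i,F_i)},\ba(\hat e)\rangle = \langle\bg,\ba^G(e^G)\rangle$ between a sparsifier cycle and its $G$-preimage and unrolling over $i=0,\dots,d-1$ yields
\[
\langle\bg,\bc\rangle \;=\; \sum_{i=0}^{d-1}\ \sum_{\hat e \text{ at level }i}\bc^{\cC(G_i,F_i)}_{\hat e}\,\langle\bg,\ba^G(e^G)\rangle \;+\; \langle\bg^{G_d},\bc^{G_d}\rangle,
\]
and the last term is itself such a signed sum of $\langle\bg,\ba^G(e^G)\rangle$'s over the level-$d$ off-tree edges.

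Then I would bound the total $|\bc|$-weighted $\bell$-length of all these cycles by $\O(\gamma_\ell)\sum_{i=0}^d\|\bw^{G_i}\|_1$. For $\hat e$ at level $i$, the core-graph length $\bell^{\cC(G_i,F_i)}_{\hat e}=\wstr_e\bell_e$ already over-counts the cost of routing $e^G$ through the partial trees, so $\langle\bell,|\ba^G(e^G)|\rangle \le \langle\bell^{\cC(G_i,F_i)},|\ba(\hat e)|\rangle$; and since every edge of the embedding path has length $\approx_2 \bell^{\cC(G_i,F_i)}_{\hat e}$ with at most $\gamma_\ell$ edges on the path (\cref{def:sparsecore}), $\langle\bell^{\cC(G_i,F_i)},|\ba(\hat e)|\rangle \le O(\gamma_\ell)\,\bell^{\cC(G_i,F_i)}_{\hat e}$. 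Using that $\bc^{\cC(G_i,F_i)},\bw^{\cC(G_i,F_i)}$ are a valid pair (\cref{lemma:passcore}),
\[
\sum_{\hat e\text{ at level }i}\bigl|\bc^{\cC(G_i,F_i)}_{\hat e}\bigr|\,\langle\bell,|\ba^G(e^G)|\rangle \;\le\; O(\gamma_\ell)\bigl\|\bw^{\cC(G_i,F_i)}\bigr\|_1 \;\le\; O(\gamma_\ell)\bigl\|\bw^{G_{i+1}}\bigr\|_1,
\]
the last step by \cref{lemma:passsparsecore}; summing over $i=0,\dots,d-1$ telescopes to $O(\gamma_\ell)\sum_{i=1}^d\|\bw^{G_i}\|_1$. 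For the level-$d$ term, $\sum_e|\bc^{G_d}_e|\,\langle\bell^{G_d},|\mathrm{fc}(e)|\rangle \le \sum_e \bw^{G_d}_e\,\wstr^{F^{G_d}_{\bran_d}}_e$, which the MWU guarantee (\cref{lemma:strMWU}), applied to the whole collection of level-$d$ trees maintained by the data structure, bounds by $\O(1)\|\bw^{G_d}\|_1$ for the best such tree (and \textsc{FindCycle} considers all of them). Folding $\log^{O(1)}n$ and $\gamma_\ell$ into $\O(k)$ — which is legitimate here since $\gamma_\ell \le \O(k)$ and $\wstr_e \le O(k\log^6 n)$ in the operating regime — the total is at most $\O(k)\sum_{i=0}^d\|\bw^{G_i}\|_1 =: B$. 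Finally, writing $\langle\bg,\bc\rangle=\sum_j c_j\langle\bg,\mathrm{cyc}_j\rangle$ with $\mathrm{cyc}_j$ ranging over all fundamental chain cycles and $\sum_j|c_j|\langle\bell,|\mathrm{cyc}_j|\rangle\le B$, the triangle inequality gives $|\langle\bg,\bc\rangle|\le \bigl(\max_j \tfrac{|\langle\bg,\mathrm{cyc}_j\rangle|}{\langle\bell,|\mathrm{cyc}_j|\rangle}\bigr)\cdot B$, so some $e^G\in E(G)\setminus E(T)$ attains ratio $\ge |\langle\bg,\bc\rangle|/B = \tfrac{1}{\O(k)}\tfrac{|\langle\bg,\bc\rangle|}{\sum_i\|\bw^{G_i}\|_1}$.

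I expect the main obstacle to be the careful bookkeeping in the two ``change of graph'' facts that cross between $G_i$, its core graph, and its sparsified core graph: (i) the gradient identity $\langle\bg,\ba^G(e^G)\rangle = \langle\bg^{\cC(G_i,F_i)},\ba(\hat e)\rangle$ — which requires tracking how the tree-corrections built into the core-graph gradient compose through the full tree $T = T^{\cG}$ and telescope along the embedding path connecting the fixed endpoints of $\hat e$ — and (ii) the length comparison $\langle\bell,|\ba^G(e^G)|\rangle \le \langle\bell^{\cC(G_i,F_i)},|\ba(\hat e)|\rangle$, which relies on the \emph{immutability} of the stretch overestimates $\wstr_e$ from \cref{lemma:globalstretch} as one unrolls the definition of the core-graph lengths level by level. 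These are precisely the steps carried out in \cite[Lemmas~7.14--7.17]{chen2022maximum}, and since the tree chain here differs from theirs only in maintaining one branch per level and allowing vertex splits (neither of which touches the static cycle-quality argument), the same proof applies essentially verbatim.
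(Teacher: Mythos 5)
Your proposal is correct in outline, but note that the paper itself gives no proof of \cref{lemma:goodenough}: it imports the statement verbatim from \cite[Lemmas 7.14--7.17]{chen2022maximum}, remarking only that the $\wlen_{e^G}$ denominator there dominates $\l \bell, |\ba^G(e^G)| \r$; your sketch is a reconstruction of exactly that cited argument (level-by-level decomposition into sparsifier cycles plus a bottom-level tree-cycle decomposition, gradient preservation via the circulation-on-a-forest-is-zero observation, length/width accounting through \cref{lemma:passcore,lemma:passsparsecore}, then max-over-average), so it is the same approach rather than a new one. The one step to adjust is your level-$d$ handling: the maximum in \cref{lemma:goodenough} ranges only over fundamental chain cycles with respect to the single tree $T^{\cG}$, which contains the specific forest $F^{G_d}_{\bran_d}$, so you cannot invoke the best tree in the level-$d$ MWU collection (its fundamental cycles need not be of the form $\ba^G(e^G)$ for this $T$); instead, bound the level-$d$ term for $F^{G_d}_{\bran_d}$ itself using the worst-case overestimate $\wstr_e \le O(k\log^6 n)$ from \cref{lemma:globalstretch}, which is precisely where the $\O(k)$ loss in the statement comes from --- with that substitution your argument goes through unchanged (the appeal to all level-$d$ trees belongs to the proof of \cref{lemma:hintedTreeChain}, not to this lemma).
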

While \cite[Lemma 7.17]{chen2022maximum} has a $\wlen_{e^G}$ term in the denominator on the LHS, \cite[Lemma 7.14]{chen2022maximum} shows that $\wlen_{e^G} \ge \l \bell, |\ba^G(e^G)| \r$.
\section{Proof of Expander Statement}
\label{app:expanderStatement}

The goal of this section is to show \Cref{thm:expanderStatement}.
The proof is through a standard reduction to deterministic expander decomposition {\cite{SW19, CGLNPS21}}.

\begin{theorem}[see {\cite{SW19, CGLNPS21}}] \label{thm:getExpander}
Given an unweighted, undirected $m$-edge graph $G$, there is an algorithm that finds a partition of $V(G)$ into sets $V_1, V_2, \ldots, V_k$ such that for each $1 \leq j \leq k$, $G[V_j]$ is a $\phi$-expander for $\phi = \tilde{\Omega}(1/\exp((\log m)^{1/3}))$, and there are at most $m/4$ edges that are not contained in any one of the expander graphs. The algorithm runs in time $\tilde{O}(m \cdot \exp((\log m)^{2/3}\log\log(m)))$.
\end{theorem}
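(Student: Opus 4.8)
The plan is to obtain \Cref{thm:getExpander} as a direct instantiation of the deterministic expander-decomposition machinery of \cite{SW19,CGLNPS21}, the only real work being the choice of the conductance parameter so that the boundary bound reads $m/4$ and the bookkeeping of the subpolynomial exponents. First I would recall, in the precise parametrized form needed, the deterministic expander decomposition of \cite{CGLNPS21} (which plugs a deterministic balanced-sparse-cut subroutine into the recursive scheme of \cite{SW19}: repeatedly extract a most-balanced $\phi$-sparse cut; a genuinely balanced cut can occur only $O(\log m)$ times, while each unbalanced cut is removed by expander pruning, so the total boundary forms a controlled geometric series). In this form, for an $m$-edge graph $G$, target conductance $\phi \in (0,1)$, and recursion-depth parameter $r \ge 1$, one obtains a deterministic algorithm that partitions $V(G) = V_1 \sqcup \dots \sqcup V_k$ with every $G[V_j]$ a $\phi$-expander, with at most $\phi m \cdot \gamma_{\mathrm{ED}}(r)$ inter-cluster edges where $\gamma_{\mathrm{ED}}(r) = \exp(O(r \log\log m))$, and in time $m^{1+O(1/r)} \cdot \gamma_{\mathrm{ED}}(r)^{O(r)}$ (schematically; the exact constants are those in \cite{CGLNPS21}).

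Next I would fix the parameters. Set $r \defeq \Theta\big((\log m)^{1/3}/\log\log m\big)$, so that $m^{1/r} = \exp\big((\log m)^{2/3}\log\log m\big)$ and $\gamma_{\mathrm{ED}}(r) = \exp\big(O((\log m)^{1/3}\log\log m)\big)$; this is the balance point at which the runtime and quality exponents become $(\log m)^{2/3}$ and $(\log m)^{1/3}$ respectively, matching the two bounds in the statement. Then set $\phi \defeq 1/\big(4\,\gamma_{\mathrm{ED}}(r)\big)$. With this choice the number of edges not contained in any cluster is at most $\phi m \cdot \gamma_{\mathrm{ED}}(r) = m/4$, while $\phi = \exp\big(-O((\log m)^{1/3}\log\log m)\big) = \tilde\Omega\big(1/\exp((\log m)^{1/3})\big)$, exactly the claimed expansion guarantee; and the running time is $m^{1+O(1/r)} \cdot \gamma_{\mathrm{ED}}(r)^{O(r)} = m \cdot \exp\big(O((\log m)^{2/3}\log\log m)\big) = \tilde O\big(m \cdot \exp((\log m)^{2/3}\log\log m)\big)$, the $\mathrm{poly}(1/\phi)$ overhead being absorbed since $1/\phi = \exp(O((\log m)^{1/3}\log\log m))$.

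The main (and essentially only) obstacle is matching conventions with \cite{CGLNPS21}: that paper states its trade-off through a specific internal parameter and specific subpolynomial loss exponents, which may not literally be written as $(\log m)^{O(r)}$ versus $m^{1+O(1/r)}$. So the one thing I would verify carefully is that the chosen $r = \Theta((\log m)^{1/3}/\log\log m)$ lies in the admissible parameter range of \cite{CGLNPS21} (it does, as $r = o(\log m)$) and that, under their exact exponents, this choice still yields loss $\exp(O((\log m)^{1/3}\log\log m))$ and runtime $m \cdot \exp(O((\log m)^{2/3}\log\log m))$. Everything else is a direct substitution into an off-the-shelf theorem; in particular the possible presence of trivial (single-vertex, edgeless) clusters causes no issue, since such $G[V_j]$ are vacuously $\phi$-expanders by \Cref{def:expander}.
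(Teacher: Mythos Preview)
Your proposal is correct and matches the paper's treatment: the paper does not prove \Cref{thm:getExpander} at all but simply cites it as a black-box result from \cite{SW19,CGLNPS21}, and your explanation of how it arises as a parameter instantiation of the deterministic expander decomposition (balancing $r \approx (\log m)^{1/3}/\log\log m$ so the boundary overhead is $\exp(O((\log m)^{1/3}\log\log m))$ and the runtime is $m \cdot \exp(O((\log m)^{2/3}\log\log m))$, then choosing $\phi$ to make the crossing-edge bound exactly $m/4$) is the right way to unpack the citation.
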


\noindent We run \Cref{alg:decompose} (given below) to obtain the graphs $G_i$ as desired in \cref{thm:expanderStatement}.

\begin{algorithm}[!ht]
$\ell \gets \lceil\log_2 \Delta_{max}(G) \rceil + 1; G_{\ell} = G$\label{lne:decompFirstLine}\;
\For{$i = \ell, \ell - 1, \ldots, 1$}{
    Let $G_i^{\rcirclearrow}$ denote the graph $G_i$ with $2^i$ self-loops added to each vertex.\label{lne:GiLoopy}\;
    Compute an expander decomposition $V_0, V_1, \ldots, V_k$ of $G_i^{\rcirclearrow}$ using \Cref{thm:getExpander}.\;
    $G_{i-1} \gets \left(\bigcup_{0 \le j \le k} E_{G_i}(V_j, V \setminus V_j)\right)$.\label{lne:initGi}\;
    $G_{i} \gets G_i \setminus G_{i-1}$.

}
\caption{$\textsc{Decompose}(G)$}
\label{alg:decompose}
\end{algorithm}

\begin{claim}\label{clm:initializdGiIsGood}
At initialization, for each $i$ the graph $G_i$ has in \Cref{lne:decompFirstLine} or \Cref{lne:initGi} has at most $2^i n$ edges.
\end{claim}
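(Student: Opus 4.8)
The plan is to prove the bound $|E(G_i)| \le 2^i n$ by downward induction on $i$, starting from $i = \ell$ and proceeding to $i = 0$, tracking how \Cref{alg:decompose} peels off edges at each step. First I would observe that the base case $i = \ell$ is immediate: $G_\ell = G$ and $\ell = \lceil \log_2 \Delta_{\max}(G)\rceil + 1$, so $|E(G)| \le n \Delta_{\max}(G)/2 \le 2^{\ell - 1} n \le 2^\ell n$. (Actually even the cruder bound $|E(G)| \le n^2 \le 2^\ell n$ works when $\Delta_{\max}(G)$ is within a factor of $n$; I would just use the degree bound to be safe.)

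For the inductive step, suppose at the start of the iteration for index $i$ we have a graph $G_i$ with at most $2^i n$ edges (this is the inductive hypothesis, and it holds for the value of $G_i$ at \Cref{lne:initGi} of the previous iteration, or at \Cref{lne:decompFirstLine} for $i = \ell$). The algorithm forms $G_i^{\rcirclearrow}$ by adding $2^i$ self-loops to each vertex; self-loops contribute $2^i$ to each vertex's volume but are never cut edges, so $\vol_{G_i^{\rcirclearrow}}(V(G)) = \vol_{G_i}(V(G)) + 2 \cdot 2^i n \le 2 |E(G_i)| + 2^{i+1} n \le 2^{i+2} n$, hence $G_i^{\rcirclearrow}$ has at most $2^{i+1} n$ non-loop edges (namely the edges of $G_i$) plus $2^i n$ loops. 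Now apply \Cref{thm:getExpander} to $G_i^{\rcirclearrow}$: it returns a partition with at most $|E(G_i^{\rcirclearrow})|/4$ inter-cluster edges. Since the self-loops are entirely within clusters, all cut edges are edges of $G_i$, so the number of cut edges is at most $|E(G_i^{\rcirclearrow})|/4 \le (|E(G_i)| + 2^i n)/4 \le (2^i n + 2^i n)/4 = 2^{i-1} n$. These cut edges become $G_{i-1}$ at \Cref{lne:initGi}, giving $|E(G_{i-1})| \le 2^{i-1} n$, which is exactly the claimed bound at index $i - 1$. This closes the induction.

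The main subtlety — and the one point I would be careful about — is the bookkeeping around the self-loops: one must confirm that adding $2^i$ self-loops per vertex (a) increases the total edge count only additively by $2^i n$ so that the $|E(G_i^{\rcirclearrow})|/4$ cut bound from \Cref{thm:getExpander} stays proportional to $2^i n$, and (b) that self-loops can never be cut edges, so the $m/4$ slack in \Cref{thm:getExpander} is spent entirely on genuine edges of $G_i$. Both are elementary, but they are exactly what makes the geometric decay $2^i \to 2^{i-1}$ work; without the self-loop padding the expander-decomposition guarantee would not shrink the edge count at the right rate. I would also note in passing that this claim is precisely the ``$|E(G_i)| \le 2^i n$'' statement referenced in \Cref{thm:expanderStatement}, and that the subsequent expansion and minimum-degree properties there come from the expander decomposition guarantee and the $2^i$ self-loop padding respectively, but those are not needed for this particular claim.
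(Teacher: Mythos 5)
Your proof is correct and follows essentially the same route as the paper: induction from $i=\ell$ downward, using the degree bound for the base case, and for the step observing that $G_i^{\rcirclearrow}$ has at most $2^i n$ real edges plus $2^i n$ self-loops, so the $m/4$ crossing-edge guarantee of \Cref{thm:getExpander} gives at most $2^{i-1}n$ edges in $G_{i-1}$, with self-loops never being cut edges. The only cosmetic difference is that you make the self-loop bookkeeping (and the fact that loops cannot be crossing edges) fully explicit, which the paper's proof leaves implicit in the equality of cut-edge sets in $G_i$ and $G_i^{\rcirclearrow}$.
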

\begin{proof}
We proceed by induction on $i$. For the base case, $i = \ell$, observe that $2^{\ell} \geq \Delta_{max}(G)$ and since $G_{\ell}$ is a subgraph of $G$, we have $|E(G_{\ell})| \leq 2^{\ell} n$.

For $i \mapsto i-1$, we observe that $G_i$ is unchanged since its initialization until at least after $G_{i-1}$ was defined in \Cref{lne:initGi}. Thus, using the induction hypothesis and the fact above, we conclude that $G_i^{\rcirclearrow}$ (defined in \Cref{lne:GiLoopy}) consists of at most $2^i n$ edges from $G_i$ plus $2^i n$ edges from all self-loops. Thus \Cref{thm:getExpander} implies that $|\bigcup_{0 \leq j \leq k} E_{G_i}(V_j, V \setminus V_j)| = |\bigcup_{0 \leq j \leq k} E_{G_i^{\rcirclearrow}}(V_j, V \setminus V_j)| \leq 2^{i+1}n/4 = 2^{i-1}n$, and since this is exactly the edge set of $G_{i-1}$, the claim follows.
\end{proof}

\begin{proof}[Proof of  \Cref{thm:expanderStatement}]
Using \Cref{clm:initializdGiIsGood} and the insight that each graph $G_i$, after initialization, can only have edges deleted from it, we conclude that $|E(G_i)| \leq 2^i n$ for each $i$. 

For the minimum degree property of each $G_i$ with $i > 0$, we observe by \Cref{thm:getExpander}, that for $G_i^{\rcirclearrow}$ and vertex $v$ in expander $V_j$, $\deg_{G_i}(v) = |E_{G_i}(v , V_j \setminus \{v\})| = |E_{G_i^{\rcirclearrow}}(v , V_j \setminus \{v\})| \geq \phi \cdot 2^{i}$.

The guarantee that the connected components in each graph $G_1, G_2, \ldots, G_{\ell}$ (but not necessarily $G_0$) are $\phi$-expanders stems from \Cref{thm:getExpander}.
\end{proof}

\end{document}